\documentclass[11pt, a4paper, logo, copyright]{pluralisresearchiclr}

\pdftrailerid{redacted}

\makeatletter
\renewcommand\bibentry[1]{\nocite{#1}{\frenchspacing\@nameuse{BR@r@#1\@extra@b@citeb}}}
\makeatother

\usepackage{kantlipsum, lipsum}
\usepackage{dsfont}
\usepackage{algorithm}
\usepackage{algorithmic}
\usepackage{adjustbox}
\usepackage{multirow}
\usepackage{nicefrac}
\usepackage{float}

\newcolumntype{R}[2]{%
    >{\adjustbox{angle=#1,lap=\width-(#2)}\bgroup}%
    l%
    <{\egroup}%
}

\usepackage{pifont}

\usepackage{subcaption}
\usepackage{url}

\usepackage{microtype}
\usepackage{graphicx}
\usepackage{subcaption}
\usepackage{booktabs}

\usepackage{hyperref}

\usepackage{amsmath}
\usepackage{amssymb}
\usepackage{mathtools}
\usepackage{amsthm}

\usepackage[capitalize,noabbrev]{cleveref}

\theoremstyle{plain}
\newtheorem{theorem}{Theorem}[section]

\newtheorem{lemma}[theorem]{Lemma}

\theoremstyle{definition}
\newtheorem{definition}[theorem]{Definition}

\theoremstyle{remark}

\usepackage{courier}
\usepackage{nicefrac}
\usepackage{microtype}
\usepackage{xcolor}
\usepackage{enumitem}
\usepackage{algorithm}
\usepackage{algorithmic}
\usepackage{setspace}
\usepackage{wrapfig}
\usepackage{multirow}
\usepackage{rotating}
\usepackage{textcomp}
\usepackage{multibib}
\renewcommand{\textuparrow}{$\uparrow$}
\renewcommand{\textdownarrow}{$\downarrow$}
\usepackage{soul}
\usepackage[most]{tcolorbox}
\newtcolorbox{blockquote}{
  enhanced,
  breakable,
  colback=viridisPurple!10,      
  colframe=viridisBlue!40,    
  leftrule=0pt,
  rightrule=0pt,
  arc=1pt,
  boxrule=0pt,
  left=1pt, right=1pt, top=1pt, bottom=1pt,
}
\newtcolorbox{definitionblockquote}{
  enhanced,
  breakable,
  colback=viridisYellow!10,      
  colframe=viridisBlue!40,    
  leftrule=0pt,
  rightrule=0pt,
  arc=1pt,
  boxrule=0pt,
  left=1pt, right=1pt, top=1pt, bottom=1pt,
}
\usepackage{footnote}
\makesavenoteenv{blockquote}
\newtcolorbox{blockdirectquote}{
  enhanced,
  breakable,
  colback=viridisGreen!10,      
  colframe=viridisBlue!40,    
  leftrule=0pt,
  rightrule=0pt,
  arc=1pt,
  boxrule=0pt,
  left=1pt, right=1pt, top=1pt, bottom=1pt,
}



\usepackage{tikz}
\usepackage{pgfplots}
\pgfplotsset{compat=1.18} 
\usetikzlibrary{shapes,arrows,positioning,fit,backgrounds,calc,decorations.pathmorphing,matrix}

\usepackage{caption}
\usepackage{tikz}
\usetikzlibrary{positioning,shapes.geometric,arrows.meta}

\renewcommand{\cite}[1]{\citep{#1}} 

\newtheorem{innercustomlemma}{Lemma}
\newenvironment{customlemma}[1]
{\renewcommand\theinnercustomlemma{#1}\innercustomlemma}
{\endinnercustomlemma}

\definecolor{viridisPurple}{RGB}{ 68,  1, 84}   
\definecolor{viridisBlue}  {RGB}{ 59, 82,139}   
\definecolor{viridisTeal}  {RGB}{ 33,145,140}   
\definecolor{viridisGreen} {RGB}{ 94,201, 98}   
\definecolor{viridisYellow}{RGB}{253,231, 37}   
\definecolor{edited}{RGB}{0,0,0}
\definecolor{highlighted}{RGB}{0,0,0}

\tikzset{
  wiggly/.style = {decorate, decoration={snake, segment length=3mm, amplitude=0.2mm}}
}

\Crefname{section}{Sec.}{Secs.}
\Crefname{table}{Tab.}{Tabs.}
\Crefname{equation}{Eq.}{Eqs.}
\Crefname{figure}{Fig.}{Figs.}
\Crefname{algorithm}{Alg.}{Algs.}
\Crefname{appendix}{App.}{Apps.}

\graphicspath{{./figures/}}
\definecolor{viridisPurple}{RGB}{ 68,  1, 84}   
\definecolor{viridisBlue}  {RGB}{ 59, 82,139}   
\definecolor{viridisTeal}  {RGB}{ 33,145,140}   
\definecolor{viridisGreen} {RGB}{ 94,201, 98}   
\definecolor{viridisYellow}{RGB}{253,231, 37}   

\tikzset{
  wiggly/.style = {decorate, decoration={snake, segment length=3mm, amplitude=0.2mm}}
}

\newcommand{\figThreatModel}{
    \begin{figure*}[tb!]
      \centering
      \includegraphics[width=\linewidth]{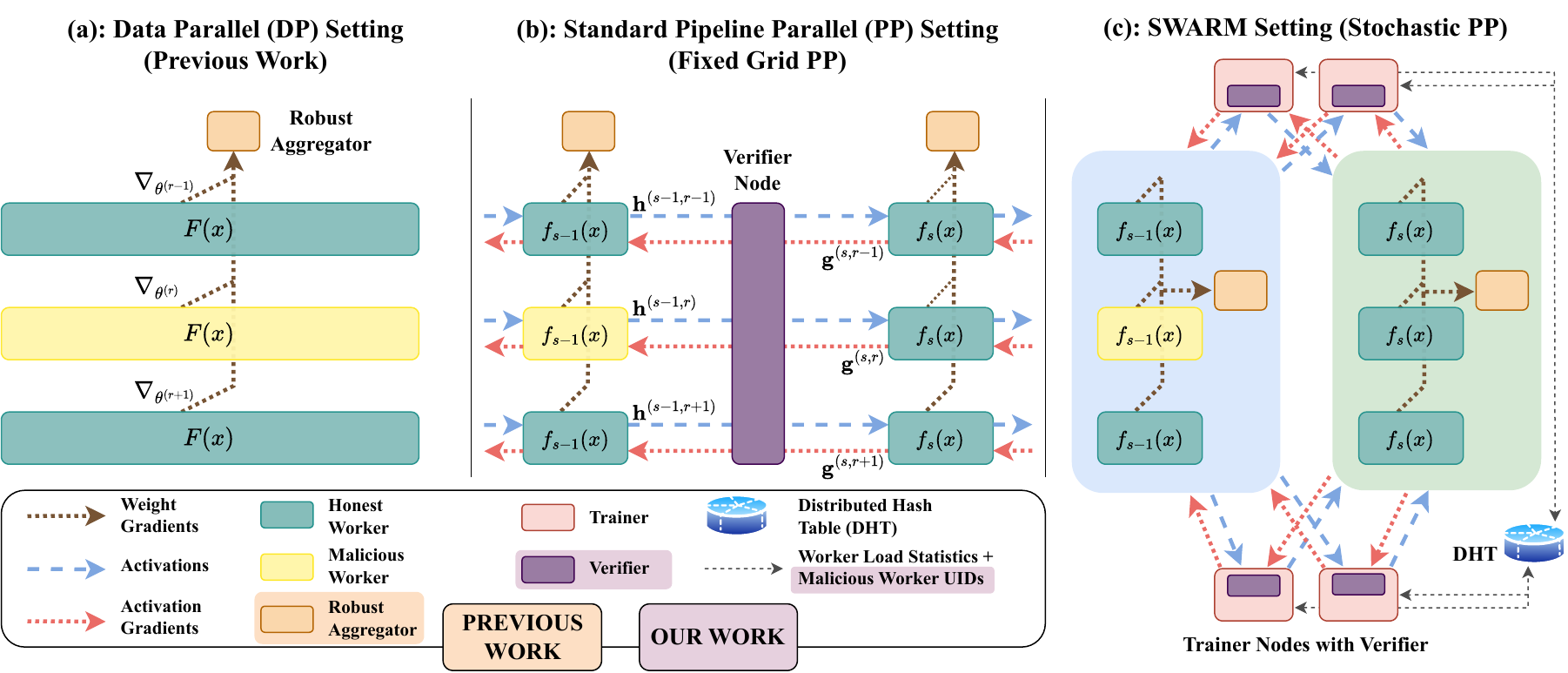}
      \vspace{-0.5em}
      \caption{
      Distributed threat models. \textbf{(a)} In DP, workers hold full model replicas and only send \textit{weight gradients}. \textbf{Traditional Byzantine-tolerant methods consider this case and use robust aggregation}. \textbf{(b \& c)} \textbf{The threat model considered in this paper}~(see~\Cref{sec:problem_statement:threat_model}). In PP, workers hold individual layers, and send intermediate \textit{activations} $\boldsymbol{h}$ and \textit{activation gradients} $\boldsymbol{g}$, thus corruptions directly affect other workers. \textbf{(b)} In the standard setting there is a fixed grid of pipeline stages and data parallel replicas, and communication is routed through our verifier nodes. \textbf{(c)} In the SWARM setting designed for decentralized setups, data is stochastically routed by trainer nodes. Workers send their computations ($\boldsymbol{h}$ and $\boldsymbol{g}$) to trainers, who then route them to an available worker in the next stage. Our proposed verifiers can seamlessly be added on top of these trainer nodes (\Cref{sec:appendix:swarm}).}
      \label{fig:threat_model}
      \vspace{-0.22in}
    \end{figure*}
}

\newcommand{\figActivationTaintCascading}{
    \begin{figure}[tb!]
      \centering
      \begin{subfigure}{0.80\linewidth}
        \centering
        \includegraphics[width=\linewidth]{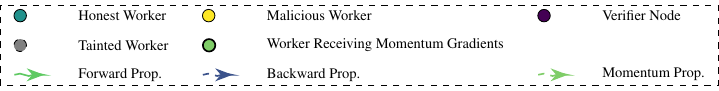}
      \end{subfigure}\\\vspace{1em}
      \begin{subfigure}{0.375\linewidth}
        \includegraphics[width=\linewidth]{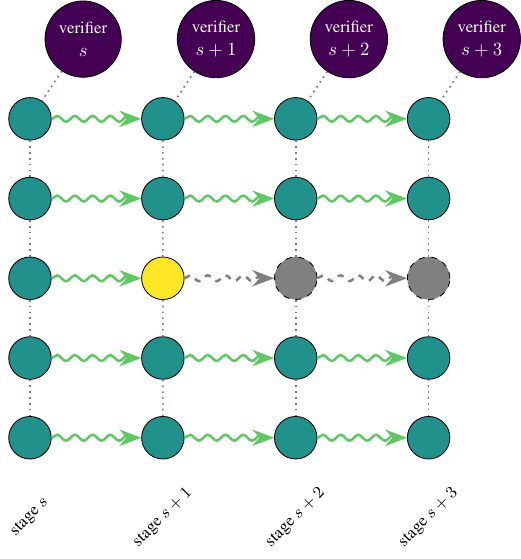}
        \caption{Forward propagation.}
        \label{fig:activation_cascading:forward}
      \end{subfigure}\hspace{5em}
      \begin{subfigure}{0.375\linewidth}
        \includegraphics[width=\linewidth]{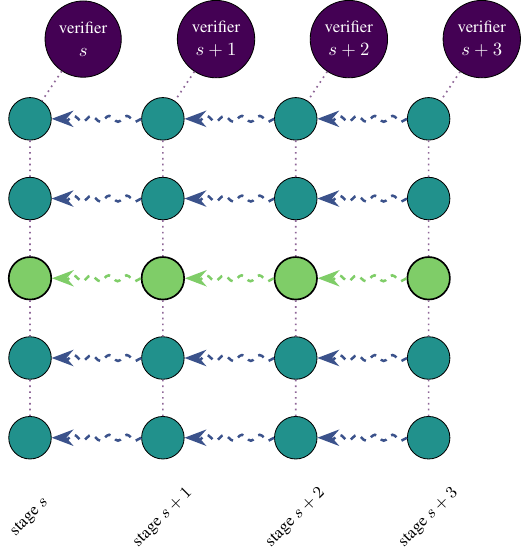}
        \caption{Backward propagation.}
        \label{fig:activation_cascading:backward}
      \end{subfigure}
      \caption{Verification protocol for handling compromised workers during distributed training. During forward propagation, a worker at stage $s+1$ is detected as potentially compromised~(shown in \textcolor{viridisYellow}{yellow}). The verifier nodes continue forwarding activations to subsequent stages without alerting downstream workers to avoid disrupting the pipeline. During backward propagation, instead of propagating gradients computed by the compromised worker, verifier nodes substitute gradient momentum values to maintain training stability. Communication flows through verifier nodes between consecutive pipeline stages, though direct worker-to-worker arrows are shown for visual clarity.}
      \label{fig:activation_cascading}
      \vspace{-1em}
    \end{figure}
}

\newcommand{\figGradientTaintCascading}{
    \begin{figure}[tb!]
      \centering
      \begin{subfigure}{0.80\linewidth}
        \centering
        \includegraphics[width=\linewidth]{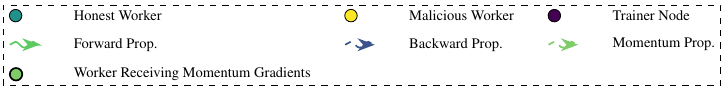}
      \end{subfigure}\\\vspace{1em}
      \begin{subfigure}{0.375\linewidth}
        \includegraphics[width=\linewidth]{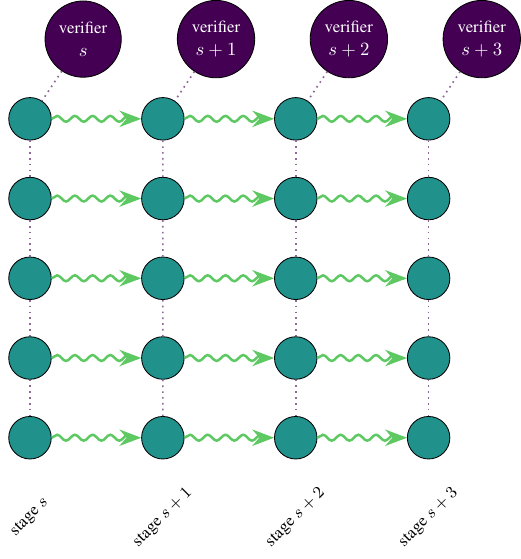}
        \caption{Forward propagation.}
        \label{fig:gradient_cascading:forward}
      \end{subfigure}\hspace{5em}
      \begin{subfigure}{0.375\linewidth}
        \includegraphics[width=\linewidth]{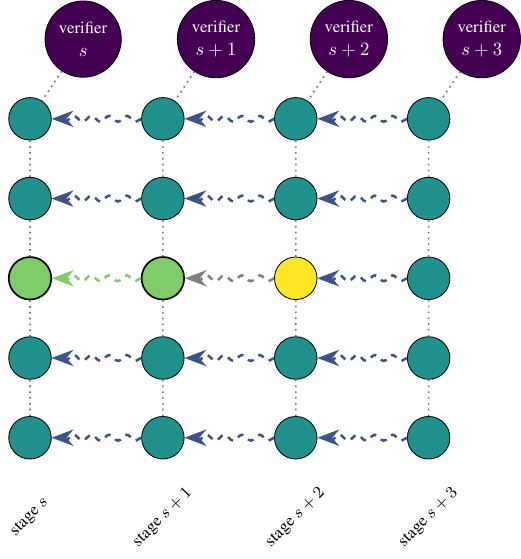}
        \caption{Backward propagation.}
        \label{fig:gradient_cascading:backward}
      \end{subfigure}
      \caption{Verification protocol for handling compromised workers during gradient propagation. During backward propagation, a worker at stage $s+2$ is detected as potentially compromised~(shown in \textcolor{viridisYellow}{yellow}). To prevent propagation of tainted gradients, verifier nodes substitute gradient momentum values for all workers in preceding~stages~($s+1$, $s$, etc.) instead of forwarding the corrupted gradients. This ensures training stability while maintaining the pipeline flow without alerting downstream workers to the compromise.}
      \label{fig:gradient_cascading}
      \vspace{-1em}
    \end{figure}
}

\newcommand{\figFScoreVsLoss}{
\begin{figure}[t]
    \centering
    \includegraphics[width=0.45\textwidth]{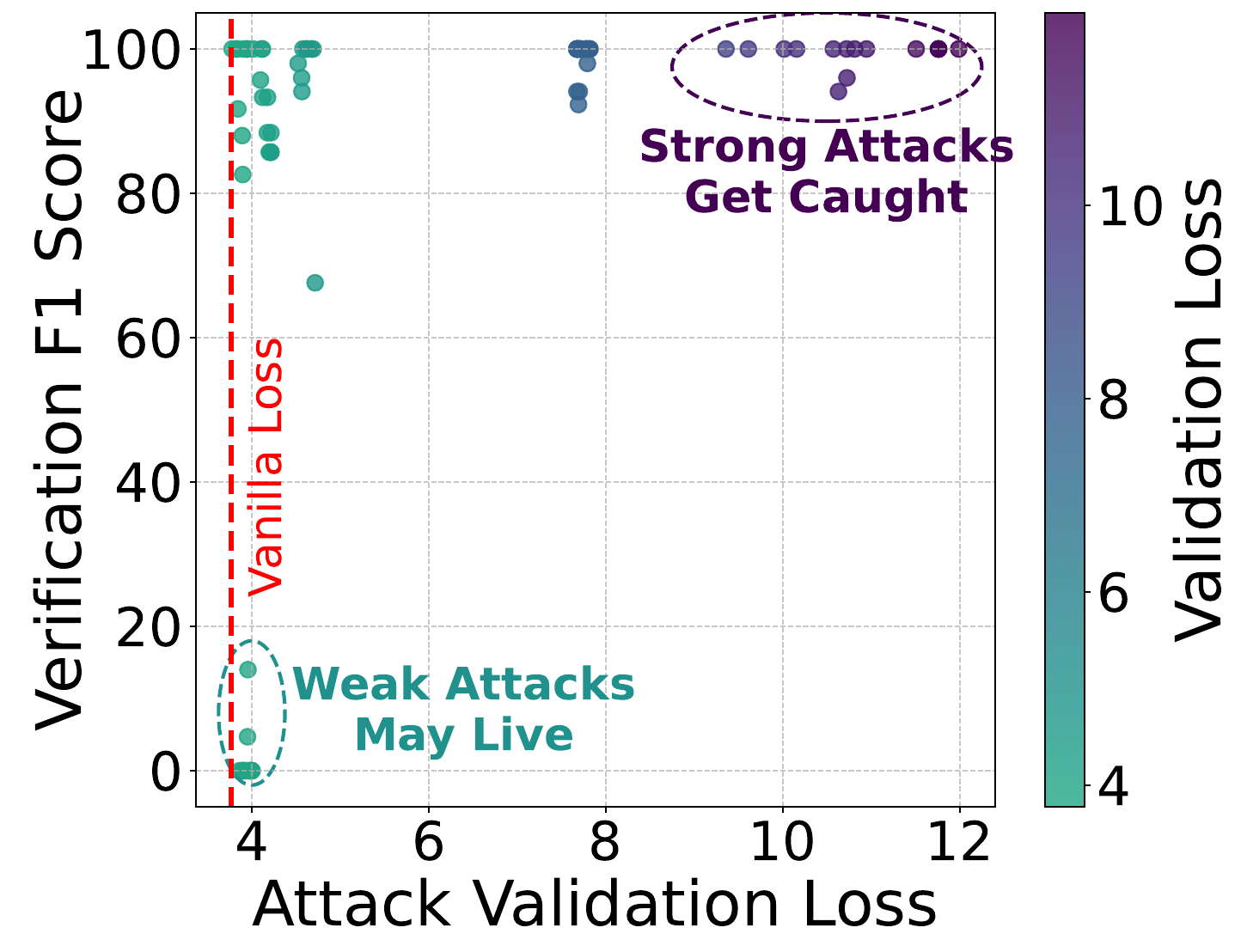}
    \caption{Scatter plot of F1-scores vs. validation loss for more than 75 different attack setups. Strong attacks (higher loss) are caught more often (high F1-score), while weak attacks may slip through. Our verification method thus catches the most harmful attacks that would disrupt training.}
    \label{fig:f1_vs_loss}
    \vspace{-1.5em}
\end{figure}
}

\newcommand{\figActivationTrainingRunDetailed}{
\begin{figure}[t]
    \centering
    \begin{subfigure}[b]{0.95\textwidth}
        \centering
        \includegraphics[width=\textwidth]{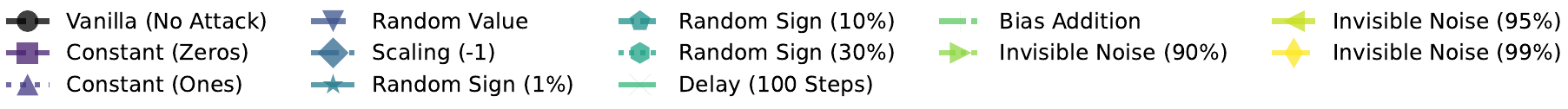}
    \end{subfigure}\\\vspace{2em}
    \begin{subfigure}[b]{\textwidth}
        \begin{subfigure}[b]{0.45\textwidth}
            \centering
            \includegraphics[width=\textwidth]{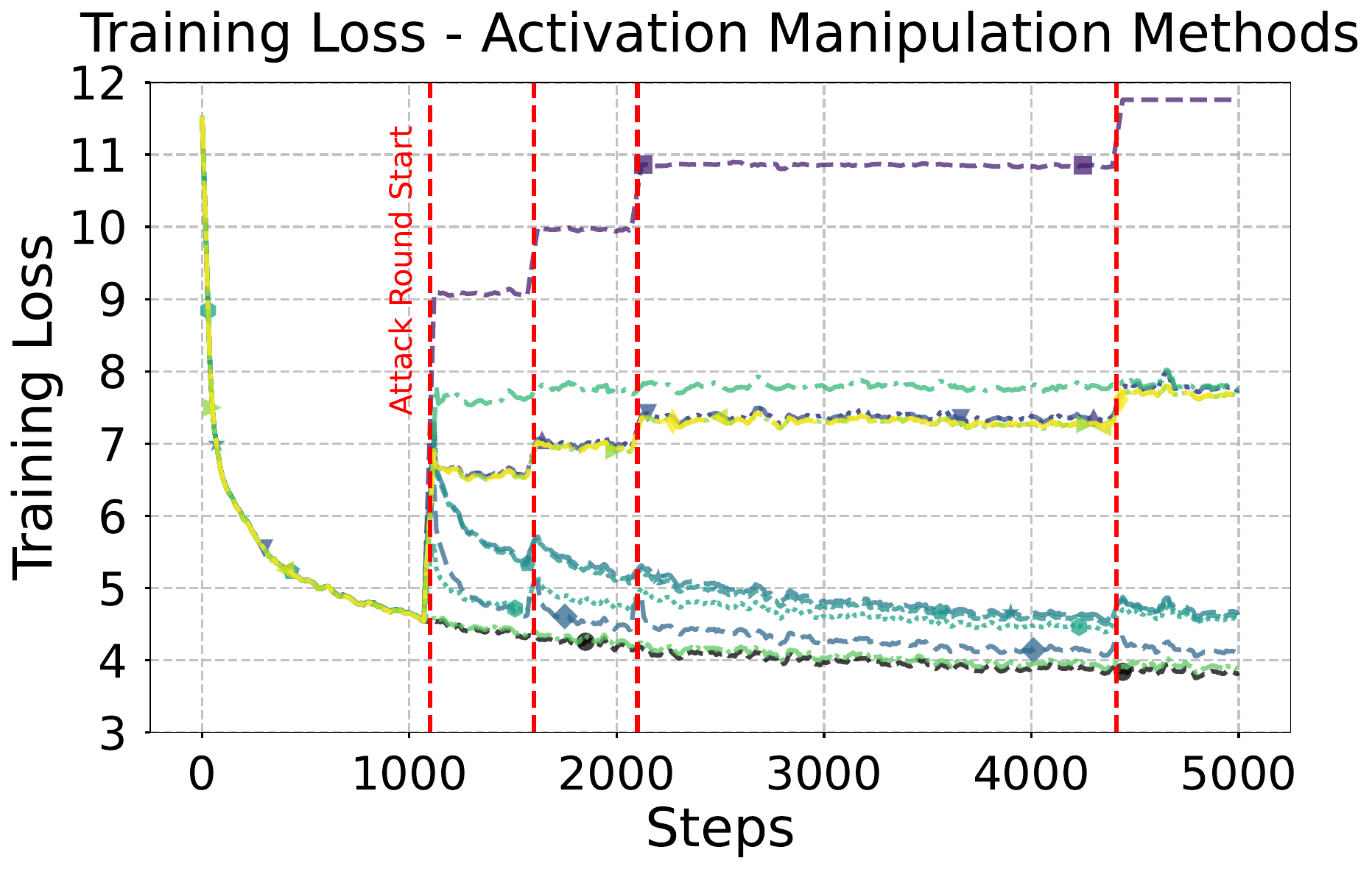}
            \caption*{Without verification}
        \end{subfigure}
            \hspace{2em}
        \begin{subfigure}[b]{0.45\textwidth}
            \centering
            \includegraphics[width=\textwidth]{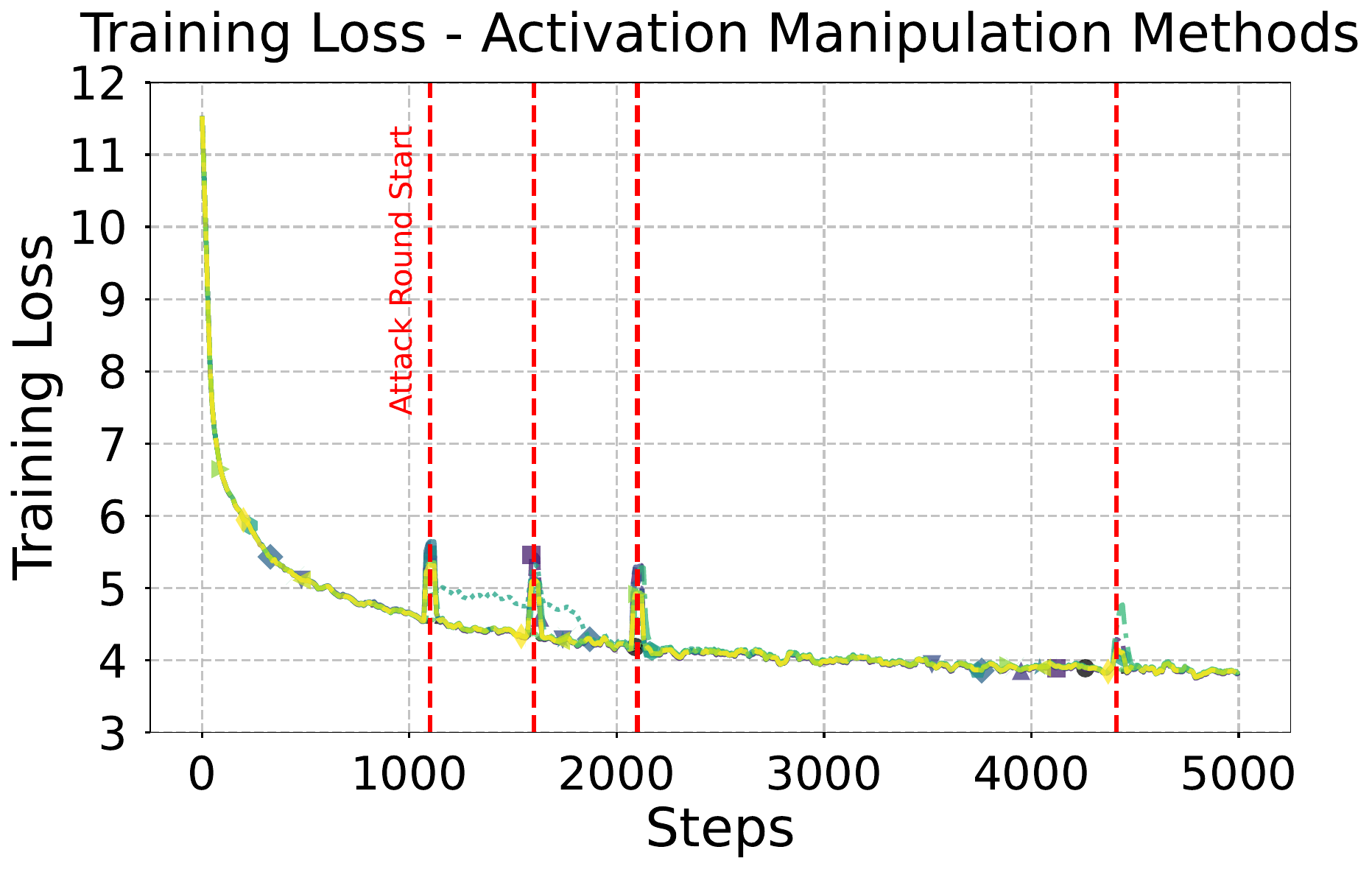}
            \caption*{With verification}
        \end{subfigure}
        \caption{C4 Dataset}
    \end{subfigure}\\\vspace{2em}
    \begin{subfigure}[b]{\textwidth}
        \begin{subfigure}[b]{0.45\textwidth}
            \centering
            \includegraphics[width=\textwidth]{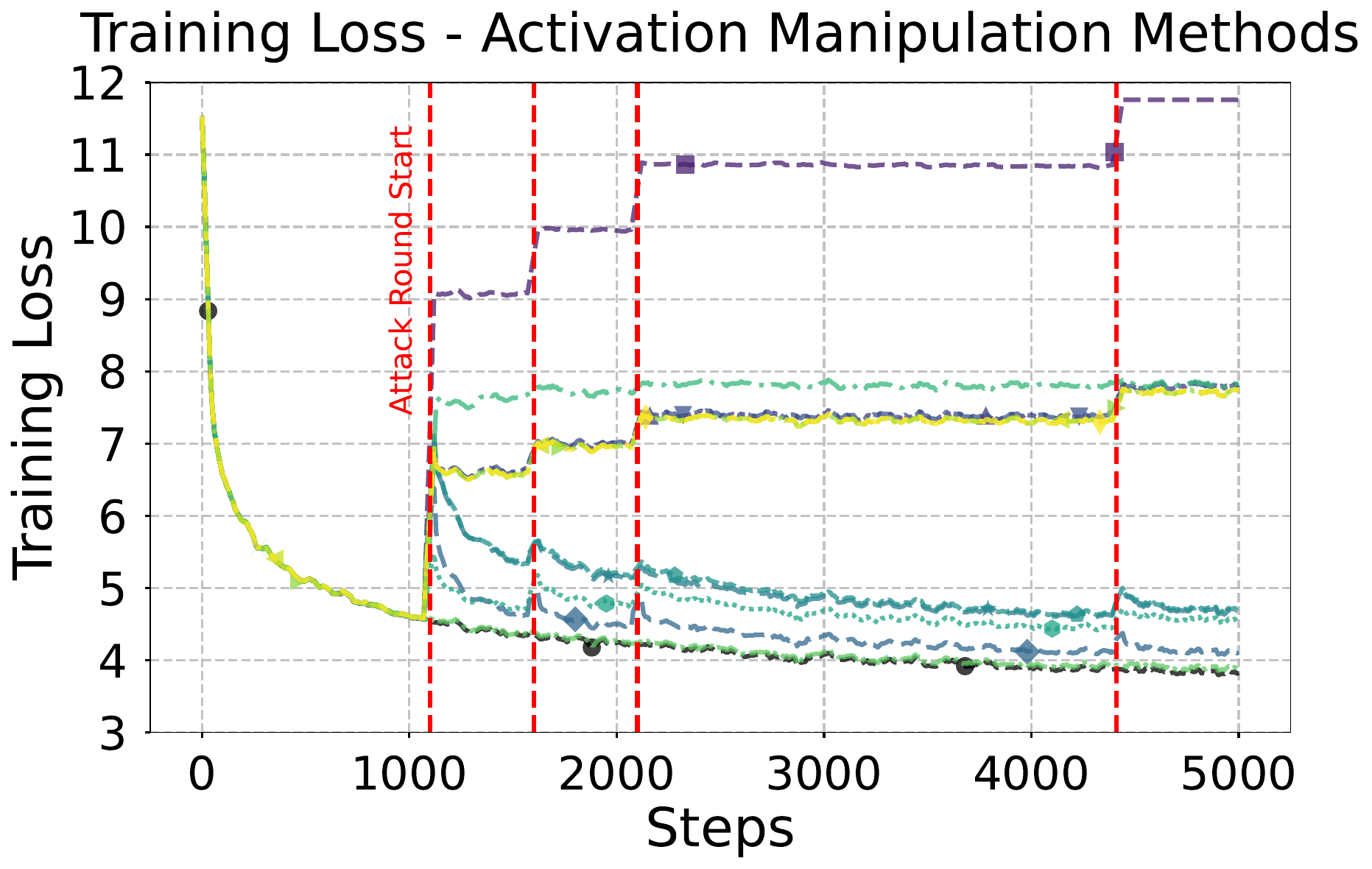}
            \caption*{Without verification}
        \end{subfigure}
            \hspace{2em}
        \begin{subfigure}[b]{0.45\textwidth}
            \centering
            \includegraphics[width=\textwidth]{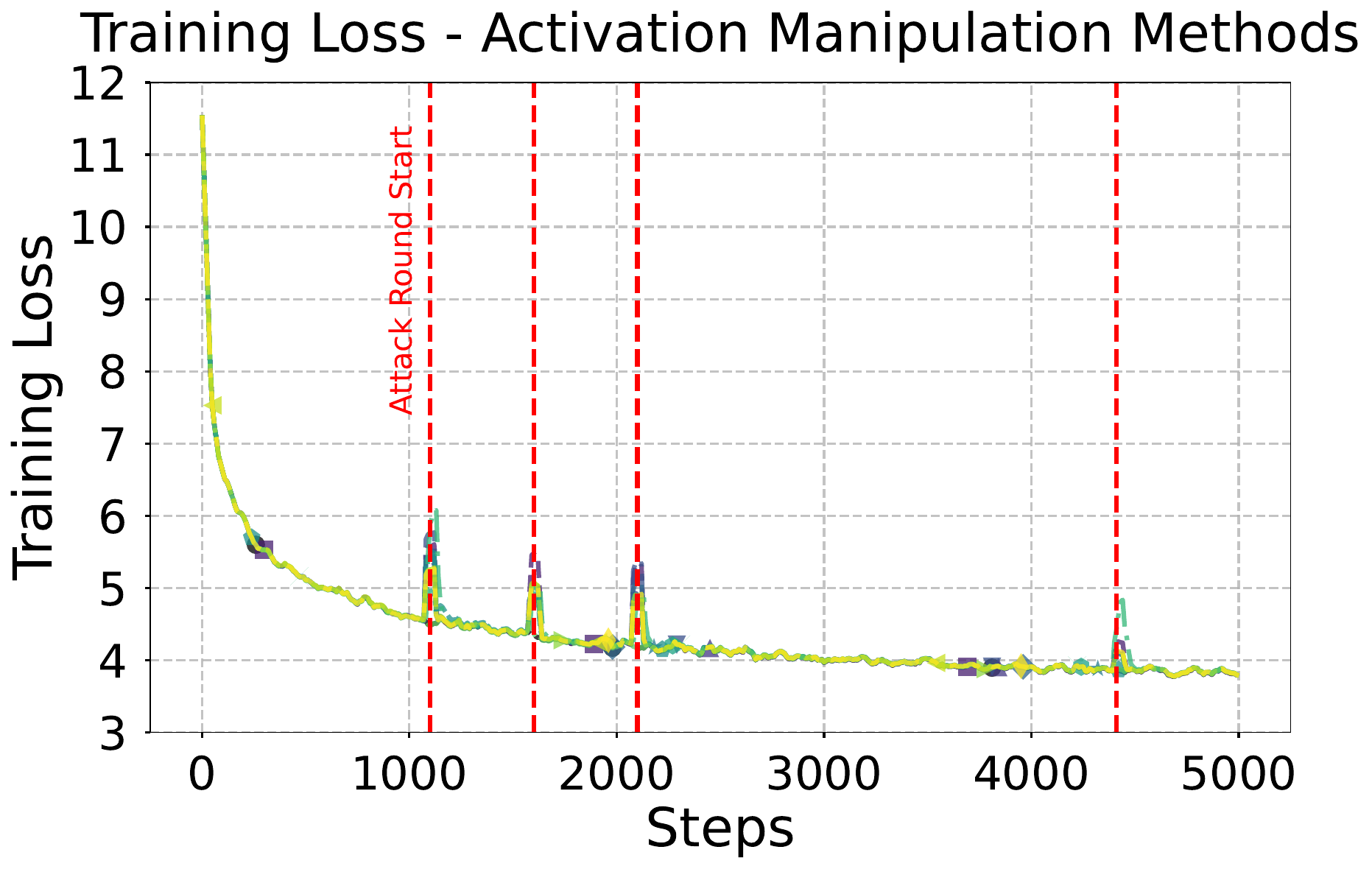}
            \caption*{With verification}
        \end{subfigure}
    \caption{FineWeb Dataset}
    \end{subfigure}\\\vspace{2em}
    \begin{subfigure}[b]{\textwidth}
        \begin{subfigure}[b]{0.45\textwidth}
            \centering
            \includegraphics[width=\textwidth]{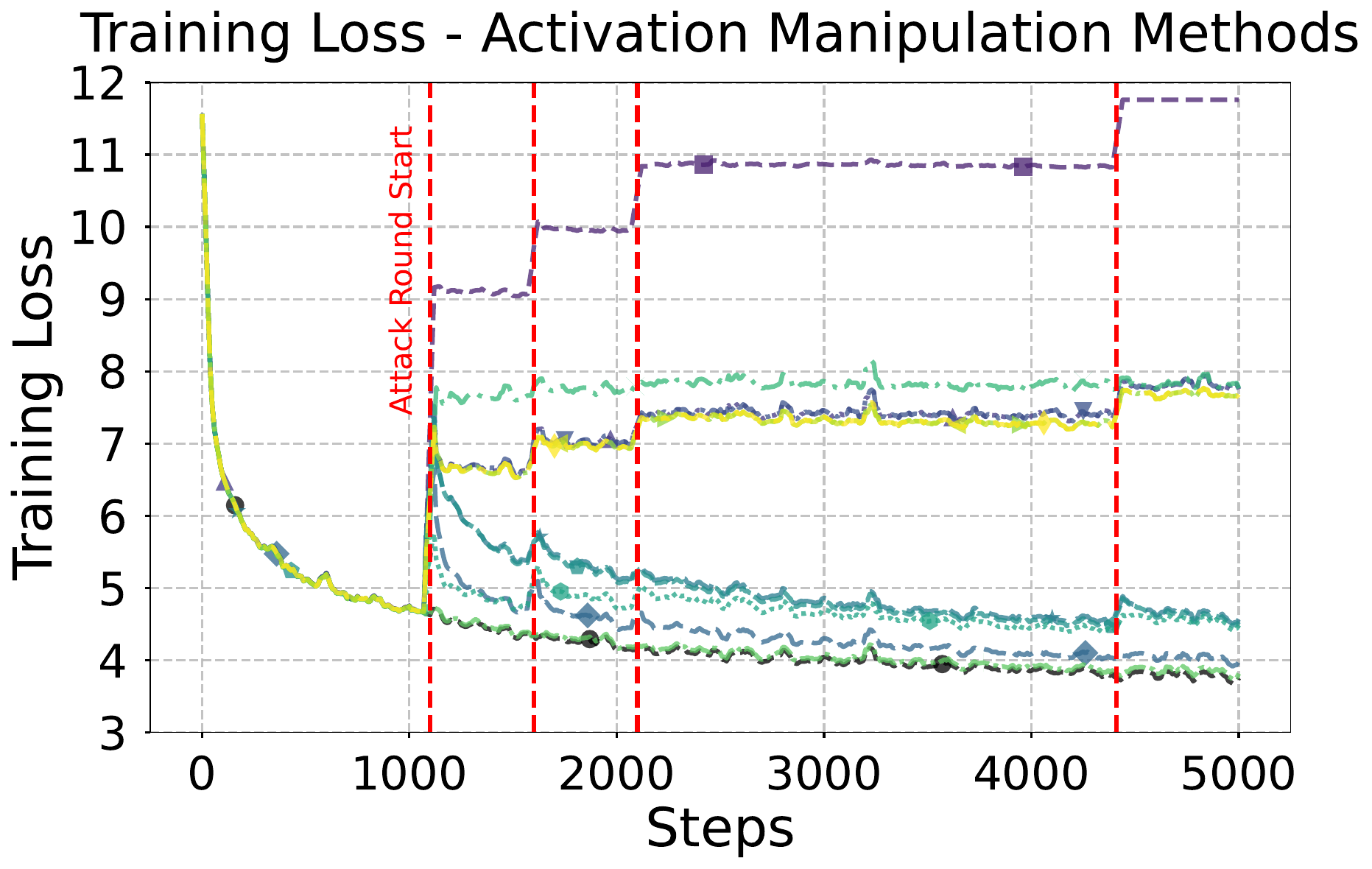}
            \caption*{Without verification}
        \end{subfigure}
            \hspace{2em}
        \begin{subfigure}[b]{0.45\textwidth}
            \centering
            \includegraphics[width=\textwidth]{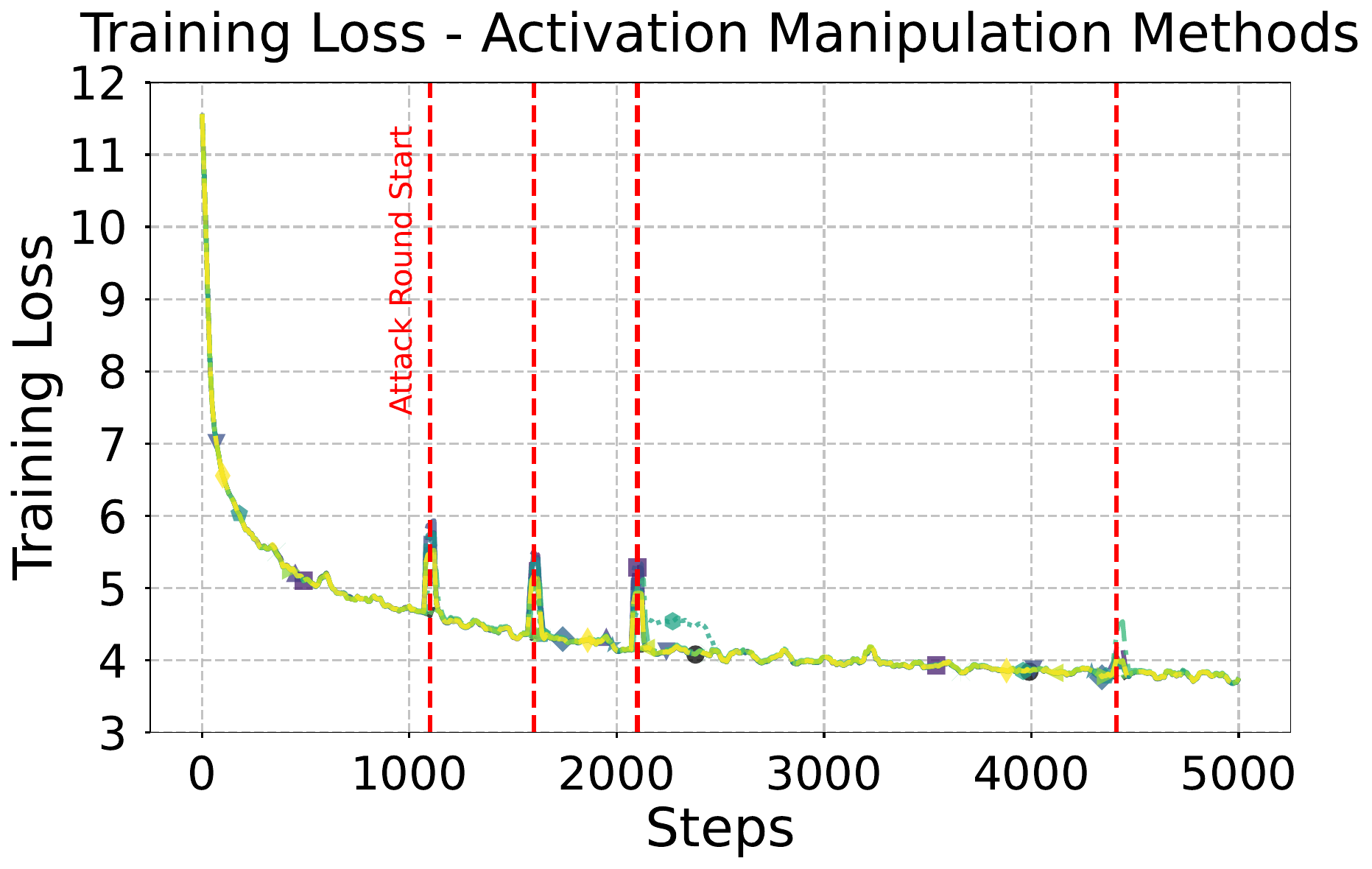}
            \caption*{With verification}
        \end{subfigure}
    \caption{OpenWebText Dataset}
    \end{subfigure}
    \caption{\textbf{Training loss} comparing our verification mechanism against baseline vanilla training under \textbf{activation manipulation attacks}. We evaluate on Llama-3-0.6B using three datasets (C4, FineWeb, and OpenWebText). Dotted vertical lines indicate attack initiation points where 6 randomly selected nodes begin submitting adversarial activations in coordinated Byzantine attacks. Our verification approach maintains stable convergence while the baseline suffers significant degradation under attack.}
    \label{fig:detailed_loss_evolution_activations}
\end{figure}
}

\newcommand{\figActivationValidationRunDetailed}{
\begin{figure}[t]
    \centering
    \begin{subfigure}[b]{0.95\textwidth}
        \centering
        \includegraphics[width=\textwidth]{figures/0.6b_detailed/c4_results/activation_no_byzantine_defense/activation_methods_legend.pdf}
    \end{subfigure}\\\vspace{2em}
    \begin{subfigure}[b]{\textwidth}
        \begin{subfigure}[b]{0.45\textwidth}
            \centering
            \includegraphics[width=\textwidth]{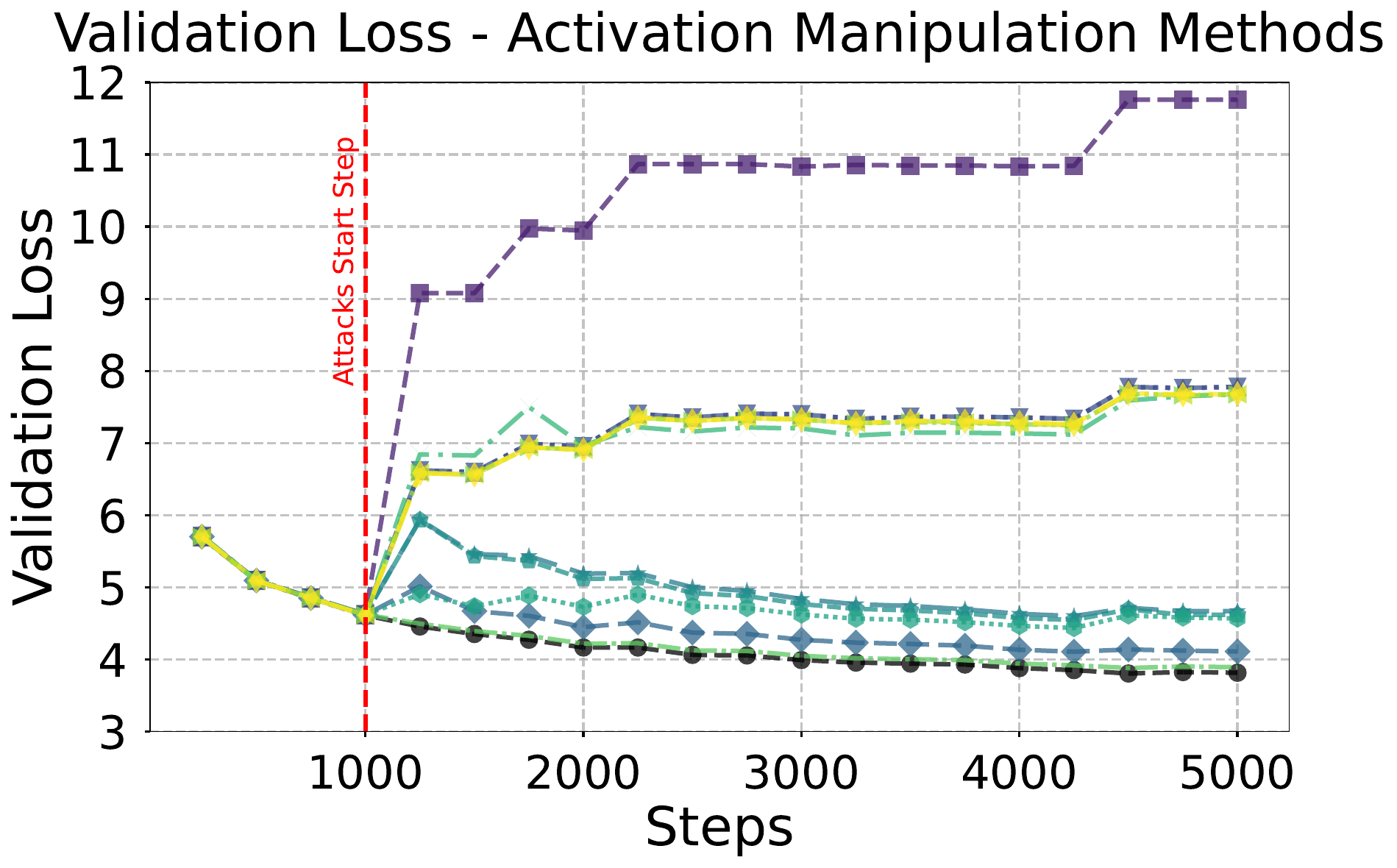}
            \caption*{Without verification}
        \end{subfigure}
            \hspace{2em}
        \begin{subfigure}[b]{0.45\textwidth}
            \centering
            \includegraphics[width=\textwidth]{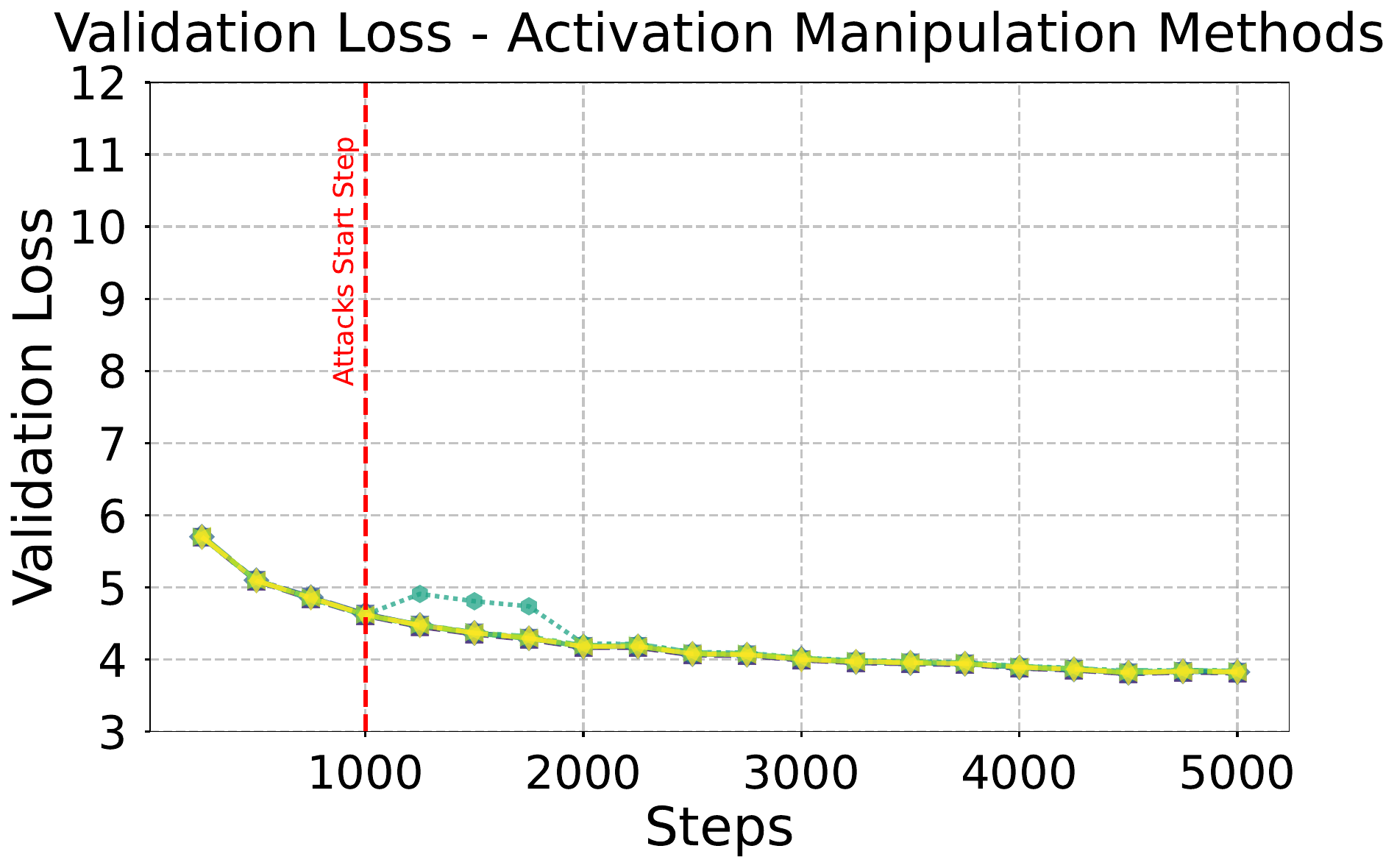}
            \caption*{With verification}
        \end{subfigure}
        \caption{C4 Dataset}
    \end{subfigure}\\\vspace{2em}
    \begin{subfigure}[b]{\textwidth}
        \begin{subfigure}[b]{0.45\textwidth}
            \centering
            \includegraphics[width=\textwidth]{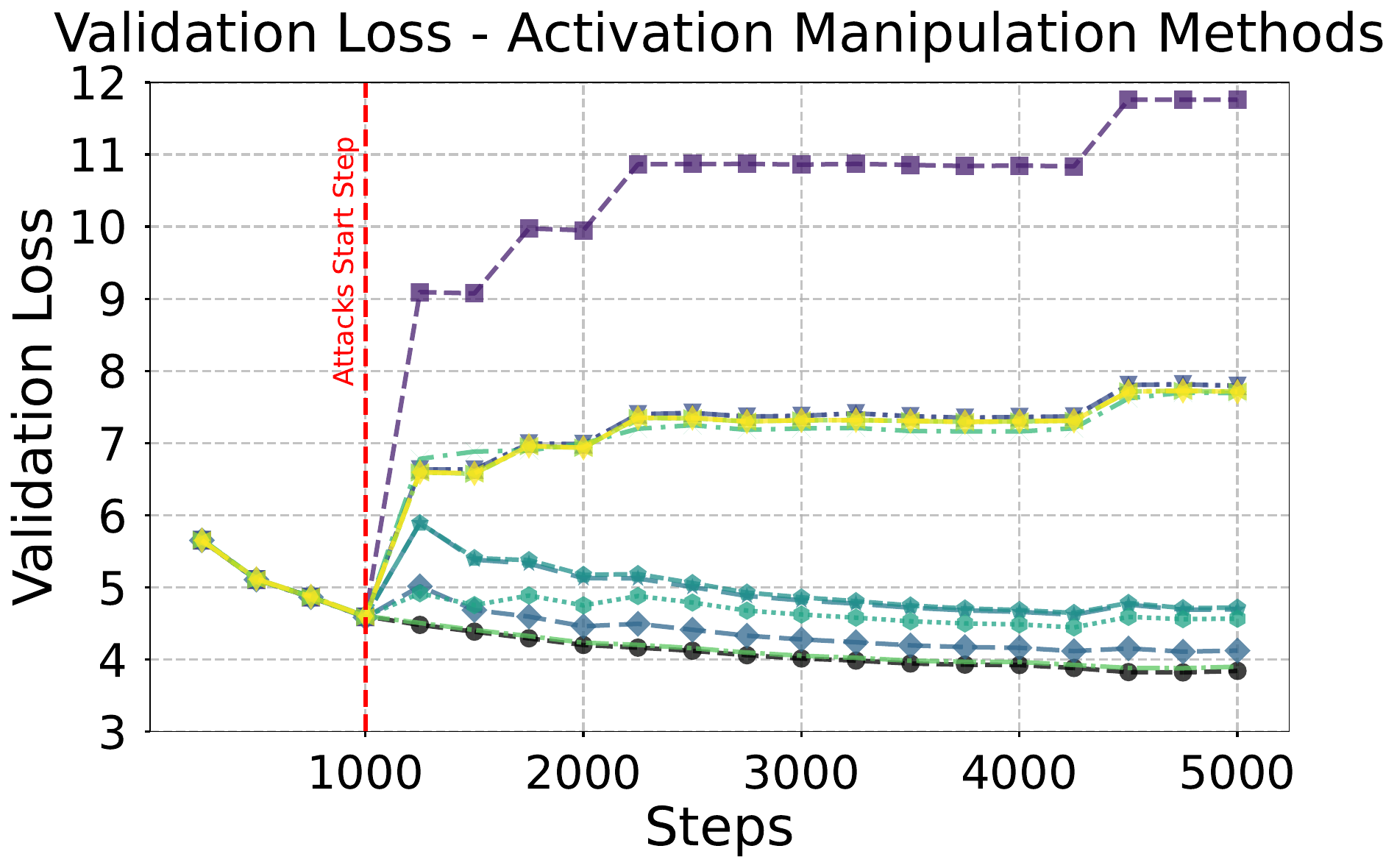}
            \caption*{Without verification}
        \end{subfigure}
            \hspace{2em}
        \begin{subfigure}[b]{0.45\textwidth}
            \centering
            \includegraphics[width=\textwidth]{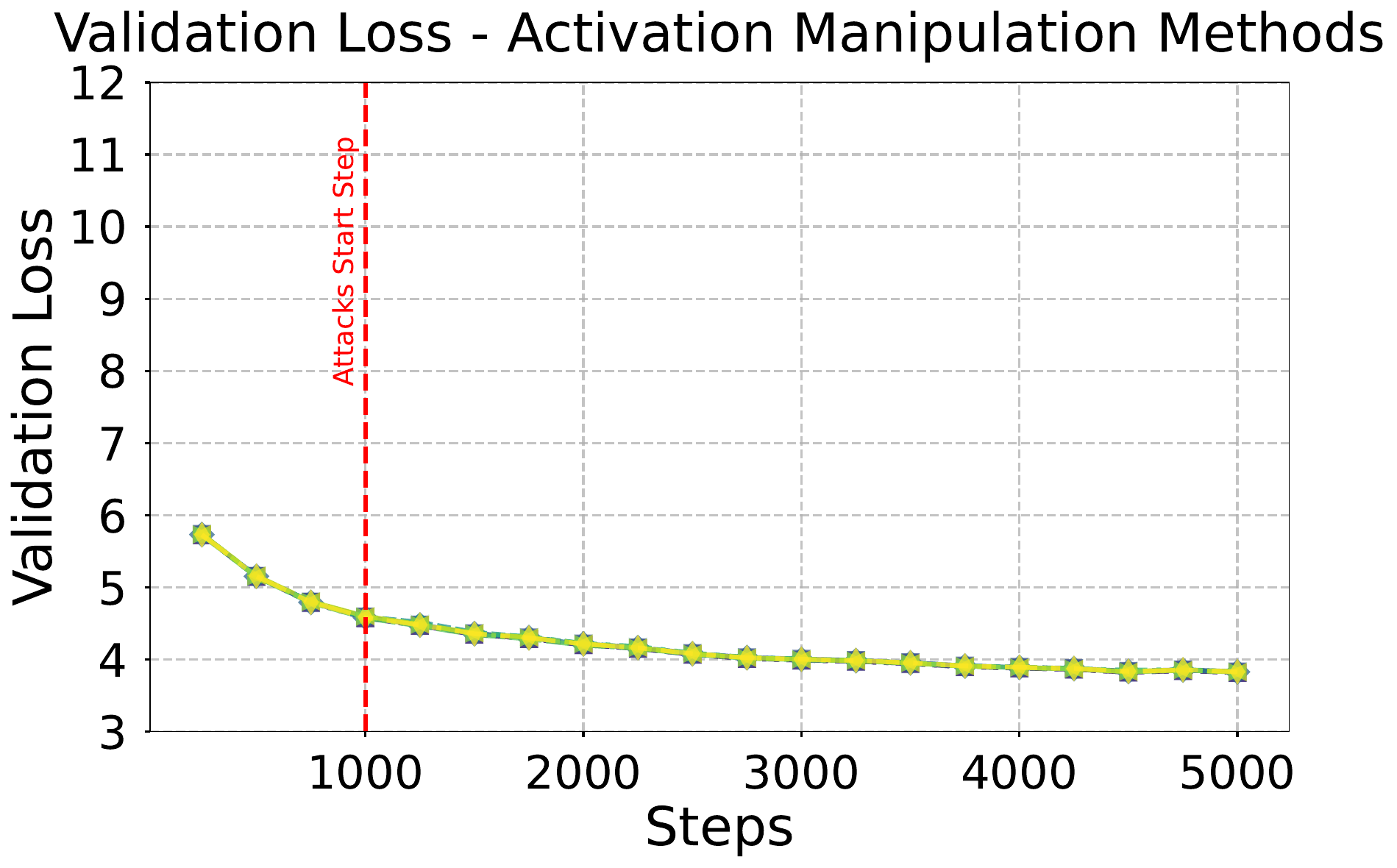}
            \caption*{With verification}
        \end{subfigure}
    \caption{FineWeb Dataset}
    \end{subfigure}\\\vspace{2em}
    \begin{subfigure}[b]{\textwidth}
        \begin{subfigure}[b]{0.45\textwidth}
            \centering
            \includegraphics[width=\textwidth]{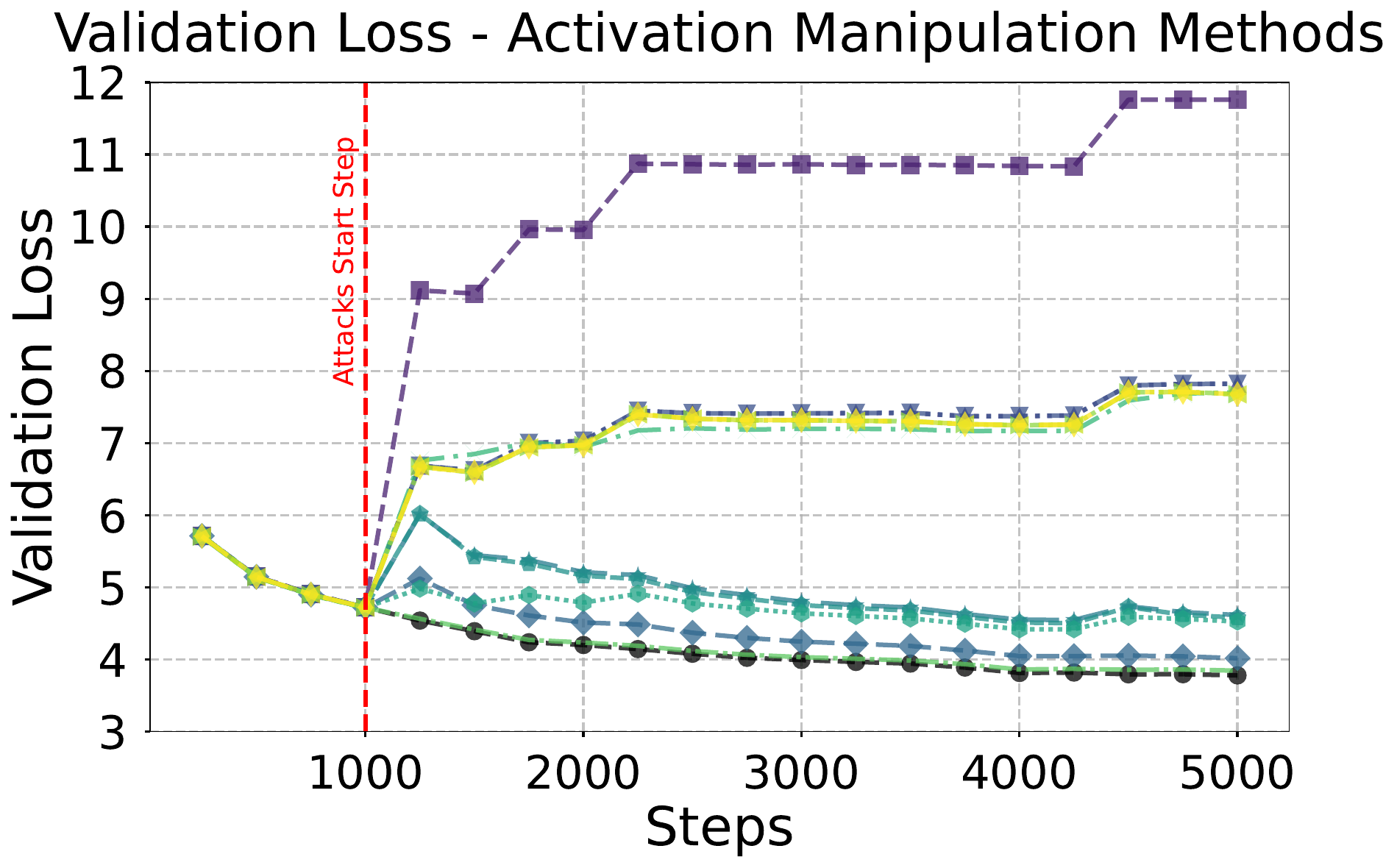}
            \caption*{Without verification}
        \end{subfigure}
            \hspace{2em}
        \begin{subfigure}[b]{0.45\textwidth}
            \centering
            \includegraphics[width=\textwidth]{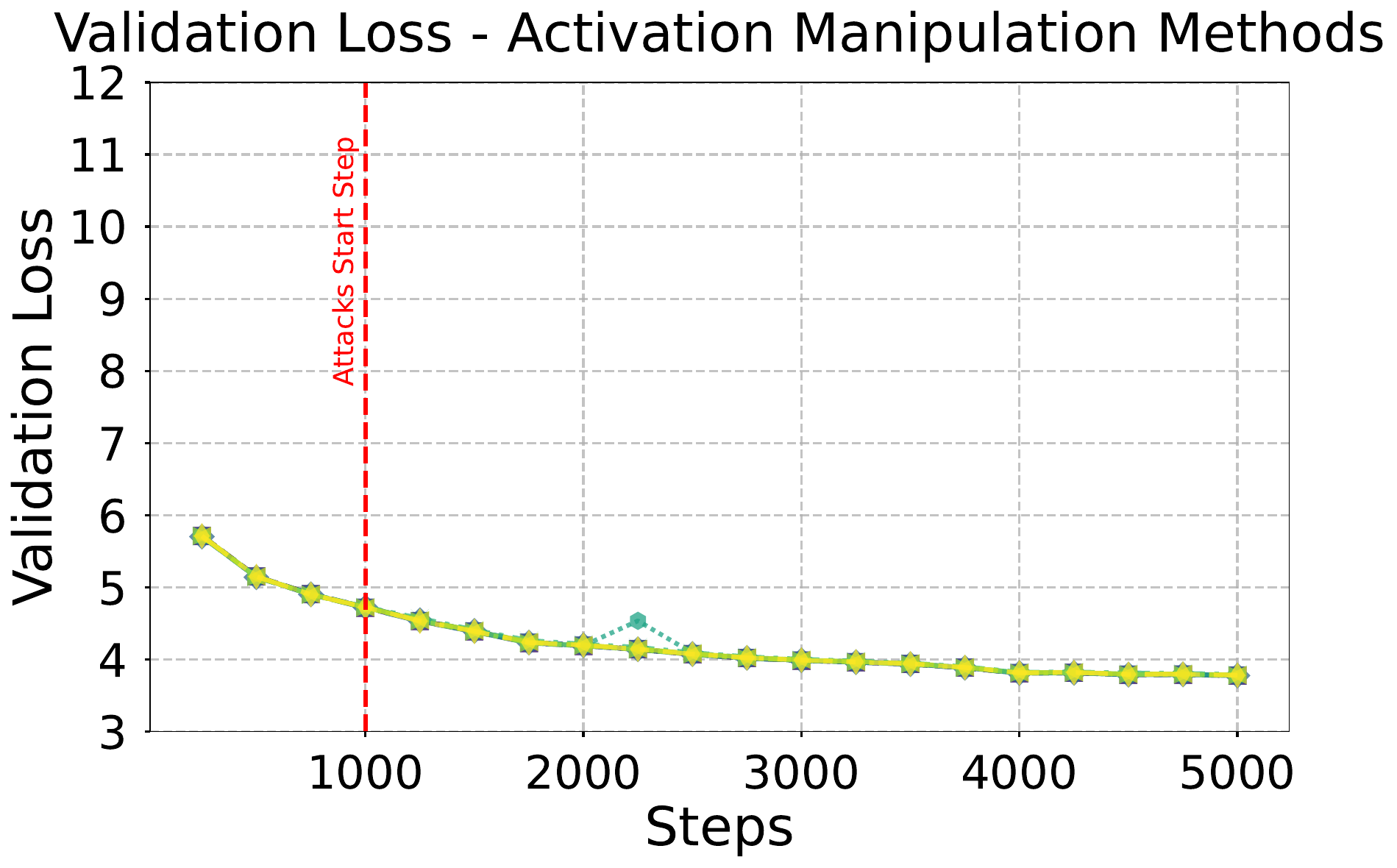}
            \caption*{With verification}
        \end{subfigure}
    \caption{OpenWebText Dataset}
    \end{subfigure}
    \caption{\textbf{Validation loss} comparing our verification mechanism against baseline vanilla training under \textbf{activation manipulation attacks}. We evaluate Llama-3-0.6B across three datasets (C4, FineWeb, and OpenWebText). The dotted red line marks the transition from warm-up to the attack phase, where Byzantine nodes begin submitting adversarial activations. Our verification approach maintains stable validation performance while without it, the baseline shows significant degradation post-attack.}
    \label{fig:detailed_validation_loss_evolution_activations}
\end{figure}
}

\newcommand{\figGradientTrainingRunDetailed}{
\begin{figure}[t]
    \centering
    \begin{subfigure}[b]{0.95\textwidth}
        \centering
        \includegraphics[width=\textwidth]{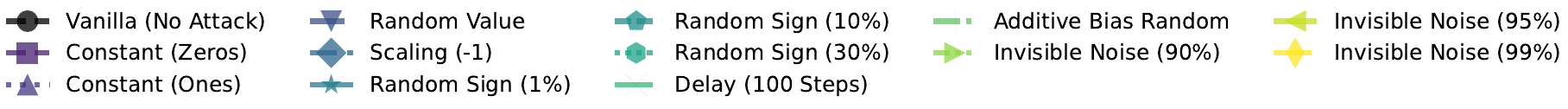}
    \end{subfigure}\\\vspace{2em}
    \begin{subfigure}[b]{\textwidth}
        \begin{subfigure}[b]{0.45\textwidth}
            \centering
            \includegraphics[width=\textwidth]{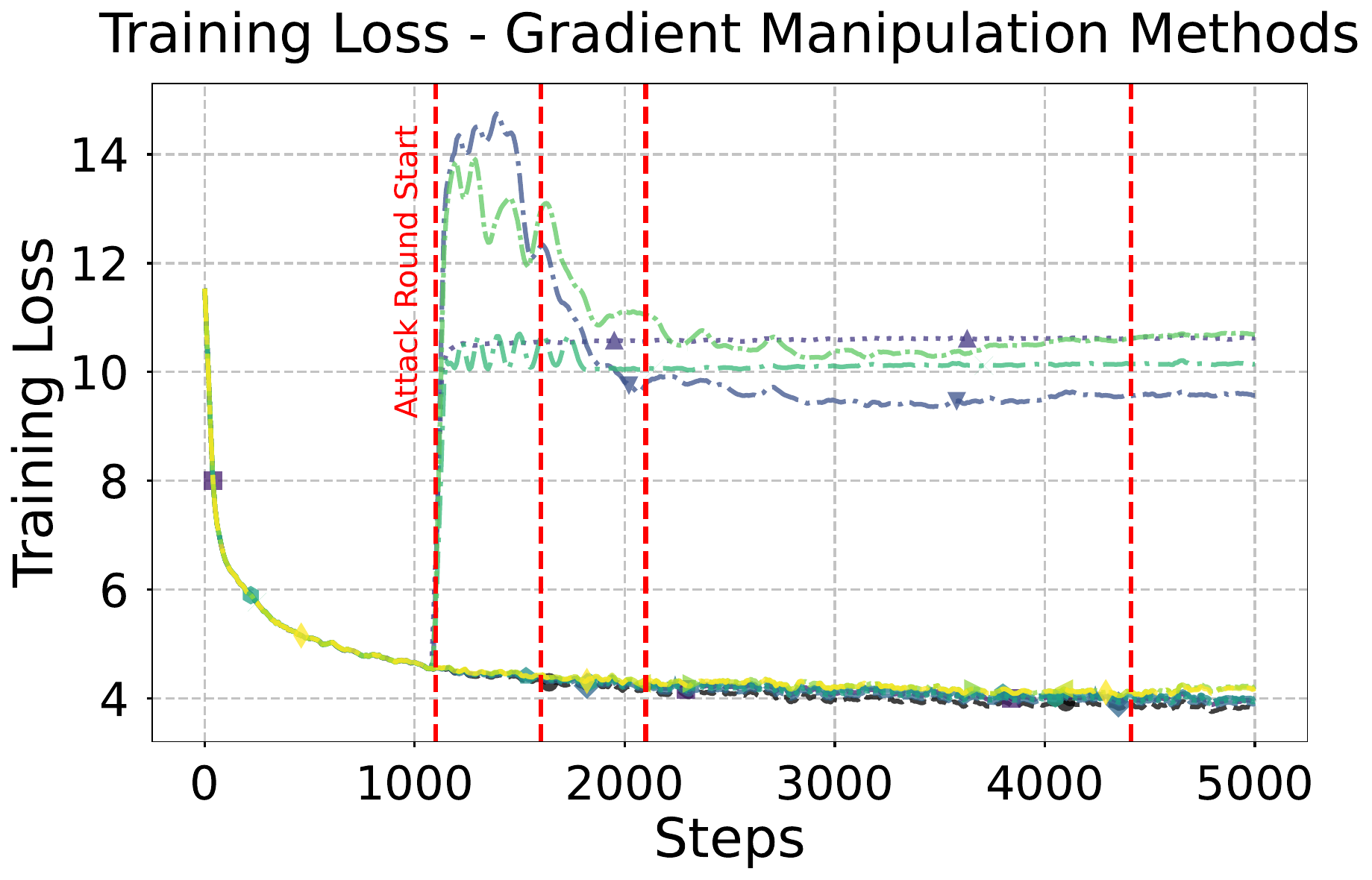}
            \caption*{Without verification}
        \end{subfigure}
            \hspace{2em}
        \begin{subfigure}[b]{0.45\textwidth}
            \centering
            \includegraphics[width=\textwidth]{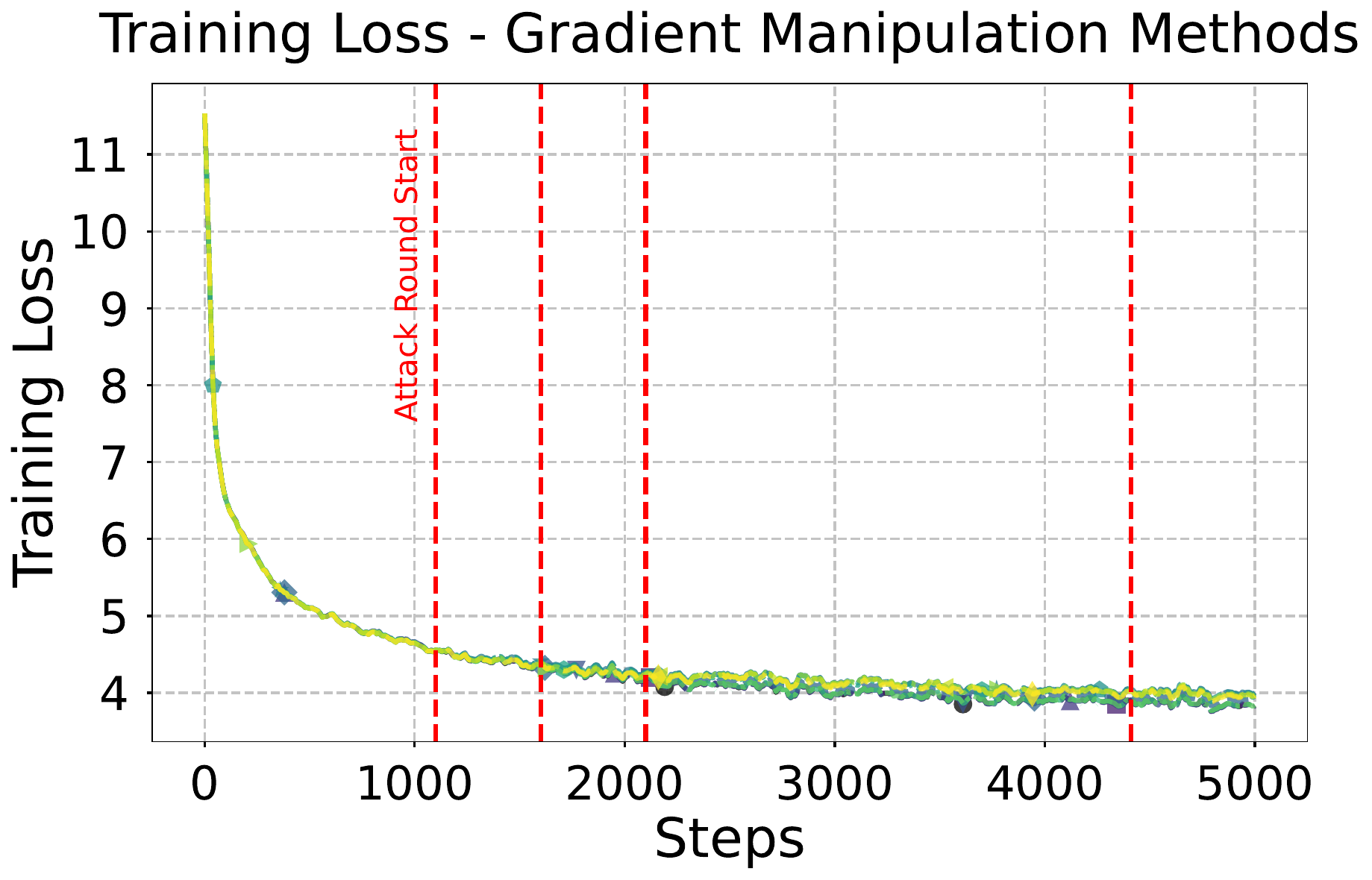}
            \caption*{With verification}
        \end{subfigure}
        \caption{C4 Dataset}
    \end{subfigure}\\\vspace{2em}
    \begin{subfigure}[b]{\textwidth}
        \begin{subfigure}[b]{0.45\textwidth}
            \centering
            \includegraphics[width=\textwidth]{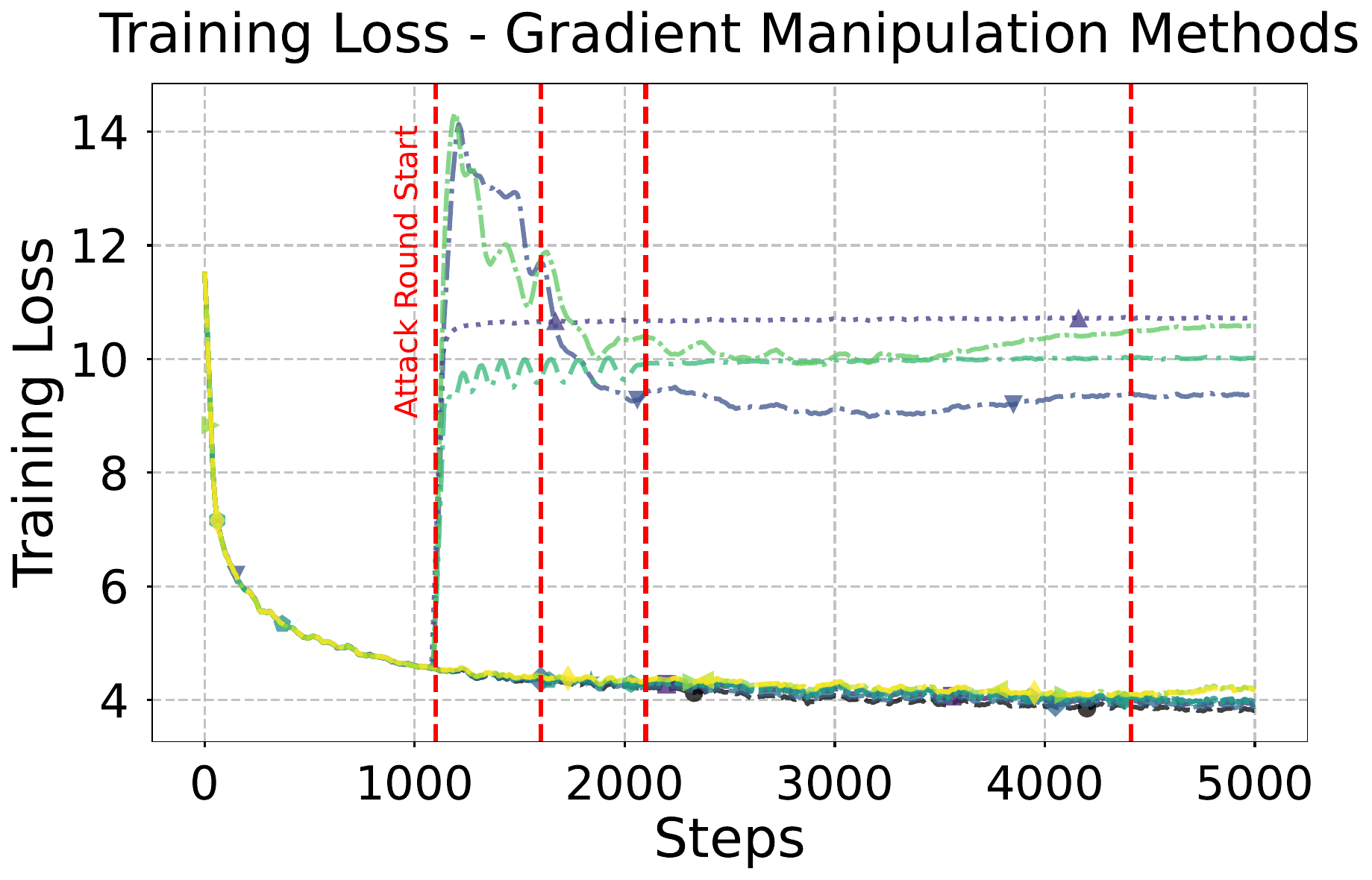}
            \caption*{Without verification}
        \end{subfigure}
            \hspace{2em}
        \begin{subfigure}[b]{0.45\textwidth}
            \centering
            \includegraphics[width=\textwidth]{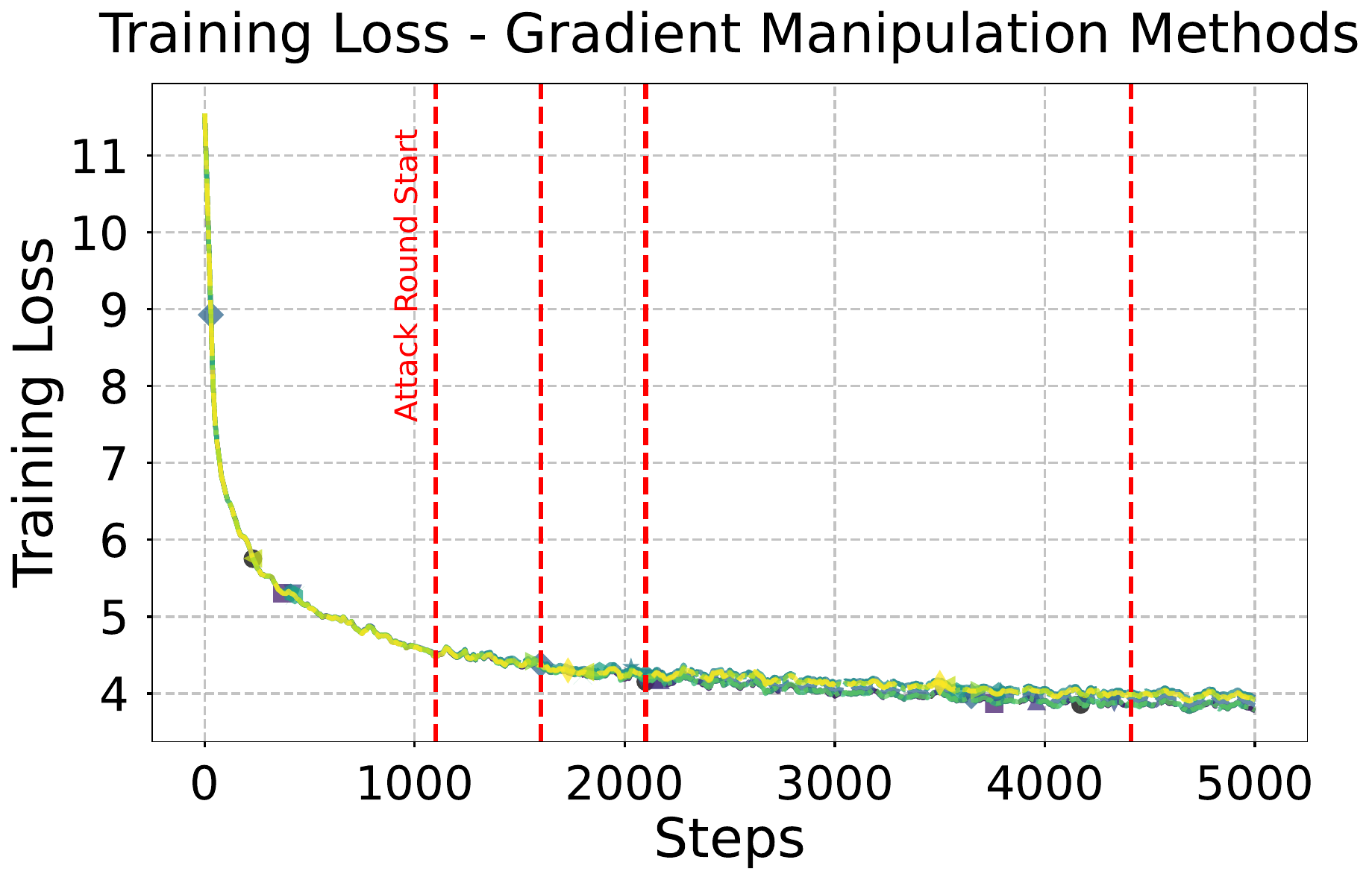}
            \caption*{With verification}
        \end{subfigure}
    \caption{FineWeb Dataset}
    \end{subfigure}\\\vspace{2em}
    \begin{subfigure}[b]{\textwidth}
        \begin{subfigure}[b]{0.45\textwidth}
            \centering
            \includegraphics[width=\textwidth]{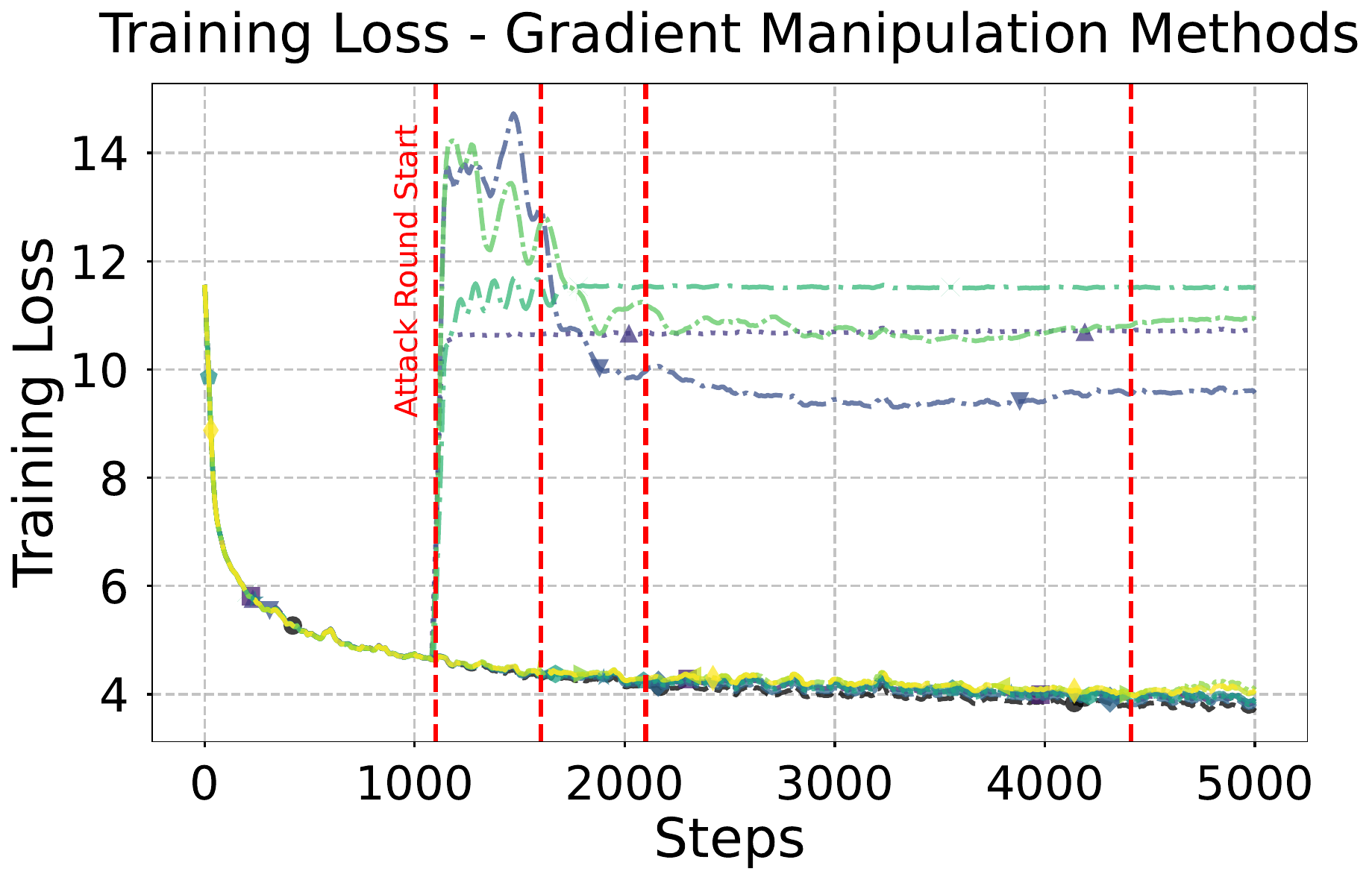}
            \caption*{Without verification}
        \end{subfigure}
            \hspace{2em}
        \begin{subfigure}[b]{0.45\textwidth}
            \centering
            \includegraphics[width=\textwidth]{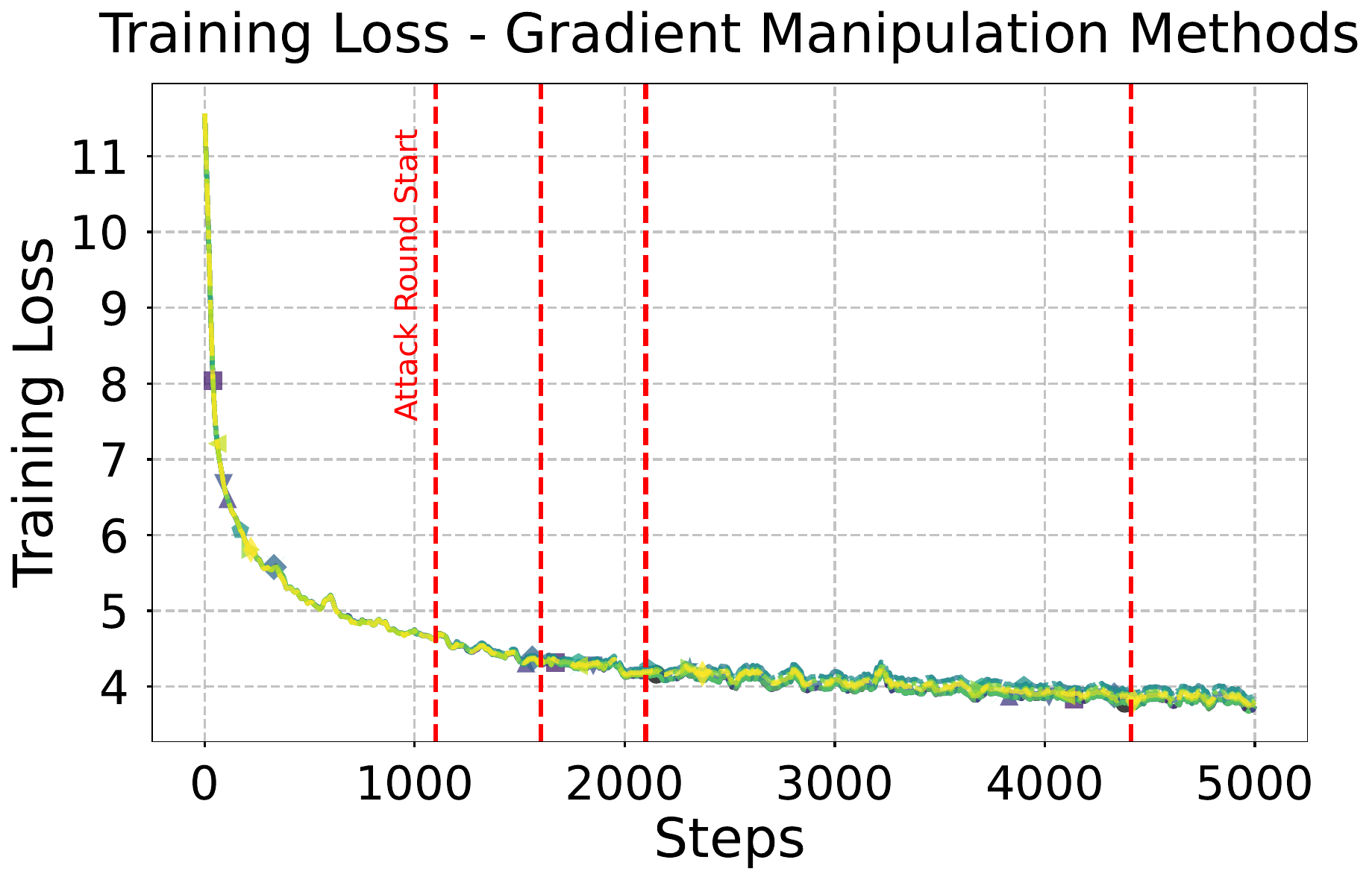}
            \caption*{With verification}
        \end{subfigure}
    \caption{OpenWebText Dataset}
    \end{subfigure}
    \caption{\textbf{Training loss} comparing our verification mechanism against baseline vanilla training under \textbf{gradient manipulation attacks}. We evaluate on Llama-3-0.6B using three datasets (C4, FineWeb, and OpenWebText). Dotted vertical lines indicate attack initiation points where 6 randomly selected nodes begin submitting adversarial gradients in coordinated Byzantine attacks. Our verification approach maintains stable convergence while the baseline suffers significant degradation under attack.}
    \label{fig:detailed_loss_evolution_gradients}
\end{figure}
}

\newcommand{\figGradientValidationRunDetailed}{
\begin{figure}[t]
    \centering
    \begin{subfigure}[b]{0.95\textwidth}
        \centering
        \includegraphics[width=\textwidth]{figures/0.6b_detailed/c4_results/gradient_no_byzantine_defense/validation_loss_legend.pdf}
    \end{subfigure}\\\vspace{2em}
    \begin{subfigure}[b]{\textwidth}
        \begin{subfigure}[b]{0.45\textwidth}
            \centering
            \includegraphics[width=\textwidth]{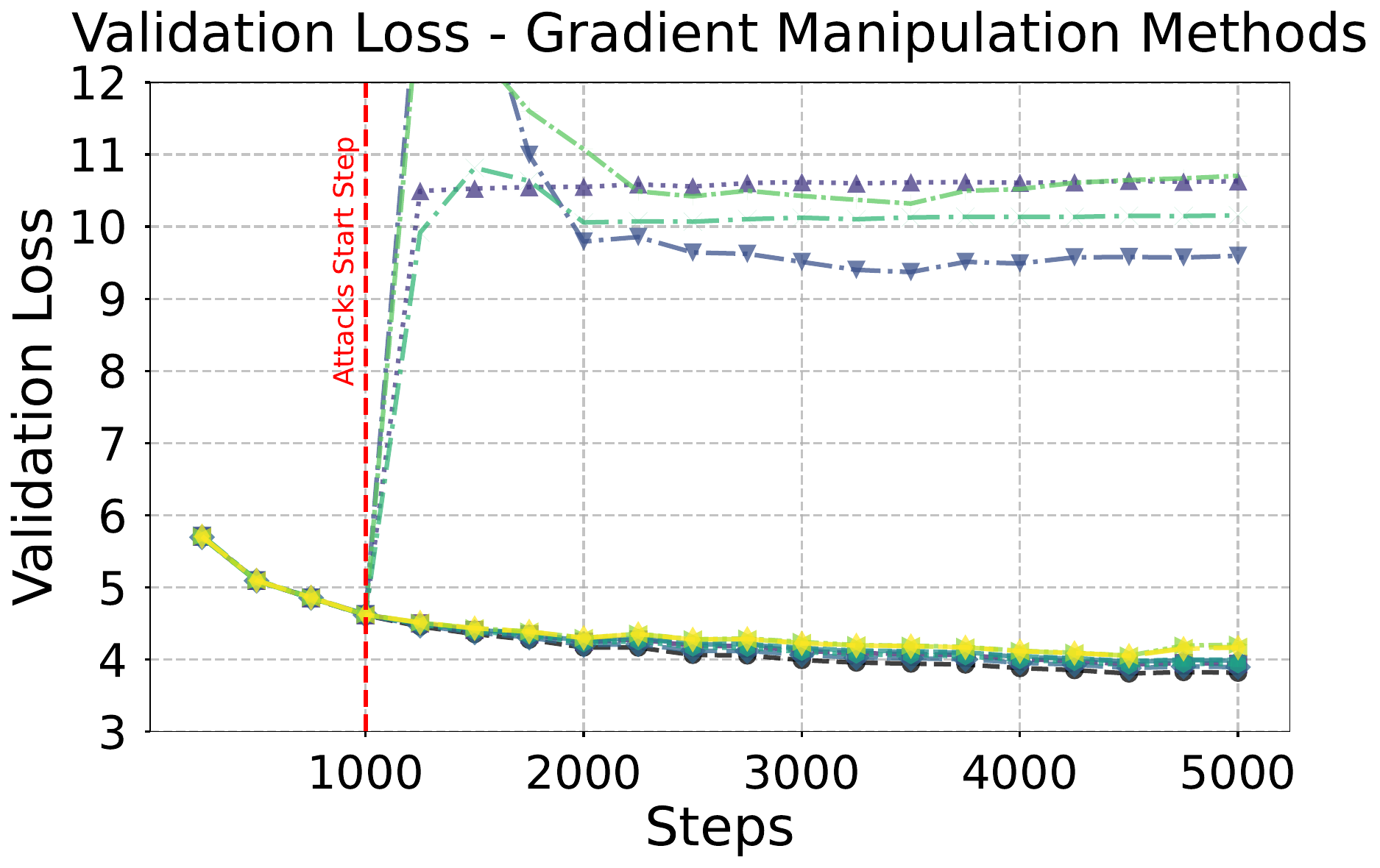}
            \caption*{Without verification}
        \end{subfigure}
            \hspace{2em}
        \begin{subfigure}[b]{0.45\textwidth}
            \centering
            \includegraphics[width=\textwidth]{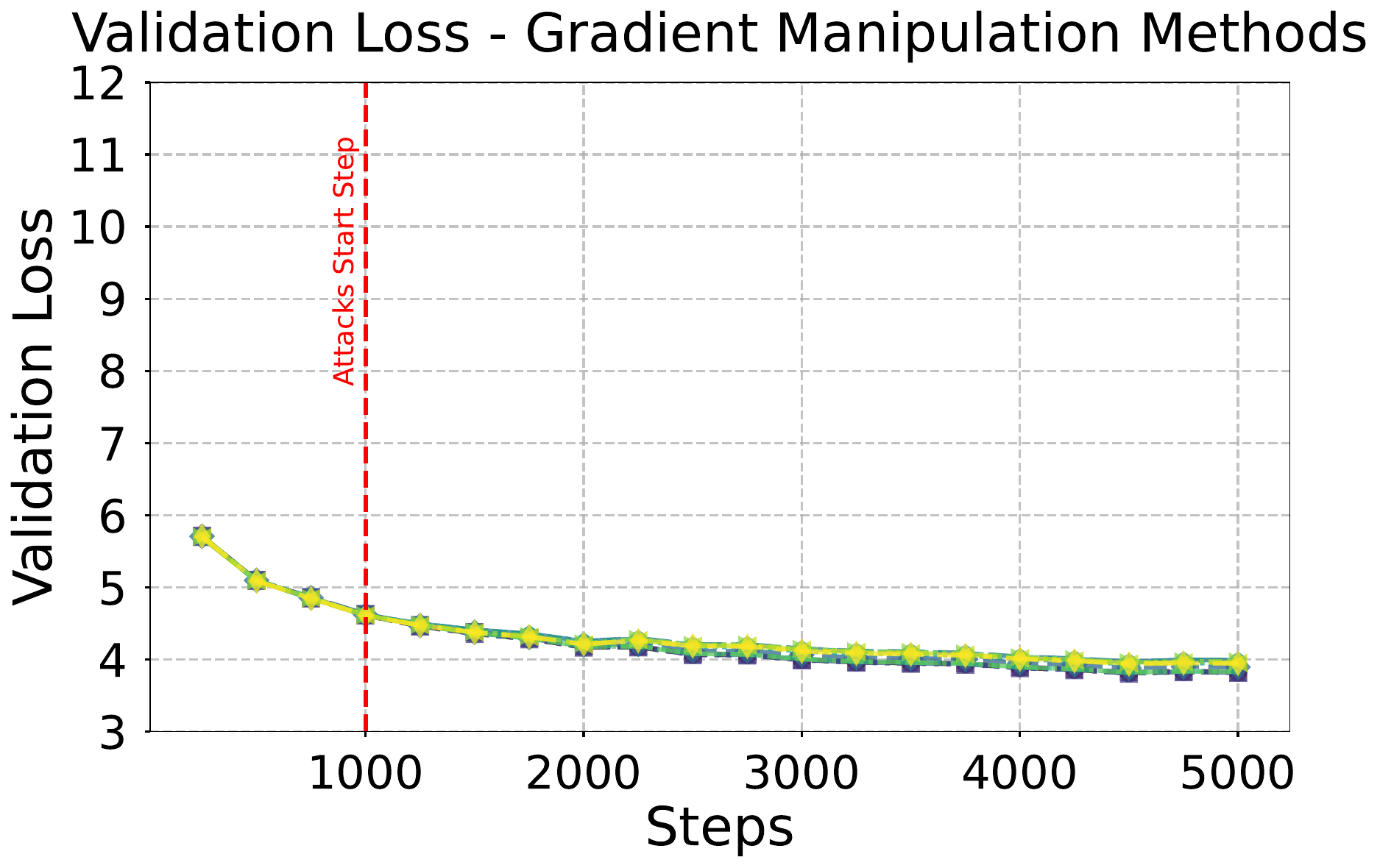}
            \caption*{With verification}
        \end{subfigure}
        \caption{C4 Dataset}
    \end{subfigure}\\\vspace{2em}
    \begin{subfigure}[b]{\textwidth}
        \begin{subfigure}[b]{0.45\textwidth}
            \centering
            \includegraphics[width=\textwidth]{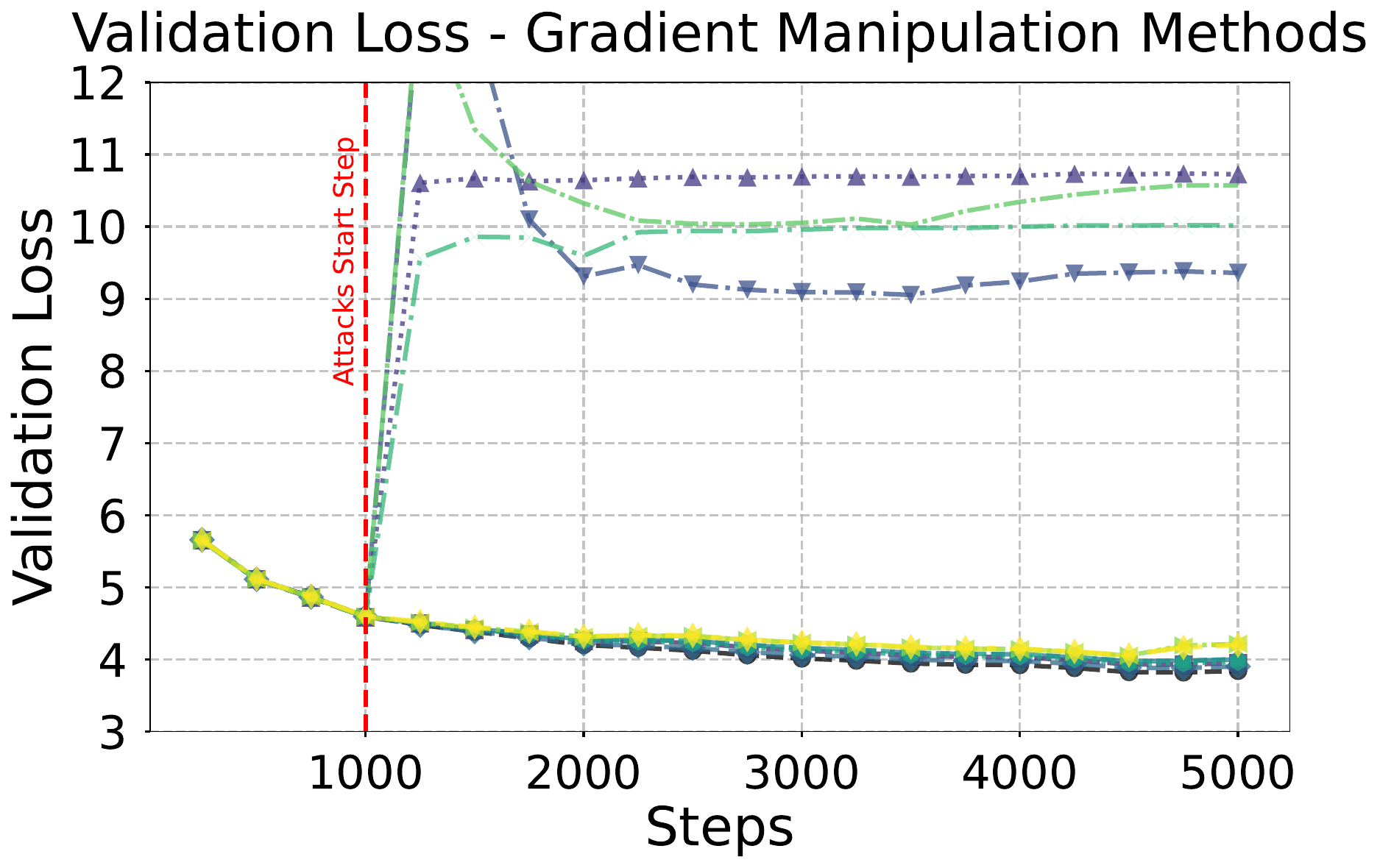}
            \caption*{Without verification}
        \end{subfigure}
            \hspace{2em}
        \begin{subfigure}[b]{0.45\textwidth}
            \centering
            \includegraphics[width=\textwidth]{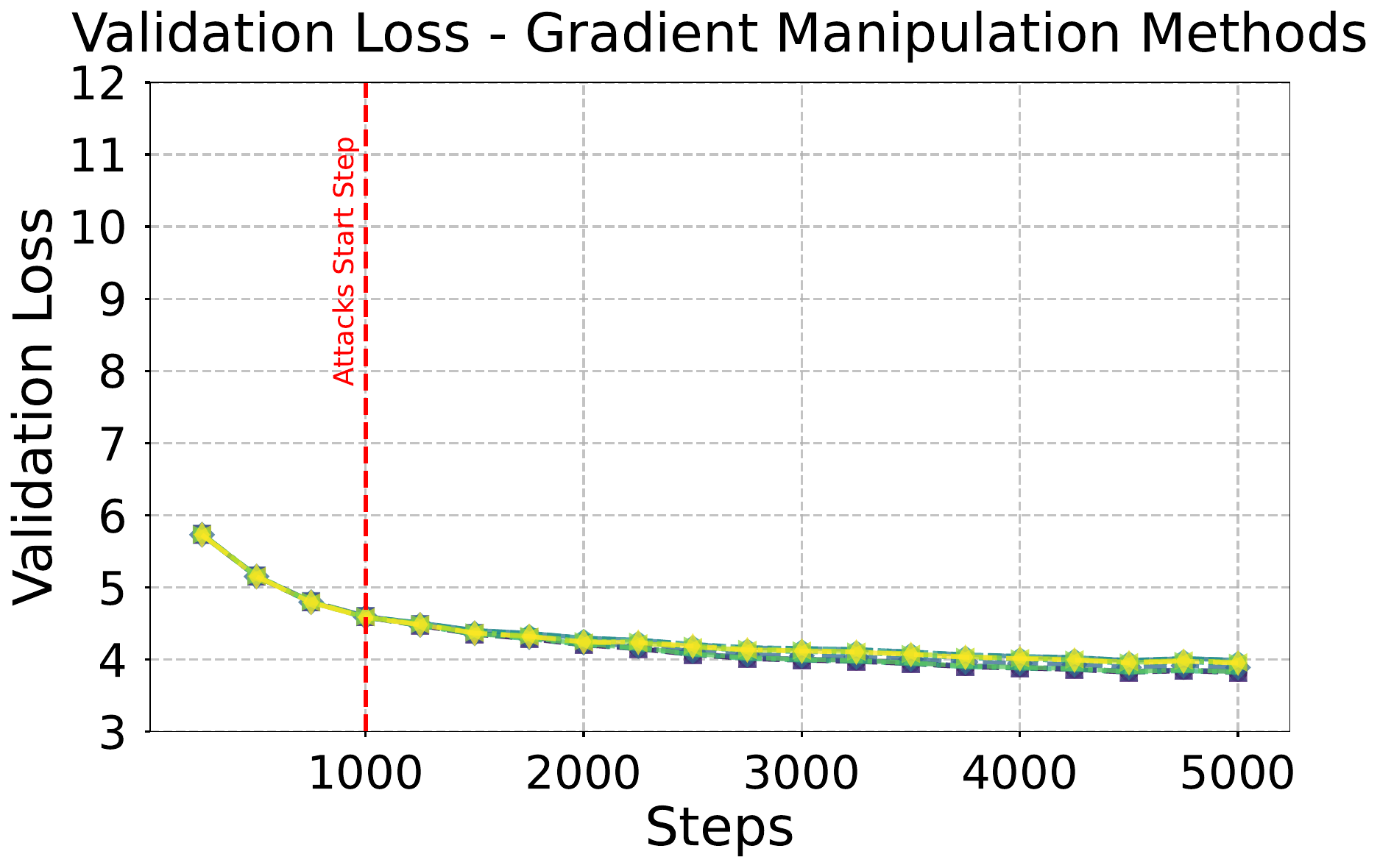}
            \caption*{With verification}
        \end{subfigure}
    \caption{FineWeb Dataset}
    \end{subfigure}\\\vspace{2em}
    \begin{subfigure}[b]{\textwidth}
        \begin{subfigure}[b]{0.45\textwidth}
            \centering
            \includegraphics[width=\textwidth]{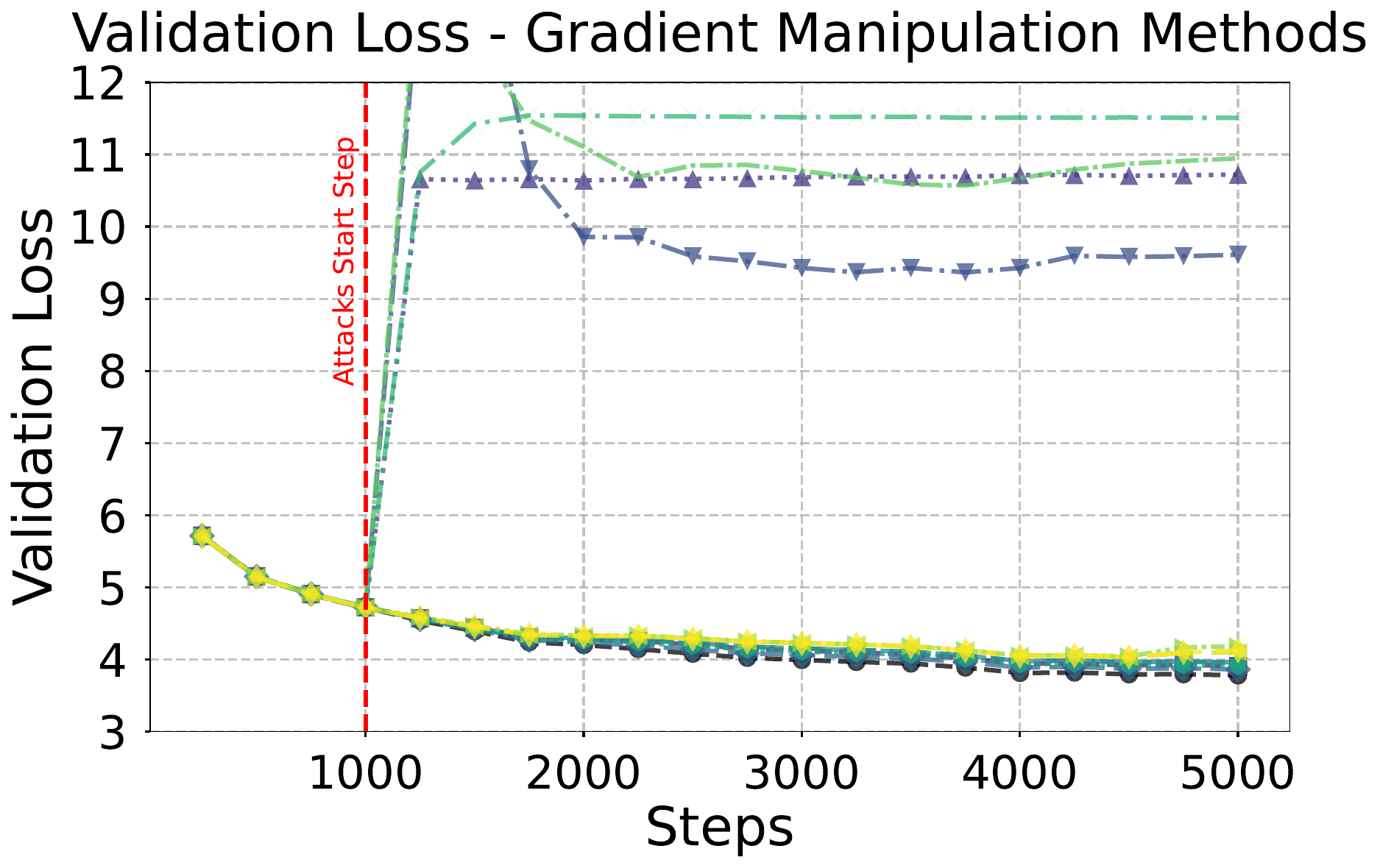}
            \caption*{Without verification}
        \end{subfigure}
            \hspace{2em}
        \begin{subfigure}[b]{0.45\textwidth}
            \centering
            \includegraphics[width=\textwidth]{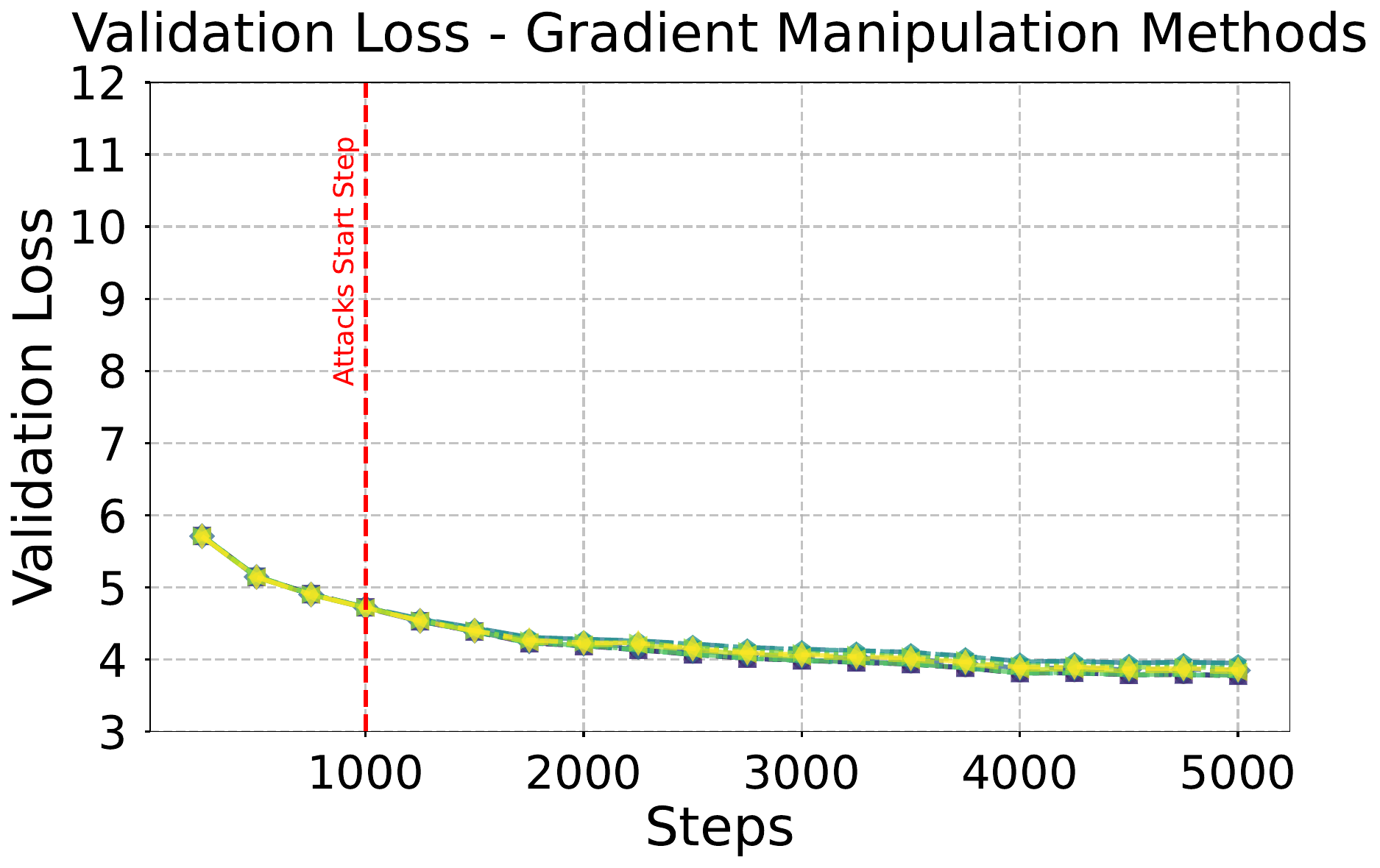}
            \caption*{With verification}
        \end{subfigure}
    \caption{OpenWebText Dataset}
    \end{subfigure}
    \caption{\textbf{Validation loss} comparing our verification mechanism against baseline vanilla training under \textbf{gradient manipulation attacks}. We evaluate Llama-3-0.6B across three datasets (C4, FineWeb, and OpenWebText). The dotted red line marks the transition from warm-up to the attack phase, where Byzantine nodes begin submitting adversarial gradients. Our verification approach maintains stable validation performance while without it, the baseline shows significant degradation post-attack.}
    \label{fig:detailed_validation_loss_evolution_gradients}
\end{figure}
}

\newcommand{\figAblations}{
\begin{figure*}[h!]
    \centering
    \begin{subfigure}[t]{0.32\textwidth}
        \centering
        \includegraphics[width=\textwidth]{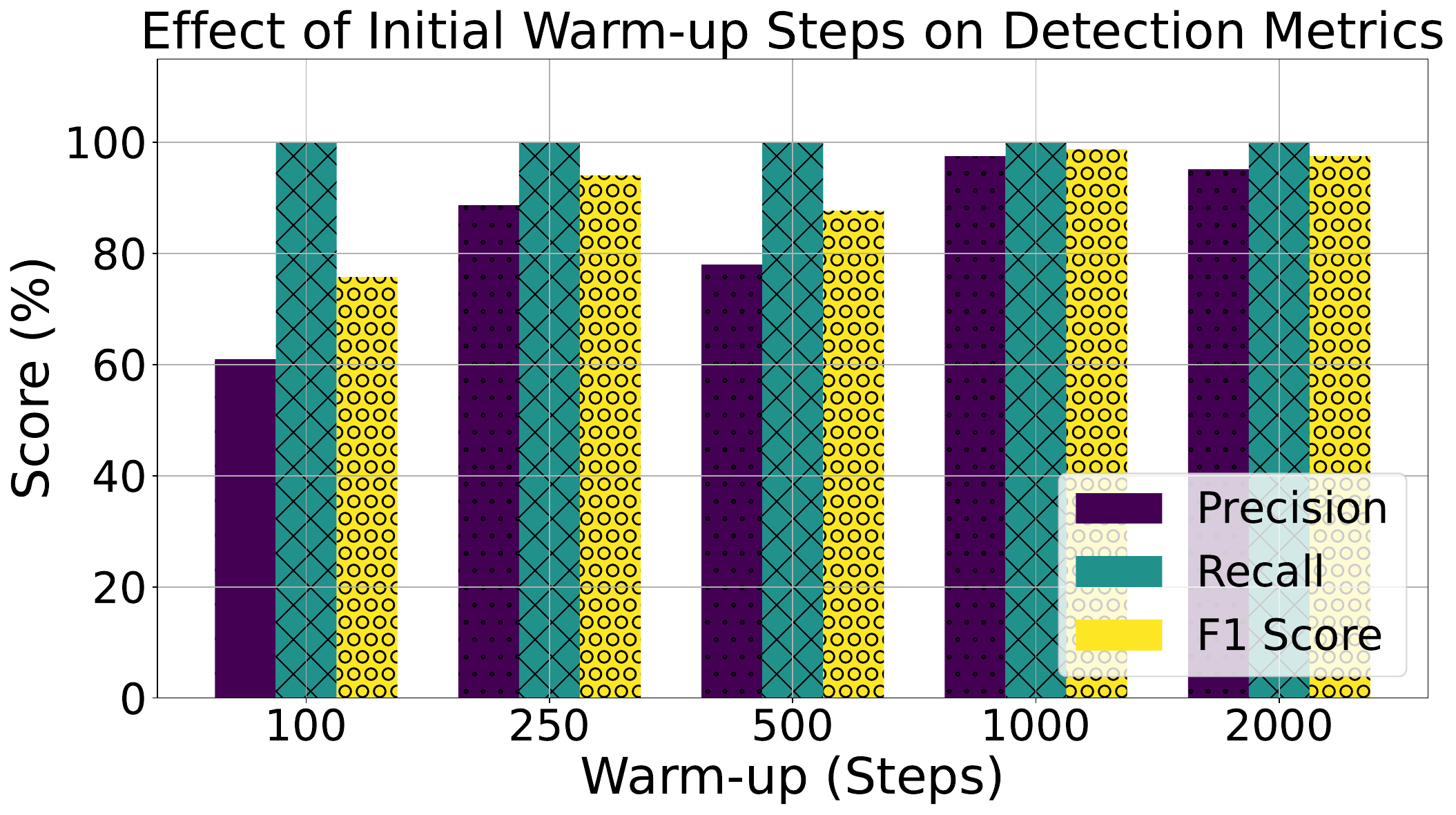}
        \caption{Warm-up Steps}
        \label{fig:ablation:warmup}
    \end{subfigure}
    \hspace{-0.5em}
    \begin{subfigure}[t]{0.32\textwidth}
        \centering
        \includegraphics[width=\textwidth]{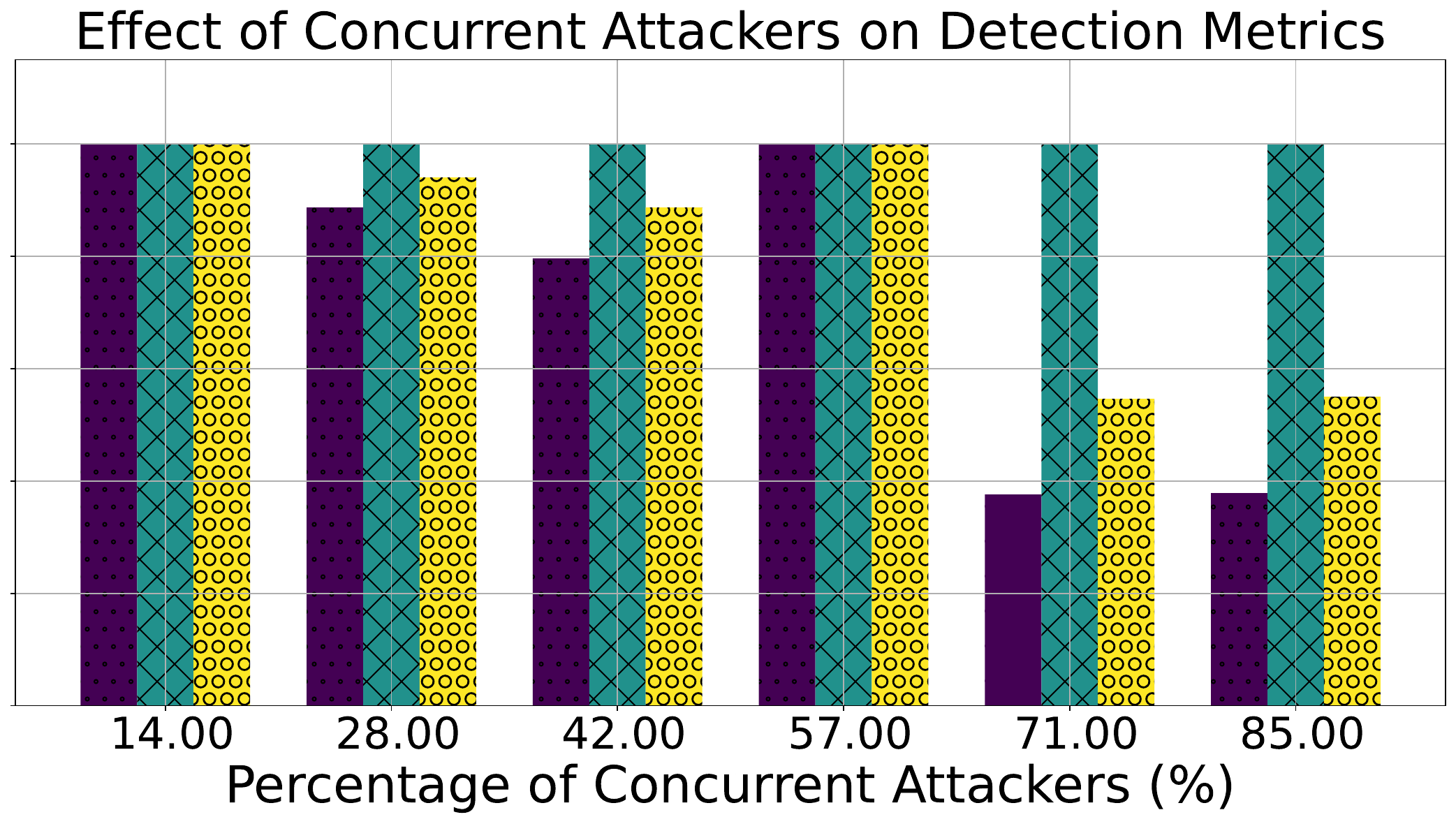}
        \caption{Simultaneous Attackers}
        \label{fig:ablation:concurrent_attackers}
    \end{subfigure}
    \hspace{-0.5em}
    \begin{subfigure}[t]{0.32\textwidth}
        \centering
        \includegraphics[width=\textwidth]{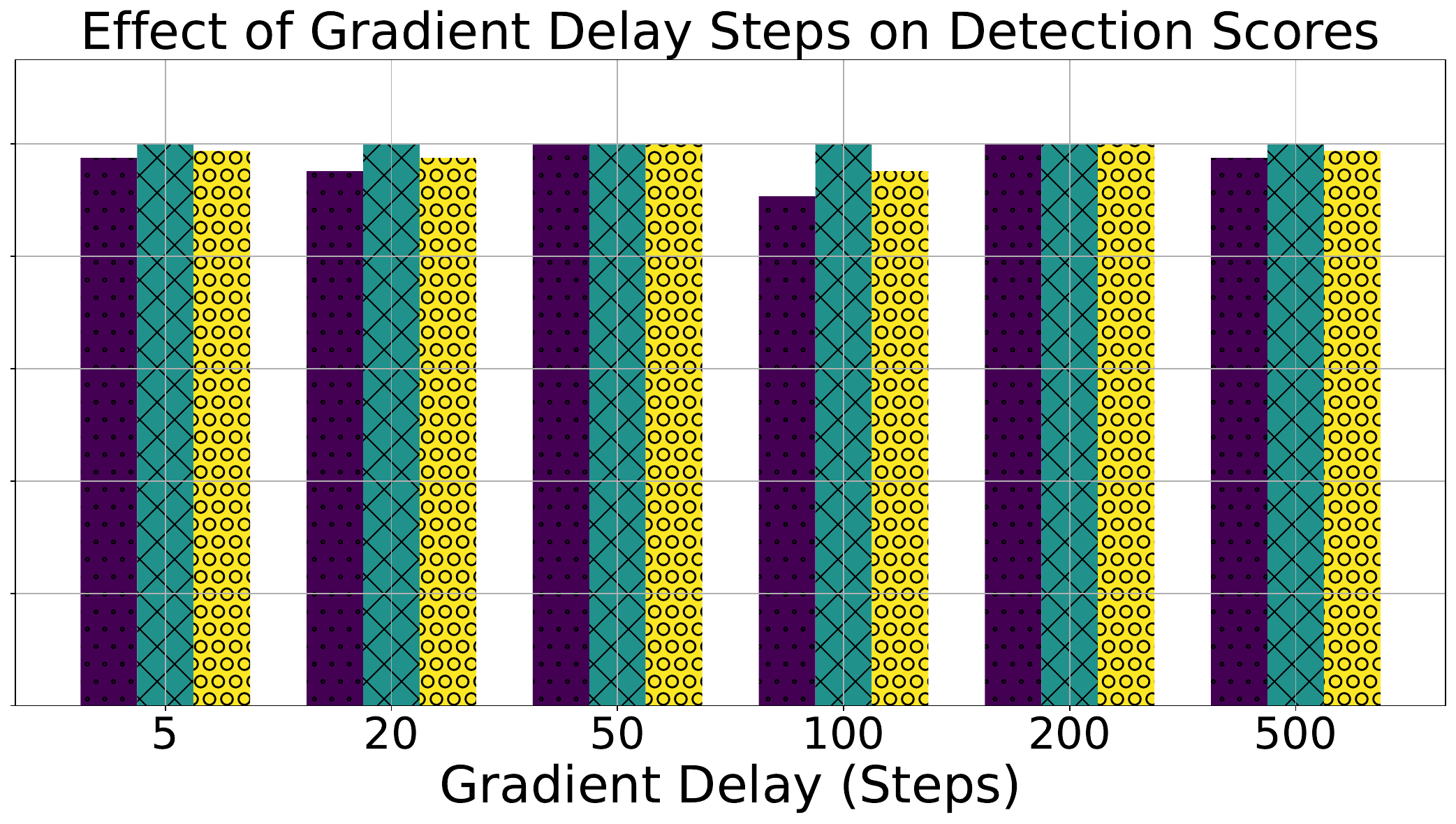}
        \caption{Delay Steps}
        \label{fig:ablation:delay_step}
    \end{subfigure}
    \vspace{-0.5em}
    \caption{Ablation studies on the effect of various elements on the verification performance.}
    \label{fig:ablations}
\end{figure*}
}

\newcommand{\figDeviationBoundEvolution}{
    \begin{figure}[tb!]
      \centering
      \begin{subfigure}{0.5\linewidth}
        \includegraphics[width=\linewidth]{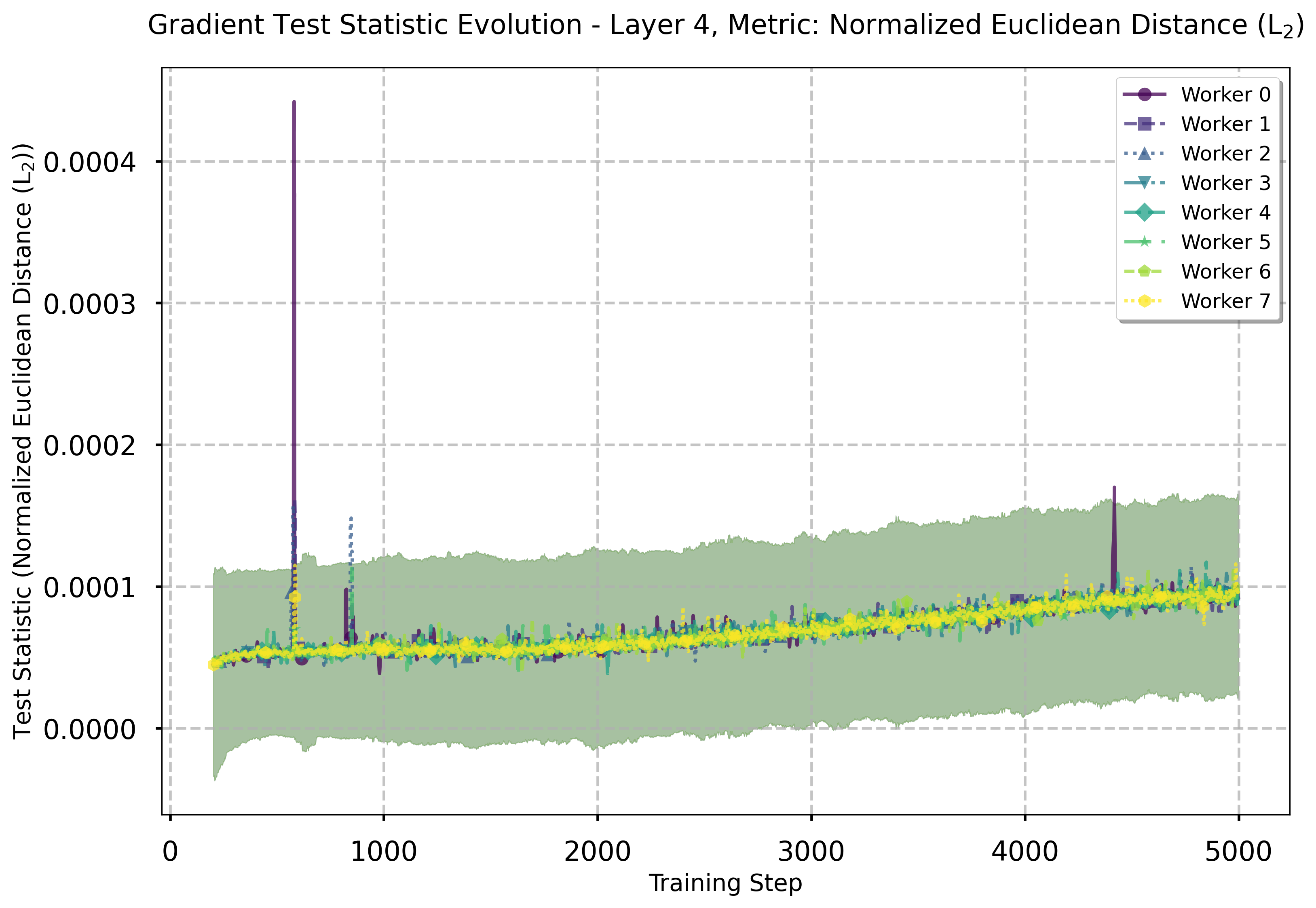}
        \caption{Normalized $L_2$ distance evolution of gradients at layer 4 under gradient delay attack. Worker 0 initiates attack at iteration 4411 and is immediately flagged.}
        \label{fig:deviation_bound_evolution:gradient}
        \end{subfigure}
        \begin{subfigure}{0.5\linewidth}
        \centering
        \includegraphics[width=\linewidth]{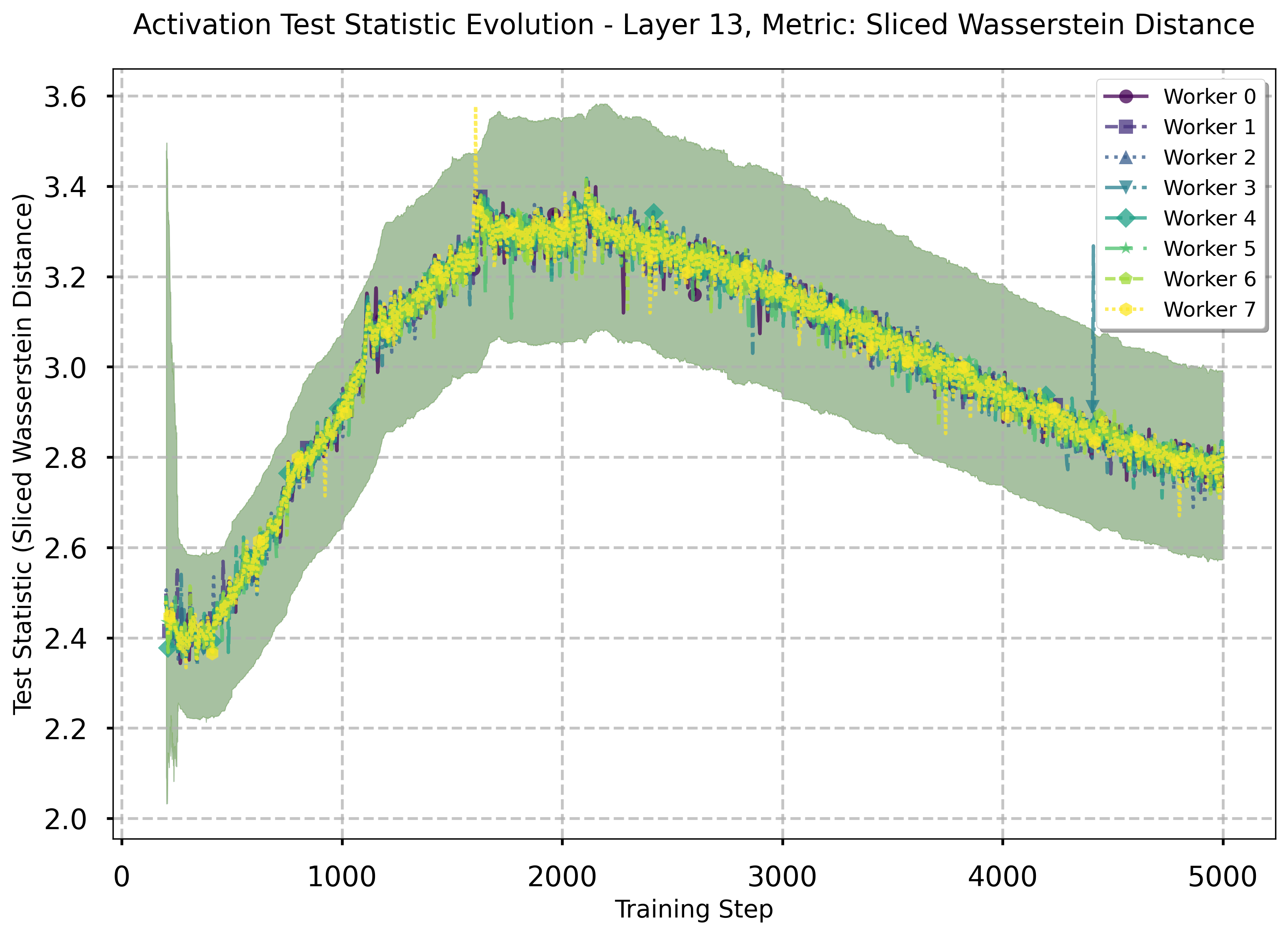}
        \caption{Sliced Wasserstein distance evolution of activations at layer 13 under random sign attack (1\%). Workers 7 and 3 initiate attacks at iterations 1600 and 4411, respectively, and are immediately flagged.}
        \label{fig:deviation_bound_evolution:activation}
        \end{subfigure}
        \caption{Evolution of adaptive deviation bounds and worker statistics. The proposed thresholding mechanism adapts to natural distribution shifts while detecting Byzantine behavior.}
        \label{fig:deviation_bound_evolution}
      \vspace{-1.5em}
    \end{figure}
}

\newcommand{\figLongRunLlamaDetailed}{
\begin{figure}[t]
    \centering
    \begin{subfigure}[b]{\textwidth}
        \begin{subfigure}[b]{0.45\textwidth}
            \centering
            \includegraphics[width=\textwidth]{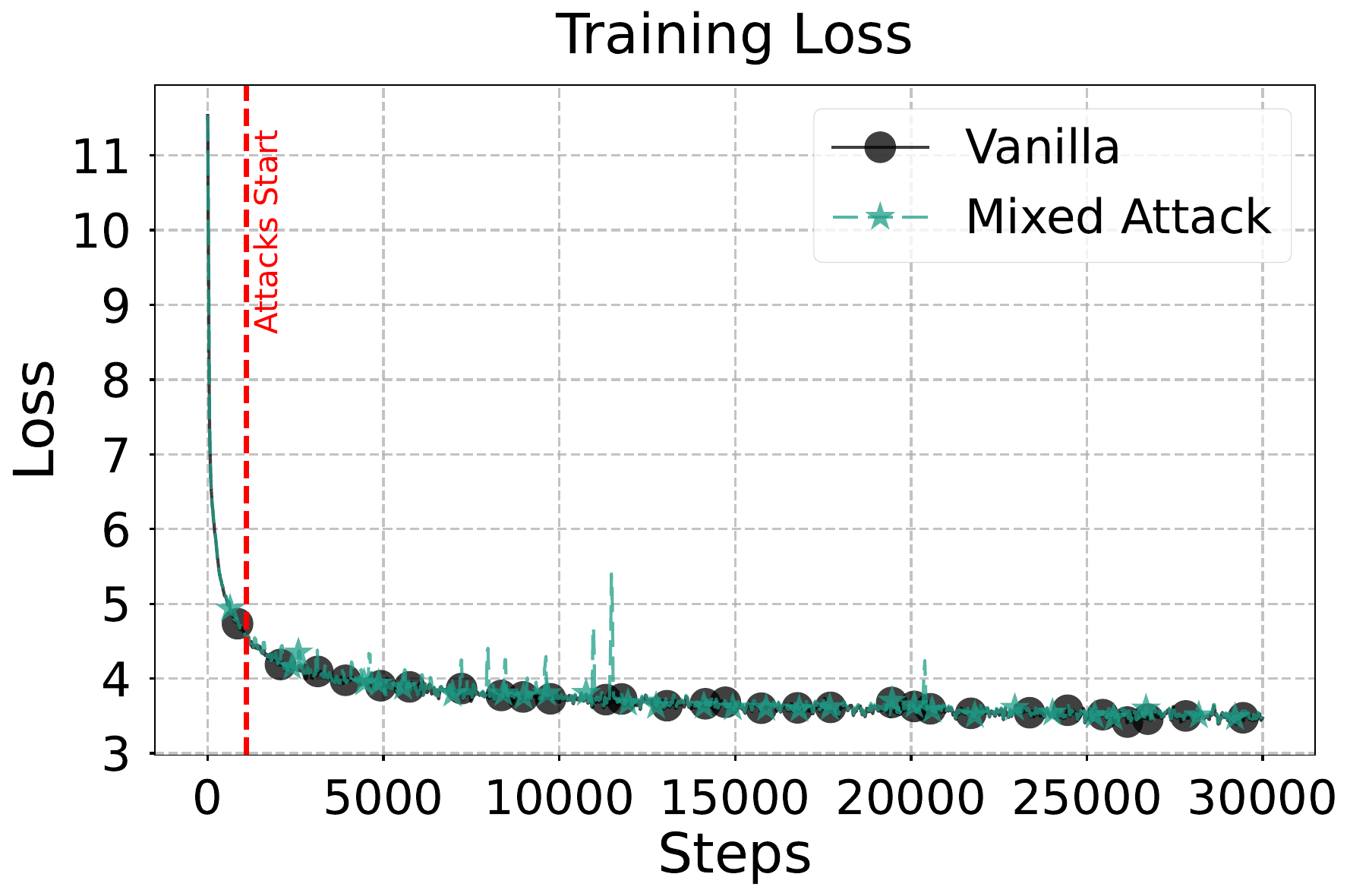}
            \caption*{Training Loss}
        \end{subfigure}
            \hspace{2em}
        \begin{subfigure}[b]{0.45\textwidth}
            \centering
            \includegraphics[width=\textwidth]{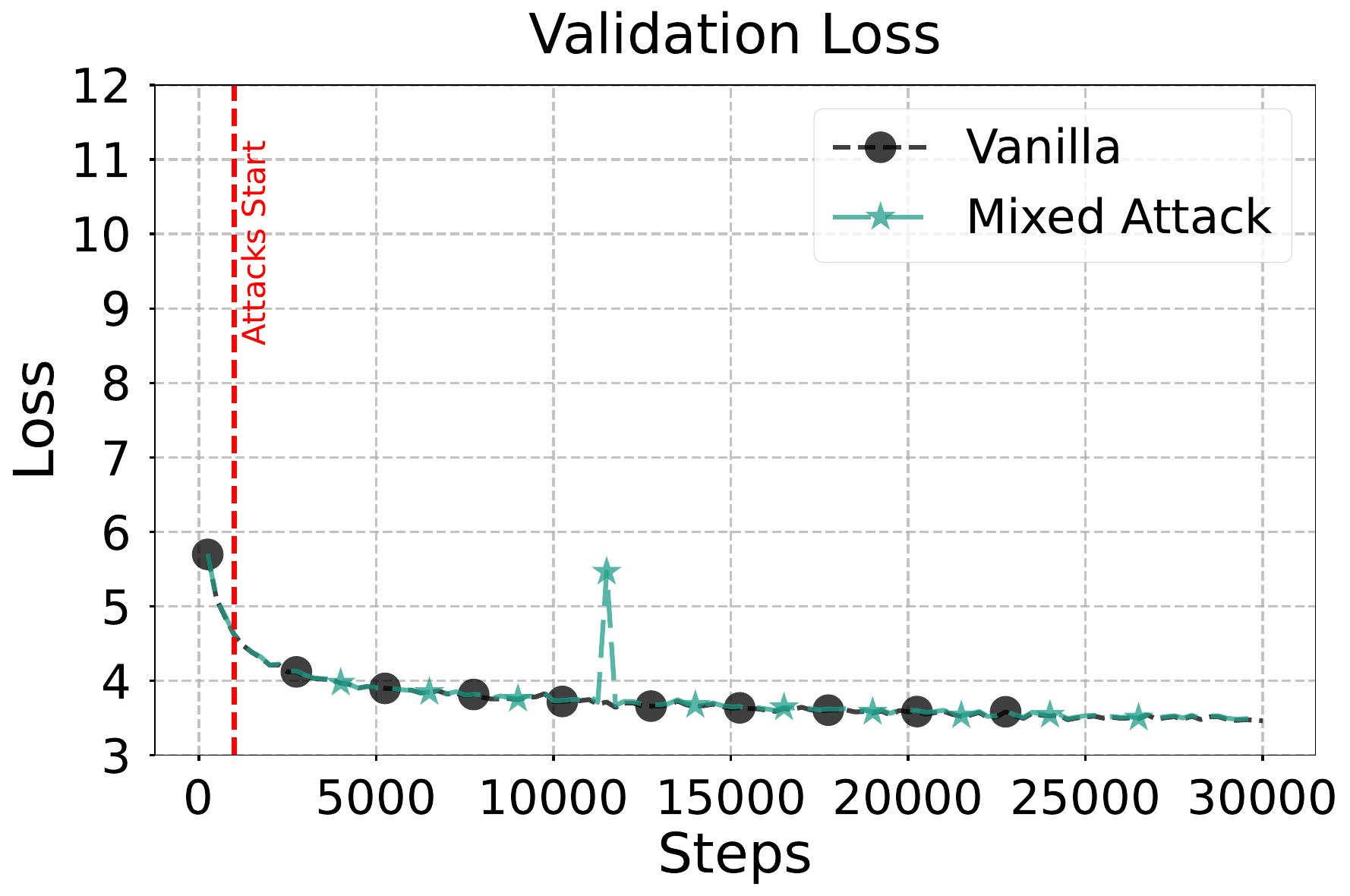}
            \caption*{Validation Loss}
        \end{subfigure}
        \caption{C4 Dataset}
    \end{subfigure}\\\vspace{1em}
    \begin{subfigure}[b]{\textwidth}
        \begin{subfigure}[b]{0.45\textwidth}
            \centering
            \includegraphics[width=\textwidth]{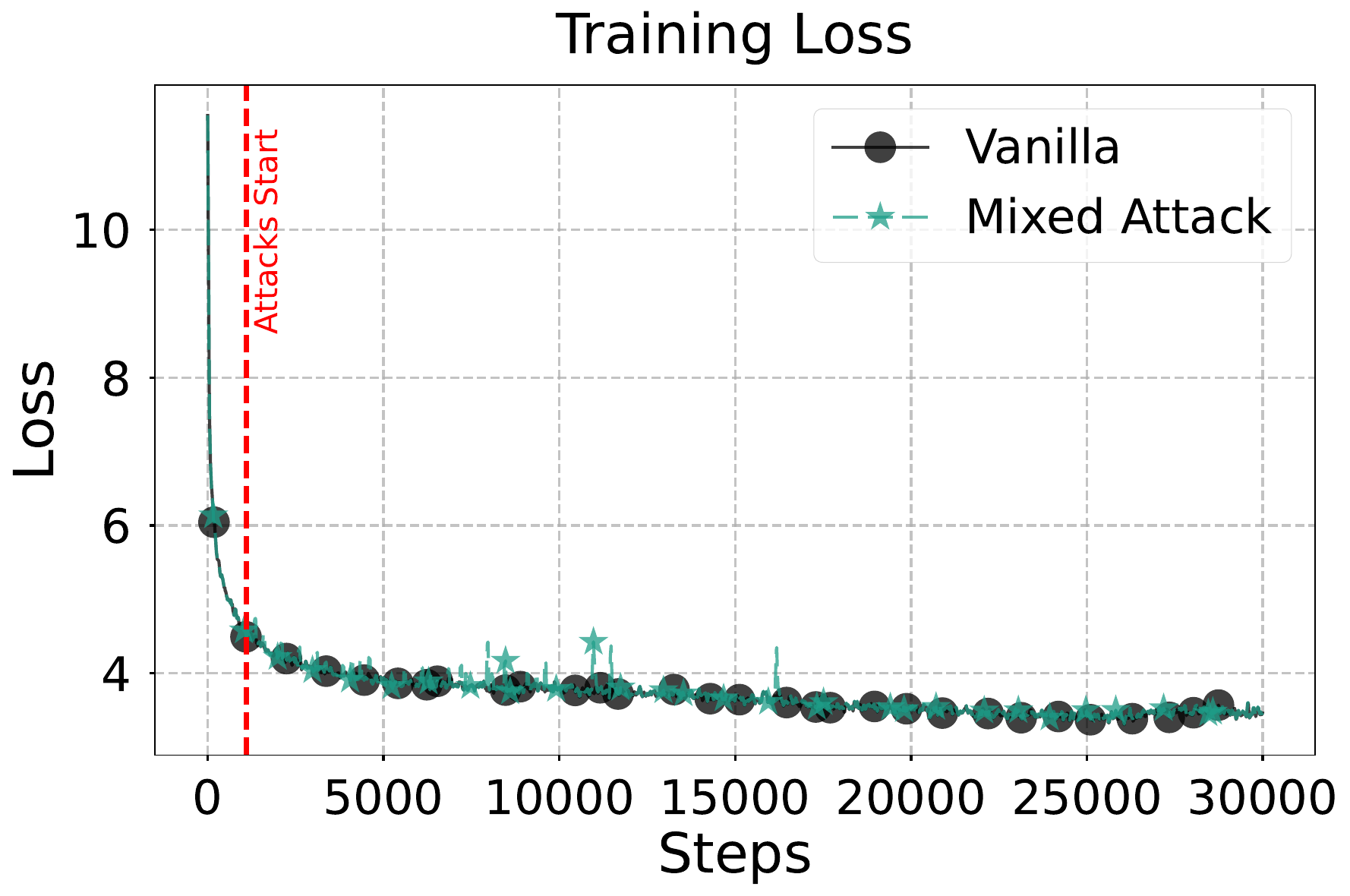}
            \caption*{Training Loss}
        \end{subfigure}
            \hspace{2em}
        \begin{subfigure}[b]{0.45\textwidth}
            \centering
            \includegraphics[width=\textwidth]{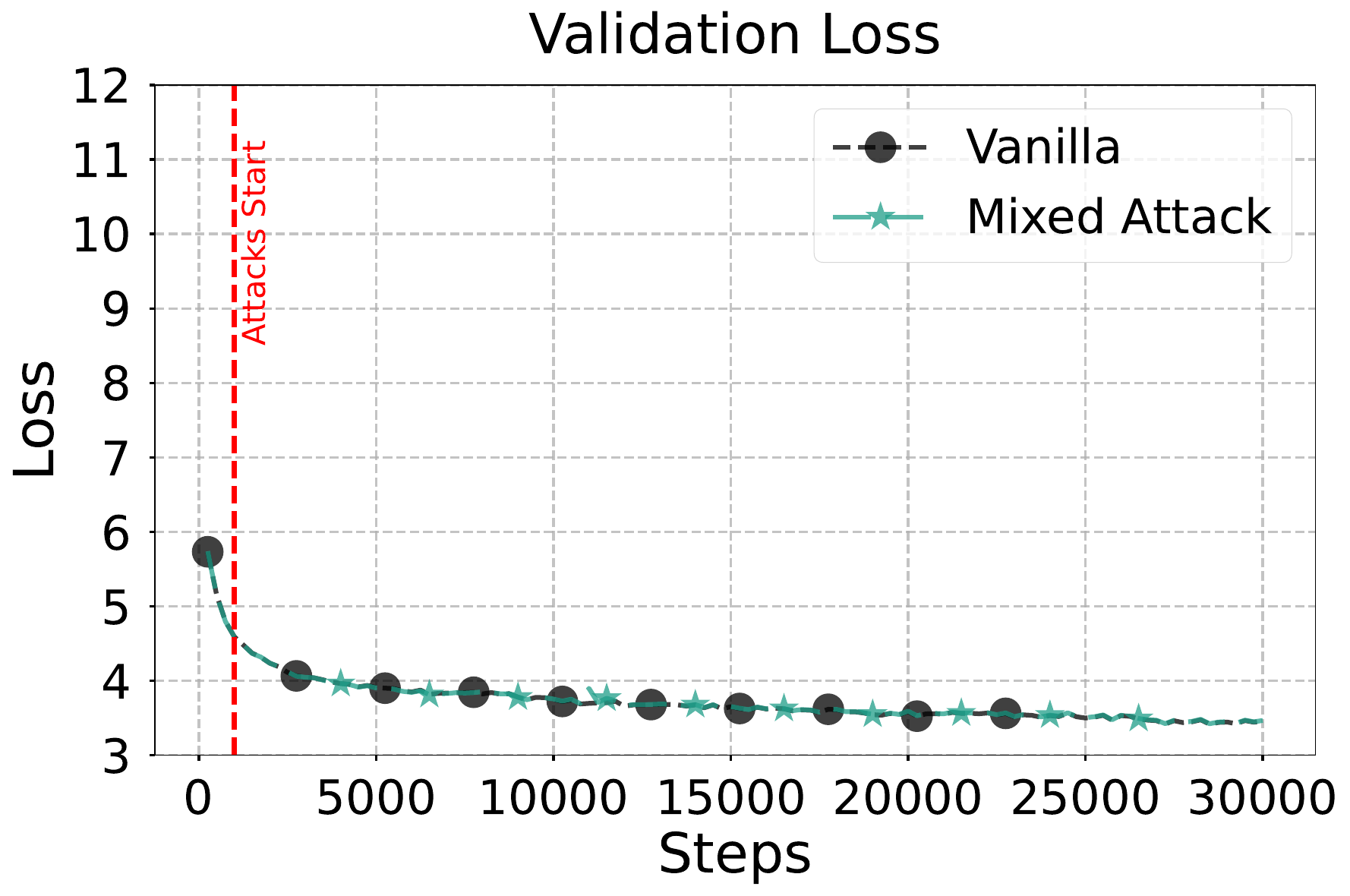}
            \caption*{Validation Loss}
        \end{subfigure}
    \caption{FineWeb Dataset}
    \end{subfigure}\\\vspace{1em}
    \begin{subfigure}[b]{\textwidth}
        \begin{subfigure}[b]{0.45\textwidth}
            \centering
            \includegraphics[width=\textwidth]{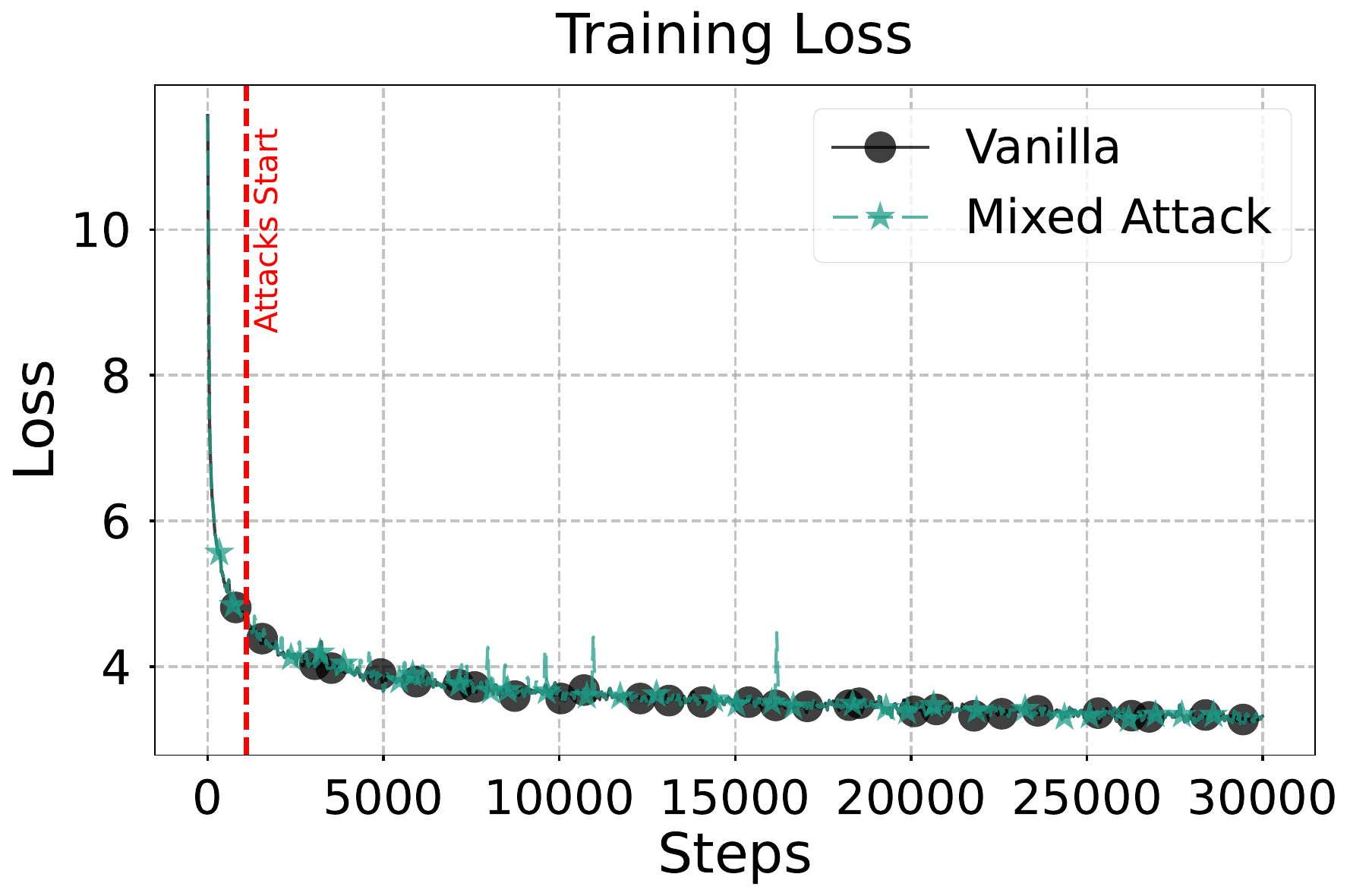}
            \caption*{Training Loss}
        \end{subfigure}
            \hspace{2em}
        \begin{subfigure}[b]{0.45\textwidth}
            \centering
            \includegraphics[width=\textwidth]{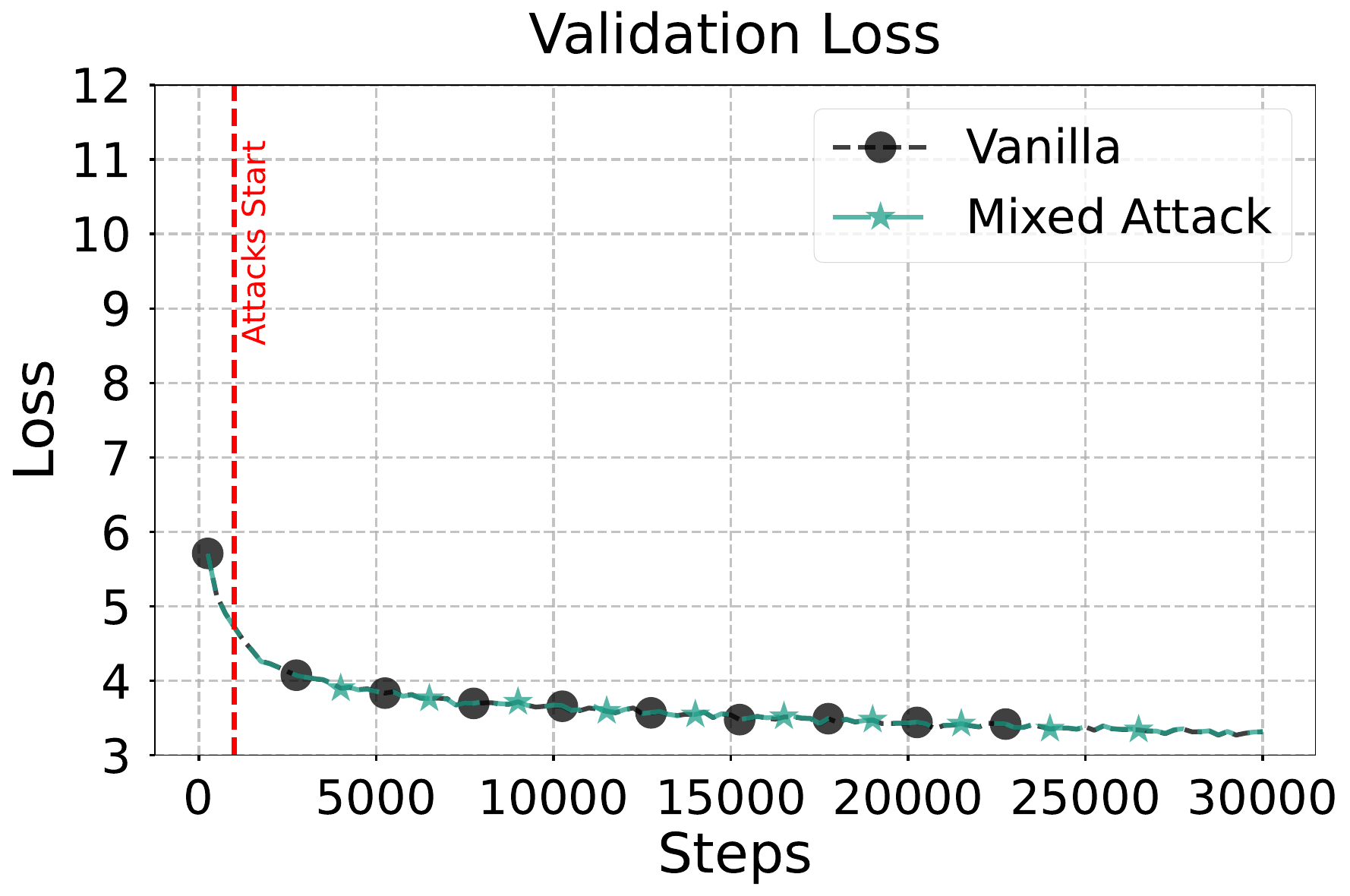}
            \caption*{Validation Loss}
        \end{subfigure}
    \caption{OpenWebText Dataset}
    \end{subfigure}
    \caption{Training and validation loss evolution for \textbf{Llama-3-0.6B} model on C4, FineWeb, and OpenWebText datasets over 30k iterations. The mixed attack scenario assumes 50\% Byzantine nodes at each pipeline stage, with one randomly selected node performing a randomly chosen attack at a randomly sampled iteration. Each node has a different mode (activation vs.~gradient) and manipulation method~(as outlined in~\Cref{sec:problem_statement:attack_types}).}
    \label{fig:detailed_long_run_llama3}
\end{figure}
}

\newcommand{\figLongRunNanoGPTDetailed}{
\begin{figure}[t]
    \centering
    \begin{subfigure}[b]{\textwidth}
        \begin{subfigure}[b]{0.45\textwidth}
            \centering
            \includegraphics[width=\textwidth]{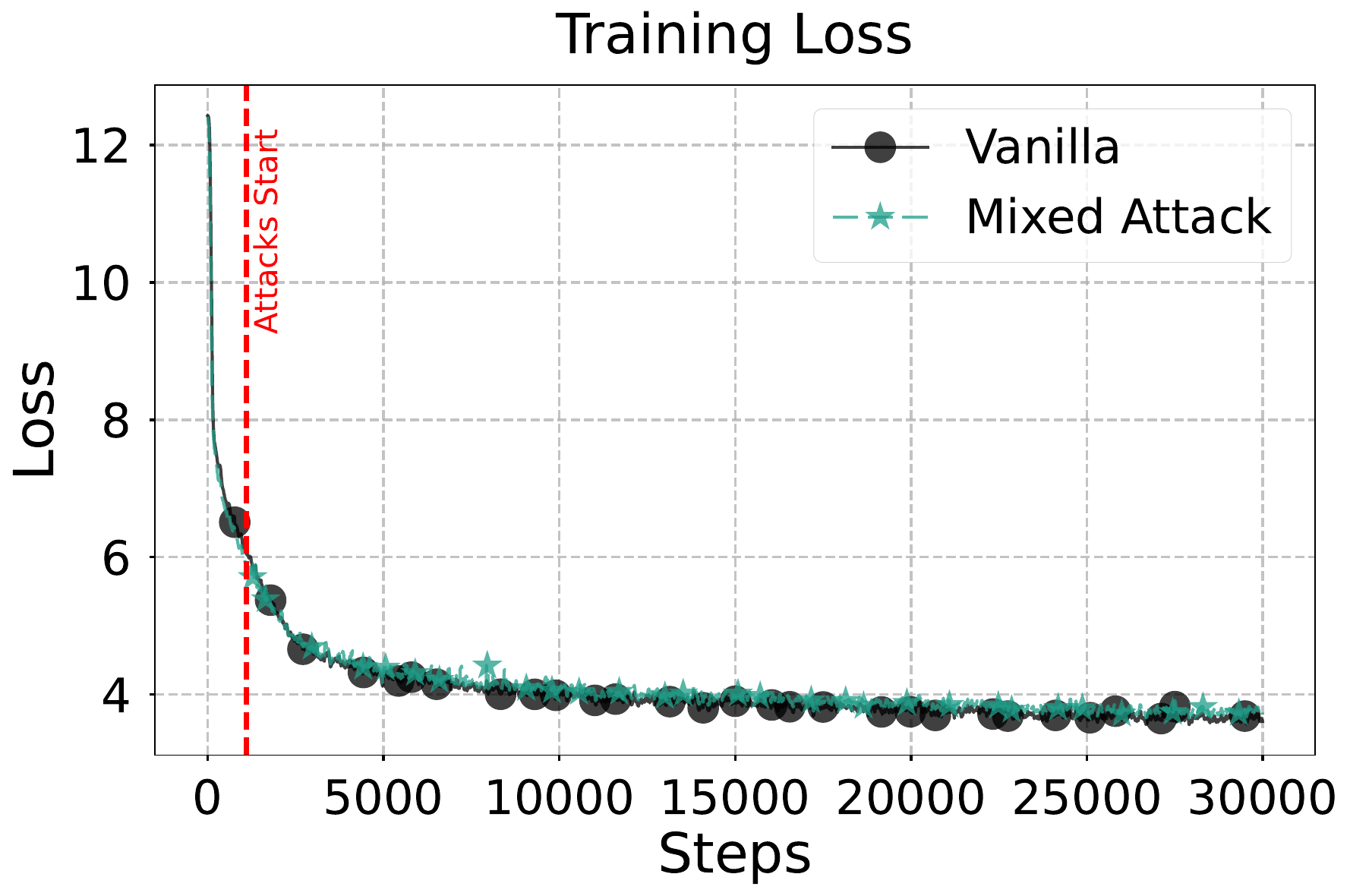}
            \caption*{Training Loss}
        \end{subfigure}
            \hspace{2em}
        \begin{subfigure}[b]{0.45\textwidth}
            \centering
            \includegraphics[width=\textwidth]{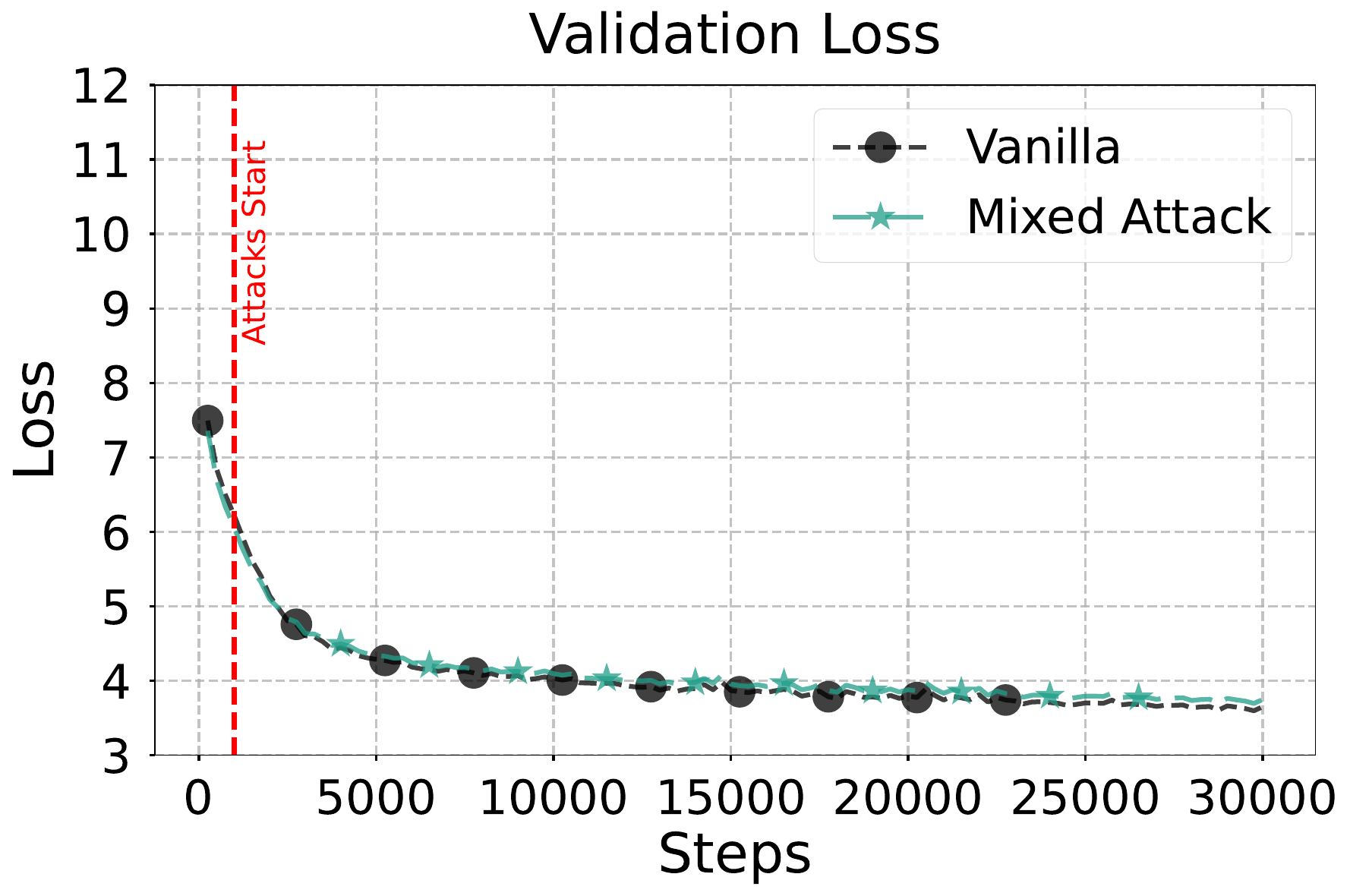}
            \caption*{Validation Loss}
        \end{subfigure}
        \caption{C4 Dataset}
    \end{subfigure}\\\vspace{1em}
    \begin{subfigure}[b]{\textwidth}
        \begin{subfigure}[b]{0.45\textwidth}
            \centering
            \includegraphics[width=\textwidth]{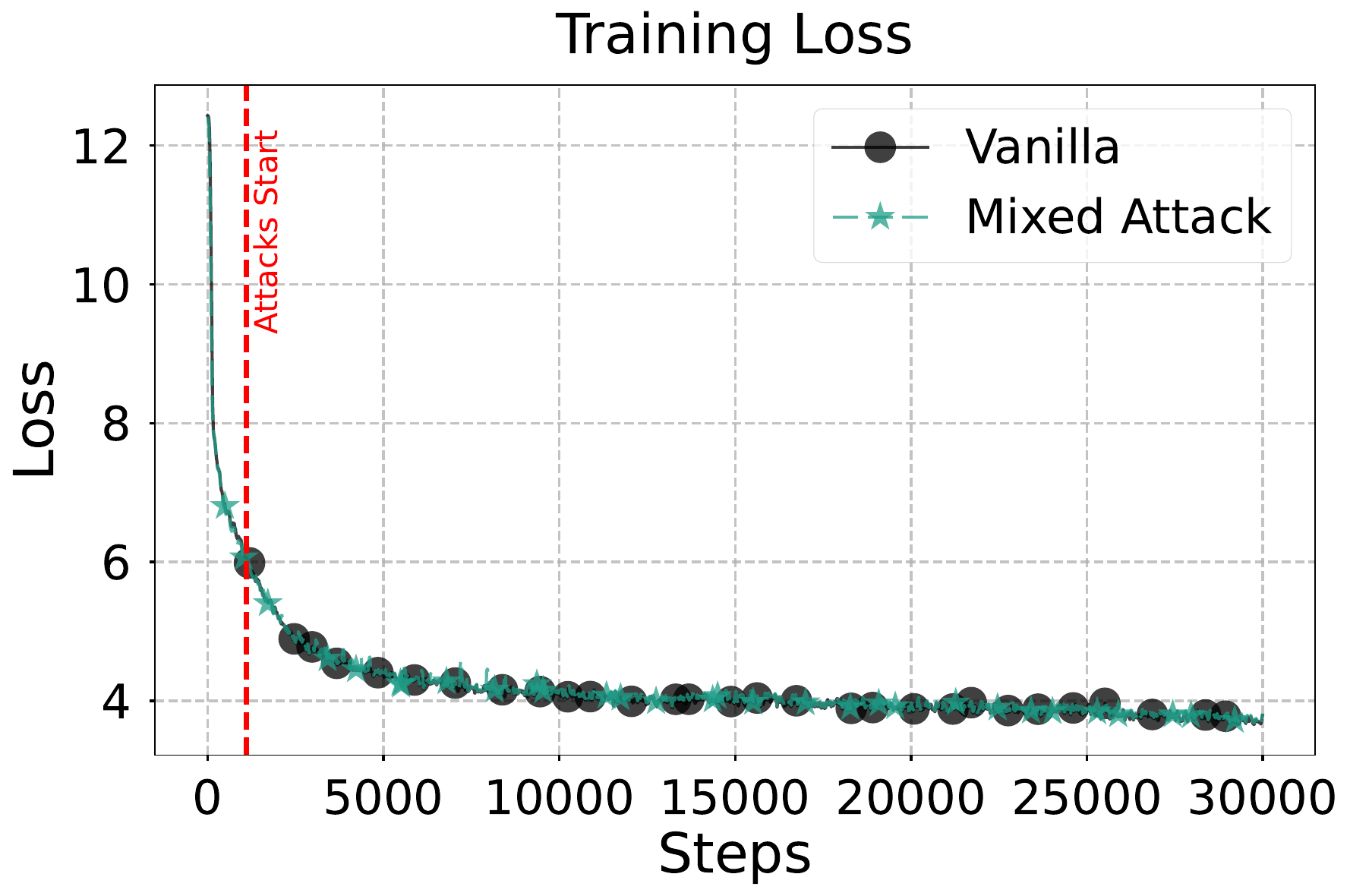}
            \caption*{Training Loss}
        \end{subfigure}
            \hspace{2em}
        \begin{subfigure}[b]{0.45\textwidth}
            \centering
            \includegraphics[width=\textwidth]{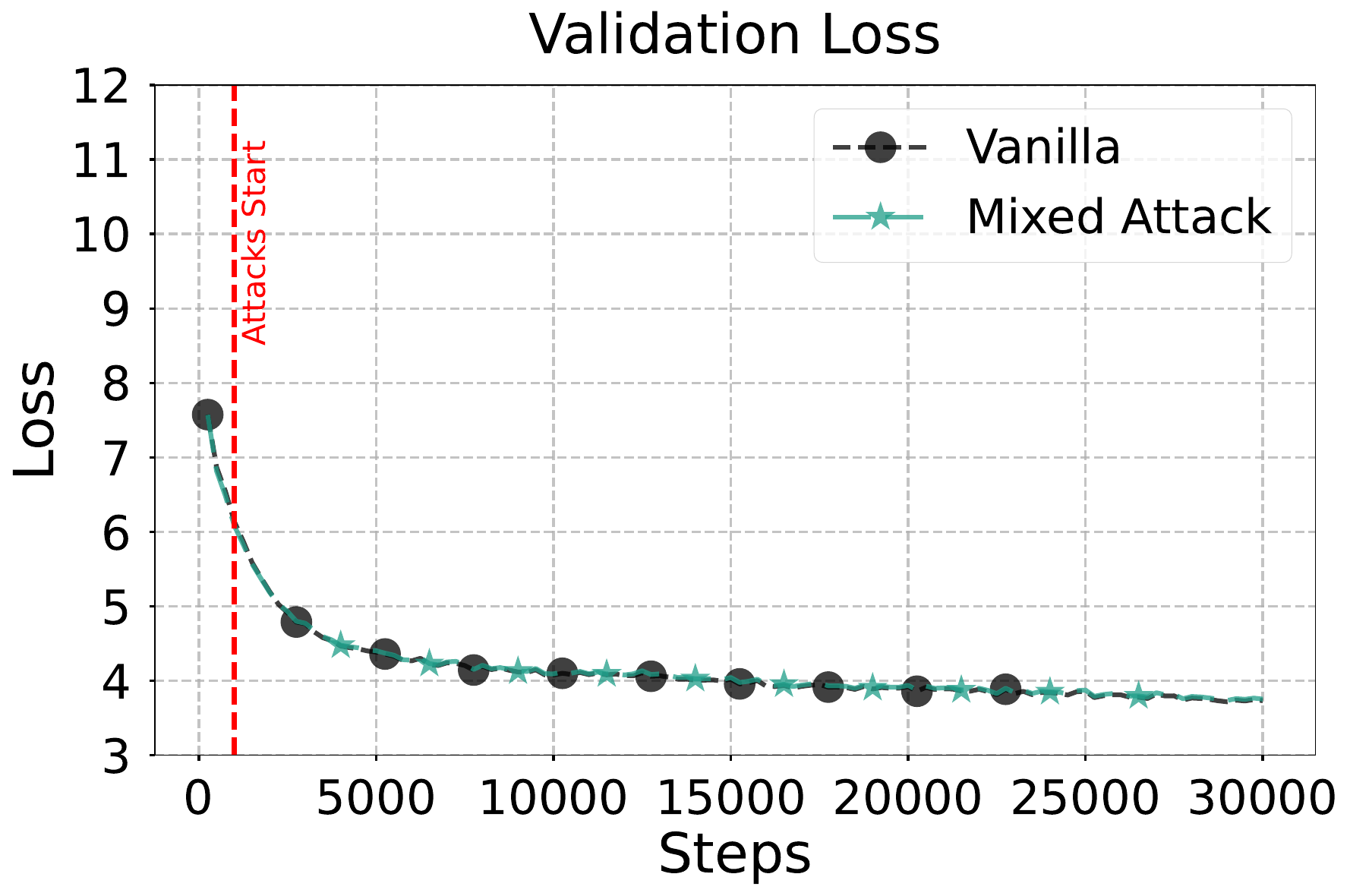}
            \caption*{Validation Loss}
        \end{subfigure}
    \caption{FineWeb Dataset}
    \end{subfigure}\\\vspace{1em}
    \begin{subfigure}[b]{\textwidth}
        \begin{subfigure}[b]{0.45\textwidth}
            \centering
            \includegraphics[width=\textwidth]{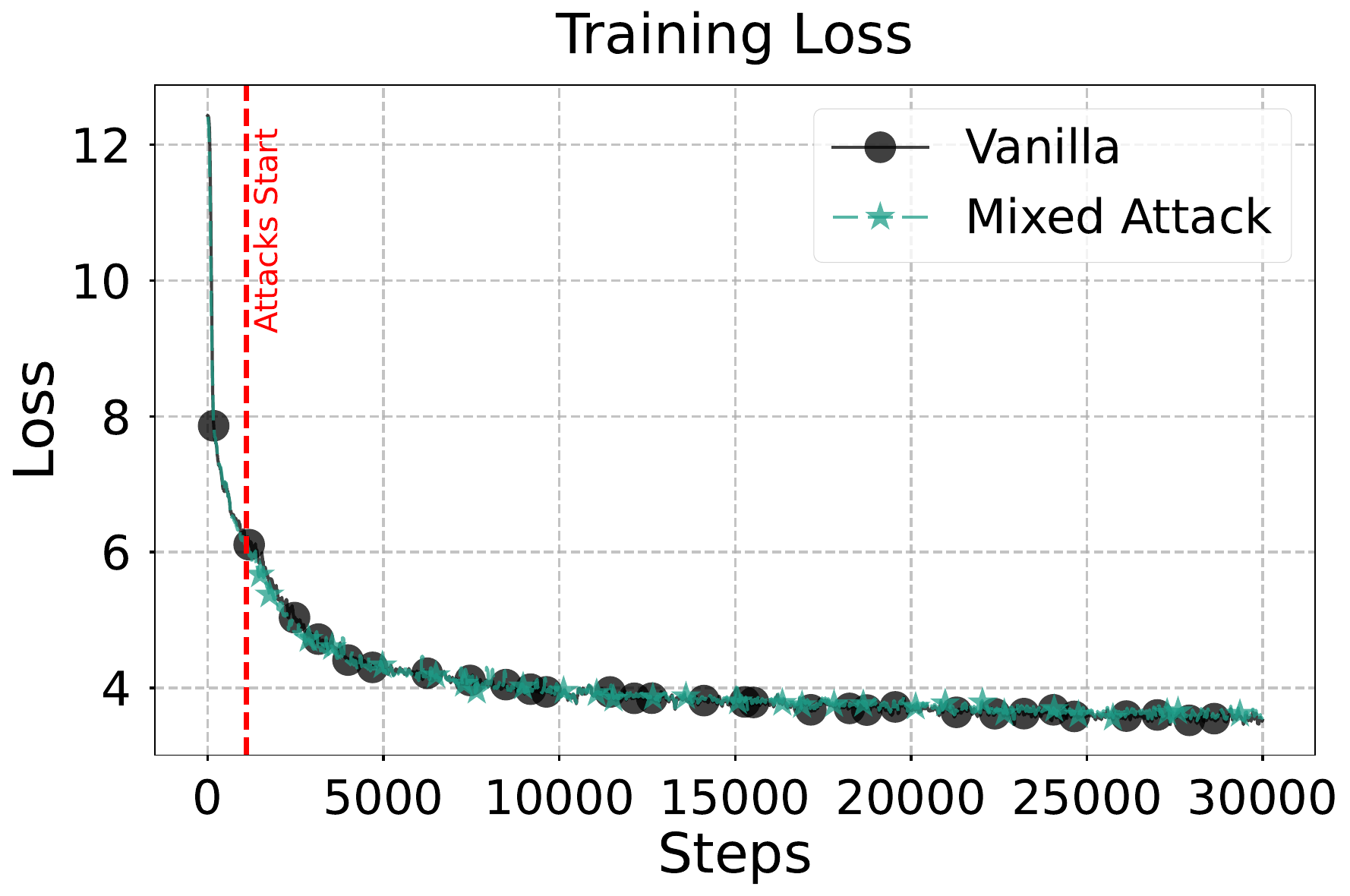}
            \caption*{Training Loss}
        \end{subfigure}
            \hspace{2em}
        \begin{subfigure}[b]{0.45\textwidth}
            \centering
            \includegraphics[width=\textwidth]{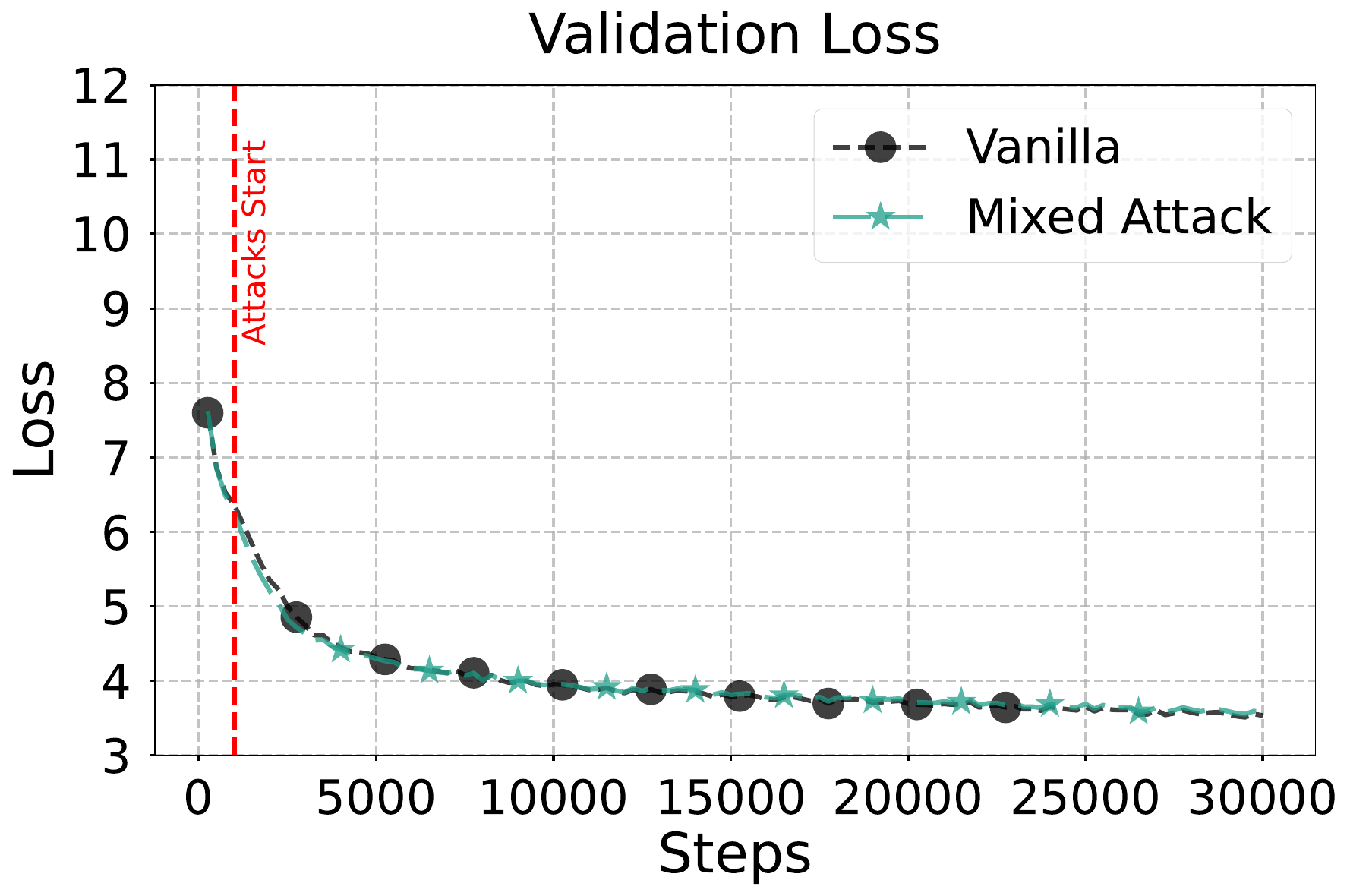}
            \caption*{Validation Loss}
        \end{subfigure}
    \caption{OpenWebText Dataset}
    \end{subfigure}
    \caption{Training and validation loss evolution for \textbf{NanoGPT-0.25B} model on C4, FineWeb, and OpenWebText datasets over 30k iterations. The mixed attack scenario assumes 50\% Byzantine nodes at each pipeline stage, with one randomly selected node performing a randomly chosen attack at a randomly sampled iteration. Each node has a different mode (activation vs.~gradient) and manipulation method~(as outlined in~\Cref{sec:problem_statement:attack_types}).}
    \label{fig:detailed_long_run_nanogpt}
\end{figure}
}

\newcommand{\figMoERuns}{
\begin{figure}[t]
    \centering
    \begin{subfigure}[b]{\textwidth}
        \begin{subfigure}[b]{0.45\textwidth}
            \centering
            \includegraphics[width=\textwidth]{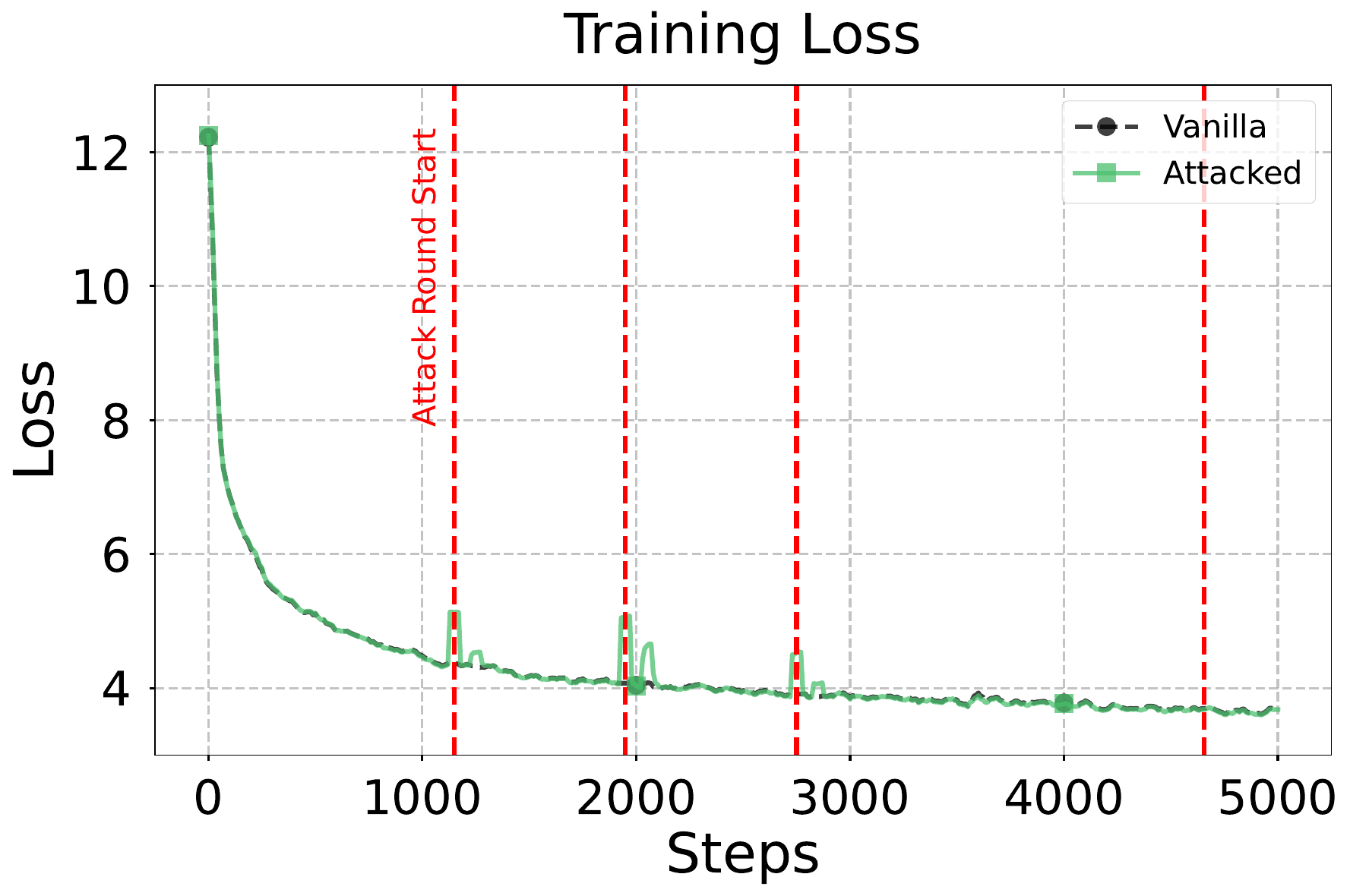}
            \caption*{Training Loss}
        \end{subfigure}
            \hspace{2em}
        \begin{subfigure}[b]{0.45\textwidth}
            \centering
            \includegraphics[width=\textwidth]{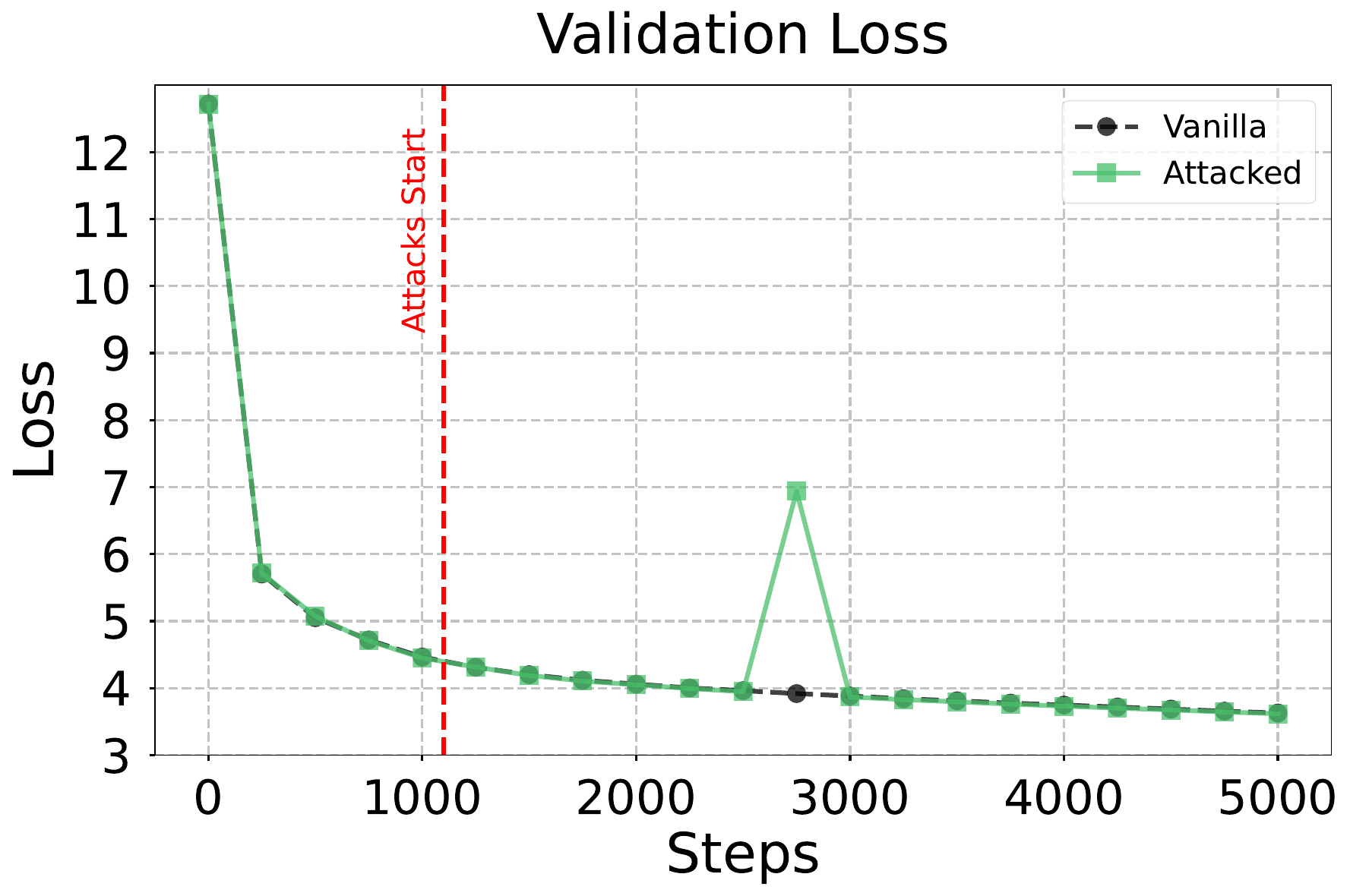}
            \caption*{Validation Loss}
        \end{subfigure}
        \caption{Llama-4-0.4B}
    \end{subfigure}\\\vspace{1em}
    \begin{subfigure}[b]{\textwidth}
        \begin{subfigure}[b]{0.45\textwidth}
            \centering
            \includegraphics[width=\textwidth]{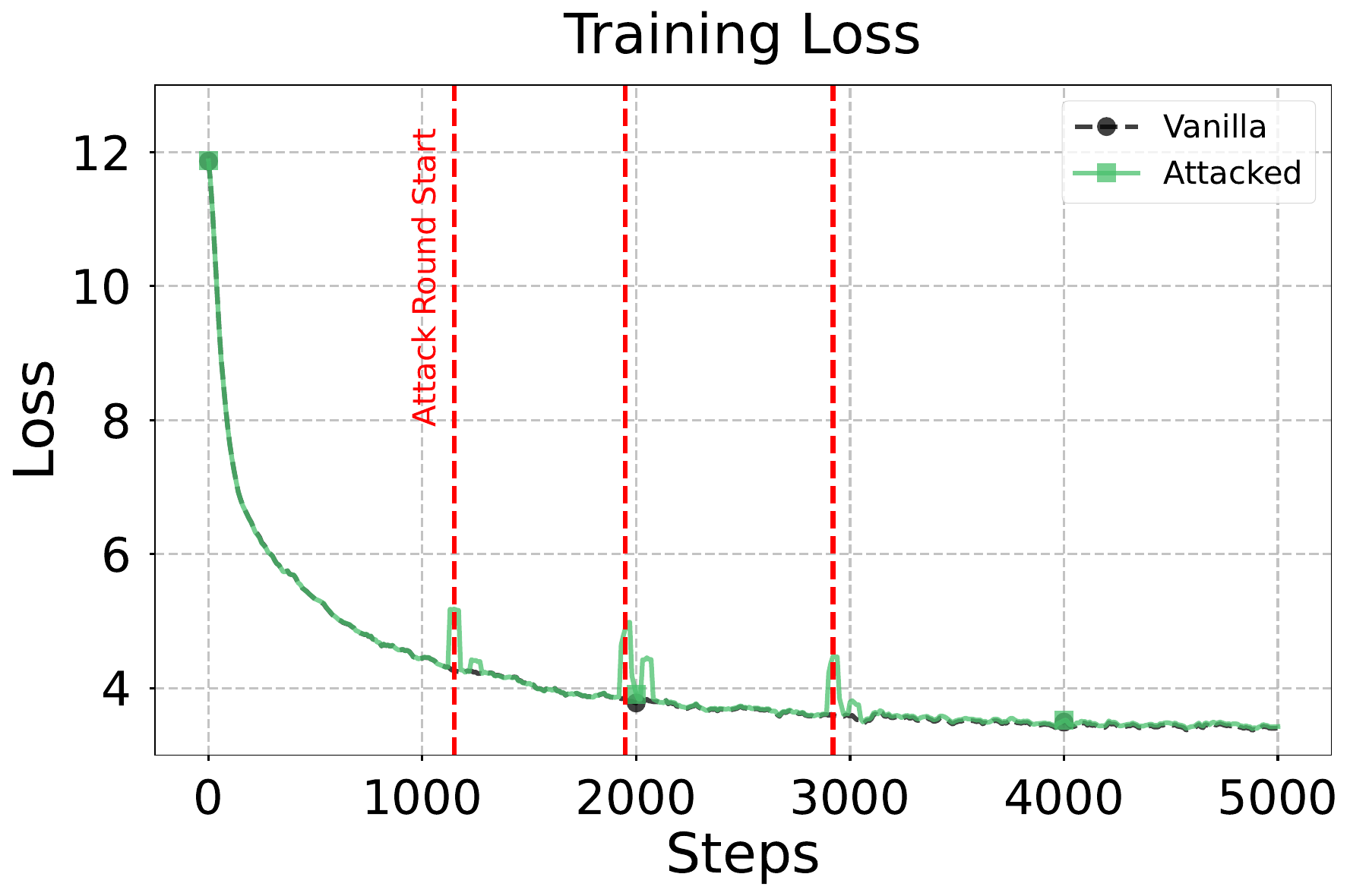}
            \caption*{Training Loss}
        \end{subfigure}
            \hspace{2em}
        \begin{subfigure}[b]{0.45\textwidth}
            \centering
            \includegraphics[width=\textwidth]{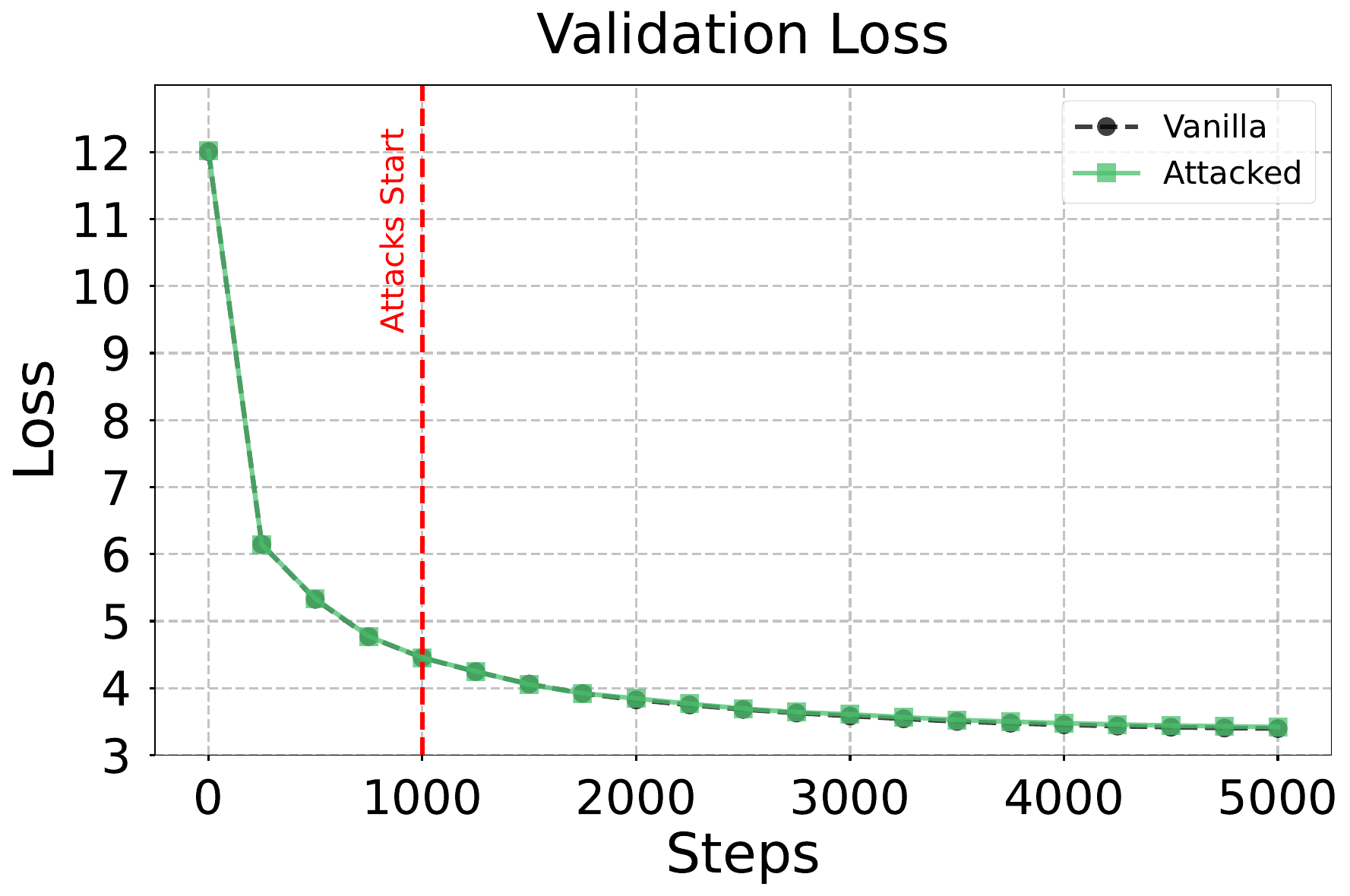}
            \caption*{Validation Loss}
        \end{subfigure}
    \caption{DeepSeek-V3-1B}
    \end{subfigure}
    \caption{Training and validation loss evolution for MoE-based models trained on FineWeb-EDU dataset over 5k iterations with \textsc{Sentinel}. The mixed activation attack scenario assumes 37.5\% Byzantine nodes at each pipeline stage, with 30\% randomly selected nodes performing a randomly chosen attack at a randomly sampled iteration. Each node has a different activation manipulation method~(as outlined in~\Cref{sec:problem_statement:attack_types}).}
    \label{fig:loss_evolution_moe}
\end{figure}
}

\newcommand{\figLlamaFourB}{
\begin{figure}[t]
    \centering
    \begin{subfigure}[b]{0.45\textwidth}
        \centering
        \includegraphics[width=\textwidth]{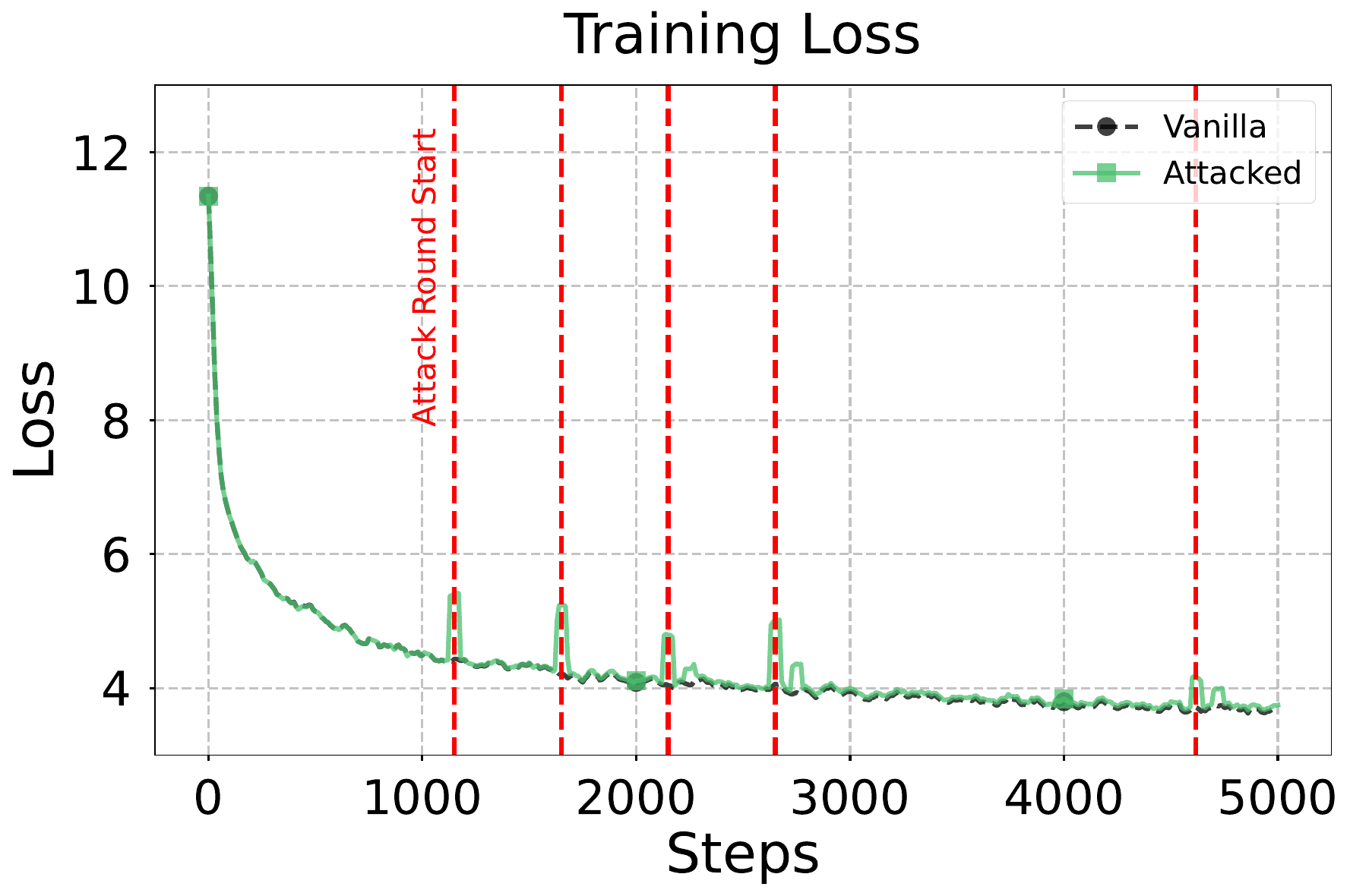}
        \caption*{Training Loss}
    \end{subfigure}
    \hspace{2em}
    \begin{subfigure}[b]{0.45\textwidth}
        \centering
        \includegraphics[width=\textwidth]{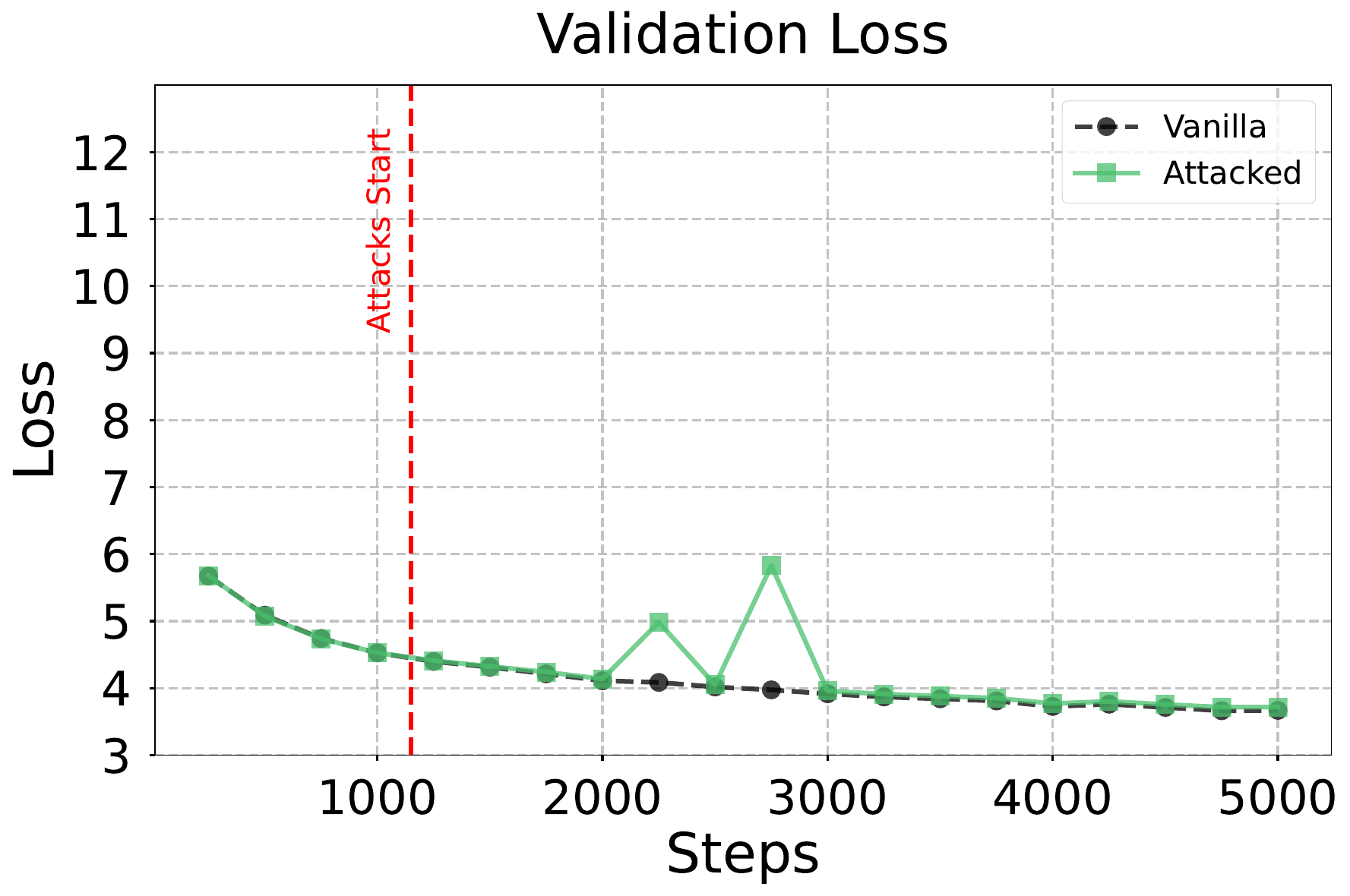}
        \caption*{Validation Loss}
    \end{subfigure}
    \caption{Training and validation loss evolution for Llama-3-4B trained on FineWeb dataset over 5k iterations using \textsc{Sentinel}. The mixed activation attack scenario assumes 37.5\% Byzantine nodes at each pipeline stage, with 20\% randomly selected nodes performing a randomly chosen attack at a randomly sampled iteration. Each node has a different activation manipulation method~(as outlined in~\Cref{sec:problem_statement:attack_types}).}
    \vspace{-0.1in}
    \label{fig:llama3-4b}
\end{figure}
}

\newcommand{\figddpdefense}{
\begin{figure}[htp]
    \centering
    \begin{subfigure}{0.45\textwidth}
        \centering
        \includegraphics[width=\textwidth]{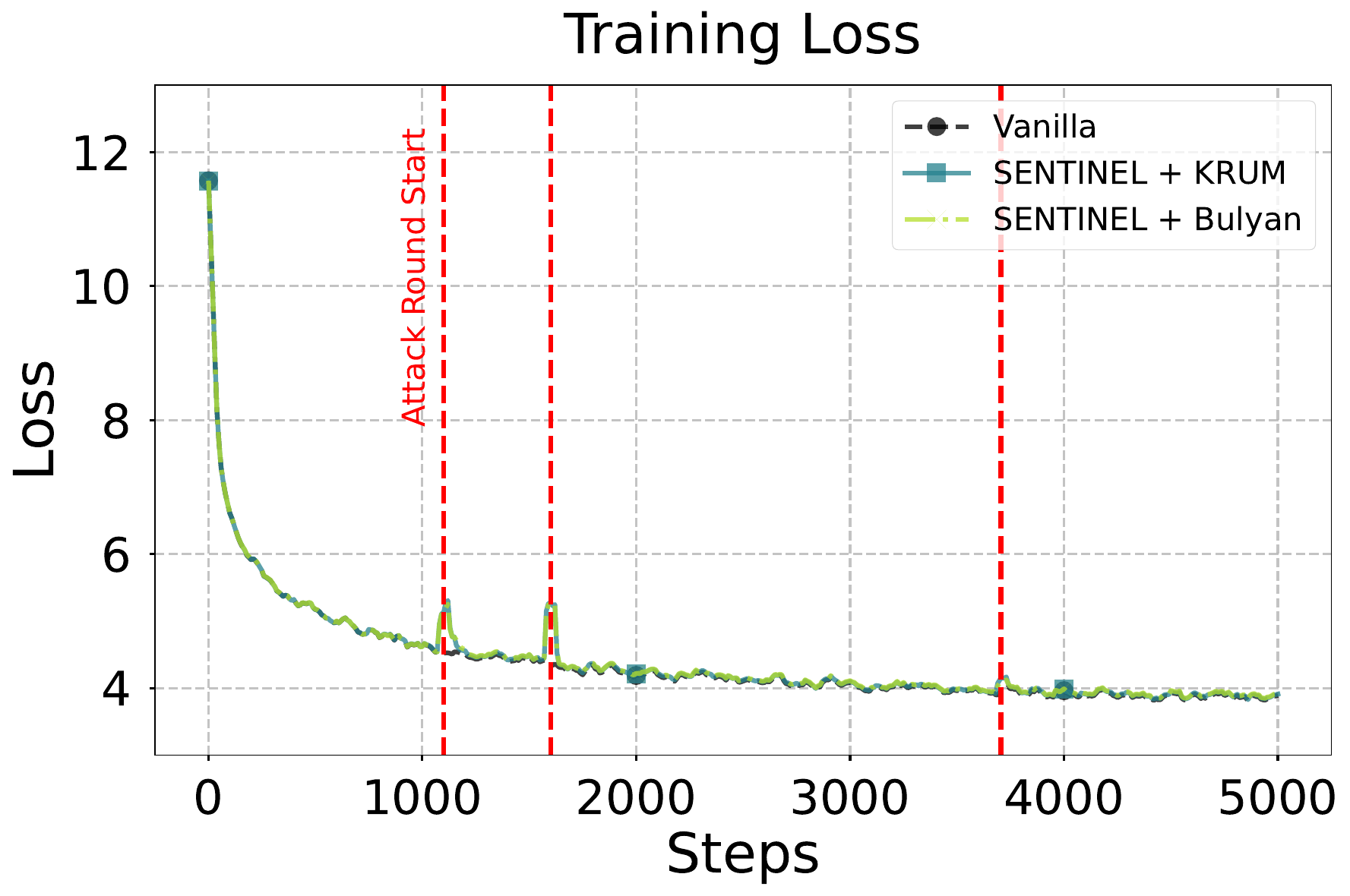}
        \caption*{Training Loss}
    \end{subfigure}
    \hspace{2em}
    \begin{subfigure}{0.45\textwidth}
        \centering
        \includegraphics[width=\textwidth]{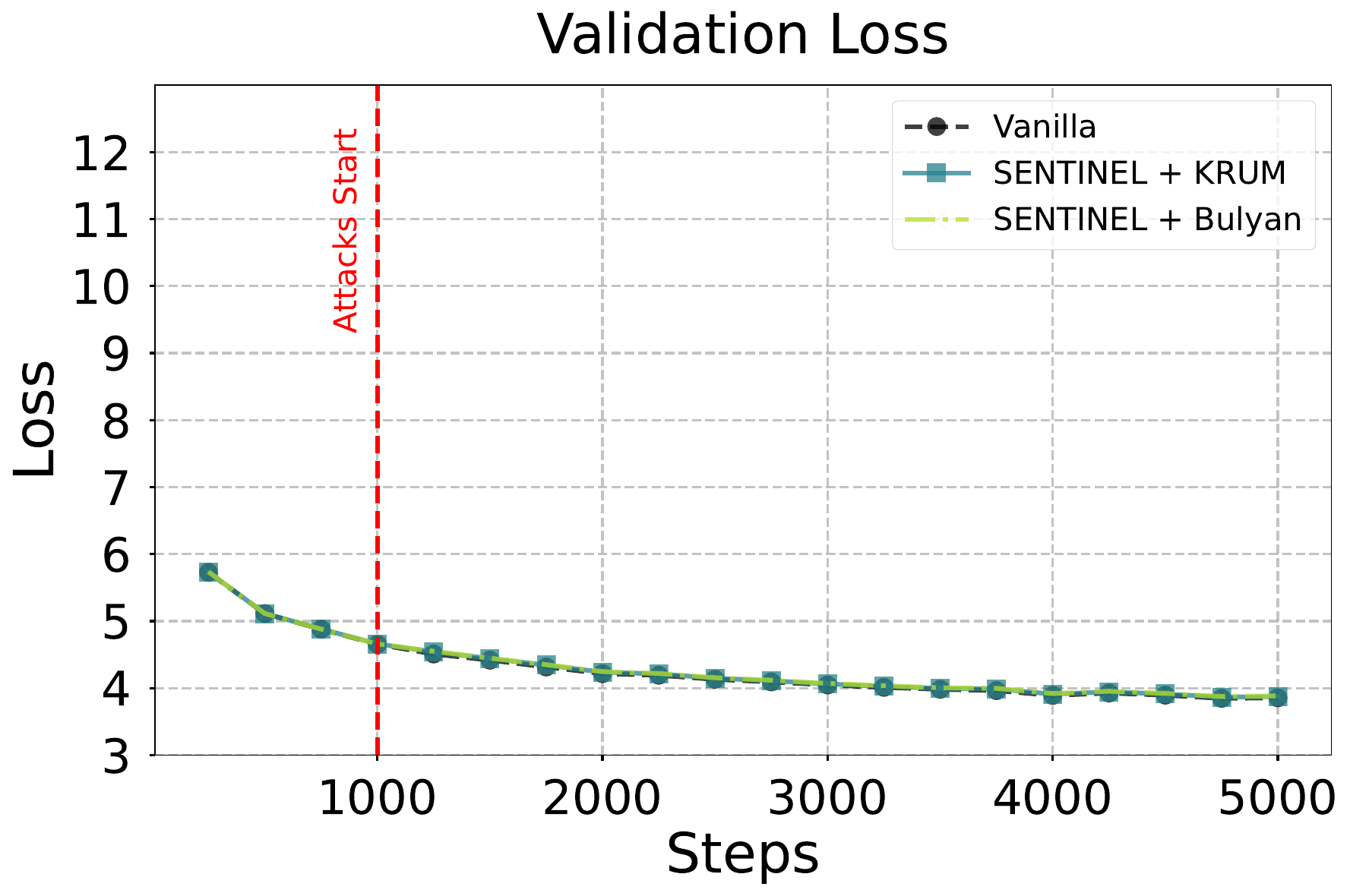}
        \caption*{Validation Loss}
    \end{subfigure}
    \caption{Training and validation loss evolution for Llama-3-0.6B trained on FineWeb dataset over 5k iterations \textbf{in the presence of robust aggregation methods during gradient averaging}. The mixed attack scenario assumes 37.5\% Byzantine nodes at each pipeline stage, with 33\% randomly selected nodes performing a randomly chosen attack at a randomly sampled iteration. Each node has a different manipulation method~(as outlined in~\Cref{sec:problem_statement:attack_types}).}
    \label{fig:sentinel_plus_ddp}
\end{figure}
}

\newcommand{\figStochsticWiring}{
\begin{figure}[t]
    \centering
    \includegraphics[width=0.80\textwidth]{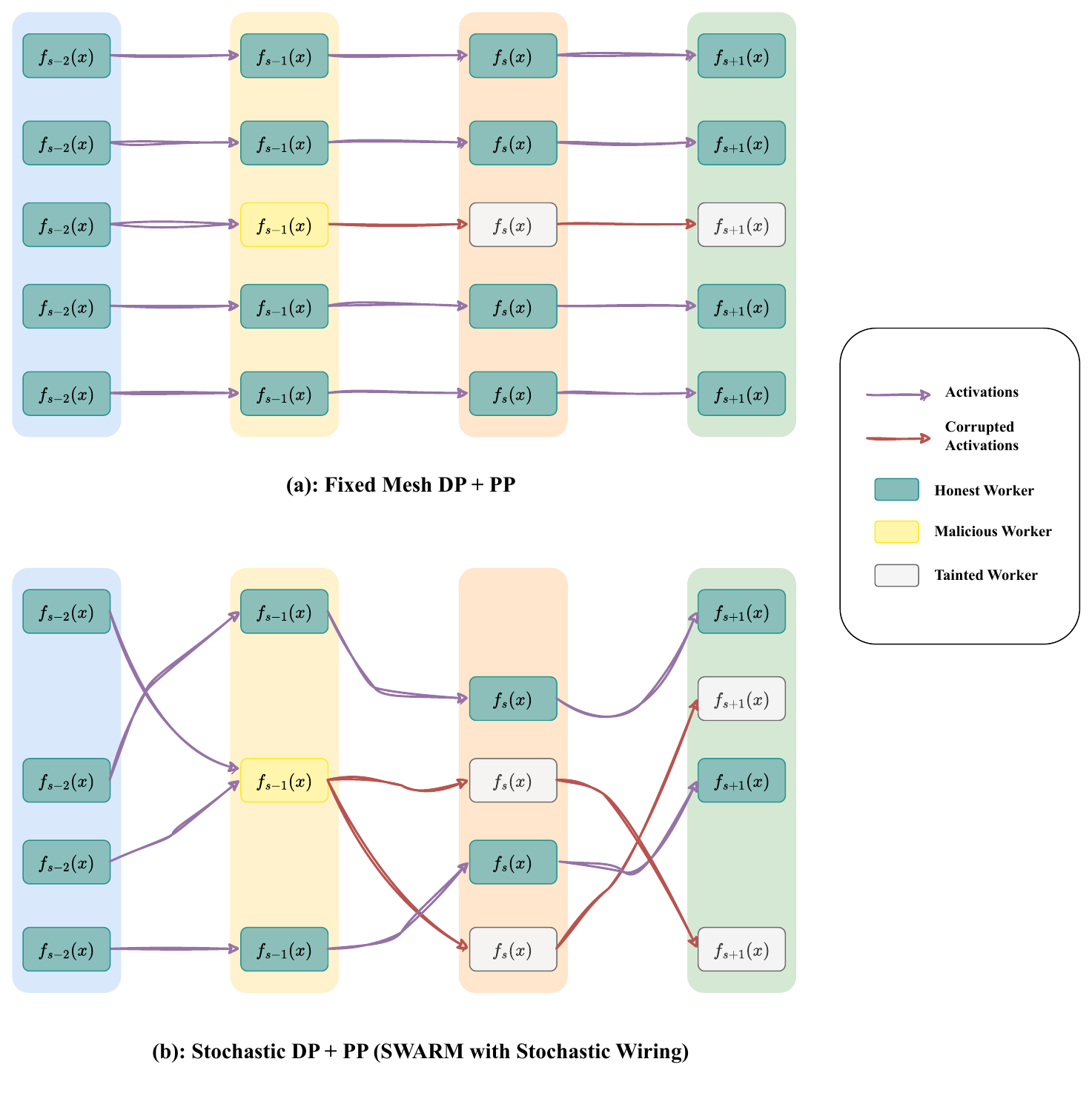}
    \caption{Visual comparison of (a) fixed mesh DP + PP vs. (b) SWARM~\citep{ryabinin2023swarm}. Unlike fixed mesh DP + PP where each worker sends their data to a specific worker in the next stage, SWARM utilizes stochastic wiring to route the data batches between stages. This is to ensure that workers would reach their maximum utilization and have minimum idle time. This behavior, however, is a double-edge sword as by stochastically routing data batches between workers, we are effectively increasing the reachability of malicious workers that can pollute ``any'' worker in the subsequent stage if they go undetected. This highlights the importance of highly calibrated verification mechanisms in the real-world. Note that the signals transmitted between workers in SWARM all pass through trainer nodes, and hence, there is no peer-to-peer communication between workers. We just directly connected the workers for a better visual comparision with fixed mesh.}
    \label{fig:stochastic_wiring}
\end{figure}
}

\newcommand{\figSSNValidation}{
\begin{figure}[t]
    \centering
    \begin{subfigure}[b]{0.65\textwidth}
        \centering
        \includegraphics[width=\textwidth]{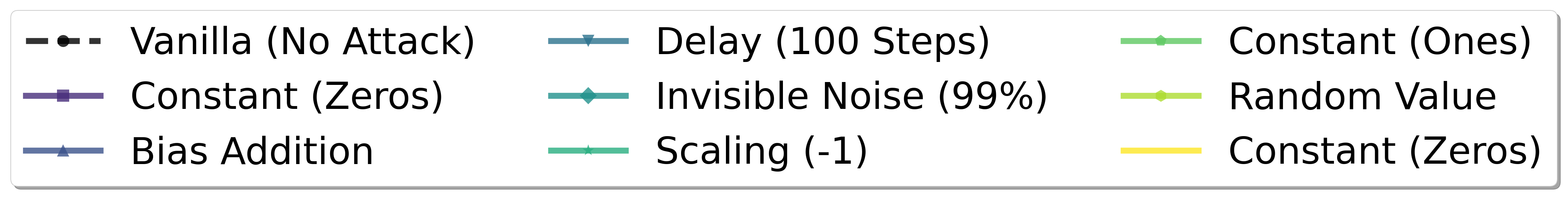}
    \end{subfigure}
    \\ \vspace{1em}
    \begin{subfigure}[b]{0.45\textwidth}
        \centering
        \includegraphics[width=\textwidth]{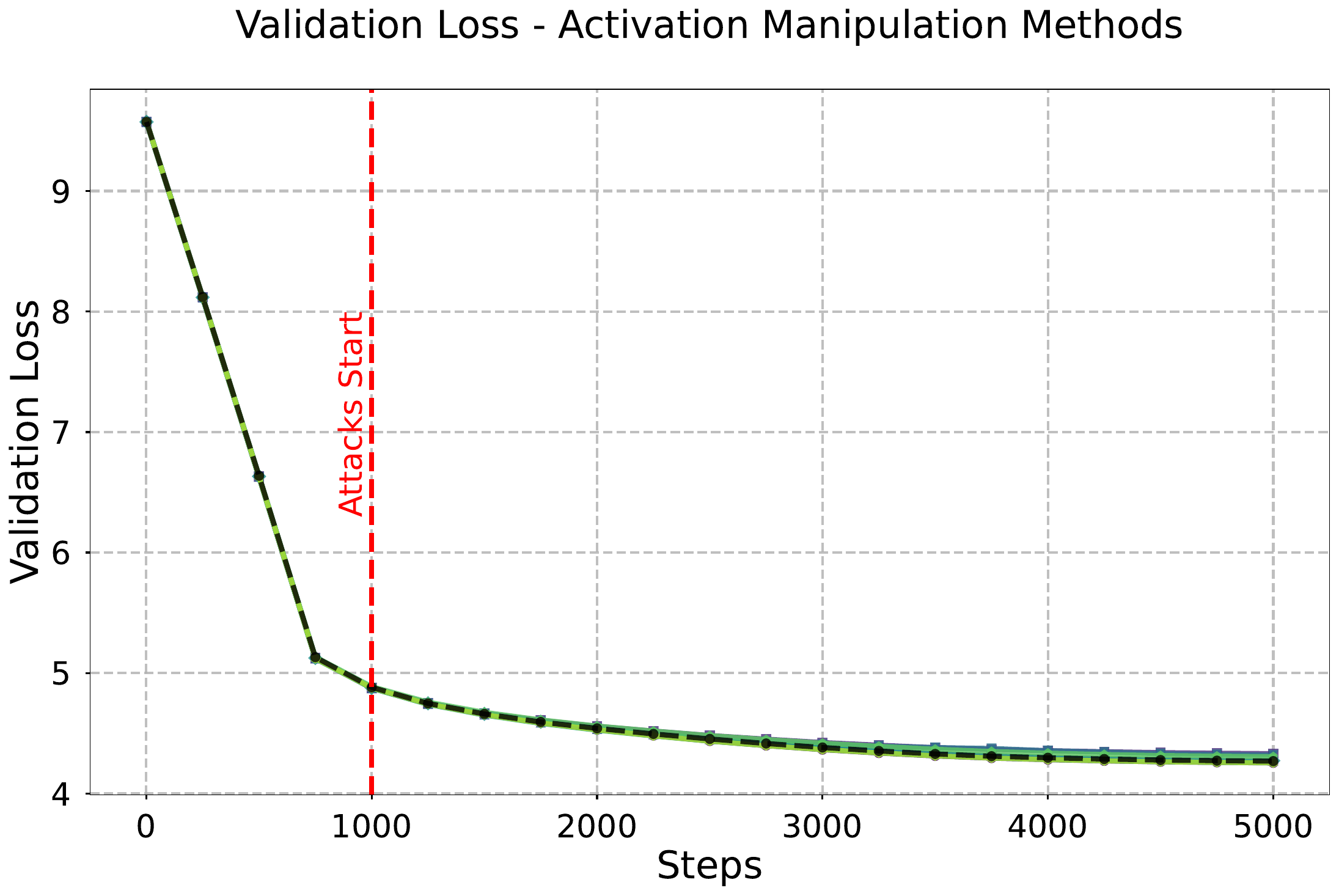}
        \caption{Activation Attacks}
        \label{fig:validation_loss_activation_attack_ssn}
    \end{subfigure}
    \begin{subfigure}[b]{0.45\textwidth}
        \centering
        \includegraphics[width=\textwidth]{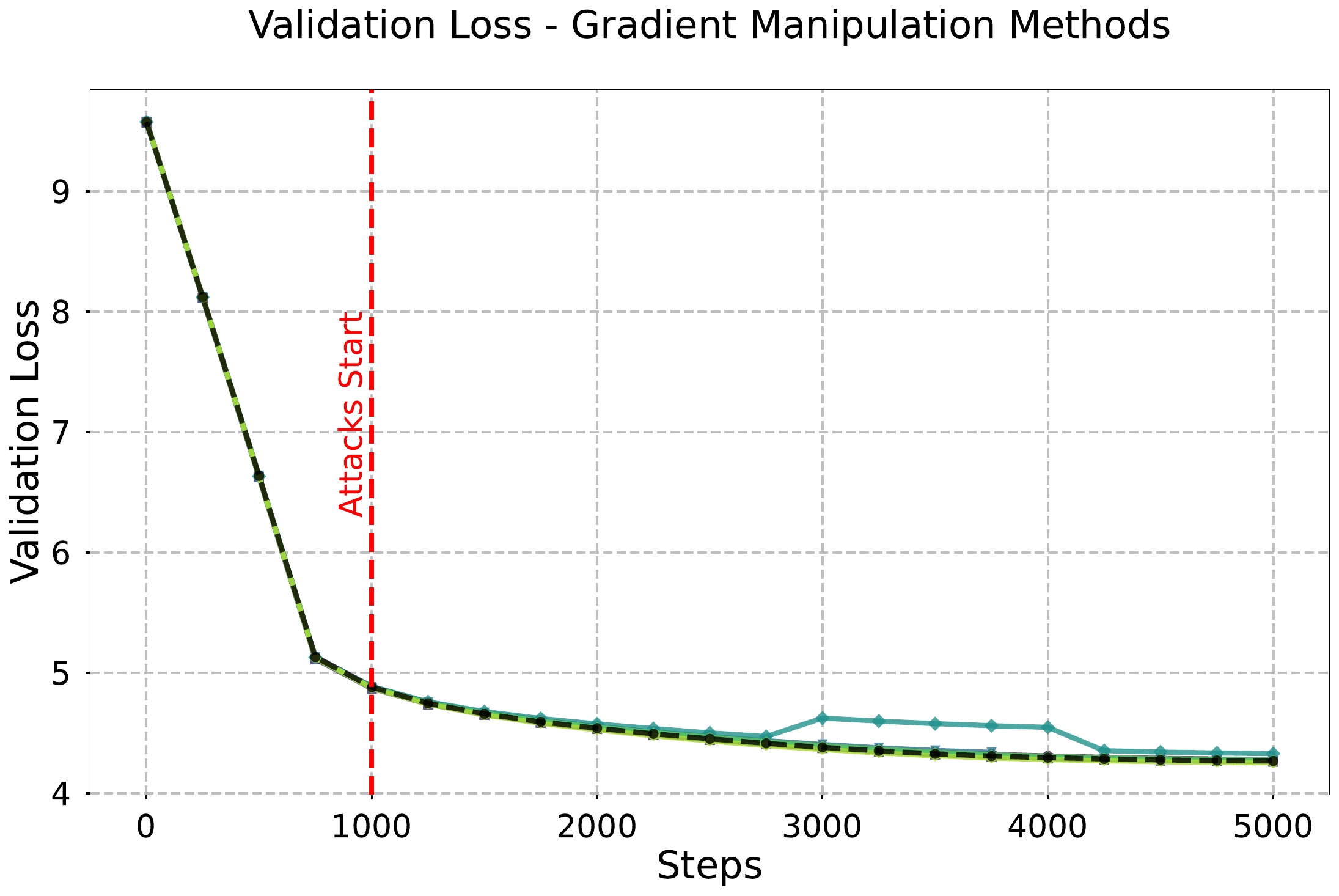}
        \caption{Gradient Attacks}
        \label{fig:validation_loss_gradient_attack_ssn}
    \end{subfigure}
    \caption{Validation loss evolution for training Llama-3-0.6B models that employ subspace compression~\citep{ramasinghe2025ssn} under various activation/gradient manipulation attacks. Despite the compression in the activation/gradients, \textsc{Sentinel} can successfully prevent divergence from vanilla training loss.}
    \label{fig:ssn_results}
\end{figure}
}

\newcommand{\figSWARMTrainingFull}{
\begin{figure}[t]
    \centering
    \begin{subfigure}[b]{0.475\textwidth}
        \centering
        \includegraphics[width=\textwidth]{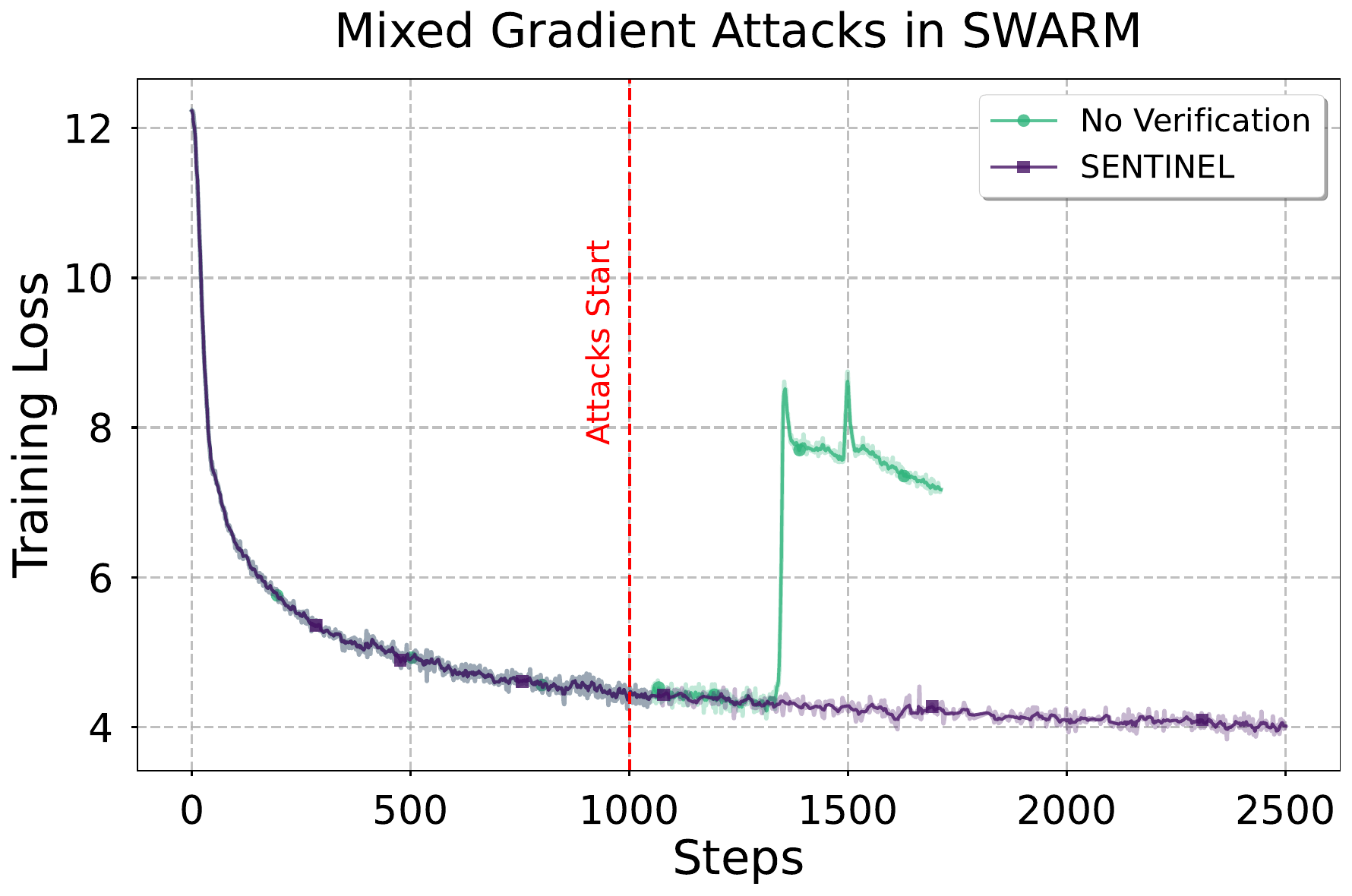}
        \caption{Mixed Gradient Attacks}
        \label{fig:swarm_results_full:gradient_attacks}
    \end{subfigure}
    \hspace{1em}
    \begin{subfigure}[b]{0.475\textwidth}
        \centering
        \includegraphics[width=\textwidth]{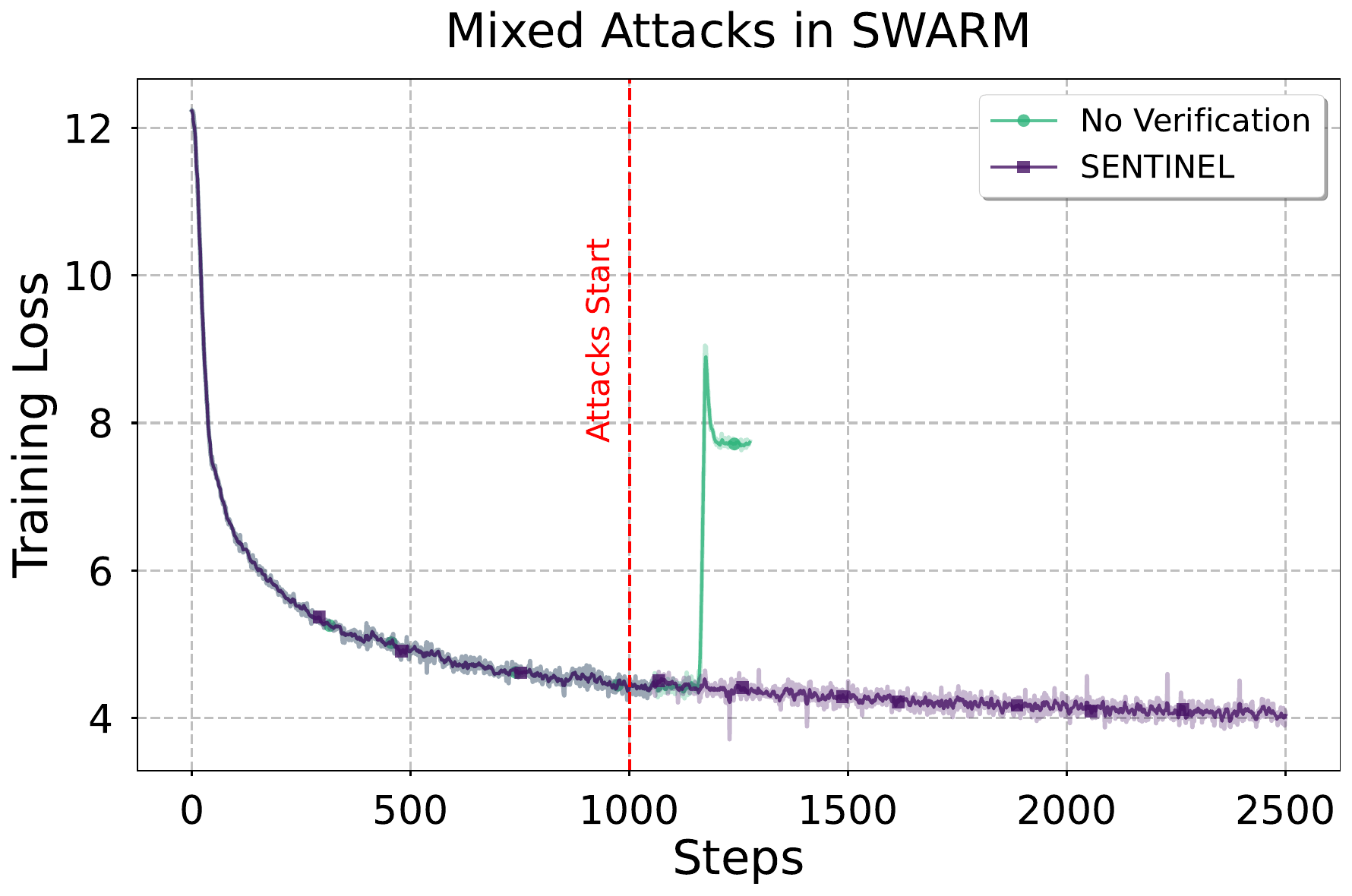}
        \caption{Mixed Gradient/Activation Attacks}
        \label{fig:swarm_results_full:mixed_attack}
    \end{subfigure}
    \caption{Loss when training Llama-3-0.6B models with subspace compression~\citep{ramasinghe2025ssn} in a distributed SWARM~\citep{ryabinin2023swarm} of 128 workers. Workers employ various activation/gradient manipulation attacks to disrupt training. While in the absence of verification training gets disrupted, \textsc{Sentinel} can successfully protect training from divergence by detecting and banning malicious workers.}
    \label{fig:swarm_results_full}
\end{figure}
}

\newcommand{\figSWARMTraining}{
\begin{figure}
    \centering
    \includegraphics[width=0.55\textwidth]{figures/SWARM/swarm_mixed_attacks_training_curves_v2.pdf}
    \caption{Loss when training Llama-3-0.6B models using SWARM~\citep{ryabinin2023swarm} with 128 distributed workers on preemptible AWS instances. Workers employ various activation/gradient manipulation attacks to disrupt training. While in the absence of verification training gets disrupted, \textsc{Sentinel} can successfully protect training from divergence.}
    \label{fig:swarm_results}
\end{figure}
}

\newcommand{\figSWARMEMAVarianceTrainer}{
\begin{figure}[t]
    \centering
    \includegraphics[width=0.85\textwidth]{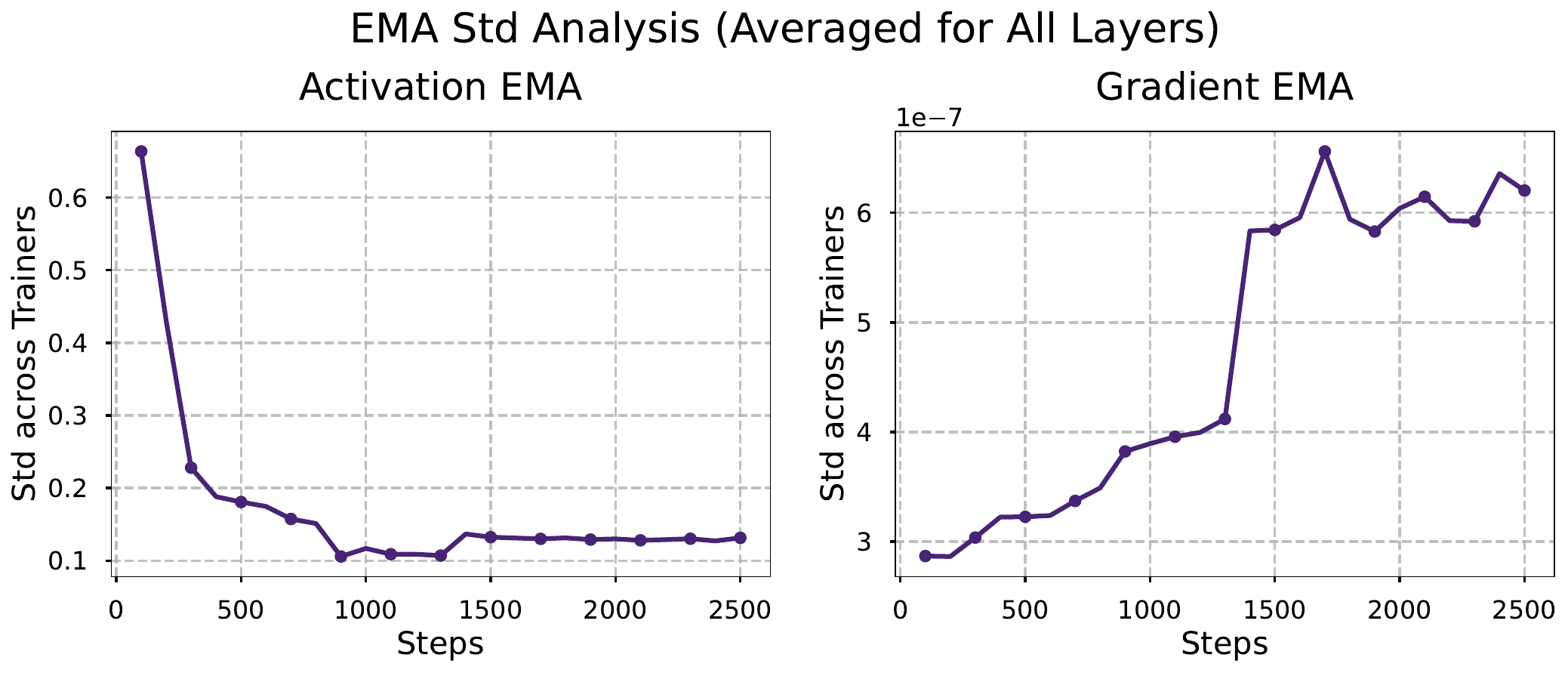}
    \caption{Standard deviation of EMA signals (activation/gradient) between the 32 trainers used in training our Llama-3-0.6B model. Since each trainer keeps the EMAs of all layers, here we report the average across all layers. As seen, activation EMA variance decreases as training progresses while gradient EMA variance increases among trainer. Despite this increase, note that the scale of gradient EMA variance is very close to zero ($10^{-7}$).}
    \label{fig:swarm_ema_variance}
\end{figure}
}

\newcommand{\figSWARMThresholdEvolution}{
\begin{figure}[t]
    \centering
    \includegraphics[width=0.85\textwidth]{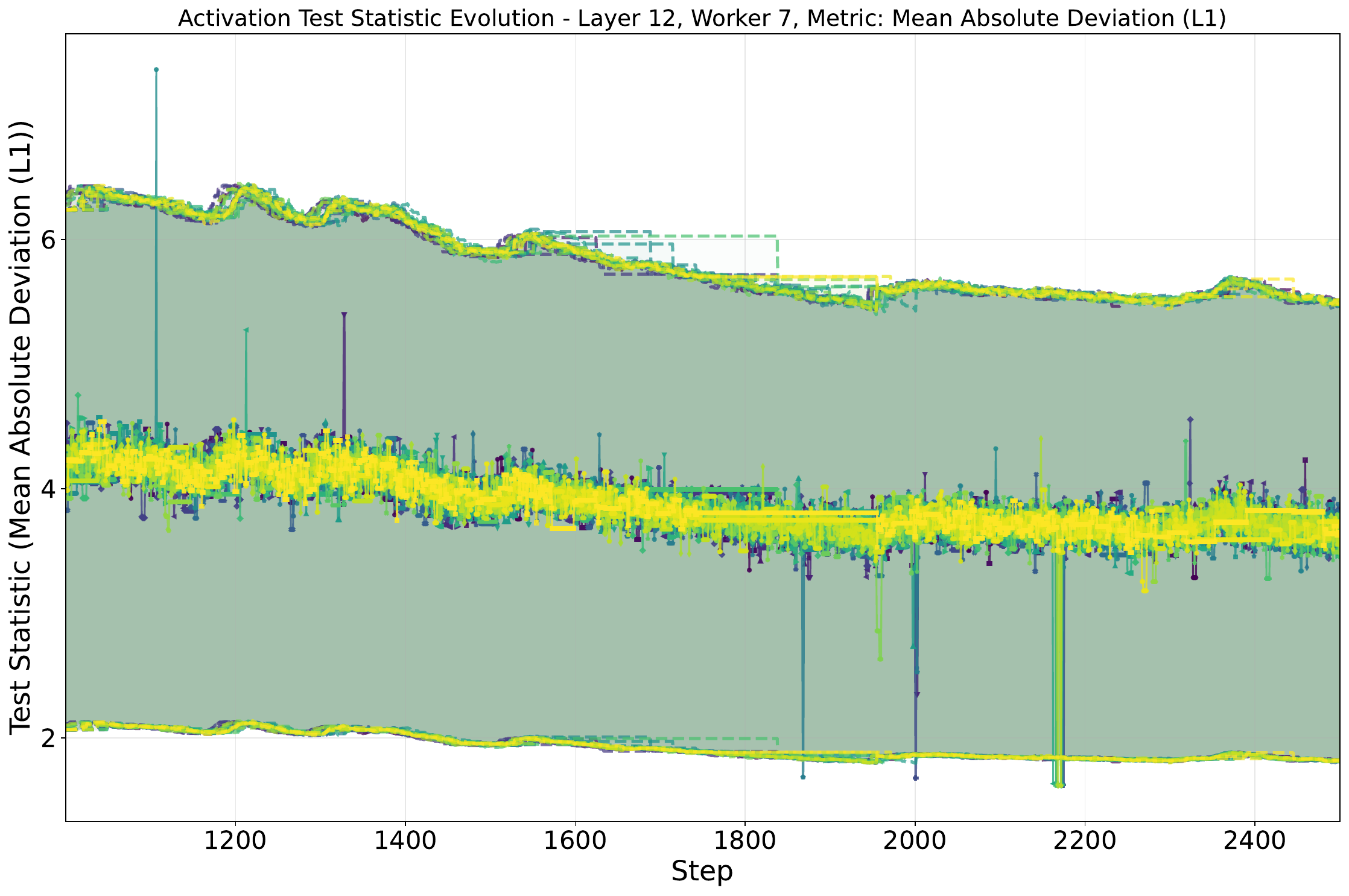}
    \caption{Evolution of adaptive deviation bounds and worker statistics for a worker processing layer 12 in our 16 layer Llama-3-0.6B. Each color represents the test statistics recorded by a different trainer. The upper and lower thresholds determined by our adaptive IQR mechanism are also shown as the top and bottom lines for each trainer. Despite no direct EMA or threshold communication between the trainers, they usually have a similar threshold. The worker starts their attack around step 2150 after which the trainers flag and ban the worker.}
    \label{fig:swarm_thresh_evolution}
\end{figure}
}

\newcommand{\figSWARMOverview}{
\begin{figure}[t]
    \centering
    \includegraphics[width=0.95\textwidth]{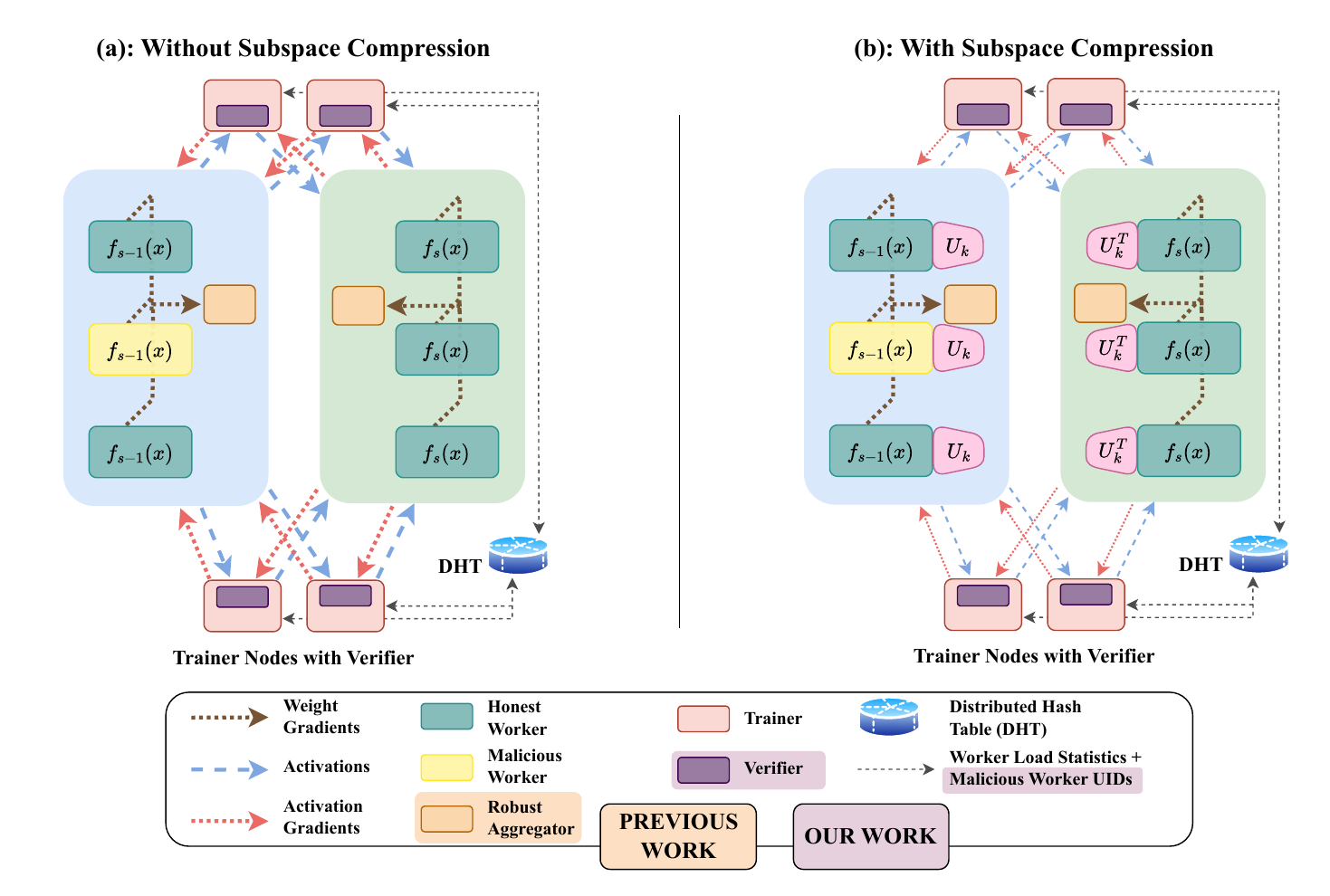}
    \caption{(a) SWARM~\citep{ryabinin2023swarm} parallelism utilizes pipeline and data parallelism to train neural networks. Compared to traditional DP/PP methods, SWARM utilizes trainer nodes to stochastically route micro-batches of input data through the model layers. Trainer nodes leverage a Distributed Hash Table~(DHT) to communicate worker load/throughput to enable a better device utilization within the SWARM. (b) Subspace compression~\citep{ramasinghe2025ssn} complements SWARM by adding lossless compression at the boundary of worker devices for faster communication of activations and activation gradients from/to trainer nodes. \textsc{Sentinel} can be adapted to both settings and enable worker verification. It sits naturally within the trainer nodes that are already coordinating signal transmission among workers.}
    \label{fig:swarm_parallelism}
\end{figure}
}

\newcommand{\tabbaseCCfull}{
\begin{table*}[t!]
\setlength{\tabcolsep}{5pt}
\renewcommand{\arraystretch}{0.95}
\caption{\small Attack detection performance for Llama-3-0.6B on C4 dataset. Metrics shown include precision, recall, F1 score (all as percentages), average detection speed (in iterations), and validation loss.}
\label{tab:byzantine_detection_c4_full}
\vspace{-1em}
\footnotesize
\begin{center}
    \small
    \begin{tabular}{@{}clcccccc@{}}
        \toprule
        \multirow{2}{*}{\textbf{\textsc{Mode}}}
        & \multirow{2}{*}{\textbf{\textsc{Attack}}} & \multicolumn{5}{c}{\scriptsize\textbf{\textsc{Sentinel~(Ours)}}} & \textbf{\scriptsize\textsc{No Verif.}}\\
        \cmidrule(lr){3-7}\cmidrule(lr){8-8}
        && \textbf{\scriptsize\textsc{Pr.}}~(\%)~\textuparrow & \textbf{\scriptsize\textsc{Re}}.~(\%)~\textuparrow & \textbf{\scriptsize\textsc{F1}}~(\%)~\textuparrow & \textbf{\scriptsize\textsc{Det.~Speed}}~\textdownarrow & \textbf{\scriptsize\textsc{Val.~Loss}}~\textdownarrow & \textbf{\scriptsize\textsc{Val.~Loss}}~\textdownarrow \\
        \midrule
        - & None (Vanilla)                & 100.0 & 100.0 & 100.0 & N/A & 3.819 & 3.821\\
        \cmidrule(lr){1-8}
        \parbox[t]{1mm}{\multirow{12}{*}{\rotatebox[origin=c]{90}{\scriptsize\textbf{\textsc{Activation Manipulation}}}}}
        & Constant (Zeros)              & 100.0 & 100.0 & 100.0 & 6.5 & 3.809 & 11.761\\
        & Constant (Ones)               & 100.0 & 100.0 & 100.0 & 6.33 & 3.817 & 7.778\\
        & Random Value                  & 100.0 & 100.0 & 100.0 & 6.48 & 3.827 & 7.778\\
        & Scaling ($\alpha=-1$)         & 100.0 & 100.0 & 100.0 & 6.38 & 3.824 & 4.109\\
        & Random Sign ($1\%$)           & 100.0 & 100.0 & 100.0 & 6.33 & 3.825 & 4.670 \\
        & Random Sign ($10\%$)          & 100.0 & 100.0 & 100.0 & 6.52 & 3.822 & 4.619\\
        & Random Sign ($30\%$)          & 88.9 & 100.0 & 94.1 & 70.91 & 3.841 & 4.567\\
        & Delay ($100$-steps)           & 88.9 & 100.0 & 94.1 & 13.21 & 3.841 & 7.675\\
        & Bias Addition                 & 84.6 & 91.7 & 88.0 & 14.57 & 3.830 & 3.892\\
        & Invisible Noise ($90\%$)      & 100.0 & 100.0 & 100.0 & 6.48 & 3.836 & 7.675\\
        & Invisible Noise ($95\%$)      & 100.0 & 100.0 & 100.0 & 6.52 & 3.823 & 7.677\\
        & Invisible Noise ($99\%$)      & 100.0 & 100.0 & 100.0 & 6.48 & 3.826 & 7.682\\
        \cmidrule(lr){1-8}
        \parbox[t]{1mm}{\multirow{12}{*}{\rotatebox[origin=c]{90}{\scriptsize\textbf{\textsc{Gradient Manipulation}}}}}
        & Constant (Zeros)              & 100.0 & 100.0 & 100.0 & 6.42 & 3.829 & 3.942\\
        & Constant (Ones)               & 88.9 & 100.0 & 94.1 & 1.0 & 3.816 & 10.630\\
        & Random Value                  & 100.0 & 100.0 & 100.0 & 1.0 & 3.818 & 9.595\\
        & Scaling ($\alpha=-1$)         & 0.0 & 0.0 & 0.0 & N/A & 3.893 & 3.893\\
        & Random Sign ($1\%$)           & 0.0 & 0.0 & 0.0 & N/A & 3.990 & 3.982\\
        & Random Sign ($10\%$)          & 0.0 & 0.0 & 0.0 & N/A & 3.982 & 3.994\\
        & Random Sign ($30\%$)          & 0.0 & 0.0 & 0.0 & N/A & 3.944 & 3.933\\
        & Delay ($100$-steps)           & 100.0 & 100.0 & 100.0 & 7.33 & 3.826 & 10.157 \\
        & Bias Addition                 & 100.0 & 100.0 & 100.0 & 1.0 & 3.828 & 10.813\\
        & Invisible Noise ($90\%$)      & 100.0 & 75.0 & 85.7 & 101.89 & 3.968 & 4.218\\
        & Invisible Noise ($95\%$)      & 100.0 & 79.2 & 88.4 & 229.68 & 3.954 & 4.174\\
        & Invisible Noise ($99\%$)      & 100.0 & 79.2 & 88.4 & 211.0 & 3.943 & 4.176\\
        \bottomrule
    \end{tabular}
\end{center}
\end{table*}
}

\newcommand{\tabbaseFWfull}{
\begin{table*}[ht]
\setlength{\tabcolsep}{5pt}
\renewcommand{\arraystretch}{0.95}
\caption{\small Attack detection performance for Llama-3-0.6B on FineWeb dataset. Metrics shown include precision, recall, F1 score (all as percentages), average detection speed (in iterations), and validation loss.}
\label{tab:byzantine_detection_fineweb_full}
\vspace{-1em}
\footnotesize
\begin{center}
    \begin{tabular}{@{}clcccccc@{}}
        \toprule
        \multirow{2}{*}{\textbf{\textsc{Mode}}}
        & \multirow{2}{*}{\textbf{\textsc{Attack}}} & \multicolumn{5}{c}{\scriptsize\textbf{\textsc{Sentinel~(Ours)}}} & \textbf{\scriptsize\textsc{No Verif.}}\\
        \cmidrule(lr){3-7}\cmidrule(lr){8-8}
        && \textbf{\scriptsize\textsc{Pr.}}~(\%)~\textuparrow & \textbf{\scriptsize\textsc{Re}}.~(\%)~\textuparrow & \textbf{\scriptsize\textsc{F1}}~(\%)~\textuparrow & \textbf{\scriptsize\textsc{Det.~Speed}}~\textdownarrow & \textbf{\scriptsize\textsc{Val.~Loss}}~\textdownarrow & \textbf{\scriptsize\textsc{Val.~Loss}}~\textdownarrow \\
        \midrule
        - & None (Vanilla)                & 100.0 & 100.0 & 100.0 & N/A & 3.818 & 3.840\\
        \cmidrule(lr){1-8}
        \parbox[t]{1mm}{\multirow{12}{*}{\rotatebox[origin=c]{90}{\scriptsize\textbf{\textsc{Activation Manipulation}}}}}
        & Constant (Zeros)              & 100.0 & 100.0 & 100.0 & 6.43 & 3.819 & 11.761\\
        & Constant (Ones)               & 100.0 & 100.0 & 100.0 & 6.61 & 3.814 & 7.793\\
        & Random Value                  & 96.0 & 100.0 & 98.0 & 6.46 & 3.831 & 7.793\\
        & Scaling ($\alpha=-1$)         & 100.0 & 100.0 & 100.0 & 6.29 & 3.827 & 4.121\\
        & Random Sign ($1\%$)           & 100.0 & 100.0 & 100.0 & 6.42 & 3.825 & 4.693\\
        & Random Sign ($10\%$)          & 51.1 & 100.0 & 67.6 & 3.58 & 3.829 & 4.716\\
        & Random Sign ($30\%$)          & 92.3 & 100.0 & 96.0 & 7.5 & 3.826 & 4.564\\
        & Delay ($100$-steps)           & 85.7 & 100.0 & 92.3 & 14.83 & 3.832 & 7.692\\
        & Bias Addition                 & 86.4 & 79.2 & 82.6 & 10.68 & 3.828 & 3.898\\
        & Invisible Noise ($90\%$)      & 100.0 & 100.0 & 100.0 & 6.38 & 3.824 & 7.709\\
        & Invisible Noise ($95\%$)      & 100.0 & 100.0 & 100.0 & 6.29 & 3.824 & 7.713\\
        & Invisible Noise ($99\%$)      & 100.0 & 100.0 & 100.0 & 6.33 & 3.829 & 7.712\\
        \cmidrule(lr){1-8}
        \parbox[t]{1mm}{\multirow{12}{*}{\rotatebox[origin=c]{90}{\scriptsize\textbf{\textsc{Gradient Manipulation}}}}}
        & Constant (Zeros)              & 100.0 & 100.0 & 100.0 & 6.33 & 3.815 & 3.949\\
        & Constant (Ones)               & 92.3 & 100.0 & 96.0 & 1.14 & 3.824 & 10.726\\
        & Random Value                  & 100.0 & 100.0 & 100.0 & 1.38 & 3.831 & 9.359\\
        & Scaling ($\alpha=-1$)         & 0.0 & 0.0 & 0.0 & N/A & 3.888 & 3.901\\
        & Random Sign ($1\%$)           & 0.0 & 0.0 & 0.0 & N/A & 3.989 & 3.999\\
        & Random Sign ($10\%$)          & 0.0 & 0.0 & 0.0 & N/A & 3.990 & 4.004\\
        & Random Sign ($30\%$)          & 0.0 & 0.0 & 0.0 & N/A & 3.941 & 3.966\\
        & Delay ($100$-steps)           & 100.0 & 100.0 & 100.0 & 7.29 & 3.817 & 10.017\\
        & Bias Addition                 & 100.0 & 100.0 & 100.0 & 1.14 & 3.828 & 10.573\\
        & Invisible Noise ($90\%$)      & 100.0 & 75.0 & 85.7 & 52.94 & 3.954 & 4.212\\
        & Invisible Noise ($95\%$)      & 100.0 & 75.0 & 85.7 & 140.89 & 3.959 & 4.217\\
        & Invisible Noise ($99\%$)      & 100.0 & 75.0 & 85.7 & 209.56 & 3.949 & 4.197\\
        \bottomrule
    \end{tabular}
\end{center}
\end{table*}
}

\newcommand{\tabbaseOWfull}{
\begin{table*}[ht]
\setlength{\tabcolsep}{5pt}
\renewcommand{\arraystretch}{0.95}
\caption{\small Attack detection performance for Llama-3-0.6B on OpenWebText dataset. Metrics shown include precision, recall, F1 score (all as percentages), average detection speed (in iterations), and validation loss.}
\label{tab:byzantine_detection_openweb_full}
\vspace{-1em}
\footnotesize
\begin{center}
    \begin{tabular}{@{}clcccccc@{}}
        \toprule
        \multirow{2}{*}{\textbf{\textsc{Mode}}}
        & \multirow{2}{*}{\textbf{\textsc{Attack}}} & \multicolumn{5}{c}{\scriptsize\textbf{\textsc{Sentinel~(Ours)}}} & \textbf{\scriptsize\textsc{No Verif.}}\\
        \cmidrule(lr){3-7}\cmidrule(lr){8-8}
        && \textbf{\scriptsize\textsc{Pr.}}~(\%)~\textuparrow & \textbf{\scriptsize\textsc{Re}}.~(\%)~\textuparrow & \textbf{\scriptsize\textsc{F1}}~(\%)~\textuparrow & \textbf{\scriptsize\textsc{Det.~Speed}}~\textdownarrow & \textbf{\scriptsize\textsc{Val.~Loss}}~\textdownarrow & \textbf{\scriptsize\textsc{Val.~Loss}}~\textdownarrow \\
        \midrule
        - & None (Vanilla)              & 100.0 & 100.0 & 100.0 & N/A & 3.773 & 3.778\\
        \cmidrule(lr){1-8}
        \parbox[t]{1mm}{\multirow{12}{*}{\rotatebox[origin=c]{90}{\scriptsize\textbf{\textsc{Activation Manipulation}}}}}
        & Constant (Zeros)              & 100.0 & 100.0 & 100.0 & 6.33 & 3.773 & 11.761\\
        & Constant (Ones)               & 100.0 & 100.0 & 100.0 & 6.29 & 3.779 & 7.820\\
        & Random Value                  & 100.0 & 100.0 & 100.0 & 6.29 & 3.777 & 7.821\\
        & Scaling ($\alpha=-1$)         & 100.0 & 100.0 & 100.0 & 6.29 & 3.776 & 4.016\\
        & Random Sign ($1\%$)           & 100.0 & 100.0 & 100.0 & 6.21 & 3.774 & 4.614\\
        & Random Sign ($10\%$)          & 100.0 & 100.0 & 100.0 & 6.29 & 3.779 & 4.578\\
        & Random Sign ($30\%$)          & 96.0 & 100.0 & 98.0 & 25.62 & 3.779 & 4.524\\
        & Delay ($100$-steps)           & 88.9 & 100.0 & 94.1 & 10.74 & 3.790 & 7.701\\
        & Bias Addition                 & 91.7 & 91.7 & 91.7 & 14.0 & 3.782 & 3.843 \\
        & Invisible Noise ($90\%$)      & 100.0 & 100.0 & 100.0 & 6.25 & 3.777 & 7.677\\
        & Invisible Noise ($95\%$)      & 100.0 & 100.0 & 100.0 & 6.5 & 3.783 & 7.674\\
        & Invisible Noise ($99\%$)      & 100.0 & 100.0 & 100.0 & 6.29 & 3.780 & 7.678\\
        \cmidrule(lr){1-8}
        \parbox[t]{1mm}{\multirow{12}{*}{\rotatebox[origin=c]{90}{\scriptsize\textbf{\textsc{Gradient Manipulation}}}}}
        & Constant (Zeros)              & 100.0 & 100.0 & 100.0 & 5.7 & 3.775 & 3.908\\
        & Constant (Ones)               & 100.0 & 100.0 & 100.0 & 1.0 & 3.774 & 10.722\\
        & Random Value                  & 100.0 & 100.0 & 100.0 & 1.0 & 3.780 & 9.611\\
        & Scaling ($\alpha=-1$)         & 0.0 & 0.0 & 0.0 & N/A & 3.847 & 3.860\\
        & Random Sign ($1\%$)           & 12.1 & 16.7 & 14.0 & 1516.33 & 3.990 & 3.957\\
        & Random Sign ($10\%$)          & 5.3 & 4.2 & 4.7 & 2646.0 & 3.967 & 3.952\\
        & Random Sign ($30\%$)          & 0.0 & 0.0 & 0.0 & N/A & 3.901 & 3.906\\
        & Delay ($100$-steps)           & 100.0 & 100.0 & 100.0 & 7.09 & 3.771 & 11.508\\
        & Bias Addition                 & 100.0 & 100.0 & 100.0 & 1.0 & 3.777 & 10.947\\
        & Invisible Noise ($90\%$)      & 100.0 & 87.5 & 93.3 & 92.95 & 3.877 & 4.178\\
        & Invisible Noise ($95\%$)      & 100.0 & 91.7 & 95.7 & 136.42 & 3.828 & 4.098\\
        & Invisible Noise ($99\%$)      & 100.0 & 87.5 & 93.3 & 64.42 & 3.858 & 4.123\\
        \bottomrule
    \end{tabular}
\end{center}
\end{table*}
}

\newcommand{\tabLarge}{
\begin{table}
\setlength{\tabcolsep}{3pt}
\renewcommand{\arraystretch}{0.95}
\captionof{table}{\small Activation attack detection performance for large-scale Llama-3 training on C4 dataset.}
\vspace{-1.5em}
\label{tab:large_scall}
\begin{center}
    \scriptsize
    \begin{tabular}{@{}crccccc@{}}
        \toprule
        \multirow{2}{*}{\textbf{\textsc{Setup}}}
        & \multirow{2}{*}{\textbf{\textsc{Attack}}} & \multicolumn{5}{c}{\scriptsize\textbf{\textsc{Sentinel~(Ours)}}}\\
        \cmidrule(lr){3-7}
        && \textbf{\scriptsize\textsc{Pr.}}~(\%)~\textuparrow & \textbf{\scriptsize\textsc{Re}}.~(\%)~\textuparrow & \textbf{\scriptsize\textsc{F1}}~(\%)~\textuparrow & \textbf{\scriptsize\textsc{Det.~Speed}}~\textdownarrow & \textbf{\scriptsize\textsc{Val.~Loss}}~\textdownarrow \\
        \midrule
        \parbox[t]{6mm}{\multirow{4}{*}{\rotatebox[origin=c]{90}{\scriptsize\textbf{\textsc{\shortstack{0.6B w/ \\ $16 \times 16$ \\ Mesh}}}}}}
        & Random Value                  & 100.0 & 100.0 & 100.0 & 7.96 & 3.900\\
        & Delay ($100$-steps)           & 85.7 & 100.0 & 92.3 & 14.39 & 3.945\\
        & Bias Addition                 & 100.0 & 25.6 & 40.8 & 65.15 & 3.981\\
        & Invisible Noise      & 100.0 & 100.0 & 100.0 & 7.96 & 3.898\\
        \midrule
        \parbox[t]{6mm}{\multirow{4}{*}{\rotatebox[origin=c]{90}{\scriptsize\textbf{\textsc{\shortstack{1.2B w/ \\ $8 \times 8$ \\ Mesh}}}}}}
        & Random Value                  & 100.0 & 100.0 & 100.0 & 4.33 & 3.723\\
        & Delay ($100$-steps)           & 37.5 & 100.0 & 54.5 & 67.0 & 3.774\\
        & Bias Addition                 & 0.0 & 0.0 & 0.0 & N/A & 3.738\\
        & Invisible Noise     & 100.0 & 100.0 & 100.0 & 4.33 & 3.727\\
        \bottomrule
    \end{tabular}
\end{center}
\end{table}
}

\newcommand{\tabLargefull}{
\begin{table*}[ht]
\setlength{\tabcolsep}{5pt}
\renewcommand{\arraystretch}{0.95}
\caption{\small Attack detection performance for large-scale Llama-3 training on C4 dataset. Metrics shown include precision, recall, F1 score (all as percentages), average detection speed (in iterations), and validation loss. For all experiments, we assume a 37.5\% Byzantine workers per stage (thus, for $16 \times 16$ mesh we have 6 malicious vs.~10 honest workers per stage, while for $8 \times 8$ mesh their ratio is 3:5.)}
\vspace{-1em}
\label{tab:large_scall_full}
\footnotesize
\begin{center}
    \begin{tabular}{@{}cclccccc@{}}
        \toprule
        \multirow{2}{*}{\textbf{\textsc{Setup}}}
        & \multirow{2}{*}{\textbf{\textsc{Mode}}}
        & \multirow{2}{*}{\textbf{\textsc{Attack}}} & \multicolumn{5}{c}{\scriptsize\textbf{\textsc{Sentinel~(Ours)}}}\\
        \cmidrule(lr){4-8}
        &&& \textbf{\scriptsize\textsc{Pr.}}~(\%)~\textuparrow & \textbf{\scriptsize\textsc{Re}}.~(\%)~\textuparrow & \textbf{\scriptsize\textsc{F1}}~(\%)~\textuparrow & \textbf{\scriptsize\textsc{Det.~Speed}}~\textdownarrow & \textbf{\scriptsize\textsc{Val.~Loss}}~\textdownarrow \\
        \midrule
        \parbox[t]{1mm}{\multirow{8}{*}{\rotatebox[origin=c]{90}{\scriptsize\textbf{\textsc{0.6B on $16 \times 16$ Mesh}}}}}
        &\parbox[t]{1mm}{\multirow{4}{*}{\rotatebox[origin=c]{90}{\scriptsize\textbf{\textsc{Activation}}}}}
        & Random Value                  & 100.0 & 100.0 & 100.0 & 7.96 & 3.900\\
        && Delay ($100$-steps)           & 85.7 & 100.0 & 92.3 & 14.39 & 3.945\\
        && Bias Addition                 & 100.0 & 25.6 & 40.8 & 65.15 & 3.981\\
        && Invisible Noise ($99\%$)      & 100.0 & 100.0 & 100.0 & 7.96 & 3.898\\
        \cmidrule(r){2-8}
        &\parbox[t]{1mm}{\multirow{4}{*}{\rotatebox[origin=c]{90}{\scriptsize\textbf{\textsc{Gradient}}}}}
        & Random Value                  & 100.0 & 100.0 & 100.0 & 124.08 & 3.895\\
        && Delay ($100$-steps)           & 100.0 & 100.0 & 100.0 & 9.05 & 3.890\\
        && Bias Addition                 & 100.0 & 100.0 & 100.0 & 1.69 & 3.894\\
        && Invisible Noise ($99\%$)      & 98.7 & 93.6 & 96.0 & 14.27 & 3.915\\
        \midrule
        \parbox[t]{1mm}{\multirow{9}{*}{\rotatebox[origin=c]{90}{\scriptsize\textbf{\textsc{1.2B on $8 \times 8$ Mesh}}}}}
        & - & None (Vanilla)               & 100.0 & 100.0 & 100.0 & N/A & 3.723\\
        \cmidrule(r){2-8}
        &\parbox[t]{1mm}{\multirow{4}{*}{\rotatebox[origin=c]{90}{\scriptsize\textbf{\textsc{Activation}}}}}
        & Random Value                  & 100.0 & 100.0 & 100.0 & 4.33 & 3.723\\
        && Delay ($100$-steps)           & 37.5 & 100.0 & 54.5 & 67.0 & 3.774\\
        && Bias Addition                 & 0.0 & 0.0 & 0.0 & N/A & 3.738\\
        && Invisible Noise ($99\%$)      & 100.0 & 100.0 & 100.0 & 4.33 & 3.727\\
        \cmidrule(r){2-8}
        &\parbox[t]{1mm}{\multirow{4}{*}{\rotatebox[origin=c]{90}{\scriptsize\textbf{\textsc{Gradient}}}}}
        & Random Value                  & 100.0 & 100.0 & 100.0 & 1.0 & 3.726\\
        && Delay ($100$-steps)           & 100.0 & 100.0 & 100.0 & 1.0 & 3.722\\
        && Bias Addition                 & 0.0 & 0.0 & 0.0 & N/A & 3.805\\
        && Invisible Noise ($99\%$)      & 100.0 & 100.0 & 100.0 & 9.2 & 3.725\\
        \bottomrule
    \end{tabular}
\end{center}
\end{table*}
}

\newcommand{\tabMixedAttack}{%
\begin{table}[b!]
    \setlength{\tabcolsep}{3pt}%
    \renewcommand{\arraystretch}{0.95}%
    \centering
    \scriptsize
    \captionof{table}{\small Detection performance for training Llama-3-0.6B against mixed attacks.}
    \vspace{-1em}
    \label{tab:mixed_attacks}
    \begin{tabular}{@{}ccccccc@{}}
      \toprule
      \multirow{2}{*}{\textbf{\textsc{Dataset}}}
        & \multicolumn{5}{c}{\scriptsize\textbf{\textsc{Sentinel~(Ours)}}}
        & \scriptsize\textbf{\textsc{Vanilla}} \\
      \cmidrule(lr){2-6} \cmidrule(lr){7-7}
      & \textbf{\scriptsize\textsc{Pr.}} (\%)~\textuparrow
      & \textbf{\scriptsize\textsc{Re.}} (\%)~\textuparrow
      & \textbf{\scriptsize\textsc{F1}} (\%)~\textuparrow
      & \textbf{\scriptsize\textsc{Det.\ Speed}}~\textdownarrow
      & \textbf{\scriptsize\textsc{Val.\ Loss}}~\textdownarrow
      & \textbf{\scriptsize\textsc{Val.\ Loss}}~\textdownarrow \\
      \midrule
      \textbf{\textsc{C4}} & 83.7 & 92.3 & 87.8 &  78.14 & 3.831 & 3.821 \\
      \textbf{\textsc{FW}}      & 81.8 & 92.3 & 86.7 &  66.00 & 3.827 & 3.840 \\
      \textbf{\textsc{OW}}  & 91.9 & 87.2 & 89.5 &  52.70 & 3.784 & 3.778 \\
      \bottomrule
    \end{tabular}
\end{table}
\vspace{1em}
}

\newcommand{\tabMixedAttackLongRun}{%
  \begin{table*}[htp]
    \setlength{\tabcolsep}{4pt}%
    \renewcommand{\arraystretch}{0.95}%
    \centering
    \footnotesize
    \caption{\small Detection performance for training NanoGPT-0.25B and Llama-3-0.6B against mixed attacks for 30k iterations.}
    \vspace{-0.5em}
    \label{tab:mixed_attacks_long_run}
    \begin{tabular}{@{}clccccccc@{}}
      \toprule
      \footnotesize
      \multirow{2}{*}{\textbf{\textsc{Model}}}
        & \multirow{2}{*}{\textbf{\textsc{Dataset}}}
        & \multicolumn{6}{c}{\scriptsize\textbf{\textsc{Sentinel~(Ours)}}}
        & \scriptsize\textbf{\textsc{Vanilla}} \\
      \cmidrule(lr){3-8} \cmidrule(lr){9-9}
      & & \textbf{\scriptsize\textsc{Pr.}} (\%)~\textuparrow
      & \textbf{\scriptsize\textsc{Re.}} (\%)~\textuparrow
      & \textbf{\scriptsize\textsc{F1}} (\%)~\textuparrow
      & \textbf{\scriptsize\textsc{Med.\ Speed}}~\textdownarrow
      & \textbf{\scriptsize\textsc{Avg.\ Speed}}~\textdownarrow
      & \textbf{\scriptsize\textsc{Val.\ Loss}}~\textdownarrow
      & \textbf{\scriptsize\textsc{Val.\ Loss}}~\textdownarrow \\
      \midrule
      \multirow{3}{*}{\rotatebox{90}{\tiny\textbf{\textsc{NanoGPT}}}} 
        & \textbf{\textsc{C4}}  & 91.2 & 86.1 & 88.6 & 5.0 & 44.07 & 3.747 & 3.650\\
        & \textbf{\textsc{FW}}   & 91.2 & 86.1 & 88.6 & 5.0 & 62.13 & 3.752 & 3.731\\
        & \textbf{\textsc{OW}}  & 73.3 & 91.7 & 81.5 & 5.0 & 320.43 & 3.571 & 3.531\\
      \midrule
      \multirow{3}{*}{\rotatebox{90}{\tiny\textbf{\textsc{Llama-3}}}} 
        & \textbf{\textsc{C4}}  & 76.0 & 86.4 & 80.9 & 5.0 & 263.51 & 3.357 & 3.347\\
        & \textbf{\textsc{FW}}  & 88.6 & 88.6 & 88.6 & 5.0 & 108.73 & 3.465 & 3.459\\
        & \textbf{\textsc{OW}}  & 78.0 & 88.6 & 83.0 & 5.0 & 44.06 & 3.316 & 3.313\\
      \bottomrule
    \end{tabular}
    \vspace{-0.05in}
  \end{table*}
}

\newcommand{\tabModelConfig}{%
\begin{sidewaystable}[htbp]
\setlength{\tabcolsep}{5pt}%
\renewcommand{\arraystretch}{0.95}%
\centering
\footnotesize
\caption{Model architectures and training configuration.}
\vspace{-0.5em}
\label{tab:model_config}
\begin{tabular}{@{}lcccccc@{}}
    \toprule
    \textsc{\textbf{Model Configuration}} & \textbf{\textsc{NanoGPT-0.25B}} & \textbf{\textsc{Llama-3-0.6B}} & \textbf{\textsc{Llama-3-1.2B}} & \textbf{\textsc{Llama-3-4B}} & \textbf{\textsc{Llama-4-0.4B}} & \textbf{\textsc{DeepSeek-V3-1B}} \\
    \midrule
    \multicolumn{4}{l}{\textit{\textsc{Model Architecture}}} \\
    \midrule
    Parameters & 278,364,672 & 574,391,296 & 1,224,247,296 & 3,984,989,184 & 443,957,760 & 967,989,760 \\
    Hidden Dimensions & 768 & 1024 & 2048 & 3072 & 512 & 1024 \\
    Number of Layers (Stages) & 12 & 16 & 8 & 22 & 16 & 15\\
    Attention Heads & 12 & 32 & 32 & 32 & 32 & 16\\
    Key-Value Heads & -- & 8 & 8 & 8 & 8 & --\\
    Number of Experts & -- & -- & -- & -- & 8 & 8 \\
    Number of Shared Experts & -- & -- & -- & -- & 1 & 2 \\
    Top-K (for MoE) & -- & -- & -- & -- & 1 & 6 \\
    RoPE Theta & -- & 500000 & 500000 & 500000 & 500000 \\
    FFN Dimension Multiplier & -- & -- & 1.3 & 1.3 & 1.3 \\
    Multiple Of & -- & -- & 1024 & 1024  & 1024 \\
    RoPE Theta & -- & 500000 & 500000 & 500000 & 500000 \\
    \midrule
    \multicolumn{4}{l}{\textit{\textsc{Distributed Setup}}} \\
    \midrule
    Data Parallel Dimension & 8 & 8 (16 for 16$\times$16 mesh) & 8 & 8 & 8 & 8 \\
    Pipeline Parallel Dimension & 12 & 16 & 8 & 22 & 16 & 15\\
    Total Workers & 96 & 128 (256 for 16$\times$16 mesh) & 64 & 176 & 128 & 90\\
    \midrule
    \multicolumn{4}{l}{\textit{\textsc{Optimizer}}} \\
    \midrule
    Type & AdamW & AdamW & AdamW & AdamW & AdamW & AdamW \\
    Learning Rate & 6e-4 & 6e-4 & 6e-4 & 6e-4 & 4e-3 & 2.2e-4 \\
    Epsilon & 1e-8 & 1e-8 & 1e-8 & 1e-8 & 1e-15 & 1e-8 \\
    \midrule
    \multicolumn{4}{l}{\textit{\textsc{Learning Rate Scheduler}}} \\
    \midrule
    Warmup Steps & 100 & 100 & 100 & 500 & 600 & 600 \\
    Decay Ratio & 0.8 & 0.8 & 0.8 & 0.8 & -- & 0.8 \\
    Decay Type & Linear & Linear & Linear & Linear & Linear & Cosine \\
    Minimum LR & 0.0 & 0.0 & 0.0 & 0.0 & 0.1 & 0.1\\
    \midrule
    \multicolumn{4}{l}{\textit{\textsc{Training}}} \\
    \midrule
    Worker Batch Size & 8 & 12 & 10 & 12 & 16 & 16 \\
    Global Batch Size & 64 & 96 (192 for 16$\times$16 mesh) & 80 & 96 & 128 & 128 \\
    Sequence Length & 1024 & 1024 & 1024 & 1024 & 1024 & 1024 \\
    Gradient Clipping & 1.0 & 1.0 & 1.0 & 1.0 & 1.0 & 1.0 \\
    \bottomrule
\end{tabular}
\end{sidewaystable}
}

\newcommand{\tabVerificationConfig}{%
\begin{sidewaystable}[htbp]
\setlength{\tabcolsep}{4pt}%
\renewcommand{\arraystretch}{0.95}%
\centering
\footnotesize
\caption{Hyper-parameters of the proposed verification approach~(\Cref{alg:momentum_verification}).}
\vspace{-0.5em}
\label{tab:verification_config}
\resizebox{\textwidth}{!}{%
\begin{tabular}{@{}l|cc|cc|cc|cc@{}}
    \toprule
    \multirow{4}{*}{\textbf{\textsc{Parameter}}} & \multicolumn{2}{c|}{\textbf{\textsc{NanoGPT}}} & \multicolumn{6}{c}{\textbf{\textsc{Llama-3}}}\\
    \cmidrule{2-9}
    & \multicolumn{2}{c|}{\textbf{\textsc{8$\times$12 Mesh, 0.25B}}} & \multicolumn{2}{c|}{\textbf{\textsc{8$\times$16 Mesh, 0.6B}}} & \multicolumn{2}{c|}{\textbf{\textsc{16$\times$16 Mesh, 0.6B}}} & \multicolumn{2}{c}{\textbf{\textsc{8$\times$8 Mesh, 1.2B}}} \\
    \cmidrule{2-9}
    & \textbf{\textsc{Activation}} & \textbf{\textsc{Gradient}} & \textbf{\textsc{Activation}} & \textbf{\textsc{Gradient}} & \textbf{\textsc{Activation}} & \textbf{\textsc{Gradient}} & \textbf{\textsc{Activation}} & \textbf{\textsc{Gradient}} \\
    \midrule
    Momentum Beta ($\beta$) & 0.90 & 0.80 & 0.90 & 0.80 & 0.90 & 0.80 & 0.99 & 0.85 \\
    Initial IQR Multiplier ($k_0$) & 1.5 & 3.0 & 1.5 & 3.0 & 1.5 & 3.0 & 3.0 & 3.0 \\
    Target FP-Rate ($\alpha$) & 0.01 & 0.001 & 0.0001 & 0.001 & 0.01 & 0.01 & 0.001 & 0.05 \\
    Adaptive Max Iterations ($N_{\text{max}}$) & 10 & 10 & 10 & 10 & 10 & 10 & 10 & 10 \\
    Adaptive Grow Factor ($\gamma_g$) & 1.1 & 1.01 & 1.1 & 1.01 & 1.1 & 1.01 & 1.25 & 1.01 \\
    Adaptive Shrink Factor ($\gamma_s$) & 0.9 & 0.99 & 0.9 & 0.99 & 0.9 & 0.99 & 0.95 & 0.99 \\
    Adaptive Min Distance Multiplier ($\Lambda$) & 0.35 & 0.15 & 0.15 & 0.05 & 0.25 & 0.05 & 0.2 & 0.05 \\
    Adaptive Epsilon ($\varepsilon$) & 0.01 & 0.001 & 0.001 & 0.00005 & 0.0001 & 0.0001 & 0.01 & 0.0001 \\
    \bottomrule
\end{tabular}%
}
\end{sidewaystable}
}

\newcommand{\ByzantineWorkerComparison}{
    \begin{table*}[h]
    \centering
    \caption{\small Malicious Worker Percentages in Byzantine Fault Tolerance Literature}
    \vspace{-0.5em}
    \label{tab:byz_lit_pcnt}
    \footnotesize
    \begin{tabular}{lcc}
    \toprule
    \textbf{Reference} & \textbf{Venue} & \textbf{Malicious Nodes (\%)} \\
    \midrule
    \citet{mhamdi2018byzantinum} & ICML 2018 & 47.37 \\
    \citet{gorbunov2022btard} & ICML 2022 & 43.75 \\
    \citet{blanchard2017machine} & NeurIPS 2017 & 33.00 \\
    \citet{karimireddy2021learning} & ICML 2021 & 30.55 \\
    \citet{malinovsky2024clip} & NeurIPS 2024 & 25.00 \\
    \citet{gorbunov2023variance} & ICLR 2023 & 20.00 \\
    \citet{karimireddy2022byzantine} & ICLR 2022 & 20.00 \\
    \citet{rammal2024communication} & AISTATS 2024 & 18.75 \\
    \bottomrule
    \end{tabular}
    \end{table*}
}

\newcommand{\dpVSpp}{
\begin{table*}[h]
        \centering
        \caption{\small Data Parallel vs. Pipeline Parallel Comparison}
        \vspace{-0.5em}
        \footnotesize
        \begin{tabular}{lll}
        \toprule
        \textbf{Aspect} & \textbf{Data Parallel (Prior Work)} & \textbf{Pipeline Parallel (Our Work)} \\
        \midrule
        Model Distribution & Full model replica per worker & Model split across workers (layers/stages) \\
        Data Distribution & Different batches per worker & Same batch flows through pipeline \\
        Communication Pattern & Parameter gradients aggregated & Activations/gradients passed sequentially \\
        Byzantine Threat & Corrupted parameter gradients & Corrupted inter-stage activations/gradients \\
        Detection Target & Malicious gradient contributions & Malicious activation/gradient transmissions \\
        Aggregation Method & Robust gradient aggregation & Sequential verification at each stage \\
        Literature Focus & Robust aggregators & No non-trivial prior verification exists \\
        \bottomrule
    \end{tabular}
\end{table*}
}

\newcommand{\LuetalConceptualComp}{
\begin{table*}[h!]
    \setlength{\tabcolsep}{5pt}%
    \renewcommand{\arraystretch}{0.95}%
    \centering
    \caption{\small Comparison with prior work on OpenWebText mixed attacks}
    \vspace{-0.5em}
    \footnotesize
    \label{tab:prior_work_comparison}
    \begin{tabular}{@{}lcccc@{}}
      \toprule
      \footnotesize
      \multirow{2}{*}{\textbf{\textsc{Method}}}
      & \multicolumn{4}{c}{\scriptsize\textbf{\textsc{Metrics}}}\\
      \cmidrule(lr){2-5}
      & \textbf{\scriptsize\textsc{Pr.}} (\%)~\textuparrow
      & \textbf{\scriptsize\textsc{Re.}} (\%)~\textuparrow
      & \textbf{\scriptsize\textsc{F1}} (\%)~\textuparrow
      & \textbf{\scriptsize\textsc{TPS}}~\textuparrow \\
      \midrule
      \textbf{\textsc{Duplicate Work}~\cite{lu2024position}}  & 100.0 & 100.0 & 100.0 & 6483\\
      \textbf{\textsc{Sentinel~(Ours)}}                            & 91.9 & 87.2 & 89.5 & 12966 \\
      \bottomrule
    \end{tabular}
  \end{table*}
}

\newcommand{\EMAablation}{
\begin{table*}[h!]
    \setlength{\tabcolsep}{5pt}%
    \renewcommand{\arraystretch}{0.95}%
    \centering
    \caption{\small Ablation study on the impact of EMA in \textsc{Sentinel}.}
    \label{tab:temporal_patterns}
    \vspace{-0.5em}
    \footnotesize
    \begin{tabular}{@{}lcccc@{}}
      \toprule
      \footnotesize
      \multirow{2}{*}{\textbf{\textsc{Reference Point}}}
      & \multicolumn{4}{c}{\scriptsize\textbf{\textsc{Metrics}}}\\
      \cmidrule(lr){2-5}
      & \textbf{\scriptsize\textsc{Pr.}} (\%)~\textuparrow
      & \textbf{\scriptsize\textsc{Re.}} (\%)~\textuparrow
      & \textbf{\scriptsize\textsc{F1}} (\%)~\textuparrow
      & \textbf{\scriptsize\textsc{Val.\ Loss}}~\textdownarrow \\
      \midrule
      \textbf{\textsc{Average}~($\beta_h = \beta_g =0$)}      & 23.08 & 100.0 & 37.5 & 6.248\\
      \textbf{\textsc{Sentinel~($\beta_h=0.9,~\beta_g=0.8$)}} & \textbf{100.0} & \textbf{100.0} & \textbf{100.0} & \textbf{3.826} \\
      \midrule
      \textbf{\textsc{Sentinel~($\beta_h=\beta_g=0.60$)}} & 94.7 & \textbf{100.0} & 97.3 & 3.894 \\
      \textbf{\textsc{Sentinel~($\beta_h=\beta_g=0.99$)}} & 90.0 & \textbf{100.0} & 94.7 & 3.875 \\
      \bottomrule
    \end{tabular}
\end{table*}
}

\newcommand{\DMablation}{
\begin{table*}[h!]
    \setlength{\tabcolsep}{5pt}%
    \renewcommand{\arraystretch}{0.95}%
    \centering
    \footnotesize
    \caption{Ablation study on distance metrics against mixed attacks.}
    \label{tab:distance_ablation}
    \vspace{-0.5em}
    \begin{tabular}{@{}lcccc@{}}
      \toprule
      \footnotesize
      \multirow{2}{*}{\textbf{\textsc{Distance Metric}}}
      & \multicolumn{4}{c}{\scriptsize\textbf{\textsc{Metrics}}}\\
      \cmidrule(lr){2-5}
      & \textbf{\scriptsize\textsc{Pr.}} (\%)~\textuparrow
      & \textbf{\scriptsize\textsc{Re.}} (\%)~\textuparrow
      & \textbf{\scriptsize\textsc{F1}} (\%)~\textuparrow
      & \textbf{\scriptsize\textsc{Val.\ Loss}}~\textdownarrow \\
      \midrule
      \textbf{\textsc{SFR}}        & 42.9 & 83.3 & 56.6 & 8.882\\
      \textbf{\textsc{SWD}}         & 40.0 & \textbf{94.4} & 56.2 & 6.332\\
      \textbf{\textsc{Normalized $L_2$}} & 75.0 & 75.0 & 75.0 & 10.274\\
      \textbf{\textsc{Absolute Deviation $L_1$}}         & 71.1 & 88.9 & 79.0 & 3.883\\
      \midrule
      \textbf{\textsc{All~(Sentinel)}}         & \textbf{83.7} & 92.3 & \textbf{87.8} & \textbf{3.831} \\
      \bottomrule
    \end{tabular}
\end{table*}
}

\newcommand{\SEEDablation}{
\begin{table*}[h!]
    \setlength{\tabcolsep}{5pt}%
    \renewcommand{\arraystretch}{0.95}%
    \centering
    \footnotesize
    \caption{Statistical significance analysis with error bars for the performance of \textsc{Sentinel} against activation delay and invisible noise attack.}
    \label{tab:error_bars}
    \vspace{-0.5em}
    \begin{tabular}{@{}lcccc@{}}
      \toprule
      \footnotesize
      \multirow{2}{*}{\textbf{\textsc{Attack}}}
      & \multicolumn{4}{c}{\scriptsize\textbf{\textsc{Metrics}}}\\
      \cmidrule(lr){2-5}
      & \textbf{\scriptsize\textsc{Pr.}} (\%)~\textuparrow
      & \textbf{\scriptsize\textsc{Re.}} (\%)~\textuparrow
      & \textbf{\scriptsize\textsc{F1}} (\%)~\textuparrow
      & \textbf{\scriptsize\textsc{Val.\ Loss}}~\textdownarrow \\
      \midrule
        \textsc{Delay}~(100-steps)         & 99.2 $\pm$ 1.6  & 100.0 $\pm$ 0.0 & 99.6 $\pm$ 0.8 & 3.843 $\pm$ 0.008 \\
        \textsc{Invisible Noise}~(99\%)    & 100.0 $\pm$ 0.0 & 100.0 $\pm$ 0.0 & 100.0 $\pm$ 0.0 & 3.832 $\pm$ 0.004 \\
      \bottomrule
    \end{tabular}
\end{table*}
}

\newcommand{\AdaptiveAttacks}{
\begin{table}[h!]
    \setlength{\tabcolsep}{2.5pt}%
    \renewcommand{\arraystretch}{0.80}%
    \centering
    \caption{\small Verification performance against adaptive attacks.}
    \label{tab:adaptive_attack}
    \vspace{-0.5em}
    \scriptsize
    \begin{tabular}{@{}lccccc@{}}
      \toprule
      \multirow{2}{*}{\textbf{\textsc{Mode}}}
      & \multicolumn{4}{c}{\scriptsize\textbf{\textsc{Sentinel (ours)}}} & {\scriptsize\textbf{\textsc{No Verif.}}}\\
      \cmidrule(lr){2-5} \cmidrule(lr){6-6}
      & \textbf{\scriptsize\textsc{Pr.}} (\%)~\textuparrow
      & \textbf{\scriptsize\textsc{Re.}} (\%)~\textuparrow
      & \textbf{\scriptsize\textsc{F1}} (\%)~\textuparrow
      & \textbf{\scriptsize\textsc{Val.\ Loss}}~\textdownarrow
      & \textbf{\scriptsize\textsc{Val.\ Loss}}~\textdownarrow\\
      \midrule
      \multirow{1}{*}{\textbf{\textsc{Activation}}} & 100.0 & 100.0 & 100.0 & 3.835 & 7.776\\
      \midrule
      \multirow{1}{*}{\textbf{\textsc{Gradient}}}   & 100.0 & 100.0 & 100.0 & 3.818 & 7.777\\
      \bottomrule
    \end{tabular}
    \vspace{-0.11in}
\end{table}
}

\newcommand{\tabSSNResults}{
\begin{table*}[ht]
\setlength{\tabcolsep}{4pt}
\renewcommand{\arraystretch}{0.95}
\caption{\small Attack detection performance across different attack modes for training Llama-3-0.6B with subspace compression~\citep{ramasinghe2025ssn}. Metrics shown include precision, recall, F1 score (all as percentages), and validation loss at 5000 steps. In all scenarios, each stage has 3:5 malicious to honest ratio.}
\label{tab:ssn_results}
\vspace{-1em}
\footnotesize
\begin{center}
    \begin{tabular}{@{}clcccc@{}}
        \toprule
        \multirow{2}{*}{\textbf{\textsc{Mode}}}
        & \multirow{2}{*}{\textbf{\textsc{Attack}}} & \multicolumn{3}{c}{\scriptsize\textbf{\textsc{Detection Performance}}} & \textbf{\scriptsize\textsc{Training}}\\
        \cmidrule(lr){3-5}\cmidrule(lr){6-6}
        && \textbf{\scriptsize\textsc{Pr.}}~(\%)~\textuparrow & \textbf{\scriptsize\textsc{Re}}.~(\%)~\textuparrow & \textbf{\scriptsize\textsc{F1}}~(\%)~\textuparrow & \textbf{\scriptsize\textsc{Val.~Loss}}~\textdownarrow \\
        \midrule
        \parbox[t]{1mm}{\multirow{6}{*}{\rotatebox[origin=c]{90}{\textbf{\textsc{Activation}}}}}
        & Constant (Zeros)              & 100.00 & 100.00 & 100.00 & 4.2490\\
        & Constant (Ones)               & 100.00 & 100.00 & 100.00 & 4.2486\\
        & Random Value                  & 100.00 & 100.00 & 100.00 & 4.2999\\
        & Bias Addition (Constant)      & 100.00 & 75.00  & 85.71  & 4.2484\\
        & Bias Addition (Random)        & 0.00   & 0.00   & 0.00   & 4.3248\\
        & Delay (100-steps)             & 100.00 & 100.00 & 100.00 & 4.2677\\
        \cmidrule(lr){1-6}
        \parbox[t]{1mm}{\multirow{6}{*}{\rotatebox[origin=c]{90}{\textbf{\textsc{Gradient}}}}}
        & Constant (Zeros)              & 100.00 & 100.00 & 100.00 & 4.2470\\
        & Constant (Ones)               & 100.00 & 100.00 & 100.00 & 4.2639\\
        & Random Value                  & 100.00 & 100.00 & 100.00 & 4.2726\\
        & Bias Addition (Constant)      & 80.00  & 100.00 & 88.89  & 4.2751\\
        & Bias Addition (Random)        & 100.00 & 100.00 & 100.00 & 4.2608\\
        & Delay (100-steps)             & 100.00 & 100.00 & 100.00 & 4.2567\\
        \cmidrule(lr){1-6}
        \multicolumn{2}{c}{\textbf{\textsc{Mixed Mode}}}              & 96.97  & 88.89  & 92.75  & 4.2779\\
        \bottomrule
    \end{tabular}
\end{center}
\end{table*}
}

\newcommand{\tabSWARMDetectionRate}{
\begin{table*}[t!]
    \setlength{\tabcolsep}{7.5pt}%
    \renewcommand{\arraystretch}{0.95}%
    \centering
    \caption{\small Detection performance of \textsc{Sentinel} in a distributed SWARM with 128 workers. There are 37.5\% malicious workers that are submitting randomly chosen mixed gradient and activation attacks. Note that the low recall is attributed to some nodes employing weak attacks that are not disruptive to training, hence they do not get flagged as malicious but training continues without disruption. This is in line with our observation in~\Cref{fig:f1_vs_loss} and intuition from~\Cref{thm:convergence} that weak attackers may survive detection but they are no harm to the training.}
    \label{tab:swarm_results}
    \vspace{-1em}
    \footnotesize
    \begin{tabular}{@{}lccc@{}}
      \toprule
      \footnotesize
      \multirow{2}{*}{\textbf{\textsc{Attack Mode}}}
      & \multicolumn{3}{c}{\scriptsize\textbf{\textsc{Metrics}}}\\
      \cmidrule(lr){2-4}
      & \textbf{\scriptsize\textsc{Pr.}} (\%)~\textuparrow
      & \textbf{\scriptsize\textsc{Re.}} (\%)~\textuparrow
      & \textbf{\scriptsize\textsc{F1}} (\%)~\textuparrow\\
      \midrule
      \multirow{1}{*}{\textbf{\textsc{Mixed Gradient}}} & 100.0 & 75.0 & 85.7\\
      \midrule
      \multirow{1}{*}{\textbf{\textsc{Mixed Activation/Gradient}}}   & 100.0 & 66.7 & 80.0\\
      \bottomrule
    \end{tabular}
\end{table*}
}

\newcommand{\tabMixedActivationAttackMoE}{%
\begin{table}[htp]
    \setlength{\tabcolsep}{1pt}%
    \renewcommand{\arraystretch}{0.85}%
    \centering
    \scriptsize
    \caption{\small Detection performance for training alternative models against mixed activation attacks.}
    \vspace{-1em}
    \label{tab:moe_mixed_attacks}
    \begin{tabular}{@{}lcccccc@{}}
      \toprule
      \multirow{2}{*}{\textbf{\textsc{Architecture}}}
        & \multicolumn{4}{c}{\scriptsize\textbf{\textsc{Sentinel~(Ours)}}}
        & \scriptsize\textbf{\textsc{Vanilla}} \\
      \cmidrule(lr){2-5} \cmidrule(lr){6-6}
      & \textbf{\scriptsize\textsc{Pr.}} (\%)~\textuparrow
      & \textbf{\scriptsize\textsc{Re.}} (\%)~\textuparrow
      & \textbf{\scriptsize\textsc{F1}} (\%)~\textuparrow
      & \textbf{\scriptsize\textsc{Val.\ Loss}}~\textdownarrow
      & \textbf{\scriptsize\textsc{Val.\ Loss}}~\textdownarrow \\
      \midrule
      \textbf{\textsc{Llama-4-0.4B}}      & 73.5 & 92.3 & 81.8 & 3.617 & 3.628 \\
      \textbf{\textsc{DeepSeek-V3-1B}}    & 94.6 & 97.2 & 95.9 & 3.421 & 3.393 \\
      \textbf{\textsc{Llama-3-4B}}        & 94.9 & 68.6 & 79.6 & 3.714 & 3.668 \\
      \bottomrule
    \end{tabular}
\end{table}
}

\newcommand{\tabMixedAttackDDP}{%
\begin{table}[htp]
    \setlength{\tabcolsep}{2pt}%
    \renewcommand{\arraystretch}{0.95}%
    \centering
    \scriptsize
    \caption{\small \textsc{Sentinel}'s performance in presence of robust aggregators for training Llama-3-0.6B against mixed attacks.}
    \vspace{-1em}
    \label{tab:ddp_mixed_attacks}
    \begin{tabular}{@{}lccccc@{}}
      \toprule
      \multirow{2}{*}{\textbf{\textsc{Method}}}
      & \multicolumn{4}{c}{\scriptsize\textbf{\textsc{Attacked Training}}}\\
      \cmidrule(lr){2-5}
      & \textbf{\scriptsize\textsc{Pr.}} (\%)~\textuparrow
      & \textbf{\scriptsize\textsc{Re.}} (\%)~\textuparrow
      & \textbf{\scriptsize\textsc{F1}} (\%)~\textuparrow
      & \textbf{\scriptsize\textsc{Val.\ Loss}}~\textdownarrow \\
      \midrule
      \textbf{\textsc{Sentinel} + Krum}      & 93.6 & 80.6 & 86.6 & 3.873 \\
      \textbf{\textsc{Sentinel} + Bulyan}    & 85.3 & 80.6 & 82.9 & 3.883 \\
      \bottomrule
    \end{tabular}
\end{table}
}

\usepackage{url}
\usepackage{booktabs}

\usepackage[textsize=tiny]{todonotes}

\usepackage{amsmath,amsfonts,bm}

\def\eqref#1{equation~\ref{#1}}

\def\1{\bm{1}}

\DeclareMathAlphabet{\mathsfit}{\encodingdefault}{\sfdefault}{m}{sl}
\SetMathAlphabet{\mathsfit}{bold}{\encodingdefault}{\sfdefault}{bx}{n}

\title{\textsc{Sentinel}: Stagewise Integrity Verification for Pipeline Parallel Decentralized Training}

\correspondingauthor{hadi@pluralis.ai}

\keywords{Pipeline Parallelism, Decentralized Training, Adversarial Machine Learning} 

\reportnumber{} 

\renewcommand{\today}{} 

\author{Hadi Mohaghegh Dolatabadi, Thalaiyasingam Ajanthan, Sameera Ramasinghe, Chamin P Hewa Koneputugodage, Gil Avraham, Yan Zuo, Violetta Shevchenko, and Alexander Long \\
Pluralis Research
}

\begin{abstract}
  Decentralized training introduces critical security risks when executed across untrusted, geographically distributed nodes. While existing Byzantine-tolerant literature addresses data parallel~(DP) training through robust aggregation methods, pipeline parallelism~(PP) presents fundamentally distinct challenges. In PP, model layers are distributed across workers where the activations and their gradients flow between stages rather than being aggregated, making traditional DP approaches inapplicable. We propose \textsc{Sentinel}, a verification mechanism for PP training \textit{without computation duplication}. \textsc{Sentinel} employs lightweight momentum-based monitoring using exponential moving averages~(EMAs) to detect corrupted inter-stage communication. Unlike existing Byzantine-tolerant approaches for DP that aggregate parameter gradients \textit{across replicas}, our approach verifies sequential activation/gradient transmission \textit{between layers}. We provide theoretical convergence guarantees for this new setting that recovers classical convergence rates when relaxed to standard training. Experiments demonstrate successful training of up to 4B-parameter LLMs across untrusted distributed environments with up to 176 workers while maintaining model convergence and performance.
\end{abstract}

\begin{document}
\maketitle
\section{Introduction}
\label{sec:introduction}
Large Language Models (LLMs) have fundamentally reshaped artificial intelligence, demonstrating exceptional performance across diverse tasks~\cite{openai2023gpt4, yang2024qwen2, jiang2024mixtral, dubey2024llama3, meta2025llama4, bi2025deepseek, guo2025deepseek}. Training state-of-the-art LLMs, however, requires substantial computational resources~(reportedly tens of thousands of co-located GPUs for models like GPT-4~\cite{walker2023GPT4}, Llama-4~\cite{meta2025llama4}, Qwen2.5~\cite{yang2024qwen2}, etc.) with corresponding energy and financial costs. This has motivated research into decentralized training approaches to broaden participation in LLM development~\cite{ryabinin2021moshpit, yuan2022deai, ryabinin2023swarm}. Decentralized training, which extends distributed training to trustless settings, allows independent collaborators to pool their computational resources, potentially over large distances, to develop models without relying on massive centralized infrastructure.

Decentralized training of LLMs over networks of interconnected devices is made possible through two primary parallelization approaches: data and pipeline parallelism. Data parallelism~(DP)~\cite{li2020ddp,zhao2023fsdp} distributes different batches of training data across workers nodes, however requires each node to fit the entire model which is not practical for billion-parameter models in decentralized settings. Pipeline parallelism~(PP)~\cite{krizhevsky2017,huang2019gpipe} partitions the model across stage-wise across worker nodes, with each responsible for distinct model stages (groups of layers), but requires high-bandwidth connections and suffers from node dropout. Combining these complementary approaches reduces the size limitation of DP and the vulnerability of PP, and have enabled frameworks such as SWARM~\cite{ryabinin2023swarm} to train billion-parameter LLMs through internet-scale communication among distributed nodes. By leveraging these parallelization techniques, such frameworks aim to achieve high node utilization while minimizing bandwidth requirements, hoping to make large-scale model training more widely accessible.

While optimizing communication bandwidth and fault tolerance have been the primary focus in decentralized training research~\cite{ryabinin2021moshpit, douillard2023diloco, douillard2025streaming, aj2025nesterov}, the success of incentive-driven decentralized training critically hinges on the integrity and trustworthiness of participating nodes~\cite{lu2024position}. Malicious actors in DP configurations can corrupt global updates through parameter gradient poisoning, while in PP (layer-wise model parallelism), adversaries can sabotage intermediate activations or activation gradients between model stages. Such vulnerabilities underscore the need for robust mechanisms in a decentralized PP training setting. Traditional Byzantine-tolerant methods, designed designed to prevent simpler DP threat models~\cite{malinovsky2024clip, gorbunov2022btard, mhamdi2018byzantinum} fail to address the cascading failures induced by partitioned model execution in pipeline parallel~\cite{lu2024position}.

In this paper, we provide the first comprehensive exploration of secure and verifiable PP decentralized training by identifying and addressing vulnerabilities unique to this setting. We formalize a suite of malicious attacks tailored for this setting, for which traditional checkpoint-based verification is ineffective~\cite{arun2025verde}. To counter this, we propose a lightweight verification mechanism using verifier nodes, trusted intermediaries placed between stages, that continuously monitor computational integrity without requiring full model replication or impeding training throughput.

\figFScoreVsLoss

Our method, called \textsc{Sentinel}, implements a momentum-based anomaly detection system that tracks exponential moving averages (EMAs) of activations and gradients across pipeline stages. At each stage of the model, verifier nodes compute statistical divergence metrics between observed signals and their EMA baselines. Deviations exceeding adaptively calibrated thresholds, determined via inter-quartile range~(IQR) analysis, are flagged for potential malicious activity. This lightweight verification introduces minimal computational overhead while enabling early detection of both gradient and activation tampering attacks. Empirical evaluations demonstrate that our system successfully detects and mitigates various attacks in decentralized PP setups scaling beyond hundreds of workers, maintaining training integrity and convergence stability despite the presence of malicious participants.

The contributions of our work are summarized as follows:
\vspace{-0.5em}
\begin{itemize}
\setlength\itemsep{0.25em}
    \item We present the first comprehensive study of vulnerabilities unique to decentralized training with hybrid data–pipeline parallelism, and introduce a suite of training-interruption attacks that serve as benchmarks for evaluating the security of future systems.
    \item We propose a lightweight verification method, dubbed \textsc{Sentinel}, that leverages momentum-based monitoring at verifier nodes. Our theoretical analysis demonstrates that undetected malicious workers have a negligible impact on the convergence properties~(see~\Cref{fig:f1_vs_loss}).
    \item We perform extensive experiments in distributed settings involving hundreds of workers, validating the effectiveness of our verification framework in mitigating malicious behaviors within realistic decentralized training scenarios by achieving consistently high ($> 90\%$) F1 scores.
    \item We integrate our method with SWARM parallelism and demonstrate its remarkable versatility in real-world decentralized training ecosystems by training over a system with 128 geographically distributed instances.
\end{itemize}


\section{Problem Statement}
\label{sec:problem_statement}
In this section, we outline our hybrid DP-PP architecture and threat model for decentralized training, focusing on malicious worker behavior. Additional vulnerabilities are detailed in~\Cref{sec:appendix:vulnerabilities}.

\subsection{Threat Model}
\label{sec:problem_statement:threat_model}
We consider a distributed pipeline parallel neural network composed of $p$ stages and $n$ worker nodes (see~\Cref{fig:threat_model}b \& c). The network outputs ${\boldsymbol{y} = F(\boldsymbol{x}) = f_p \circ f_{p-1} \circ \dots \circ f_1(\boldsymbol{x})}$. 
At iteration $t$ the parameters are $\boldsymbol{\theta}_t = (\boldsymbol{\theta}_t^{(1)}, \dots,\boldsymbol{\theta}_t^{(p)}) \in \mathbb{R}^{D_{\text{total}}}$, where each stage $s$ has parameters ${\boldsymbol{\theta}}_t^{(s)} \in \mathbb{R}^{D_s}$ and $D_{\text{total}} = \sum_{s=1}^p D_s$. Each stage $s$ is is replicated across $d_s$ worker nodes operating in parallel, thus $n = \sum_{s=1}^{p} d_s$. 
Parameters $\boldsymbol{\theta}^{(s)}$ are shared across stage replicas, however each replica processes a distinct mini-batch.
Thus, worker $(s, r)$ computes replica-specific activations ${\boldsymbol{h}^{(s, r)} = f_s(\boldsymbol{h}^{(s-1, r)};{\boldsymbol{\theta}}^{(s)})\in \mathbb{R}^{m}}$ on the forward pass, and replica-specific activation gradients $\boldsymbol{g}^{(s, r)} = \nabla_{\boldsymbol{h}^{(s-1,r)}} \mathcal{L}(\boldsymbol{\theta})$ on the backward pass where $\mathcal{L}(\boldsymbol{\theta})$ is the training loss.

In current decentralized PP training frameworks such as SWARM~\cite{ryabinin2023swarm}, workers exchange activations and activation gradients between pipeline stages without verification.
Unlike federated learning~\cite{mcmahan2017federated} where attacks primarily target weight gradient poisoning, PP setups are vulnerable to training-interruption attacks, where malicious workers can silently disrupt training by sending corrupted signals. This is particularly challenging because corruptions in early-stage workers \textit{may only become apparent in workers of later layers}: errors can amplify due to model non-linearities and only surpass detection thresholds after several stages, allowing attackers to avoid detection and potentially flagging honest workers due to this \textit{cascading effect}. \ul{For a comprehensive comparison between the DP and PP setup, please refer to Q0 in} \Cref{sec:appendix:faq}.

\figThreatModel

To address these vulnerabilities, we introduce dedicated ``verifier nodes'' as trusted intermediaries for lightweight verification~(see~\Cref{fig:threat_model}b). These verifier nodes intercept and validate all training signals exchanged between stages, operating efficiently even on CPU hardware with minimal overhead. Their key advantage is localizing malicious behavior at specific stages, preventing cascading corruption effects inherent to pipeline parallelism. 
In frameworks like SWARM~\cite{ryabinin2023swarm}, verifier nodes are a simple extension to ``trainer nodes'' that handle orchestration between workers (see~\Cref{fig:threat_model}c, \Cref{sec:appendix:swarm}). Formally, our threat model with verifier nodes is defined as follows:

\begin{definitionblockquote}
    \begin{definition}[Data and Pipeline Parallel Threat Model]\label{def:threat_model}
        Consider a neural network trained in a distributed system with the PP setup described above.
        We position trusted ``verifier'' nodes between consecutive stages, through which all communication is routed and thus can be validated, and our goal is to detect and exclude malicious participants.
        Explicitly, let $B_s \subset \{1, 2, \dots, d_s\}$ be the subset of malicious worker nodes at stage $s$, with the fraction of malicious workers defined as $\gamma_s = |B_s|/d_s$, and we assume $\gamma_s < 1/2$. Our objective is to detect the malicious subset $B_s$.
    \end{definition}
\end{definitionblockquote}

In our threat model, malicious actors can employ various strategies to disrupt training.\footnote{We assume the first and last layers are managed by honest workers, as they process user data and compute the loss and thus require protection. We also assume that the malicious workers cannot collude with each other but later relax this assumption.} Next, we introduce these attacks based on their computational requirements and potential impact on the training.

\paragraph{Attack Variants.}
\label{sec:problem_statement:attack_types}
Let $\boldsymbol{x}$ represent any activation or gradient vector potentially manipulated by a compromised node, with $\hat{\boldsymbol{x}}$ denoting its manipulated version. We consider:
\vspace{-0.5em}
\begin{itemize}
\setlength\itemsep{0.25em}
    \item \textbf{Constant Attack:} The attacker submits a constant vector without performing the assigned computation, i.e., $\hat{\boldsymbol{x}} = \boldsymbol{c}$, such as vectors of negative ones or zeros.
    \item \textbf{Random Value Attack:} The attacker replaces vector elements with values randomly drawn from a standard normal distribution: $\hat{\boldsymbol{x}} \sim \mathcal{N}(\boldsymbol{0}, I)$.
    \item \textbf{Scaling Attack:} The attacker scales the vector by a factor $\alpha$, i.e., $\hat{\boldsymbol{x}} = \alpha \boldsymbol{x}$.
    \item \textbf{Random Sign Attack:} The attacker flips each vector element's sign with probability $p$.
    \item \textbf{Bias Addition Attack:} The attacker introduces random noise to the vector: $\hat{\boldsymbol{x}} = \boldsymbol{x} + \boldsymbol{\epsilon}$, where $\boldsymbol{\epsilon} \sim \mathcal{N}(\boldsymbol{0}, \sigma^2I)$. For stealthier attacks, $\sigma$ can be chosen to match the original vector's magnitude.
    \item \textbf{Delay Attack:} 
    The attacker use past values: ${\hat{\boldsymbol{x}}_t = \boldsymbol{x}_{t-k}}$ where $k$ denotes the delay steps.
    \item \textbf{Invisible Noise Attack:} Inspired by the ALIE attack~\cite{baruch2019alie}, the attacker replaces benign values with statistically subtle boundary values: $\hat{\boldsymbol{x}} = \boldsymbol{\mu} + z_{\max} \big(\boldsymbol{\sigma} \odot \boldsymbol{\epsilon}\big)$, where $\boldsymbol{\mu}$ is the original vector's mean, $\boldsymbol{\sigma}$ its element-wise standard deviation, ${z_{\max} = \sqrt{2} \cdot \text{erfinv}(2p - 1)}$, with $p = 1 - \alpha$ being the quantile threshold, $\text{erfinv}$ is the inverse error function, and $\boldsymbol{\epsilon} \sim \mathcal{N}(\boldsymbol{0}, I)$.
\end{itemize}

\section{Momentum-based Verification of Workers}
\label{sec:approach}
Vanilla pipeline parallelism remains vulnerable to corrupted communications in both forward and backward propagation. When malicious workers inject corrupted activations or activation gradients, the effects can cascade through the network, potentially compromising model convergence with minimal detectability. Na\"ive methods like full computation duplication~\cite{rajput2019detox, lu2024position} guarantee detection but reduce training throughput, while random sampling verification fails as some attacks that can damage training within just a few iterations (see \Cref{fig:detailed_loss_evolution_activations}).

We introduce \textsc{Sentinel}: a lightweight, statistically principled verification mechanism that leverages EMA of activations and their gradients to establish reliable reference points for detecting anomalous behavior. We design an algorithm to adaptively set the thresholds of our anomaly detection tests using the IQR. Under relaxed assumptions, we analytically prove that undetected corrupted worker nodes under our verification framework have negligible impact on final model convergence.

\subsection{Proposed Method}
\label{sec:approach:method}

The key insight driving our approach is that in healthy distributed training scenarios, each worker's activations and gradients should exhibit statistical consistency with the overall population. Existing work duplication methods such as~\citet{lu2024position} would require significant computational resources as they need to allocate half of their resources for work verification. EMAs offer three critical advantages as the foundation for our proposed lightweight verification:
\vspace{-0.5em}
\begin{itemize}
\setlength\itemsep{0.25em}
    \item \textbf{Computational Efficiency}: Computing and updating EMA statistics requires minimal computation and memory overhead ($\mathcal{O}(m)$ complexity where $m$ is the activation/gradient size), making it suitable for resource-constrained verifier nodes.
    \item \textbf{Temporal Smoothing}: The EMA naturally smooths out mini-batch noise while capturing the underlying distribution of legitimate worker outputs, creating a robust reference point.
    \item \textbf{Adaptivity}: 
    As training distributions shift, the EMA automatically adjusts to these shifts while remaining resistant to abrupt deviations from malicious workers.
\end{itemize}

Thus, we design \textsc{Sentinel} to contain four key components: (1) using EMAs as statistical reference points, (2) selecting appropriate distance measures for deviation detection, (3) implementing adaptive thresholds for anomaly detection, and (4) handling cascading effects in the distributed architecture. Below, we elaborate on each of these components. For a detailed step-by-step overview of our approach, please refer to~\Cref{alg:momentum_verification} in the Appendix. Furthermore, we refer the interested reader to \Cref{sec:appendix:swarm} for an overview of how \textsc{Sentinel} gets integrated in SWARM~\citep{ryabinin2023swarm}.

\paragraph{Exponential Moving Average as Reference Point.}
We leverage the EMA of activations and gradients at each layer as a statistical reference point to detect deviations. Since the EMA serves as a robust approximation of the expected value~\cite{robbins1951stochastic}, it is effective for detecting anomalies. For non-malicious actors, we only expect small deviations since the optimization trajectory typically remains smooth. Formally, each verifier node maintains a running EMA of activations:
\begin{align}\label{eq:momentum_def}
    \boldsymbol{m}_{t}^{(s)}(\boldsymbol{h})&=\beta_{h}\boldsymbol{m}_{t-1}^{(s)}(\boldsymbol{h})+(1-\beta_{h})\frac{1}{d_s}\sum_{r=1}^{d_s}\boldsymbol{h}_{t}^{(s,r)},
\end{align}
where $\boldsymbol{m}_{t}^{(s)}(\boldsymbol{h})$ denotes the EMA of activations at stage $s$, and $\beta{_h} \in [0, 1)$ is the decay rate. \textbf{A similar equation is used to capture the EMA of gradients~$\boldsymbol{m}_{t}^{(s)}(\boldsymbol{g})$ with decay rate $\beta{_g} \in [0, 1)$}. To establish a reliable initial estimate, we employ a ``warm-up'' phase with only honest workers, during which the EMA statistics are collected and stabilized before verification begins (for additional explanation, please see Q4 in \Cref{sec:appendix:faq}.) After the warm-up period, each time workers submit new signals, the verifier conducts a lightweight statistical test comparing these signals with the established EMA. The deviation determines whether a worker is flagged as malicious. Formally, for activation $\boldsymbol{h}_{t}^{(s, r)}$ submitted by worker $(s, r)$, the verifier calculates:
\begin{equation}\label{eq:test_statistic}
    \Gamma_{t}^{(s, r)}:=\Omega\left(\boldsymbol{h}_{t}^{(s, r)}, \boldsymbol{m}_{t-1}^{(s)}(\boldsymbol{h})\right),
\end{equation}
where $\Omega(\cdot, \cdot)$ is a suitable distance measure. A worker is flagged as malicious if $\Gamma_{t}^{(s, r)} > \tau$, and we skip updating the EMA to maintain verification integrity. We use a similar detector for $\boldsymbol{g}_{t}^{(s, r)}$.

\paragraph{Choice of Distance Measure $\Omega$.} The distance function critically impacts verification sensitivity. Rather than using a single metric, we employ a collection of metrics $\mathcal{M}$, including \emph{mean absolute difference}, \emph{normalized Euclidean distance}, \emph{sliced Wasserstein distance}, and \emph{sign flip ratio}, to robustly detect various attack types. These metrics capture different aspects of distributional shifts that may indicate malicious behavior and are defined as below:
\vspace{-0.5em}
\begin{itemize}
\setlength\itemsep{0.25em}
\item \textbf{Mean Absolute Difference ($L_1$)}: This measures the average absolute deviation between signals:
\begin{equation}
\Omega_{L_1}(\boldsymbol{h}_t^{(s,r)}, \boldsymbol{m}_{t-1}^{(s)}(\boldsymbol{h})) = \mathbb{E}[\|\boldsymbol{h}_t^{(s,r)} - \boldsymbol{m}_{t-1}^{(s)}(\boldsymbol{h})\|_1],
\end{equation}
where $\mathbb{E}[\cdot]$ denotes the expectation over all features.

\item \textbf{Normalized Euclidean Distance ($L_2$)}: This computes the squared difference between whitened representations:
\begin{equation}
\Omega_{L_2}(\boldsymbol{h}_t^{(s,r)}, \boldsymbol{m}_{t-1}^{(s)}(\boldsymbol{h})) = \mathbb{E}[\|\bar{\boldsymbol{h}}_t^{(s,r)} - \bar{\boldsymbol{m}}_{t-1}^{(s)}(\boldsymbol{h})\|^2],
\end{equation}
where $\bar{\boldsymbol{x}}$ denotes the whitened (z-scored) version of $\boldsymbol{x}$.

\item \textbf{Sign Flip Ratio (SFR)}: This quantifies the fraction of coordinates with opposing signs, bounded in $[0,1]$:
\begin{equation}
\Omega_{SFR}(\boldsymbol{h}_t^{(s,r)}, \boldsymbol{m}_{t-1}^{(s)}(\boldsymbol{h})) = \mathbb{E}[\mathbf{1}(\text{sign}(\boldsymbol{h}_t^{(s,r)}) \neq \text{sign}(\boldsymbol{m}_{t-1}^{(s)}(\boldsymbol{h})))].
\end{equation}

\item \textbf{Sliced Wasserstein Distance (SW)}: This approximates the Wasserstein distance through random projections:
\begin{equation}
\Omega_{SW}(\boldsymbol{h}_t^{(s,r)}, \boldsymbol{m}_{t-1}^{(s)}(\boldsymbol{h})) = \mathbb{E}_{\boldsymbol{u} \sim \mathcal{S}^{d-1}}[W_1((\boldsymbol{h}_t^{(s,r)})_{\#}\boldsymbol{u}, (\boldsymbol{m}_{t-1}^{(s)}(\boldsymbol{h}))_{\#}\boldsymbol{u})]
\end{equation}
where $W_1$ is the 1-Wasserstein distance, and $(\cdot)_{\#}\theta$ denotes the projection onto the random unit vector $\boldsymbol{u}$ from the unit sphere $\mathcal{S}^{d-1}$.
\end{itemize}

We track all these metrics for each stage throughout training. The same metrics are used to measure the deviations between submitted gradients~$\boldsymbol{g}_t^{(s,r)}$ and their momentum $\boldsymbol{m}_{t-1}^{(s)}(\boldsymbol{g})$. A worker is flagged if it exceeds the threshold for any metric. The diverse set of metrics provides robustness against various attack vectors, as different attacks may manifest in different statistical properties. Neural network-based distance measures could also be a promising candidate but left for future work. For an ablation study on the impact of each metric, please refer to~\Cref{sec:experimental_results:ablations}.

\paragraph{Automatic Threshold $\tau$ Updates.} Each distance metric requires a threshold $\tau$ to flag anomalous behavior. During the initial ``warm-up'' period mentioned earlier, we collect valid deviations for each metric at every stage and compute the IQR. We then use Tukey's fences~\cite{tukey1977fences} to establish valid deviation bounds. In particular, let $\mathcal{H}_l^{s}$ denote the history of valid deviations over the past $l$ iterations collected from all $d_s$ workers at stage $s$. Let $q_1$, $q_2$, and $q_3$ be the 25th, 50th (median), and 75th percentiles of these deviations. We define our test statistic as:
\begin{equation}\label{eq:tukey_fences}
    \text{if}~|\Gamma_{t}^{(s, r)} - q_2| \geq k(q_3-q_1) \Rightarrow \text{flag node $(s,r)$ as adv.}
\end{equation}
where $k=1.5$ is the conventional choice~\cite{tukey1977fences}. We choose to adaptively adjust $k$ through an iterative process that widens or narrows thresholds to maintain a chosen false positive rate~(e.g.,~$<1\%$). This dynamic threshold continuously incorporates new benign deviations into the historical window, enabling the verification system to automatically adapt to evolving data changes throughout training~(see~\Cref{fig:deviation_bound_evolution}). For more details, see~\Cref{alg:adaptive_iqr}.

\paragraph{Handling Cascading Effects.} In our PP-based distributed architecture, corrupted activations from an early stage can affect downstream workers, potentially causing misclassification of honest nodes. We address this with two complementary mechanisms:
\vspace{-0.5em}
\begin{enumerate}[leftmargin=0.25in]
\setlength\itemsep{0.25em}
\item \textbf{Bottom-up Malicious-Node Identification}: When a worker at stage~$s$ is flagged as malicious, the verifier notifies all downstream verifiers to pause their deviation statistics for the affected mini-batch and label subsequent anomalies as ``tainted by upstream.'' To maintain uninterrupted training, in the backward pass verifiers send the stored gradient EMA, enabling parameter updates without revealing any behavioral change. For more information see~\Cref{sec:appendix:detailed_cascading_effect} and \Cref{fig:activation_cascading}.
\item \textbf{Violation Counter with Forgiveness}: Rather than banning a worker after one deviation, each verifier maintains a violation counter. Severe deviations~($\times 100$ above the threshold) result in immediate bans, while milder ones increment the counter by one. A worker is banned after $c$ violations, but the counter decrements after $T_{\text{forgiveness}}$ consecutive clean steps, allowing recovery from transient anomalies. We use $c = 5$ and $T_{\text{forgiveness}} = 100$ in our experiments.
\end{enumerate}

\subsection{Theoretical Analysis}
\label{sec:approach:theory}
To complement our approach, we provide theoretical guarantees under relaxed conditions, analyzing (1) the convergence behavior of the distributed training under bounded malicious perturbations, and (2) the conditions under which our system can maintain an honest majority at each stage.
\vspace{-0.10em}
\begin{blockquote}
    \begin{theorem}[Convergence Under Bounded Perturbations (informal)]
        \label{thm:convergence}
        Consider a distributed training setup that utilizes PP to split the model layers across workers and uses momentum-based verification to verify each worker's contribution in the forward or backward pass.\footnote{We do not consider dishonest activity during the ``all-reduce'' operation for syncing parameter gradients between DP replicas (which is the setting that all prior Byzantine-tolerant literature address). This is a complimentary axis and one can utilize any prior Byzantine-tolerant work. We conduct an experiment on the compatibility of \textsc{Sentinel} with these defense mechanisms in \Cref{sec:experimental_results:ablations}.} Also, assume that less than half of workers at each stage are malicious~(i.e.,~$\gamma_s < 1/2$) and we use a fixed threshold $\tau$ for worker verification using \Cref{eq:test_statistic}.\footnote{This is to relax our conditions. In practice we set this threshold automatically each iteration using the IQR.} Under such relaxed conditions, training with non-convex loss functions optimized with momentum SGD converges to a neighborhood of a stationary point where the size of this neighborhood is directly proportional to $\tau$.
    \end{theorem}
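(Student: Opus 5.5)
The plan is to reduce the statement to the standard analysis of momentum SGD with a biased but bounded gradient oracle, with the bias controlled by the verification threshold $\tau$.

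\textbf{Assumptions.} We assume the following:
\begin{itemize}
\item the loss $\mathcal{L}$ is bounded below by $\mathcal{L}^\star$ and is $L$-smooth in $\boldsymbol{\theta}$;
\item the honest stochastic gradients are unbiased with variance at most $\sigma^2$;
\item each stage map $f_s$ is Lipschitz in its input, and so is its Jacobian, with constants $L_f$ and $L_J$;
\item the distance $\Omega$ in \Cref{eq:test_statistic} dominates the Euclidean distance, i.e., $\|\boldsymbol{x}-\boldsymbol{m}\|\le c_\Omega\,\Omega(\boldsymbol{x},\boldsymbol{m})$.
\end{itemize}

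\textbf{Step 1: bound a single surviving corruption.} Consider a malicious message at stage $s$ that passes the fixed-threshold test. It satisfies $\|\hat{\boldsymbol{x}}-\boldsymbol{m}_{t-1}^{(s)}\|\le c_\Omega\tau$. The honest signal $\boldsymbol{x}$ lies within some natural deviation $\delta$ of the same EMA; this $\delta$ is exactly what the warm-up calibration of $\tau$ is meant to capture. By the triangle inequality, $\|\hat{\boldsymbol{x}}-\boldsymbol{x}\|\le c_\Omega\tau+\delta=:\tau'$.

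Messages that fail the test are either replaced by the stored gradient EMA or lead to a ban after finitely many violations. Their effect is therefore also bounded, by the EMA's deviation from the true gradient, which is again of order $\tau'$.

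\textbf{Step 2: propagate the perturbation to the parameter gradient.} A corrupted activation entering stage $s$ propagates through $f_{s+1},\dots,f_p$ and through the backward Jacobians. A corrupted activation gradient enters the parameter gradients of stages $\le s$ through their local Jacobians. Composing the Lipschitz bounds along the pipeline gives, for each replica $r$,
\[
\|\tilde{\nabla}^{(r)}-\nabla^{(r)}\|\le C_{\mathrm{pipe}}\,\tau', \qquad C_{\mathrm{pipe}}=\mathcal{O}\big(p\,\max(L_f,L_J)^{p}\big).
\]
The bottom-up taint rule ensures that at most one injected perturbation per replica and per direction has to be tracked. Averaging over the $d_s$ replicas, and using $\gamma_s<1/2$ to bound the fraction of affected replicas, gives a deterministic bias
\[
\boldsymbol{b}_t=\tilde{\boldsymbol{g}}_t-\boldsymbol{g}_t, \qquad \|\boldsymbol{b}_t\|\le B:=\gamma_{\max}C_{\mathrm{pipe}}\tau'.
\]

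\textbf{Step 3: run the momentum-SGD analysis with bounded bias.} We use the standard auxiliary sequence $\boldsymbol{z}_t=\boldsymbol{\theta}_t-\frac{\beta}{1-\beta}(\boldsymbol{\theta}_t-\boldsymbol{\theta}_{t-1})$, which satisfies $\boldsymbol{z}_{t+1}=\boldsymbol{z}_t-\frac{\eta}{1-\beta}\tilde{\boldsymbol{g}}_t$. The descent lemma gives
\[
\mathbb{E}\,\mathcal{L}(\boldsymbol{z}_{t+1})\le\mathcal{L}(\boldsymbol{z}_t)-\tfrac{\eta}{1-\beta}\langle\nabla\mathcal{L}(\boldsymbol{z}_t),\nabla\mathcal{L}(\boldsymbol{\theta}_t)+\boldsymbol{b}_t\rangle+\tfrac{L\eta^2}{2(1-\beta)^2}\mathbb{E}\|\tilde{\boldsymbol{g}}_t\|^2.
\]
Young's inequality bounds the bias term by $\tfrac14\|\nabla\mathcal{L}\|^2+B^2$. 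The mismatch between $\nabla\mathcal{L}(\boldsymbol{z}_t)$ and $\nabla\mathcal{L}(\boldsymbol{\theta}_t)$ is controlled by $L\|\boldsymbol{z}_t-\boldsymbol{\theta}_t\|$, and this is bounded by a geometric sum of past momentum steps. Telescoping over $T$ iterations with $\eta=\mathcal{O}(1/\sqrt{T})$ then yields
\[
\frac1T\sum_{t<T}\mathbb{E}\|\nabla\mathcal{L}(\boldsymbol{\theta}_t)\|^2\le\mathcal{O}\!\Big(\tfrac{\mathcal{L}(\boldsymbol{\theta}_0)-\mathcal{L}^\star+L\sigma^2}{\sqrt{T}}\Big)+\mathcal{O}(B^2).
\]
The neighborhood term $B^2$ is of order $\tau^2$, so the neighborhood radius $\sqrt{\mathcal{O}(B^2)}$ is proportional to $\tau$. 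Setting $\gamma_s=0$ or $\tau=0$ recovers the classical $\mathcal{O}(1/\sqrt{T})$ rate.

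\textbf{Main obstacle.} I expect Step 2 to be the hardest part. The EMA reference point is itself a function of past possibly-corrupted inputs, and the honest deviation $\delta$ is not obviously bounded. To handle the first issue I would use that flagged signals are excluded from the EMA update, so the EMA is a convex combination of honest or $\tau'$-close signals and stays within $\tau'$ of an honest EMA; an induction then closes the argument. For the second, I would simply assume $\delta\le c\,\tau$, which is the premise under which the threshold is calibrated during warm-up. The exponential-in-depth constant is unpleasant, but it is only a constant and leaves the proportionality to $\tau$ intact.
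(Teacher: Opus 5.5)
Your three-step plan has the same skeleton as the paper's proof. You bound what can pass a fixed-threshold test (\Cref{lem:momentum_dev,lem:gen_stat_dev}), push that bound to the aggregated parameter gradient with a $\gamma_s$ factor (\Cref{lem:sensitivity}), and run momentum SGD with a bounded bias (\Cref{thm:non_convex_sgd}), whose floor is of order $\zeta^2\propto\tau^2$.

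The step that does not go through as written is the closure of your EMA induction. Let $\delta_0$ be an honest signal's distance to the \emph{honest} EMA, and $\Delta$ the contamination of the actual EMA. A signal that passes the test then satisfies $\|\hat{\boldsymbol{x}}-\boldsymbol{x}\|\le c_\Omega\tau+\delta_0+\Delta$. If you only know $\Delta\le\tau'$, as you propose, the right-hand side is $c_\Omega\tau+\delta_0+\tau'>\tau'$. The hypothesis is never reproduced, so the induction does not close.

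The missing ingredient is the averaging contraction of \Cref{lem:momentum_dev}. Only a $\gamma_s$ fraction of the averaged signals carries a perturbation, so $\Delta\le\gamma_s\tau'$; the geometric sum over past EMA steps does not inflate this. The recursion then closes at $\tau'=(c_\Omega\tau+\delta_0)/(1-\gamma_s)$. This is where the bound on $\gamma_s$ genuinely enters (the closure needs only $\gamma_s<1$); in your write-up it appears only as a constant in Step~2.

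There is a further wrinkle that the paper's lemma also ignores. Honest replicas fed by an undetected upstream corruption put perturbed but passing signals into the stage-$s$ average too. Run the induction stage by stage and bound those signals by Lipschitz propagation of the stage-$(s-1)$ bound. Then only the malicious fraction is self-referential, and this is also where a depth-exponential constant legitimately arises.

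Where you differ from the paper, your choices are mostly the sounder ones.
\begin{itemize}
\item Your domination assumption $\|\boldsymbol{x}-\boldsymbol{m}\|\le c_\Omega\,\Omega(\boldsymbol{x},\boldsymbol{m})$ is the direction the argument needs. The paper assumes only an upper Lipschitz bound on $\Omega$. That shows perturbations of norm at most $\tau/(L_\Omega(1+\gamma_s))$ \emph{evade} detection, not that every evading perturbation is that small. In the paper's setting your assumption holds for the mean-absolute-difference metric, with $c_\Omega$ equal to the number of features. That suffices, because a worker is flagged as soon as any metric fires.
\item Your composition with Jacobian-Lipschitz constants replaces the linearization assumed in \Cref{lem:sensitivity}. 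It also tracks the effect on every stage rather than only on stage $s$.
\item Your auxiliary-sequence analysis with $\eta=\mathcal{O}(1/\sqrt{T})$ removes the noise floor $D\sigma^2/\alpha$ left by the paper's fixed-step Lyapunov argument. Your limiting neighborhood is therefore purely $\mathcal{O}(\tau^2)$ in squared gradient norm.
\end{itemize}

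Smaller points follow.
\begin{itemize}
\item Your claim that flagged messages cost only $\mathcal{O}(\tau')$ is under-justified. The flagged activation is still forwarded; only backward signals are replaced. With forgiveness, a worker can also be flagged indefinitely without ever reaching $c$ violations. You can repair this with the immediate-ban rule for deviations $100\times$ the threshold. A flagged but unbanned message then has $\Omega\le 100\tau$, hence deviation $\mathcal{O}(\tau)$ by domination, so your Step~2 constant covers its forwarded effect. The EMA substitute is $\mathcal{O}(\tau)$-close to the honest gradient by your premise $\delta\le c\tau$. Immediate bans occur at most once per malicious worker and contribute $\mathcal{O}(1/T)$ under bounded gradients. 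Alternatively, adopt the paper's idealization and analyze only undetected perturbations.
\item In Step~2, an activation injection anywhere in pipe $r$ perturbs every stage's parameter gradient in that pipe. The relevant fraction is therefore that of affected pipes, which can reach $\min(1,\sum_s\gamma_s)$, not $\gamma_{\max}$. Several malicious workers can also share a pipe, so the taint rule does not limit you to one injection per replica. Both issues change only constants.
\item The auxiliary sequence should be $\boldsymbol{z}_t=\boldsymbol{\theta}_t+\frac{\beta}{1-\beta}(\boldsymbol{\theta}_t-\boldsymbol{\theta}_{t-1})$. For the paper's balanced update $\boldsymbol{v}_{t+1}=\beta\boldsymbol{v}_t+(1-\beta)\boldsymbol{g}_t$, it satisfies $\boldsymbol{z}_{t+1}=\boldsymbol{z}_t-\eta\tilde{\boldsymbol{g}}_t$.
\end{itemize}
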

\end{blockquote}
\vspace{-0.10em}
\Cref{thm:convergence} states malicious workers who evade detection by keeping perturbations below threshold $\tau$ can only cause the final solution to deviate from the optimal solution by an amount proportional to $\tau$. The formal theorem is given in~\Cref{sec:appendix:convergence_analysis}. Please also visit Q8 in the FAQ (\Cref{sec:appendix:faq}).

Recall that \textsc{Sentinel} relies on the assumption that fewer than half of workers at any stage are malicious (i.e., $\gamma_s < 1/2$). Next, we quantify the conditions under which this assumption holds with high probability, given a total budget of malicious workers. The proof is given in~\Cref{sec:appendix:honest_majority_proof}.
\vspace{-0.10em}
\begin{blockquote}
\begin{lemma}[Honest Majority Guarantee]
    \label{lemma:honest_majority}
    Consider our distributed training system with $p$ pipeline stages, each replicated across $d$ worker nodes. Let $b$ be the total number of malicious workers, and $\epsilon \in (0, 1)$ be a small positive constant. If workers are assigned to each stage randomly and $b \leq {dp}/{2} - p\sqrt{{d}/{2}\ln\left({p}/{\epsilon}\right)}$, then with probability at least $1-\epsilon$ every pipeline stage has strictly fewer than $d/2$ malicious workers.
\end{lemma}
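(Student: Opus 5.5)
The plan is to model the random assignment, bound the number of malicious workers in a single stage with a Hoeffding-type tail bound, and then take a union bound over the $p$ stages. There are $n = dp$ workers, $b$ of them malicious, placed uniformly at random into $p$ stages of $d$ slots each. Fix a stage $s$ and let $X_s$ be the number of malicious workers it receives. Then $X_s$ is hypergeometric: $d$ draws without replacement from a population of $dp$ containing $b$ ``marked'' items. Its mean is $\mathbb{E}[X_s] = d\cdot b/(dp) = b/p$.

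For the tail bound, I will use Hoeffding's inequality for sampling without replacement. Hoeffding (1963) showed that the hypergeometric tail is dominated by the binomial one, so
\begin{equation*}
\Pr\bigl[X_s - b/p \geq \lambda\bigr] \leq \exp\bigl(-2\lambda^2/d\bigr).
\end{equation*}
The target event ``$X_s \geq d/2$'' corresponds to $\lambda = d/2 - b/p$. The hypothesis rearranges to
\begin{equation*}
b/p \leq d/2 - \sqrt{(d/2)\ln(p/\epsilon)},
\end{equation*}
so $\lambda \geq \sqrt{(d/2)\ln(p/\epsilon)} > 0$. Substituting, $2\lambda^2/d \geq \ln(p/\epsilon)$, which gives $\Pr[X_s \geq d/2] \leq \epsilon/p$. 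This per-stage bound is the key step.

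Next I take the union bound over all stages:
\begin{equation*}
\Pr\bigl[\exists s: X_s \geq d/2\bigr] \leq p\cdot \epsilon/p = \epsilon.
\end{equation*}
Hence with probability at least $1-\epsilon$ every stage has strictly fewer than $d/2$ malicious workers, i.e.\ $\gamma_s < 1/2$ for all $s$, as required by \Cref{def:threat_model}.

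The main obstacle is justifying Hoeffding's inequality for the dependent (without-replacement) draws, rather than for independent Bernoulli trials. I plan to cite Hoeffding's theorem that sampling without replacement is dominated in convex order by sampling with replacement. An alternative that avoids this is to note that the indicators in a single stage are negatively associated, so Chernoff--Hoeffding bounds still apply. A minor point is the strictness of ``$<d/2$''. Because the Hoeffding bound is stated for $X_s - \mathbb{E}[X_s] \geq \lambda$, the event it controls already includes equality, so the complementary event is $X_s < d/2$ exactly. No correlation between stages needs handling, because the union bound does not require independence across stages.
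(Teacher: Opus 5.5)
Your proof is correct, and its skeleton matches the paper's: a per-stage Hoeffding bound with deviation $\lambda = d/2 - b/p$ (the paper's $t = d(1/2-q)$ with $q = b/(dp)$), giving $\Pr[X_s \ge d/2] \le \epsilon/p$, then a union bound over the $p$ stages. The algebra turning the hypothesis on $b$ into $2\lambda^2/d \ge \ln(p/\epsilon)$ is identical.

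The difference is the probabilistic model. The paper replaces the random assignment by an i.i.d.\ surrogate in which each worker is malicious independently with probability $q = b/(dp)$. Then $|B_s| \sim \mathrm{Binomial}(d,q)$, and the textbook Hoeffding inequality for independent Bernoulli variables applies directly. You instead keep exactly $b$ malicious workers and a uniformly random partition into stages, so $X_s$ is hypergeometric. That requires Hoeffding's without-replacement extension, or the negative-association argument you mention.

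The paper's route is more elementary. Strictly, though, it proves the lemma for a model in which the total number of malicious workers equals $b$ only in expectation, which is not the stated hypothesis. Your route matches the statement as written and loses nothing in the constant.

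One precision point. Hoeffding's 1963 result gives $\mathbb{E}f(X) \le \mathbb{E}f(Y)$ for continuous convex $f$, where $Y$ is the with-replacement (binomial) count. This yields domination of the moment generating function, which is all the Chernoff--Hoeffding argument needs. It does not give pointwise domination of tail probabilities, which can fail at the mean: with one marked item among two and both drawn, $\Pr[X\ge 1]=1 > 3/4 = \Pr[Y \ge 1]$. So phrase the citation via the convex-order statement you give later, rather than as ``the hypergeometric tail is dominated by the binomial one.''
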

\end{blockquote}


\section{Related Work}
\label{sec:related_work}
Our momentum-based verification approach intersects three primary research directions that have largely evolved independently. For a more comprehensive review of related work, see~\Cref{sec:appendix:related_work}.

\textbf{Decentralized LLM Training} has emerged as a democratizing force in AI development. While frameworks like Tasklets~\cite{yuan2022deai} or SWARM~\cite{ryabinin2023swarm} have made significant advances in communication efficiency and fault tolerance for non-malicious failures, they remain vulnerable to adversarial participants. Our work aims to address these vulnerabilities.

\textbf{Byzantine Robustness in Machine Learning} has traditionally focused on federated learning contexts where each worker computes complete model updates. Classic approaches like Krum~\cite{blanchard2017machine}, Bulyan~\cite{mhamdi2018byzantinum}, and \textsc{CenteredClip}~\cite{karimireddy2021learning} rely on comparing full gradients across workers, a fundamental mismatch with pipeline parallel architectures where each worker computes only a fraction of the model. \textsc{Sentinel} is specifically designed for the unique constraints of pipeline parallelism.

\textbf{Secure Distributed Systems} principles inform our verifier node architecture, which draws inspiration from trusted intermediaries in distributed computing. While preliminary work by \citet{lu2024position} identified potential vulnerabilities in pipeline parallel training, their redundancy-based solution would significantly reduce throughput, negating the primary benefit of distributed training. Our lightweight verification mechanism provides robust security guarantees with minimal computational overhead.


\section{Experimental Results}
\label{sec:experimental_results}
In this section, we present our experimental results. Unless stated otherwise, we use a 0.6B-parameter Llama-3~\cite{dubey2024llama3,liang2025torchtitan} model ($16$ layers, $32$ attention heads, $1024$ hidden dimension and context length) distributed across $128$ workers in a $8\times16$ data-pipeline parallel mesh. We use AdamW~\cite{loshchilov2019adamw} with initial learning rate $6\mathrm{e-}4$ as our optimizer and train our models on FineWeb~(FW)~\cite{penedo2024fineweb}, OpenWebText~(OW)~\cite{gokaslan2019openweb}, and Common Crawl~(C4)~\cite{raffel2020c4} datasets for $5$k steps. We randomly designate $25\%$ of workers at each pipeline stage as malicious (2:6 malicious vs.~honest ratio), with only $25\%$ of these activated simultaneously to soften our ``no-collusion'' assumption. Training begins with a $1$k-step warm-up period before verification is activated. Based on validation runs on vanilla case, we set $\beta_h=0.9$ and $\beta_g=0.8$ for activation and gradient verification. Finally, for our adaptive threshold we use a window of past~$100$ steps. We relax these assumptions through various ablation studies to study their impact. Detailed experimental settings and extended results are provided in~\Cref{sec:appendix:experimental_results}.

\paragraph{Metrics.} We evaluate our verification method using precision~(Pr), recall~(Re), and F1-score to measure effectiveness in detecting malicious workers while minimizing the false positives. To quantify detection efficiency, we report \textit{detection speed} as the average number of iterations between the start of malicious behavior till the malicious worker gets banned. We also compare convergence rates across methods using the average validation loss at the last training step.

\tabbaseCCfull

\paragraph{Verification Performance.}
We trained Llama-3-0.6B models on the C4 dataset with and without our verification mechanism and report the results in~\Cref{tab:byzantine_detection_c4_full}~(for more comprehensive results on C4 and other datasets, please see~\Cref{sec:appendix:experimental_results}.) Our experiments yield three key findings:

\begin{enumerate}
\item In pipeline parallelism, activation attacks are as threatening as gradient attacks, but their risk has been neglected in the Byzantine-tolerant literature.
\item Attack effectiveness varies significantly between activation and gradient domains. The same technique can severely disrupt training when targeting activations but have minimal impact when applied to gradients (e.g., invisible noise attacks).
\item Our EMA verification method achieves high F1-scores across attack types. When attacks produce negligible deviations and evade detection, their impact on convergence remains limited, confirming our theoretical analysis in~\Cref{thm:convergence} that undetected attacks can only shift parameters to a neighborhood of the optimum. \Cref{fig:f1_vs_loss} also confirms this relationship through a scatter plot of F1-scores against validation loss of more than 75 different attack setups from~\Cref{tab:byzantine_detection_c4_full,tab:byzantine_detection_fineweb_full,tab:byzantine_detection_openweb_full}.
\end{enumerate}

\paragraph{Mixed Attacks.}
While previous experiments examined individual attack types in isolation, real-world scenarios typically involve adversaries employing various attack strategies simultaneously across different nodes. To evaluate our approach under these more realistic conditions, we conduct experiments with the Llama-3-0.6B model in a distributed training environment where 37.5\% of workers per stage are malicious (specifically, 3 malicious versus 5 honest workers per stage in an $8 \times 16$ mesh). All attackers begin their malicious activities simultaneously, with each attacker randomly assigned both an attack mode (activation or gradient manipulation) and a specific attack strategy from~\Cref{sec:problem_statement}. \Cref{tab:mixed_attacks} summarizes our experimental results across three datasets. The findings demonstrate that our verification method successfully identifies most detrimental attacks, resulting in validation loss metrics comparable to non-attacked baseline models. This robustness against heterogeneous attack vectors highlights the effectiveness of \textsc{Sentinel} in securing PP LLM training against sophisticated adversarial scenarios.

\tabMixedAttack

\paragraph{Large-scale Experiments.}
To validate our approach at scale, we conduct experiments in two other settings: (1) a $16 \times 16$ mesh configuration with 256 total workers, and (2) a larger 1.2B parameter model trained on an $8 \times 8$ mesh with 64 workers. In both scenarios, we maintained 37.5\% malicious workers per stage (6:10 malicious-to-honest ratio for the $16 \times 16$ mesh and 3:5 for the $8 \times 8$ mesh), with 25\% of attackers active during each attack round. Results in \Cref{tab:large_scall}~(see~\Cref{tab:large_scall_full} for extended results) show that our verification mechanism effectively preserves training integrity across all attack scenarios, with validation losses comparable to non-attacked baselines. These results were achieved without extensive hyperparameter tuning compared to our 0.6B setting, though improvements are needed to reduce false positives in activation delay attacks on the 1.2B model.

\tabLarge

\subsection{Ablation Studies.}
\label{sec:experimental_results:ablations}
Next, we study how key factors affect our verification method in detecting malicious workers. We evaluate using our strongest attacks, 99\% invisible noise activation attacks (Studies 1 \& 2), 100-step delayed gradient attacks (Study 3), and a mix of all activation attacks (Study 4 \& 5), with a consistent ratio of 3 malicious to 5 honest workers per stage (37.5\% malicious), unless specified otherwise.

\paragraph{Initial Warm-up Period.} Our method requires an initial warm-up phase to ensure that training has reached a stable point. \Cref{fig:ablation:warmup} shows that while early detection achieves high recall, precision is initially low due to insufficiently robust thresholds because of training volatility. After roughly 1k steps, precision stabilizes as the verification method establishes reliable bounds.
    
\paragraph{Attacker Collusion.} When malicious workers coordinate their attacks, detection becomes more challenging. As shown in~\Cref{fig:ablation:concurrent_attackers}, our verification method maintains effectiveness with up to 60\% collusion among malicious workers. Beyond this threshold, false positives increase significantly, which could adversely impact precision.
\paragraph{Gradient Delay Impact.} \Cref{fig:ablation:delay_step} demonstrates our method's robustness to various delay lengths in gradient attacks. Even with minimal delays, where malicious gradients closely resemble current legitimate gradients, our verification method maintains high detection rates.

\figAblations

\paragraph{Alternative Architectures.} To demonstrate the transferability of \textsc{Sentinel} across transformer architectures, we train two Mixture-of-Expert~(MoE)~\citep{shazeer2017moe} architectures, namely Llama-4-0.4B~\citep{meta2025llama4} and DeepSeek-V3-1B~\citep{deepseek2024v3}, on the FineWeb-EDU~\citep{penedo2024fineweb} dataset. We assume that 33\% of the adversaries send their malicious vectors simultaneously as a form of collusion. We train these models as well as their vanilla baselines till 5000 steps. To demonstrate the transferability of our selected hyperparameters, we use the same hyperparameters used for the Llama-3-0.6B experiments for our verification. As shown in~\Cref{tab:moe_mixed_attacks}, \textsc{Sentinel} can successfully be applied to these MoE-based models, showcasing its remarkable transferability across architectures.
    
\paragraph{Large-scale Models.} Next, we show that increasing model size from medium scale to large scale can be done without major performance degradation. To this end, we extend our Llama-3-0.6B experiment by increasing the hidden dimension and number of layers to bring the total parameter count to around 4B. We then train this model using 176 workers on a stringent total batch-size of 96 to fit within our available computational resources and train for 5k steps on the FineWeb dataset. As seen in~\Cref{tab:moe_mixed_attacks}, \textsc{Sentintel} protects the training from interruption attacks and delivers a model which has a validation loss close to the vanilla, non-attacked baseline (0.04 difference only) while keeping the detection accuracy around 80\%. This highlights that our findings are scalable to larger models with wider/deeper layers.

\tabMixedActivationAttackMoE

\paragraph{Impact of EMA in Verification.} To demonstrate how the temporal training dynamics captured within the EMA affects the detection of malicious workers, we compare our approach against a na\"ive version that simply compares submitted signals with the average. For this experiment, we start training a Llama-3-0.6B model while malicous workers employ activation delay attacks. To cancel out the EMA, we set $\beta=0$ which essentially means that we would compare against the instantaneous activation average. Our results are show in~\Cref{tab:temporal_patterns}. As seen, comparison with the instantaneous mean is not enough to protect against malicious workers. This corroborates the importance of temporal patterns in detecting pipeline parallel attacks. Additionally, we measure sensitivity of \textsc{Sentinel} to the EMA decay rate by modifying $\beta_h = \beta_g$ to $0.6$ and $0.99$. As shown in~\Cref{tab:temporal_patterns}, the sensitivity to the decay rates is not significant.

\EMAablation

\paragraph{Impact of Distance Metrics.} The sensitivity of different attacks to various distance metrics varies between activation and activation gradient attacks, which is why we require multiple distance metrics. As discussed in~\Cref{sec:approach:method}, other optimal distance metric choices could provide a unified solution (e.g., neural network classifiers), which we defer to future work. Here, to evaluate the impact of various distance metrics, we conducted an ablation study using the mixed attack setting from~\Cref{tab:mixed_attacks} for training a Llama-3-0.6B on the C4 dataset, employing only one distance metric at a time for detection against mixture attacks. Our results are shown in \Cref{tab:distance_ablation}. As seen, combining all metrics yields optimal performance against the mixture of all activation and activation gradient attacks.

\DMablation

\paragraph{Impact of Random Seeds.} Due to limited computational resources and substantial experimental costs, we have not reported error bars throughout the paper. To demonstrate the statistical integrity of our approach, we computed error bars for two of the most challenging activation attacks: delay (100-steps) and invisible noise (99\%). We randomly selected 5 seeds and repeated the experiments from~\Cref{tab:byzantine_detection_c4_full}, randomizing both network initialization and malicious worker selection. The results provided in~\Cref{tab:error_bars} indicate the statistical significance of our findings.

\SEEDablation
 
\paragraph{Adaptive Attacks.}
In the next study, we investigate an adaptive attack that knows how \textsc{Sentinel} is verifying its signals. In particular, let us assume that the malicious workers maintain an EMA of signals sent to subsequent  layers, using the same $\beta$ as the verifier node. After collecting sufficient EMA samples, the attack sends drifted activations/gradients biased toward a predetermined target direction, using stale momentum estimates to create consistent bias while appearing statistically legitimate. The attacker constructs its attacks according to ${\hat{\boldsymbol{x}} = \boldsymbol{m}_{t-\delta} + \alpha \cdot \left({\boldsymbol{t}_{\text{drift}} - \boldsymbol{m}_{t-\delta}}\right)/{||\boldsymbol{m}_{t-\delta}||} + \boldsymbol{\epsilon}}$ where $\boldsymbol{m}_{t-\delta}$ is the stale EMA momentum, $\alpha$ is the drift rate, $\boldsymbol{t}_{\text{drift}}$ is the predetermined drift target,
$\boldsymbol{\epsilon} \sim \mathcal{N}(0, \sigma^2)$ is Gaussian noise for stealth, and $\delta = \left\lceil \frac{\log(0.1)}{\log(\beta)} \right\rceil$ is the delay factor. We run this attack for activation and activation gradients for training a Llama-3-0.6B on C4 dataset. For activation attack, the verifier/attacker uses $\beta=0.9$, while for gradient attack, they use $\beta=0.8$.

Running this attack using our settings from~\Cref{tab:byzantine_detection_c4_full}~(25\% malicious workers at each stage), we get the results in~\Cref{tab:adaptive_attack}. As seen, the adaptive EMA attack can be destructive without verification, \textsc{Sentinel} detects and mitigates it perfectly, validating our resilience against adaptive attacks that have a knowledge of our defense. This is because the attacker's EMA would only comprise part of the total true EMA, and assuming an honest majority, this would not be sufficient to interrupt training.

\AdaptiveAttacks

\paragraph{Integration with DP Defenses.}
As elaborated throughout the paper, \textsc{Sentinel} targets verifying the signals sent between stages in PP. Unlike the PP axis, we pointed out that existing Byzantine-tolerant literature exclusively target attacks that happen during gradient averaging~(a.k.a.~all-reduce). As such, these two are complementary axes that operate on different dimensions (please see Q0 in~\Cref{sec:appendix:faq}). To demonstrate the complementary nature of
these two axes, we use two commonly used robust aggregators, particularly Krum~\citep{blanchard2017machine} and Bulyan~\citep{mhamdi2018byzantinum}, to replace the vanilla all-reduce between workers. We then train our Llama-3-0.6B against mixed attacks with 3:5 malicious to honest ratio for 5k steps on FineWeb. As seen in~\Cref{tab:ddp_mixed_attacks}, addition of these methods on the orthogonal DP axis has no significant impact on the convergence of the model under mixed attacks, supporting our claim that the PP and DP solutions are complementary.

\tabMixedAttackDDP

\subsection{Extending \textsc{Sentinel} to SWARM Parallelism}
\label{sec:experimental_results:swarm}
Finally, we adapt \textsc{Sentinel} to verify worker node signals in a realistic SWARM~\citep{ryabinin2023swarm} experiment. SWARM parallelism provides a fault-tolerant distributed training ecosystem powered by the Hivemind framework~\citep{ryabinin2020hivemind}. It comprises of worker nodes distributed across both DP and PP coordinates. At each stage, a pool of workers process batches of data, with the coordination managed through trainer nodes that are responsible for stochastically routing activations in the forward pass and activation gradients in the backward pass. From this standpoint, SWARM parallelism is akin to a stochastic DP/PP mesh in comparison to the fixed setting that we have considered so far. As discussed in~\Cref{sec:problem_statement:threat_model}, trainer nodes are in a natural position to be extended as verifier nodes when augmented with \textsc{Sentinel}.

For this experiment, we train our Llama-3-0.6B model across a distributed  SWARM configuration with 128 workers ($8 \times 16$ mesh). We employ 32 trainer nodes with verification capability to train our model on FineWeb-EDU, a curated subset of FineWeb. Since trainers do not have P2P communication, each maintain independent EMAs with a single synchronization point at the end of the warm-up.

\figSWARMTraining

We evaluate robustness against a mixture of random attacks by designating 37.5\% of workers at each stage (except the first and last two) as malicious, maintaining a 3:5 malicious-to-honest ratio with 15\% collusion where attackers activate simultaneously. As shown in~\Cref{fig:swarm_results}, the presence of malicious workers significantly disrupts training convergence in the absence of verification. However, \textsc{Sentinel} successfully maintains the integrity of the training, enabling the training to continue without interruption. This result in a production-grade environment demonstrates the practical applicability of our approach for securing real-world decentralized training ecosystems. For implementation details and full results of \textsc{Sentinel} in SWARM, see~\Cref{sec:appendix:swarm}.


\section{Conclusion}
\label{sec:conclusion}
In this paper, we investigated security vulnerabilities in decentralized, pipeline parallel networks, showing how malicious workers can corrupt activations and activation gradients exchanged between pipeline stages. To guide future research, we introduced a suite of training-interruption attacks as benchmarks for evaluating decentralized training security. Our key contribution is \textsc{Sentinel}, a lightweight momentum-based verification mechanism that utilizes trusted verifier nodes to maintain EMAs of transmitted signals (activations and activation gradients) as statistical reference points. We further developed an IQR-based adaptive thresholding strategy to automatically calibrate detection sensitivity. We complement our approach by theoretical analysis and real-world integration for decentralized training using SWARM. Through extensive experiments with models up to 4B parameters distributed across hundreds of workers, we demonstrated its effectiveness in maintaining training integrity with consistently high F1 scores ($>85\%$) across various attack scenarios.

\textbf{Limitations}: While our verification method effectively detects the attack types presented, it may not generalize to all possible adversarial strategies. Future work should explore better adaptive detection mechanisms that require less manual tuning, possibly neural networks for anomaly detection~\cite{pang2021deep}. Additionally, our approach addresses inter-stage attacks specific to pipeline parallelism, but decentralized training remains vulnerable to other threats including backdoor attacks~\cite{li2024backdoorllm}, and membership inference~\cite{shokri2017mia} each presenting distinct challenges for truly open collaborative training. See~\Cref{sec:appendix:vulnerabilities} for a comprehensive discussion on these.

\bibliographystyle{plainnat}
\nobibliography*
\bibliography{references}

\newpage
\appendix
\onecolumn

{\Large \textbf{Appendix}}

The appendix is organized as follows:
\begin{itemize}[leftmargin=0.25in]
    \setlength\itemsep{0.05em}
    \item In~\Cref{sec:appendix:faq}, we present answers to several common questions, hoping to clarify misconceptions that might arise when interpreting our work and its limitations.
    \item \Cref{sec:appendix:vulnerabilities} provides a holistic view of vulnerabilities related to decentralized training that are beyond the scope of the current work.
    \item We provide an extended version of the related work to~\textsc{Sentinel} in~\Cref{sec:appendix:related_work} that was omitted from the main paper due to space limitations.
    \item We present our methodology details in~\Cref{sec:appendix:approach_details}.
    \item \Cref{sec:appendix:theory} contains our theoretical analysis of \textsc{Sentinel}, including convergence analysis under relaxed assumptions and conditions under which random worker assignment could result in an honest majority at each stage.
    \item In~\Cref{sec:appendix:detailed_experimental_results}, we provide detailed experimental results.
    \item Finally, in~\Cref{sec:appendix:swarm}, we present a step-by-step integration of \textsc{Sentinel} with SWARM parallelism~\citep{ryabinin2023swarm}. We provide extensive investigation into seamless integration with trainer nodes that coordinate training in SWARM, followed by real-world experiments training LLMs over 128 untrusted worker nodes employing malicious attacks.
\end{itemize}

\section{Frequently Asked Questions~(FAQs)}
\label{sec:appendix:faq}
\paragraph{Q0: What are the fundamental differences between data parallel~(DP) and pipeline parallel~(PP) settings that make existing Byzantine tolerant literature inapplicable to this work?} We provide a clear comparison between DP and PP settings, highlighting why existing Byzantine tolerant literature does not apply to our work:

\dpVSpp

Thus, the key distinctions are:
\begin{itemize}[leftmargin=0.25in]
    \setlength\itemsep{0.05em}
    \item \textbf{Prior Byzantine tolerant literature}: Secures the DP axis by developing robust aggregation methods for parameter gradients from multiple model replicas
    \item \textbf{Our work}: Secures the orthogonal PP axis by verifying activations and activation gradients transmitted between sequential pipeline stages using \textsc{Sentinel}.
\end{itemize}
Hence, existing Byzantine tolerant works do NOT apply to the PP axis because:
\begin{enumerate}[leftmargin=0.25in,start=1]
    \setlength\itemsep{0.05em}
    \item \textbf{No aggregation possible}: In PP, activations from different workers cannot be aggregated (they represent different layers processing different data batch).
    \item \textbf{Sequential dependency}: Each stage depends on the previous stage's output, making robust aggregation impossible.
    \item \textbf{Different threat model}: Malicious workers corrupt intermediate representations rather than final parameter updates.
    \item \textbf{Verification vs. Aggregation}: Our verifiers monitor communication channels rather than aggregate multiple contributions.
\end{enumerate}
This fundamental difference explains why our threat model and verification approach are necessarily different from ``typical Byzantine tolerant'' literature.

\paragraph{Q1: Why would a model owner deliver part of their model to an untrusted entity to train?} The computational resources required for training LLMs are becoming increasingly unsustainable. As reported by \citet{brown2020language}, training a single GPT-3 175B model requires up to 3.6K Petaflop-days, incurring a total cost of \$4M in AWS pricing. In the absence of big corporations delivering open source models, decentralized training provides an alternative solution for training such models in a democratized environment where models can be trained openly and participants are reimbursed based on their contributions. As discussed in~\citet{yuan2022deai}, consumer device GPUs are becoming increasingly available worldwide, many of which are underutilized. ``If we could make use of these devices in a decentralized open-volunteering paradigm for foundation model training, this would be a \textbf{revolutionary alternative} to the expensive solutions offered by data centers.'' 

\paragraph{Q2: Is assuming 25-37.5\% malicious workers realistic? Why would anyone trust such a system?} We understand the intuitive concern about our 25-37.5\% malicious worker percentages. However, note that these assumptions are standard practice in the Byzantine fault tolerance literature, as evidenced by recent work in this area outlined in~\Cref{tab:byz_lit_pcnt}. Importantly, our work serves as a preventative security mechanism. The goal is to deter malicious behavior by demonstrating robust detection capabilities, not to operate under the assumption that such high percentages will necessarily occur in practice. As~\Cref{fig:ablation:concurrent_attackers} demonstrates, when malicious nodes are at lower levels, our performance approaches nearly 100\% detection accuracy. The worst-case analysis ensures system reliability even under extreme adversarial conditions.

\ByzantineWorkerComparison

\paragraph{Q3: How practical is the integration of verifier nodes in distributed frameworks?} In real-world systems deploying our algorithm (such as SWARM), trainer nodes are responsible for transmitting activations/activation gradients between layers. We propose modifying these existing trainer nodes to perform verification, essentially obtaining this security functionality ``for free'' since they already have access to all signals passing between layers. Crucially, trainer nodes represent a centrally controlled role in distributed systems as they are managed by the network coordinator rather than volunteers, making it economically practical to maintain control over them. Since their primary responsibility is coordinating signal transmission, these are lightweight CPU-based nodes with minimal cost. For example, consider training a 4B parameter LLM where each layer/stage requires roughly 18GB of GPU VRAM. On AWS, each worker would require a \texttt{g5.2xlarge} EC2 instance with 24GB VRAM at approximately \$1.212 per hour. In contrast, trainer nodes can be bundled with 8 nodes per \texttt{c5a.8xlarge} instance (32vCPUs, 4vCPU per trainer) at \$1.232 per hour (i.e., \$0.154 per trainer node per hour). This represents a significant cost reduction compared to worker instances, and since trainers are centrally controlled, it is both economically and operationally feasible for network coordinators to bear these costs while maintaining security and reliability. If trainer roles must be delegated to volunteers, proper authentication mechanisms would be required to prevent malicious behavior, which we leave to future work. For a more in-depth discussion on implementing \textsc{Sentinel} in SWARM, please see~\Cref{sec:appendix:swarm}.

\paragraph{Q4: Why do you assume a ``warm-up'' period with only honest workers? Does this contradict the spirit of distributed training?}
In distributed environments, it is common practice to start training in a controlled environment until training reaches a stabilized point before allowing public workers to join. This approach is important not only for our verification method, but also ensures that training is stable before public participation begins. \underline{This is NOT against the spirit of distributed training}: consider a 40-layer LLM intended for decentralized training. Initially, we deploy a single replica across 40 nodes under our control, but training throughput is limited. After the initial warm-up phase, we replicate the pipeline across 7 additional replicas, bringing our total DP workers to 320, where only 12.5\% are controlled by the model owner and 87.5\% are public workers. Note that maintaining at least one trusted worker at each stage is crucial not only for initial warm-up but also due to system reliability as the volunteer nodes (280 of them in our example) might drop mid-run. We need at least one reliable pipeline to ensure training continuity when volunteers drop.

\paragraph{Q5: Does the $N$-to-$1$ communication to verifiers create a bandwidth bottleneck?} There is no bandwidth bottleneck. In model parallel implementations such as SWARM, coordination between stages occurs through trainer nodes that send and receive activations/gradients. Since each trainer node already observes all signals passing from one stage to another, we leverage it as our trusted verifier node. As we discuss in detail in \Cref{sec:appendix:swarm}, SWARM utilizes multiple trainer nodes to streamline the data through the stages one-by-one. Thus, each trainer can run their own verification using \textsc{Sentinel} simultaneously to other trainers sending signals to the workers. For technical details on trainer node operations, see the Hivemind~\cite{ryabinin2020hivemind} library.

\paragraph{Q6: How does the verification mechanism handle false positives?} The violation counter with a forgiveness mechanism is a clever way to handle transient anomalies and avoid unfairly banning honest workers. The training curves are given in~\Cref{fig:detailed_validation_loss_evolution_activations,fig:detailed_long_run_llama3} in~\Cref{sec:appendix:experimental_results}. As seen, the training/validation curves show no visible impact on training dynamics from transiently replacing submitted signals with EMA values. This makes intuitive sense: when only a few nodes undergo the EMA replacement phase, the remaining workers in the DP setup continue to submit useful signals that guide training effectively.

\paragraph{Q7: Why does the effectiveness of training-interruption attacks vary across different attack versions and targets?} When comparing activation-based attacks against gradient-based attacks, two key factors explain the effectiveness differences:

\begin{itemize}[leftmargin=0.25in]
    \setlength\itemsep{0.05em}
    \item \textbf{Magnitude differences:} Activation values are orders of magnitude larger than activation gradients. Therefore, methods that modify activations can achieve larger perturbation magnitudes during attacks, resulting in more successful disruption.
    \item \textbf{Propagation scope:} Activation manipulation at layer $1 \leq \ell \leq L$ affects the forward pass for all subsequent layers ($[\ell+1, \ell+2, \dots, L]$) and the backward pass of all layers. In contrast, gradient manipulation at layer $\ell$ only affects the gradients of preceding layers $[1, 2, \dots, \ell-1]$. Consequently, if an attacker is positioned in the middle of the network, manipulating activations has broader impact on the entire training process.
\end{itemize}

\paragraph{Q8: How does your convergence rate compare to well-known lower bounds from Byzantine-tolerant literature?} Our convergence guarantee provides an accurate bound given our assumptions. We note a key distinction in our setting: prior Byzantine-tolerant literature considers data parallel training where malicious actors modify ``parameter gradients''. Our work addresses the \textit{orthogonal} pipeline parallel axis where activations and gradients ``between layers'' are shared and require verification. \ul{These two axes are complementary: securing both pipeline parallel and data parallel axes is important in decentralized settings, and this work focuses on the former. Therefore, prior guarantees from Byzantine-tolerant literature are not directly comparable.}

\paragraph{Q9: How does your method compare to the prior work by \citet{lu2024position}?}

\citet{lu2024position} used a na\"ive approach of assigning one duplicate replica per worker for verification. For instance, with 320 worker nodes this requires splitting them into two groups: 160 workers performing computation and 160 workers replicating their work for verification. While this approach achieves 100\% F1-score on all attacks, \underline{it operates at HALF the true distributed network throughput}. Our solution does NOT replicate volunteer node work, achieving twice the training speed. Conceptually, on OpenWebText against mixed attacks we will have:

\LuetalConceptualComp

\paragraph{Q10: Does removing nodes from the training runs degrade the throughput?}
Note that in real-world eco-systems such as SWARM~\citep{ryabinin2023swarm}, we are not using a fixed mesh anymore. Workers join and leave the SWARM as they wish, and within each pipeline stage, there are numerous workers that are serving that stage. This combined with stochastic routing in SWARM (please see the details in~\Cref{sec:appendix:swarm:integration}) ensures that workers would reach their maximum utilization. Thus, in real-world systems where we have an abundance of workers serving each stage where kicking workers out would not necessarily degrade throughput.

\section{Vulnerabilities of Decentralized Training using Data and Pipeline Parallelism}
\label{sec:appendix:vulnerabilities}
In decentralized settings used for collaborative training, the verification of participant workers is essential to maintain the integrity, security, and overall effectiveness of the training process~\cite{lu2024position}. Verification acts as a critical quality-control measure, ensuring that each participant meaningfully contributes to the collective training effort. Without adequate verification mechanisms, malicious actors can infiltrate the SWARM, potentially compromising its integrity. Such attackers might disrupt the collaborative training, degrade its efficiency, or illegitimately benefit by accessing the trained model without genuinely contributing.

Below, we categorize common malicious behaviors that could arise in decentralized collaborative training scenarios. These categories are not mutually exclusive, as attackers may employ several tactics simultaneously. Nevertheless, this classification provides a structured overview of the key threats:
\begin{itemize}[leftmargin=0.25in]
    \setlength\itemsep{0.0em}
    \item \textbf{Training Disruption (Denial-of-Service or DoS)}: Attackers intentionally impede or halt the training process. This can occur through dropping essential updates, introducing malicious data designed to break communication protocols, or overwhelming the system with excessive or irrelevant submissions.
    \item \textbf{Free-Riding or Minimal Effort Contributions}: Participants contribute minimal computational effort or data yet aim to reap the benefits of the collective process, such as accessing the final model, receiving rewards, or boosting their reputation~\cite{zhu2025dice}. Common tactics include submitting trivial updates or strategically remaining inactive until training nears completion.
    \item \textbf{Model Poisoning and Backdoor Attacks}: Malicious actors provide adversarial updates designed to introduce subtle vulnerabilities or targeted misbehaviors in the resulting model~\cite{li2024backdoorllm}. Typically concealed under normal operational conditions, these backdoors or compromised models trigger malicious outcomes only under specific, pre-defined scenarios.
    \item \textbf{Privacy Violations (Data Extraction or Inference Attacks)}: Attackers exploit gradients, activations, or other shared information during training to infer sensitive or private information from other participants’ datasets, thereby breaching confidentiality and compromising user privacy~\cite{shokri2017mia}.
    \item \textbf{Reputation or Credit Manipulation}: Participants deliberately falsify or exaggerate their contributions (for instance, by generating seemingly high-quality updates) to unjustly obtain greater rewards, enhanced reputation, or tokens. This form of manipulation undermines the fairness of the system and distorts trust among honest contributors.
\end{itemize}

In this paper, our primary focus is mitigating threats associated with training disruption. Ensuring the identification and exclusion of malicious participants who submit harmful or disruptive updates is critical. Failure to address these threats effectively would prevent the swarm from achieving model convergence and producing a reliable, functional final model.


\section{Extended Related Work}
\label{sec:appendix:related_work}

Secure distributed training has gained significant attention with the proliferation of decentralized machine learning systems. Our work builds upon several research threads while addressing unique challenges posed by pipeline parallelism in LLM training.

\paragraph{Decentralized Training Frameworks.}
Decentralized training has emerged as a promising approach for democratizing AI capabilities. \citet{ryabinin2020hivemind} introduced the Hivemind framework, enabling mixture-of-experts models to be trained in a decentralized fashion. Building on this foundation, \citet{ryabinin2021moshpit} proposed Moshpit SGD, a communication-efficient algorithm for training on heterogeneous and unreliable devices. In parallel, \citet{yuan2022deai} introduced Tasklets, a system for decentralized training in heterogeneous environments that adapts to varying network conditions and compute capabilities. SWARM parallelism~\cite{ryabinin2023swarm} further enhanced this approach by combining pipeline and data parallelism to enable training of models significantly larger than those possible with previous decentralized methods. While these frameworks prioritize fault tolerance against non-malicious failures, they generally lack protection against adversarial participants which is a critical vulnerability in open decentralized training environments.

\paragraph{Byzantine-Resilient Distributed Training.}
Byzantine fault tolerance in distributed learning has been extensively studied in the context of federated learning~\cite{mcmahan2017federated} and data-parallel training~\cite{li2020ddp}. \citet{blanchard2017machine} introduced Krum, the first Byzantine-tolerant aggregation rule for distributed SGD that could withstand arbitrary gradient manipulations from compromised workers. This was followed by more sophisticated approaches including Bulyan~\cite{mhamdi2018byzantinum}, median-based aggregation~\cite{baruch2019alie}, and clipping-based methods~\cite{he2020centerclip, malinovsky2024clip}. 

A fundamentally different approach called \textsc{CenteredClip} was proposed by~\citet{karimireddy2021learning}, who leveraged historical gradient information to detect anomalous updates – conceptually similar to our momentum-based verification but applied specifically to gradient aggregation. Recent work by~\citet{rammal2024communication} demonstrated that communication compression could be effectively combined with Byzantine-robust learning, achieving improved convergence rates while maintaining security guarantees.

While these methods provide strong theoretical guarantees, they primarily target scenarios where workers compute complete gradients independently, making them ill-suited for pipeline parallel configurations like SWARM where intermediate activations are communicated between stages. Furthermore, these approaches often involve comparing gradients across workers, which would necessitate parameter replication across stages, contradicting pipeline parallel's objective of enabling training of models too large to fit on a single device.

\paragraph{Security in Pipeline Parallel Architectures.}
Security considerations specific to pipeline parallel training have received limited attention compared to other distributed paradigms. \citet{lu2024position} recently presented a position paper exploring robustness challenges in pipeline parallelism-based decentralized training, highlighting activation-based attacks as a critical concern. Their work, however, focused primarily on identifying vulnerabilities rather than proposing comprehensive verification solutions. The redundancy-based approach proposed by \citet{lu2024position} and \citet{rajput2019detox} could, in principle, be adapted to decentralized pipeline parallel settings. However, these approaches would introduce significant computational overhead (due to duplicating computations across workers) which would greatly diminish the scalability benefits of decentralized training.


\section{Details of Momentum-based Verification}
\label{sec:appendix:approach_details}

\subsection{Momentum-based Verification Algorithms}
\label{sec:appendix:detailed_algorithms}
In this section, we present our detailed algorithm for worker verification in decentralized training. \Cref{alg:momentum_verification} outlines our end-to-end verification mechanism for this setting. We present the algorithm chronologically as training progresses. The verifier nodes perform all verification operations, while worker nodes are solely responsible for computing activations during the forward pass and their respective gradients during the backward pass. Algorithms \ref{alg:activation_verification} and \ref{alg:gradient_verification} detail the verification procedures for both forward and backward passes, respectively. \Cref{alg:gradient_replacement} presents our approach for mitigating cascading effects as described in Section \ref{sec:approach:method}. Finally, \Cref{alg:adaptive_iqr} specifies our adaptive IQR threshold setting methodology for each metric in our approach.

\begin{algorithm}
\caption{Momentum-based Verification for SWARM Parallelism}
\label{alg:momentum_verification}
\begin{spacing}{1.25}
\begin{algorithmic}[1]
\REQUIRE Parameters $\beta_h, \beta_g \in (0, 1)$, violation threshold $c$, forgiveness period $T_{\text{forgiveness}}$, set of metrics $\mathcal{M}$
\STATE Initialize $\boldsymbol{m}_0^{(s)}(\boldsymbol{h}) = \boldsymbol{0}$, $\boldsymbol{m}_0^{(s)}(\boldsymbol{g}) = \boldsymbol{0}$, $v_r^{(s)} = 0$, $B_s = \emptyset$, $\mathcal{H}_l^{(s)} = \emptyset$, $\mathcal{G}_l^{(s)} = \emptyset$ for all $s,r$
\STATE Initialize $\mathcal{T} = \emptyset$ \textit{\textcolor{viridisBlue}{// Initialize global tainted set}}
\vspace{1em}
\STATE \textit{\textcolor{viridisBlue}{// Warm-up phase to establish baseline statistics}}
\FOR{$t = 1$ to $T_{\text{warmup}}$}
    \FOR{$s \in \{1, 2, \ldots, p\}$}
        \STATE Collect $\boldsymbol{h}_t^{(s,r)}$ and  $\boldsymbol{g}_t^{(s,r)}$ from all workers $r \in \{1, \ldots, d\}$
        \STATE Compute $\Gamma_t^{(s,r)} = \Omega(\boldsymbol{h}_t^{(s,r)}, \boldsymbol{m}_{t-1}^{(s)}(\boldsymbol{h}))$ $\forall r$, add to $\mathcal{H}_l^{(s)}$
        \STATE Compute $\Gamma_t^{(s,r)} = \Omega(\boldsymbol{g}_t^{(s,r)}, \boldsymbol{m}_{t-1}^{(s)}(\boldsymbol{g}))$ $\forall r$, add to $\mathcal{G}_l^{(s)}$
        \STATE Update momentum $\boldsymbol{m}_t^{(s)}(\boldsymbol{h})$ and $\boldsymbol{m}_t^{(s)}(\boldsymbol{g})$ using \Cref{eq:momentum_def}
    \ENDFOR
\ENDFOR
\vspace{1em}
\STATE \textit{\textcolor{viridisBlue}{// Main training phase with verification}}
\FOR{$t = T_{\text{warmup}}+1$ to $T_{\text{total}}$}
    \STATE $\mathcal{T}_t = \emptyset$ \textit{\textcolor{viridisBlue}{// Initialize tainted set for current iteration}}
    \vspace{0.5em}
    \STATE \textit{\textcolor{viridisBlue}{// Step 1: Forward Pass and Activation Verification}}
    \FOR{$s \in \{1, \ldots, p\}$}
        \STATE $\mathcal{T}_t^{(s)} \gets \textsc{ActivationVerification}(s, t, \boldsymbol{m}_{t-1}^{(s)}(\boldsymbol{h}), \mathcal{H}_l^{(s)}, B_s, v_r^{(s)}, \mathcal{M}, c, T_{\text{forgiveness}}, \mathcal{T}_t)$
        \STATE $\mathcal{T}_t \gets \mathcal{T}_t \cup \mathcal{T}_t^{(s)}$ \textit{\textcolor{viridisBlue}{// Accumulate tainted workers}}
        \STATE $R_{\text{clean}} = \{r : (t,s,r) \notin \mathcal{T}_t\}$
        \STATE $\boldsymbol{m}_t^{(s)}(\boldsymbol{h}) = \beta_h \boldsymbol{m}_{t-1}^{(s)}(\boldsymbol{h}) + (1-\beta_h)\frac{1}{|R_{\text{clean}}|}\sum_{r \in R_{\text{clean}}} \boldsymbol{h}_t^{(s,r)}$
    \ENDFOR
    
    \vspace{1em}
    
    \FOR{$s \in \{p, p-1, \ldots, 1\}$}
        \STATE \textit{\textcolor{viridisBlue}{// Step 2: Backward Pass and Gradient Verification}}
        \STATE $\mathcal{T}_t^{(s)} \gets \textsc{GradientVerification}(s, t, \boldsymbol{m}_{t-1}^{(s)}(\boldsymbol{g}), \mathcal{G}_l^{(s)}, B_s, v_r^{(s)}, \mathcal{M}, c, T_{\text{forgiveness}}, \mathcal{T}_t)$
        \vspace{0.5em}
        \STATE $\mathcal{T}_t \gets \mathcal{T}_t \cup \mathcal{T}_t^{(s)}$ \textit{\textcolor{viridisBlue}{// Accumulate tainted workers}}
        \vspace{0.5em}
        \STATE \textit{\textcolor{viridisBlue}{// Step 3: Gradient Replacement for Tainted Workers}}
        \STATE $\boldsymbol{g}_t^{(s)} \gets \textsc{GradientReplacement}(s, t, \boldsymbol{m}_{t-1}^{(s)}(\boldsymbol{g}), \boldsymbol{g}_t^{(s)}, \mathcal{T}_t)$
        \STATE $R_{\text{clean}} = \{r : (t,s,r) \notin \mathcal{T}_t\}$
        \STATE $\boldsymbol{m}_t^{(s)}(\boldsymbol{g}) = \beta_g \boldsymbol{m}_{t-1}^{(s)}(\boldsymbol{g}) + (1-\beta_g)\frac{1}{|R_{\text{clean}}|}\sum_{r \in R_{\text{clean}}} \boldsymbol{g}_t^{(s,r)}$
    \ENDFOR
    \STATE $\mathcal{T} \gets \mathcal{T} \cup \mathcal{T}_t$ \textit{\textcolor{viridisBlue}{// Accumulate tainted entries across iterations}}
\ENDFOR
\STATE \textbf{Return:} $\mathcal{T}$ \textit{\textcolor{viridisBlue}{// Return the complete set of tainted worker-stage-iteration tuples}}
\end{algorithmic}
\end{spacing}
\end{algorithm}


\begin{algorithm}
\caption{\textsc{ActivationVerification}}
\label{alg:activation_verification}
\begin{spacing}{1.5}
\begin{algorithmic}[1]
\REQUIRE $s, t, \boldsymbol{m}_{t-1}^{(s)}(\boldsymbol{h}), \mathcal{H}_l^{(s)}, B_s, v_r^{(s)}, \mathcal{M}, c, T_{\text{forgiveness}}, \mathcal{T}_t$
    \STATE $\mathcal{T}_t^{(s)} = \emptyset$ \textit{\textcolor{viridisBlue}{// Initialize tainted set for current stage}}
    \STATE Truncate $\mathcal{H}_l^{(s)} \gets \mathcal{H}_l^{(s)}[-l:\text{end}]$
    \STATE Collect $\boldsymbol{h}_t^{(s,r)}$ from all $r \in \{1, \ldots, d\} \setminus B_s$
    \FOR{$r \in \{1, \ldots, d\} \setminus B_s$ not in $\mathcal{T}_t$}
        \STATE Compute metrics: $\Gamma_t^{(s,r,i)} = \Omega_i(\boldsymbol{h}_t^{(s,r)}, \boldsymbol{m}_{t-1}^{(s)}(\boldsymbol{h}))$ for $i \in \mathcal{M}$
        \IF{$\exists i \in \mathcal{M}: |\Gamma_t^{(s,r,i)} - q_2^{(s,i)}| \geq k_{\text{tukey}}^{(s,i)}(q_3^{(s,i)}-q_1^{(s,i)})$}
            \STATE $v_r^{(s)} \gets v_r^{(s)} + 1$ \textit{\textcolor{viridisBlue}{// Increment violation counter}}
            \STATE $\mathcal{T}_t^{(s)} \gets \mathcal{T}_t^{(s)} \cup \{(t,s,r)\}$ \textit{\textcolor{viridisBlue}{// Mark as tainted in current stage}}
            \IF{$v_r^{(s)} \geq c$ or $\Gamma_t^{(s,r,i)} \gg k_{\text{tukey}}^{(s,i)}(q_3^{(s,i)}-q_1^{(s,i)})$}
                \STATE $B_s \gets B_s \cup \{r\}$ \textit{\textcolor{viridisBlue}{// Ban worker}}
                \STATE Notify stages $s' > s$ to flag affected mini-batches
            \ENDIF
        \ELSE
            \STATE Add $\Gamma_t^{(s,r,i)}$ to $\mathcal{H}_l^{(s,i)}$ $\forall i \in \mathcal{M}$
            \STATE $v_r^{(s)} \gets \max(0, v_r^{(s)} - 1)$ if $T_{\text{forgiveness}}$ consecutive clean steps
        \ENDIF
    \ENDFOR
    \STATE Update IQR statistics and adjust $k_{\text{tukey}}^{(s,i)}$ $\forall i \in \mathcal{M}$ using \Cref{alg:adaptive_iqr}
    \IF{$|\mathcal{T}_t^{(s)}| > 0.5 \cdot (d - |B_s|)$} 
    \STATE $\mathcal{T}_t^{(s)} \gets \emptyset$ \textit{\textcolor{viridisBlue}{// Clear if more than 50\% flagged (natural shift, see~\Cref{sec:appendix:distribution_shift})}} 
    \ENDIF
    \STATE \textbf{Return:} $\mathcal{T}_t^{(s)}$ \textit{\textcolor{viridisBlue}{// Return tainted workers for this stage}}
\end{algorithmic}
\end{spacing}
\end{algorithm}


\begin{algorithm}
\caption{\textsc{GradientVerification}}
\label{alg:gradient_verification}
\begin{spacing}{1.5}
\begin{algorithmic}[1]
\REQUIRE $s, t, \boldsymbol{m}_{t-1}^{(s)}(\boldsymbol{g}), \mathcal{G}_l^{(s)}, B_s, v_r^{(s)}, \mathcal{M}, c, T_{\text{forgiveness}}, \mathcal{T}_t$
    \STATE $\mathcal{T}_t^{(s)} = \emptyset$ \textit{\textcolor{viridisBlue}{// Initialize tainted set for current stage}}
    \STATE Truncate $\mathcal{G}_l^{(s)} \gets \mathcal{G}_l^{(s)}[-l:\text{end}]$
    \STATE $\text{Tainted}_{\text{downstream}} = \{r : (t, s', r) \in \mathcal{T}_t \text{ for some } s' > s\}$ \textit{\textcolor{viridisBlue}{// Workers tainted in downstream stages}}
    
    \FOR{$r \in \{1, \ldots, d\} \setminus B_s$}
        \IF{$r \in \text{Tainted}_{\text{downstream}}$}
            \STATE $\mathcal{T}_t^{(s)} \gets \mathcal{T}_t^{(s)} \cup \{(t,s,r)\}$ \textit{\textcolor{viridisBlue}{// Mark as tainted by downstream}}
        \ELSE
            \STATE Collect $\boldsymbol{g}_t^{(s,r)}$ from worker $r$
            \STATE Compute metrics: $\Gamma_t^{(s,r,i)} = \Omega_i(\boldsymbol{g}_t^{(s,r)}, \boldsymbol{m}_{t-1}^{(s)}(\boldsymbol{g}))$ for $i \in \mathcal{M}$
            \IF{$\exists i \in \mathcal{M}: |\Gamma_t^{(s,r,i)} - q_2^{(s,i)}| \geq k_{\text{tukey}}^{(s,i)}(q_3^{(s,i)}-q_1^{(s,i)})$}
                \STATE $v_r^{(s)} \gets v_r^{(s)} + 1$ \textit{\textcolor{viridisBlue}{// Increment violation counter}}
                \STATE $\mathcal{T}_t^{(s)} \gets \mathcal{T}_t^{(s)} \cup \{(t,s,r)\}$ \textit{\textcolor{viridisBlue}{// Mark as tainted}}
                \IF{$v_r^{(s)} \geq c$ or $\Gamma_t^{(s,r,i)} \gg k_{\text{tukey}}^{(s,i)}(q_3^{(s,i)}-q_1^{(s,i)})$}
                    \STATE $B_s \gets B_s \cup \{r\}$ \textit{\textcolor{viridisBlue}{// Ban worker}}
                    \STATE Notify stages $s' < s$ to flag affected mini-batches
                \ENDIF
            \ELSE
                \STATE Add $\Gamma_t^{(s,r,i)}$ to $\mathcal{G}_l^{(s,i)}$ $\forall i \in \mathcal{M}$
                \STATE $v_r^{(s)} \gets \max(0, v_r^{(s)} - 1)$ if $T_{\text{forgiveness}}$ consecutive clean steps
            \ENDIF
        \ENDIF
    \ENDFOR
    \STATE Update IQR statistics and adjust $k_{\text{tukey}}^{(s,i)}$ $\forall i \in \mathcal{M}$ using \Cref{alg:adaptive_iqr}
    \IF{$|\mathcal{T}_t^{(s)}| > 0.5 \cdot (d - |B_s|)$} 
    \STATE $\mathcal{T}_t^{(s)} \gets \emptyset$ \textit{\textcolor{viridisBlue}{// Clear if more than 50\% flagged (natural shift, see~\Cref{sec:appendix:distribution_shift})}} 
    \ENDIF
\STATE \textbf{Return:} $\mathcal{T}_t^{(s)}$ \textit{\textcolor{viridisBlue}{// Return tainted workers for this stage}}
\end{algorithmic}
\end{spacing}
\end{algorithm}


\begin{algorithm}
\caption{\textsc{GradientReplacement}}
\label{alg:gradient_replacement}
\begin{spacing}{1.25}
\begin{algorithmic}[1]
\REQUIRE $s, t, \boldsymbol{m}_{t-1}^{(s)}(\boldsymbol{g}), \boldsymbol{g}_t^{(s)}, \mathcal{T}_t$
    \FOR{$r \in \{1, \ldots, d\}$}
        \IF{$(t, s, r) \in \mathcal{T}_t$}
            \STATE $\boldsymbol{g}_t^{(s,r)} \gets \boldsymbol{m}_{t-1}^{(s)}(\boldsymbol{g})$ \textit{\textcolor{viridisBlue}{// Replace gradient with momentum}}
        \ENDIF
    \ENDFOR
\STATE \textbf{Return:} $\boldsymbol{g}_t^{(s)}$ \textit{\textcolor{viridisBlue}{// Return updated gradients}}
\end{algorithmic}
\end{spacing}
\end{algorithm}


\begin{algorithm}
\caption{Adaptive IQR Threshold Adjustment}
\label{alg:adaptive_iqr}
\begin{spacing}{1.25}
\begin{algorithmic}[1]
\REQUIRE History window $\mathcal{H}_l^{(s,i)}$ (or $\mathcal{G}_l^{(s,i)}$ for gradients) for stage $s$ and metric $i \in \mathcal{M}$, initial multiplier $k_0$, target false positive rate $\alpha$, growth factor $\gamma_g > 1$, shrink factor $\gamma_s < 1$, maximum iterations $N_{\text{max}}$, minimum distance multipliers $\Lambda$

\vspace{1em}

\STATE \textit{\textcolor{viridisBlue}{// Calculate initial statistics}}
\STATE $q_1, q_2, q_3 \gets$ 25th, 50th, 75th percentiles of $\mathcal{H}_l^{(s,i)}$
\STATE $\text{IQR} \gets \max(q_3 - q_1, \epsilon)$ \textit{\textcolor{viridisBlue}{// Ensure non-zero IQR with small $\epsilon$}}
\STATE $k \gets k_0$ \textit{\textcolor{viridisBlue}{// Initialize with previous multiplier value}}
\STATE $\tau_{\text{lower}} \gets q_2 - k \cdot \text{IQR}$
\STATE $\tau_{\text{upper}} \gets q_2 + k \cdot \text{IQR}$
\STATE $\text{FP-rate} \gets$ fraction of $\mathcal{H}_l^{(s,i)}$ outside $[\tau_{\text{lower}}, \tau_{\text{upper}}]$

\vspace{1em}
\STATE \textit{\textcolor{viridisBlue}{// Widen thresholds if false positive rate too high}}
\STATE $\text{iter} \gets 0$
\WHILE{$\text{FP-rate} > \alpha$ and $\text{iter} < N_{\text{max}}$}
    \STATE $k \gets k \cdot \gamma_g$ \textit{\textcolor{viridisBlue}{// Grow multiplier}}
    \STATE $\tau_{\text{lower}} \gets q_2 - k \cdot \text{IQR}$
    \STATE $\tau_{\text{upper}} \gets q_2 + k \cdot \text{IQR}$
    \STATE $\text{FP-rate} \gets$ fraction of $\mathcal{H}_k^{(s,i)}$ outside $[\tau_{\text{lower}}, \tau_{\text{upper}}]$
    \STATE $\text{iter} \gets \text{iter} + 1$
\ENDWHILE

\vspace{1em}
\STATE \textit{\textcolor{viridisBlue}{// Narrow thresholds if false positive rate too low}}
\STATE $\text{iter} \gets 0$
\WHILE{$\text{FP-rate} \ll \alpha $ and $\text{iter} < N_{\text{max}}$}
    \STATE $k \gets k \cdot \gamma_s$ \textit{\textcolor{viridisBlue}{// Shrink multiplier}}
    \STATE $\tau_{\text{lower}} \gets q_2 - k \cdot \text{IQR}$
    \STATE $\tau_{\text{upper}} \gets q_2 + k \cdot \text{IQR}$
    \STATE $\text{FP-rate} \gets$ fraction of $\mathcal{H}_l^{(s,i)}$ outside $[\tau_{\text{lower}}, \tau_{\text{upper}}]$
    \STATE $\text{iter} \gets \text{iter} + 1$
\ENDWHILE

\vspace{1em}
\STATE \textit{\textcolor{viridisBlue}{// Enforce minimum distance from median based on metric type (optional)}}
\STATE $\lambda \gets \Lambda[i]$ \textit{\textcolor{viridisBlue}{// Get multiplier for current metric}}
\STATE $d_{\text{min}} \gets |q_2| \cdot \lambda$ \textit{\textcolor{viridisBlue}{// Minimum threshold distance}}
\STATE $\tau_{\text{lower}} \gets \min(\tau_{\text{lower}}, q_2 - d_{\text{min}})$
\STATE $\tau_{\text{upper}} \gets \max(\tau_{\text{upper}}, q_2 + d_{\text{min}})$

\STATE \textbf{Return:} $\tau_{\text{lower}}, \tau_{\text{upper}}, k$
\end{algorithmic}
\end{spacing}
\end{algorithm}

\subsection{On Handling the Cascading Effect}
\label{sec:appendix:detailed_cascading_effect}
Pipeline parallelism exhibits distinct architectural characteristics compared to traditional federated learning~\cite{mcmahan2017federated} approaches. One key challenge is what we term the ``cascading effect'' which occurs exclusively in pipeline parallelism. During forward propagation, a single node submitting malicious activations can contaminate all subsequent activations, potentially causing downstream verifier nodes to incorrectly flag benign nodes as malicious~(see~\Cref{fig:activation_cascading:forward}). This phenomenon occurs similarly during backward propagation as depicted in~\Cref{fig:gradient_cascading:backward}. The cascading effect could significantly increase false positive detection rates, making it critical to address this challenge.

To mitigate this issue, our approach (described in~\Cref{sec:approach:method}) implements inter-node communication protocols. Specifically, verifier nodes maintain a ``tainted'' list tracking upstream nodes identified as potentially malicious which they communicate with subsequent verifiers to prevent them from updating their EMAs and falsely flagging nodes affected by an attacker downstream. During backward propagation, all nodes sharing the same data parallel rank as the compromised node receive gradient momentum instead of actual gradients. Throughout this verification process, worker nodes continue processing data at a consistent pace, ensuring no node detects unusual behavioral patterns in the network.

The cascading effect manifests in both propagation directions, as malicious behavior can target either activation or gradient signals. We address this bidirectional vulnerability with two corresponding mitigation strategies:

\begin{itemize}[leftmargin=0.25in] 
\setlength\itemsep{0.05em} 
    \item When activations are compromised, all affected nodes receive gradient momentum (see~\Cref{fig:activation_cascading});
    \item When malicious behavior occurs during backward propagation, all downstream nodes switch to activation gradient (see~\Cref{fig:gradient_cascading}).
\end{itemize}

While an alternative approach could involve sending zero vectors as gradients, this would effectively stall training in the affected pipe.\footnote{We will demonstrate that in our SWARM implementation in~\Cref{sec:appendix:swarm}, this choice does not have an impact on convergence as SWARM utilizes a stochastic routing.} We leave exploration of appropriate gradient signals for the tainted segment to future work.

\subsection{On Natural Distribution Shift}
\label{sec:appendix:distribution_shift}

Beyond malicious attacks, legitimate distribution shifts can occur naturally during training due to evolving data characteristics or model dynamics~\cite{tian2023scan,zhang2024trained}. In such cases, multiple worker nodes at the same pipeline stage may simultaneously exhibit statistical deviations that would normally trigger malicious detection, despite all nodes behaving honestly.

To distinguish between natural distribution shifts and coordinated attacks, we implement a consensus-based approach at the verifier level. When more than 50\% of nodes at a given pipeline stage are flagged as potentially malicious, the verifier attributes this to a natural distribution shift rather than malicious behavior. This threshold leverages the \textit{honest majority assumption}: coordinated attacks involving more than half the nodes would violate our security model, making such scenarios indistinguishable from legitimate system-wide changes.

Upon detecting a natural distribution shift, the system responds as follows:
\begin{enumerate}[leftmargin=0.25in,start=1]
\setlength\itemsep{0.05em}
    \item Training continues normally without malicious mitigation protocols.
    \item The cascading effect mechanism described in~\Cref{sec:appendix:detailed_cascading_effect} is not activated.
    \item Nodes update their EMA statistics to adapt to the new data distribution.
\end{enumerate}

This consensus mechanism ensures that legitimate distributional changes do not trigger unnecessary verification overhead or training disruptions. However, refining this approach for asynchronous training environments~\cite{aj2025nesterov}, where nodes may experience distribution shifts at different times, remains an important direction for future work.

\clearpage
\newpage
\figActivationTaintCascading
\vspace{5em}
\figGradientTaintCascading
\clearpage


\section{Theoretical Guarantees and their Proofs}
\label{sec:appendix:theory}

\subsection{Convergence Analysis}
\label{sec:appendix:convergence_analysis}
In this section, we present our full convergence analysis. Note that the bounds that we derive are in no means the tightest possible bounds. Instead, our aim is to establish a mathematical connection between our momentum-based verification and training dynamics.
\subsubsection{Malicious Detection Bounds}

Our first goal is to establish bounds on the maximum perturbation a malicious worker can introduce without being detected. We begin by analyzing how momentum smoothing affects the global deviation in activation vectors.\footnote{Even though we present our theory for the activation manipulation, our results are easily extendable to the gradient manipulation as well.}


\begin{blockquote}
    \begin{lemma}[Momentum Smoothing Bounds the Global Deviation]\label{lem:momentum_dev}
        Let the activation vector momentum at stage $s$ and iteration $t$ be updated by
        \begin{equation}
            \boldsymbol{m}^{(s)}_{t} = \beta_{h}\,\boldsymbol{m}^{(s)}_{t-1} + (1-\beta_{h})\,
            \Bigl(\frac{1}{d}\sum_{r=1}^{d}\boldsymbol{h}^{(s,r)}_{t}\Bigr),
            \qquad 0 \leq \beta < 1,
        \end{equation}
        where $d$ represents the number of worker replicas at each stage. 
        Assume:
        \begin{enumerate}[start=1]
        \item A fraction $\gamma_{s} < \frac{1}{2}$ of the workers are malicious, with $B_s \subset \{1, 2, \dots, d\}$ denoting the subset of malicious workers.
        \item A malicious worker adds a vector perturbation $\boldsymbol{\delta}^{(s,r)}_{t}$ satisfying
        \begin{equation}
            \|\boldsymbol{\delta}^{(s,r)}_{t}\| \leq \varepsilon.
        \end{equation}
        \end{enumerate}
        Then the deviation in the momentum caused by the malicious perturbations obeys
        \begin{equation}\label{eq:momentum_deviation_bound}
            \boxed{\;
                \|\Delta \boldsymbol{m}^{(s)}_{t}\|
                \leq \,\gamma_{s}\,\varepsilon
            \;},
        \end{equation}
        where $\Delta \boldsymbol{m}^{(s)}_{t}$ is the difference between the momentum computed with the malicious perturbations and the momentum computed using only the unperturbed (honest) activations.
    \end{lemma}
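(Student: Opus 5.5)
The plan is to compare two momentum sequences that share the same honest activations and the same initialization. The first, $\boldsymbol{m}^{(s)}_t$, is computed from the submitted activations $\boldsymbol{h}^{(s,r)}_t + \boldsymbol{\delta}^{(s,r)}_t$, where we set $\boldsymbol{\delta}^{(s,r)}_t = \boldsymbol{0}$ for $r \notin B_s$. The second, $\tilde{\boldsymbol{m}}^{(s)}_t$, is computed from the unperturbed $\boldsymbol{h}^{(s,r)}_t$ alone. Both start from the same $\boldsymbol{m}^{(s)}_0$, either $\boldsymbol{0}$ as in \Cref{alg:momentum_verification} or the value at the end of the honest warm-up phase. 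Hence $\Delta\boldsymbol{m}^{(s)}_0 = \boldsymbol{0}$.

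First, subtract the two update rules. Linearity of the EMA cancels the honest terms and leaves the scalar-coefficient linear recursion
\begin{equation*}
\Delta\boldsymbol{m}^{(s)}_t = \beta_h\,\Delta\boldsymbol{m}^{(s)}_{t-1} + (1-\beta_h)\,\frac{1}{d}\sum_{r\in B_s}\boldsymbol{\delta}^{(s,r)}_t .
\end{equation*}
Second, bound the forcing term with the triangle inequality and assumption 2:
\begin{equation*}
\Bigl\|\frac{1}{d}\sum_{r\in B_s}\boldsymbol{\delta}^{(s,r)}_t\Bigr\| \le \frac{|B_s|}{d}\,\varepsilon = \gamma_s\varepsilon .
\end{equation*}
Third, conclude by induction on $t$. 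If $\|\Delta\boldsymbol{m}^{(s)}_{t-1}\|\le\gamma_s\varepsilon$, then
\begin{equation*}
\|\Delta\boldsymbol{m}^{(s)}_t\| \le \beta_h\gamma_s\varepsilon + (1-\beta_h)\gamma_s\varepsilon = \gamma_s\varepsilon .
\end{equation*}
Equivalently, unrolling gives $\Delta\boldsymbol{m}^{(s)}_t = (1-\beta_h)\sum_{k=0}^{t-1}\beta_h^k\,\bar{\boldsymbol{\delta}}_{t-k}$, where $\bar{\boldsymbol{\delta}}_{t-k}$ is the averaged perturbation at step $t-k$. The geometric weights sum to $1-\beta_h^t\le 1$, which gives the same bound, in fact the slightly sharper $(1-\beta_h^t)\gamma_s\varepsilon$.

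The main obstacle is the modelling step rather than the algebra: making the counterfactual precise. In a pipeline, an upstream perturbation could change the honest activations $\boldsymbol{h}^{(s,r)}_t$ themselves through the cascading effect, and it could also change future honest activations through parameter updates. I would handle this by defining $\Delta\boldsymbol{m}^{(s)}_t$ with the honest inputs at stage $s$ held fixed, which is exactly the statement's "momentum computed using only the unperturbed (honest) activations". Under this definition only the explicit $\boldsymbol{\delta}$'s enter, and the downstream effects are deferred to the later convergence analysis.

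I would also remark that the assumption $\gamma_s<1/2$ is not needed for this particular bound. It matters only for detection and for the majority-based shift rule, so the lemma holds for any $\gamma_s\in[0,1]$.
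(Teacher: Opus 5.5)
Your proposal is correct and follows essentially the same route as the paper. Like the paper, you subtract the observed and nominal EMA updates to get $\Delta\boldsymbol{m}^{(s)}_t = \beta_h\Delta\boldsymbol{m}^{(s)}_{t-1} + (1-\beta_h)\frac{1}{d}\sum_{r\in B_s}\boldsymbol{\delta}^{(s,r)}_t$ with $\Delta\boldsymbol{m}^{(s)}_0=\boldsymbol{0}$, bound the forcing term by $\gamma_s\varepsilon$ via the triangle inequality, and close with the geometric sum. The paper unrolls and passes through the same intermediate bound $(1-\beta_h^t)\gamma_s\varepsilon$; your induction is an equivalent packaging, and your remarks that $\gamma_s<1/2$ is never used and that the honest stage-$s$ inputs must be held fixed are accurate clarifications the paper leaves implicit.
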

\end{blockquote}

\begin{proof}
Let $H_s = \{1, 2, \dots, d\} \setminus B_s$ denote the set of honest workers at stage $s$. 
Since at most a fraction $\gamma_{s}$ of workers are malicious:
\begin{equation}
    |B_s| = \gamma_{s}d,
    \qquad
    |H_s| = (1-\gamma_{s})d.
\end{equation}

We can express each worker's activation vector as
\begin{equation}
    \boldsymbol{h}^{(s,r)}_{t} = \boldsymbol{h}^{(s,r)}_{t, \mathrm{nom}} + \boldsymbol{e}^{(s,r)}_{t},
    \quad\text{where}\quad
    \boldsymbol{e}^{(s,r)}_{t} =
    \begin{cases}
    \boldsymbol{\delta}^{(s,r)}_{t}, & r \in B_s,\\[2pt]
    \boldsymbol{0},                  & r \in H_s.
    \end{cases}
\end{equation}

The nominal (unperturbed) average activation can be written as:
\begin{equation}
    \bar{\boldsymbol{h}}^{(s)}_{t} = \frac{1}{d}\sum_{r=1}^{d}\boldsymbol{h}^{(s,r)}_{t, \mathrm{nom}}.
\end{equation}

For the observed (perturbed) average, we have:
\begin{equation}
    \begin{aligned}
        \hat{\boldsymbol{h}}^{(s)}_{t} &= \frac{1}{d}\sum_{r=1}^{d}\bigl(\boldsymbol{h}^{(s,r)}_{t, \mathrm{nom}} + \boldsymbol{e}^{(s,r)}_{t}\bigr) \\
                                       &= \bar{\boldsymbol{h}}^{(s)}_{t} + \frac{1}{d}\sum_{r \in B_s}\boldsymbol{\delta}^{(s,r)}_{t}.
    \end{aligned}
\end{equation}

For the momentum terms, we can write:
\begin{equation}
    \begin{aligned}
        \boldsymbol{m}^{(s)}_{t,\mathrm{obs}} &= \beta_{h} \boldsymbol{m}^{(s)}_{t-1,\mathrm{obs}} + (1-\beta_{h})\hat{\boldsymbol{h}}^{(s)}_{t}, \\
        \boldsymbol{m}^{(s)}_{t,\mathrm{nom}} &= \beta_{h} \boldsymbol{m}^{(s)}_{t-1,\mathrm{nom}} + (1-\beta_{h})\bar{\boldsymbol{h}}^{(s)}_{t}.
    \end{aligned}
\end{equation}

The deviation in the momentum at iteration $t$ is:
\begin{equation}
    \begin{aligned}
        \Delta \boldsymbol{m}^{(s)}_{t}
        &= \boldsymbol{m}^{(s)}_{t,\mathrm{obs}} - \boldsymbol{m}^{(s)}_{t,\mathrm{nom}} \\
        &= \beta_{h}(\boldsymbol{m}^{(s)}_{t-1,\mathrm{obs}} - \boldsymbol{m}^{(s)}_{t-1,\mathrm{nom}}) + (1-\beta_{h})\bigl(\hat{\boldsymbol{h}}^{(s)}_{t} - \bar{\boldsymbol{h}}^{(s)}_{t}\bigr) \\
        &= \beta_{h}\Delta \boldsymbol{m}^{(s)}_{t-1} + (1-\beta_{h})\frac{1}{d}\sum_{r \in B_s}\boldsymbol{\delta}^{(s,r)}_{t}.
    \end{aligned}
\end{equation}

Assuming $\Delta \boldsymbol{m}^{(s)}_{0} = \boldsymbol{0}$, we can solve this recurrence relation:

\begin{align}
    \Delta \boldsymbol{m}^{(s)}_{t} &= (1-\beta_{h})\sum_{j=1}^{t}\beta_{h}^{t-j}\frac{1}{d}\sum_{r \in B_s}\boldsymbol{\delta}^{(s,r)}_{j}.
\end{align}

Taking the norm and applying the triangle inequality, we have:
\begin{equation}
    \begin{aligned}
        \|\Delta \boldsymbol{m}^{(s)}_{t}\|
        &\leq (1-\beta_{h})\sum_{j=1}^{t}\beta_{h}^{t-j}\frac{1}{d}\sum_{r \in B_s}\|\boldsymbol{\delta}^{(s,r)}_{j}\| \\
        &\leq (1-\beta_{h})\sum_{j=1}^{t}\beta_{h}^{t-j}\frac{|B_s|}{d}\varepsilon \\
        &= (1-\beta_{h})\gamma_s\varepsilon\sum_{j=1}^{t}\beta_{h}^{t-j} \\
        &= (1-\beta_{h})\gamma_s\varepsilon\frac{1-\beta_{h}^t}{1-\beta_{h}} \\
        &\leq (1-\beta_{h})\gamma_s\varepsilon\frac{1}{1-\beta_{h}} \\
        &= \gamma_s\varepsilon.
    \end{aligned}
\end{equation}
which establishes the stated bound.
For the case where we consider only the most recent iteration's effect (equivalent to initializing $\boldsymbol{m}^{(s)}_{t-1, \mathrm{obs}} = \boldsymbol{m}^{(s)}_{t-1,\mathrm{nom}}$), we have:
\begin{equation}
    \begin{aligned}
        \|\Delta \boldsymbol{m}^{(s)}_{t}\| &= \left\|(1-\beta)\frac{1}{d}\sum_{r \in B_s}\boldsymbol{\delta}^{(s,r)}_{t}\right\| \\
        &\leq (1-\beta)\frac{|B_s|}{d}\varepsilon \\
        &= (1-\beta)\gamma_s\varepsilon.
    \end{aligned}
\end{equation}
\end{proof}


\Cref{lem:momentum_dev} establishes a key property of momentum-based smoothing: it naturally attenuates the impact of malicious perturbations. This attenuation is proportional to the fraction of malicious workers $\gamma_s$, demonstrating that smaller malicious coalitions have less impact on the global state. This result is critical for understanding how effectively the system can contain malicious influence.

Building on this foundation, we now analyze how these bounded perturbations affect our detection statistics:

\begin{blockquote}
\begin{lemma}[Test Statistic Deviation]\label{lem:gen_stat_dev}
For a metric function $\Omega$, assume that the detector computes
\begin{equation}
    \Gamma_{t}^{(s,r)}
    = \bigl\|\Omega\bigl(\boldsymbol{h}_{t}^{(s,r)}, \boldsymbol{m}_{t-1}^{(s)}\bigr)
    - \Omega_{\mathrm{ref}}^{(s)}\bigr\|
\end{equation}
where $\Omega_{\mathrm{ref}}^{(s)}$ is a reference statistic computed by the trusted trainer nodes (e.g., our median based reference statistic).

Assume for every worker replica $r \in \{1, 2, \dots, d\}$:
\begin{itemize}[leftmargin=0.25in]
\setlength\itemsep{-0.05em}
    \item Activation perturbation: $\|\boldsymbol{\delta}_{t}^{(s,r)}\| \leq \varepsilon$.
    \item Momentum update: $\boldsymbol{m}_{t}^{(s)} = \beta_{h} \boldsymbol{m}_{t-1}^{(s)} + (1-\beta_{h})\frac{1}{d}\sum_{r=1}^{d}\boldsymbol{h}_{t}^{(s,r)}$ with at most a fraction $\gamma_{s}$ malicious workers (from \Cref{lem:momentum_dev})
    \begin{equation}
    \|\Delta \boldsymbol{m}_{t}^{(s)}\| \leq \gamma_{s}\varepsilon.
    \end{equation}
    \item Lipschitz continuity of $\Omega$: For any inputs $\boldsymbol{x}$, $\boldsymbol{y}$ and perturbations $\boldsymbol{\delta}_x$, $\boldsymbol{\delta}_y$:
    \begin{equation}
    \bigl\|\Omega(\boldsymbol{x}+\boldsymbol{\delta}_x, \boldsymbol{y}+\boldsymbol{\delta}_y)-\Omega(\boldsymbol{x}, \boldsymbol{y})\bigr\|
    \leq L_{\Omega}\left(\|\boldsymbol{\delta}_x\| + \|\boldsymbol{\delta}_y\|\right),
    \end{equation}
    where $L_{\Omega}$ is the Lipschitz constant of $\Omega$.
\end{itemize}

Define the (possibly known) baseline gap as:
\begin{align}
    \delta^{\mathrm{base}} = \Omega\bigl(\boldsymbol{h}_{t,\mathrm{honest}}^{(s,r)}, \boldsymbol{m}_{t-1,\mathrm{honest}}^{(s)}\bigr).
\end{align}

Then, the test statistic satisfies:
\begin{equation}
    \boxed{\;
    \Gamma_{t}^{(s,r)}
    \leq
    \delta^{\mathrm{base}}_{\mathrm{avg}}
    + L_{\Omega}\varepsilon
    + L_{\Omega}\gamma_{s}\varepsilon
    \;}
\end{equation}
\end{lemma}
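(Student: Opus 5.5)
The plan is a perturbation argument around the honest baseline. I will write the observed inputs to $\Omega$ as honest quantities plus perturbations, namely $\boldsymbol{h}_{t}^{(s,r)} = \boldsymbol{h}_{t,\mathrm{honest}}^{(s,r)} + \boldsymbol{\delta}_{t}^{(s,r)}$ and $\boldsymbol{m}_{t-1}^{(s)} = \boldsymbol{m}_{t-1,\mathrm{honest}}^{(s)} + \Delta\boldsymbol{m}_{t-1}^{(s)}$. Here $\boldsymbol{\delta}_{t}^{(s,r)}=\boldsymbol{0}$ for honest $r$, and $\Delta\boldsymbol{m}_{t-1}^{(s)}$ is exactly the momentum deviation controlled by \Cref{lem:momentum_dev}. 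Applied at iteration $t-1$, that lemma gives $\|\Delta\boldsymbol{m}_{t-1}^{(s)}\|\le\gamma_s\varepsilon$. This is legitimate because its bound is uniform in the iteration index.

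Next I apply the triangle inequality to the definition of $\Gamma_{t}^{(s,r)}$:
\begin{equation*}
\Gamma_{t}^{(s,r)} \le \bigl\|\Omega(\boldsymbol{h}_{t,\mathrm{honest}}^{(s,r)}, \boldsymbol{m}_{t-1,\mathrm{honest}}^{(s)}) - \Omega_{\mathrm{ref}}^{(s)}\bigr\| + \bigl\|\Omega(\boldsymbol{h}_{t}^{(s,r)}, \boldsymbol{m}_{t-1}^{(s)}) - \Omega(\boldsymbol{h}_{t,\mathrm{honest}}^{(s,r)}, \boldsymbol{m}_{t-1,\mathrm{honest}}^{(s)})\bigr\|.
\end{equation*}
I apply the assumed joint Lipschitz property of $\Omega$ to the second term, taking $\boldsymbol{\delta}_x=\boldsymbol{\delta}_t^{(s,r)}$ and $\boldsymbol{\delta}_y=\Delta\boldsymbol{m}_{t-1}^{(s)}$. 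This bounds the term by $L_\Omega(\varepsilon+\gamma_s\varepsilon)$, which produces the two $\varepsilon$-terms in the boxed inequality.

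The remaining step is to identify the first term with the baseline quantity $\delta^{\mathrm{base}}_{\mathrm{avg}}$. This is the main obstacle, because the statement defines $\delta^{\mathrm{base}}$ as a raw value of $\Omega$ rather than its gap to $\Omega_{\mathrm{ref}}^{(s)}$, and it attaches an unexplained ``avg'' subscript. I would resolve this by interpreting $\delta^{\mathrm{base}}_{\mathrm{avg}}$ as the honest baseline gap $\|\Omega(\boldsymbol{h}_{t,\mathrm{honest}}^{(s,r)},\boldsymbol{m}_{t-1,\mathrm{honest}}^{(s)})-\Omega_{\mathrm{ref}}^{(s)}\|$, or as its typical (median or average) value over honest replicas, which the median-based reference is designed to track.

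If one insists on the literal definition, the alternative is to note that $\Omega\ge 0$ and $\Omega_{\mathrm{ref}}^{(s)}\ge0$ for all metrics used, namely $L_1$, normalized $L_2$, SFR and SW. In that case the gap is at most $\max(\delta^{\mathrm{base}},\Omega_{\mathrm{ref}}^{(s)})$, and I would state this as the interpretation adopted. With the first term identified in either way, the boxed bound follows.
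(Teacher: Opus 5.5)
Your argument is essentially the paper's. The paper inserts the intermediate point $\Omega(\boldsymbol{h}_{t,\mathrm{honest}}^{(s,r)},\boldsymbol{m}_{t-1}^{(s)})$ and applies the Lipschitz bound once per argument in a three-term triangle inequality, which is equivalent to your single joint application; it also bounds the honest term $\|\Omega(\boldsymbol{h}_{t,\mathrm{honest}}^{(s,r)},\boldsymbol{m}_{t-1,\mathrm{honest}}^{(s)})-\Omega_{\mathrm{ref}}^{(s)}\|$ by $\delta^{\mathrm{base}}$ without comment, so it tacitly adopts your first interpretation of the baseline as a gap. Your literal-definition fallback only gives $\max(\delta^{\mathrm{base}},\Omega_{\mathrm{ref}}^{(s)})$ in place of $\delta^{\mathrm{base}}_{\mathrm{avg}}$, so it proves a weaker variant, not the boxed inequality itself.
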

\end{blockquote}

\begin{proof}
For an activation vector with malicious perturbation $\boldsymbol{\delta}_{t}^{(s,r)}$, using the Lipschitz property of $\Omega$, we have:
\begin{equation}
    \begin{aligned}
        \bigl\|\Omega(\boldsymbol{h}_{t}^{(s,r)}, \boldsymbol{m}_{t-1}^{(s)}) - \Omega(\boldsymbol{h}_{t,\mathrm{honest}}^{(s,r)}, \boldsymbol{m}_{t-1}^{(s)})\bigr\| &\leq L_{\Omega}\|\boldsymbol{\delta}_{t}^{(s,r)}\| \\
        &\leq L_{\Omega}\varepsilon.
    \end{aligned}
\end{equation}

From \Cref{lem:momentum_dev}, we know that the momentum vector deviation is bounded by ${\|\Delta \boldsymbol{m}_{t-1}^{(s)}\| \leq \gamma_{s}\varepsilon}$. Thus, applying the Lipschitz property again:
\begin{equation}
    \begin{aligned}
        \bigl\|\Omega(\boldsymbol{h}_{t,\mathrm{honest}}^{(s,r)}, \boldsymbol{m}_{t-1}^{(s)}) - \Omega(\boldsymbol{h}_{t,\mathrm{honest}}^{(s,r)}, \boldsymbol{m}_{t-1,\mathrm{honest}}^{(s)})\bigr\| &\leq L_{\Omega}\|\Delta\boldsymbol{m}_{t-1}^{(s)}\| \\
        &\leq L_{\Omega}\gamma_{s}\varepsilon.
    \end{aligned}
\end{equation}

We can now decompose the test statistic using the triangle inequality:
\begin{equation}
    \begin{aligned}
    \Gamma_{t}^{(s,r)}    &= \bigl\|\Omega(\boldsymbol{h}_{t}^{(s,r)}, \boldsymbol{m}_{t-1}^{(s)}) - \Omega_{\mathrm{ref}}^{(s)}\bigr\| \\
                            &= \bigl\|\Omega(\boldsymbol{h}_{t}^{(s,r)}, \boldsymbol{m}_{t-1}^{(s)}) - \Omega(\boldsymbol{h}_{t,\mathrm{honest}}^{(s,r)}, \boldsymbol{m}_{t-1}^{(s)}) \\
                            &\quad + \Omega(\boldsymbol{h}_{t,\mathrm{honest}}^{(s,r)}, \boldsymbol{m}_{t-1}^{(s)}) - \Omega(\boldsymbol{h}_{t,\mathrm{honest}}^{(s,r)}, \boldsymbol{m}_{t-1,\mathrm{honest}}^{(s)}) \\
                            &\quad + \Omega(\boldsymbol{h}_{t,\mathrm{honest}}^{(s,r)}, \boldsymbol{m}_{t-1,\mathrm{honest}}^{(s)}) - \Omega_{\mathrm{ref}}^{(s)}\bigr\| \\
                            &\leq \bigl\|\Omega(\boldsymbol{h}_{t}^{(s,r)}, \boldsymbol{m}_{t-1}^{(s)}) - \Omega(\boldsymbol{h}_{t,\mathrm{honest}}^{(s,r)}, \boldsymbol{m}_{t-1}^{(s)})\bigr\| \\
                            &\quad + \bigl\|\Omega(\boldsymbol{h}_{t,\mathrm{honest}}^{(s,r)}, \boldsymbol{m}_{t-1}^{(s)}) - \Omega(\boldsymbol{h}_{t,\mathrm{honest}}^{(s,r)}, \boldsymbol{m}_{t-1,\mathrm{honest}}^{(s)})\bigr\| \\
                            &\quad + \bigl\|\Omega(\boldsymbol{h}_{t,\mathrm{honest}}^{(s,r)}, \boldsymbol{m}_{t-1,\mathrm{honest}}^{(s)}) - \Omega_{\mathrm{ref}}^{(s)}\bigr\| \\
                            &\leq L_{\Omega}\varepsilon + L_{\Omega}\gamma_{s}\varepsilon + \delta^{\mathrm{base}}
    \end{aligned}
\end{equation}

If the detector compensates for (or ignores) the baseline gap $\delta^{\mathrm{base}}$ and raises an alarm when $\Gamma_{t}^{(s,r)} > \tau$, the additional deviation attributable only to malicious perturbations is:
\begin{equation}
\Gamma_{\text{pert}} \leq L_{\Omega}\varepsilon + L_{\Omega}\gamma_{s}\varepsilon = L_{\Omega}(1 + \gamma_{s})\varepsilon,
\end{equation}

so a malicious worker can remain undetected provided:
\begin{equation}\label{eq:undetected_upbound_general}
L_{\Omega}(1 + \gamma_{s})\varepsilon \leq \tau,
\quad\Longrightarrow\quad
\boxed{\;
\varepsilon \leq \frac{\tau}{L_{\Omega}(1 + \gamma_{s})}
\;}
\end{equation}
which completes the proof.
\end{proof}


\Cref{lem:gen_stat_dev} provides a crucial bound on the test statistic deviation under malicious perturbations. The bound depends on two key factors: (1) the Lipschitz constant $L_{\Omega}$ of the test function and (2) the fraction of malicious workers $\gamma_s$. The practical implication is that a malicious worker can remain undetected only if its perturbation magnitude satisfies:
\begin{equation}
    \varepsilon \leq \frac{\tau}{L_{\Omega}(1 + \gamma_{s})}
\end{equation}
This establishes a direct relationship between the detection threshold $\tau$ and the maximum undetectable perturbation magnitude. This equation demonstrates how tuning $\tau$ affects the security-performance tradeoff: with lower thresholds we can provide stronger security guarantees at the potential cost of increased false positives. This highlights the importance of setting an appropriate threshold for the test statistic.

\subsubsection{Gradient Perturbation Analysis}

Now that we have established bounds on undetectable activation perturbations, we analyze how these perturbations propagate through the network to affect parameter gradients. This analysis is critical for understanding the impact on training dynamics.

\begin{blockquote}
\begin{lemma}[Per--stage Lipschitz constants]\label{lem:lipschitz}
    Assume replica $r$ of stage $s$ implements a map
    \[
        \boldsymbol{h}^{(s, r)} = f_s\bigl(\boldsymbol{h}^{(s-1, r)};{\boldsymbol{\theta}}^{(s)}\bigr)
    \]
    whose Jacobians satisfy
    \[
        \|\partial_{{\boldsymbol{\theta}}}f_s\| \leq L_{\boldsymbol{\theta}}^{(s)}, \qquad \|\partial_{{\boldsymbol{h}}}f_s\| \leq L_{f}^{(s)}.
    \]
    Then, the parameter gradient of stage $s$ obeys
    \[
        \|\nabla^{\texttt{agg}}_{{\boldsymbol{\theta}}^{(s)}}\mathcal{L}({\boldsymbol{\theta}})\| \leq L_{\boldsymbol{\theta}}^{(s)}\ \frac{1}{d} \sum_{r=1}^{d} \|\boldsymbol{g}^{(s, r)}\|,
    \]
    where $\boldsymbol{g}^{(s+1, r)}$ is the gradient with respect to the activation $\boldsymbol{h}^{(s, r)}$.
\end{lemma}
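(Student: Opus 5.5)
The plan is to apply the chain rule on each replica and then use the triangle inequality for the data-parallel average. First I would fix the notation. The aggregated parameter gradient of stage $s$ is the average, over the $d$ replicas sharing $\boldsymbol{\theta}^{(s)}$, of the per-replica parameter gradients:
\begin{equation*}
    \nabla^{\texttt{agg}}_{\boldsymbol{\theta}^{(s)}}\mathcal{L}(\boldsymbol{\theta}) = \frac{1}{d}\sum_{r=1}^{d} \nabla_{\boldsymbol{\theta}^{(s)}}\mathcal{L}^{(r)}(\boldsymbol{\theta}).
\end{equation*}
Here $\mathcal{L}^{(r)}$ is the loss on the mini-batch processed by replica $r$. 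This average is exactly what the all-reduce step across DP replicas computes.

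Next, for a single replica $r$, the parameter $\boldsymbol{\theta}^{(s)}$ enters the loss only through $\boldsymbol{h}^{(s,r)} = f_s(\boldsymbol{h}^{(s-1,r)};\boldsymbol{\theta}^{(s)})$. The chain rule therefore gives
\begin{equation*}
    \nabla_{\boldsymbol{\theta}^{(s)}}\mathcal{L}^{(r)} = \bigl(\partial_{\boldsymbol{\theta}} f_s\bigr)^{\top}\, \nabla_{\boldsymbol{h}^{(s,r)}}\mathcal{L}^{(r)}.
\end{equation*}
The second factor is the activation gradient flowing back into stage $s$. Under the paper's indexing convention, $\boldsymbol{g}^{(s+1,r)} = \nabla_{\boldsymbol{h}^{(s,r)}}\mathcal{L}$, and the statement writes this as $\boldsymbol{g}^{(s,r)}$ up to the index shift noted in its last sentence. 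I would state this identification explicitly and interpret the $\boldsymbol{g}^{(s,r)}$ in the bound accordingly. The spectral norm of a transpose equals that of the matrix, so submultiplicativity together with the assumption $\|\partial_{\boldsymbol{\theta}}f_s\|\le L^{(s)}_{\boldsymbol{\theta}}$ gives the per-replica bound $\|\nabla_{\boldsymbol{\theta}^{(s)}}\mathcal{L}^{(r)}\| \le L^{(s)}_{\boldsymbol{\theta}}\|\boldsymbol{g}^{(s,r)}\|$.

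Finally, applying the triangle inequality to the average over $r$ and pulling out the common constant $L^{(s)}_{\boldsymbol{\theta}}$ yields the claimed bound. The hypothesis on $\partial_{\boldsymbol{h}} f_s$ is not needed for this inequality. It is presumably included so that later lemmas can propagate activation-gradient perturbations backward through the stages, via $\|\boldsymbol{g}^{(s,r)}\|\le L^{(s)}_f\|\boldsymbol{g}^{(s+1,r)}\|$. I would remark on this but not use it here.

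The main obstacle is not analytic but notational. I must make precise which activation gradient multiplies the Jacobian, namely the one with respect to the output of stage $s$, and I must make sure the Jacobian bound is understood in operator norm evaluated at the current input. If the stage acts on a batch, with mini-batch gradients summed over samples, the same argument applies: the per-sample Jacobians are stacked block-diagonally and the bound on each block carries over.
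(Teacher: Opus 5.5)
Your proposal is correct and is essentially the paper's proof: the chain rule on each replica gives $\|\nabla_{\boldsymbol{\theta}^{(s)}}\mathcal{L}^{(r)}\|\le L_{\boldsymbol{\theta}}^{(s)}\|\boldsymbol{g}^{(s+1,r)}\|$, and the triangle inequality over the all-reduce average finishes the argument. Your reading of the displayed $\boldsymbol{g}^{(s,r)}$ as the gradient with respect to $\boldsymbol{h}^{(s,r)}$ is exactly what the paper's proof ends with, since the literal $\nabla_{\boldsymbol{h}^{(s-1,r)}}\mathcal{L}$ version fails, for instance when $\partial_{\boldsymbol{h}}f_s=0$, and your closing batch aside is slightly off (with shared parameters the parameter Jacobian is a vertical stack of per-sample blocks, not block-diagonal, so a per-sample bound only yields $L_{\boldsymbol{\theta}}^{(s)}\sum_i\|\boldsymbol{g}_i\|\le\sqrt{B}\,L_{\boldsymbol{\theta}}^{(s)}\|\boldsymbol{g}\|$ unless the hypothesis is read for the batched map itself), though the lemma does not need it.
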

\end{blockquote}

\begin{proof} 
    The loss $\mathcal{L}({\boldsymbol{\theta}})$ depends on ${\boldsymbol{\theta}}^{(s)}$ only through the composition of stage maps:
    \begin{equation}
        \begin{aligned}
            \boldsymbol{h}^{(s, r)} &= f_s(\boldsymbol{h}^{(s-1, r)};{\boldsymbol{\theta}}^{(s)})\\
            \boldsymbol{h}^{(s+1, r)} &= f_{s+1}(\boldsymbol{h}^{(s, r)};{\boldsymbol{\theta}}^{(s+1)})\\
            &\vdots \\
            \boldsymbol{h}^{(p, r)} &= f_{p}(\boldsymbol{h}^{(p-1, r)};{\boldsymbol{\theta}}^{(p)})
        \end{aligned}
    \end{equation}
    followed by a readout $\mathcal{L}_{\mathrm{head}}(\boldsymbol{h}^{(p, r)})$.
    
    \noindent Applying the chain rule yields
    \begin{equation}
        \nabla_{{\boldsymbol{\theta}}^{(s, r)}}\mathcal{L}({\boldsymbol{\theta}}) = \partial_{{\boldsymbol{\theta}}}f_s\bigl(\partial_{\boldsymbol{h}}f_{s+1}\bigr)\cdots\bigl(\partial_{\boldsymbol{h}}f_{p}\bigr) \nabla_{\boldsymbol{h}^{(p, r)}}\mathcal{L}_{\mathrm{head}}.
    \end{equation}
    Here, each $\partial_{{\boldsymbol{\theta}}}f_s$ is evaluated at $(\boldsymbol{h}^{(s-1, r)},{\boldsymbol{\theta}}^{(s)})$ and each $\partial_{\boldsymbol{h}}f_j$ at $(\boldsymbol{h}^{(j-1, r)},{\boldsymbol{\theta}}^{(j)})$.
    Define
    \begin{equation}
        \boldsymbol{g}^{(s+1, r)} := \bigl(\partial_{\boldsymbol{h}}f_{s+1}\bigr)\cdots\bigl(\partial_{\boldsymbol{h}}f_{p}\bigr)\
        \nabla_{\boldsymbol{h}^{(p, r)}}\mathcal{L}_{\mathrm{head}},
    \end{equation}
    so that $\boldsymbol{g}^{(s+1, r)}$ is precisely the activation gradient that enters replica $r$ of stage $s$ during back‑propagation.
    The all-reduce operation aggregates all gradients from the $d$ replicas of stage $s$ before applying them, i.e.,
    \begin{equation}\label{eq:all-reduce}
        \nabla^{\texttt{agg}}_{{\boldsymbol{\theta}}^{(s)}}\mathcal{L}({\boldsymbol{\theta}}) = \frac{1}{d} \sum_{r=1}^{d} \nabla_{{\boldsymbol{\theta}}^{(s, r)}}\mathcal{L}({\boldsymbol{\theta}})
    \end{equation}
    Taking Euclidean norms, applying the triangle inequality, and using sub‑multiplicativity of the operator norm yields
    \begin{equation}
        \begin{aligned}
            \|\nabla^{\texttt{agg}}_{{\boldsymbol{\theta}}^{(s)}}\mathcal{L}({\boldsymbol{\theta}})\| &\leq \frac{1}{d} \sum_{r=1}^{d} \|\nabla_{{\boldsymbol{\theta}}^{(s, r)}}\mathcal{L}({\boldsymbol{\theta}})\| \\
                                                                                        &\leq \frac{1}{d} \sum_{r=1}^{d} \|\partial_{{\boldsymbol{\theta}}}f_s\|\ \|\boldsymbol{g}^{(s+1,r)}\| \\
                                                                                        &\leq L_{\boldsymbol{\theta}}^{(s)}\ \frac{1}{d} \sum_{r=1}^{d} \|\boldsymbol{g}^{(s+1, r)}\|.
        \end{aligned}
    \end{equation}
    Note that the step in \Cref{eq:all-reduce} follow the fact that the Lipschitz constant assumptions are uniform over the data distribution.
\end{proof}


\Cref{lem:lipschitz} characterizes how strongly the parameter gradients at each stage depend on activation perturbations. The Lipschitz constants $L_{\boldsymbol{\theta}}^{(s)}$ and $L_{f}^{(s)}$ quantify this relationship, providing a foundation for understanding gradient sensitivity. These stage-specific Lipschitz constants are important because they reveal which stages of the model are most vulnerable to malicious manipulation. Stages with larger constants amplify perturbations more strongly, making them prime targets for attackers and priority areas for enhanced monitoring.

Building on these Lipschitz properties, we now quantify exactly how activation perturbations translate to gradient perturbations:

\begin{blockquote}
\begin{lemma}[Sensitivity of Parameter Gradient to Activation Perturbation]\label{lem:sensitivity}
    Let an expected honest replica in stage $s$ be activation $\boldsymbol{h}^{(s, r)}$. Assume that a malicious worker replaces it by $\boldsymbol{h}^{(s, r)}+\boldsymbol{\delta}$. If the activation perturbations are small such that changing the input activation by $\boldsymbol{\delta}$ perturbs $\boldsymbol{g}^{(s+1, r)}$ through the local Jacobian only, then the change in the \emph{aggregated} stage $s$ parameter gradient satisfies
    \[
        \|\Delta\nabla_{\boldsymbol{\theta}^{(s)}}^{\texttt{agg}}\mathcal{L}({\boldsymbol{\theta}})\| := \|\nabla_{\boldsymbol{\theta}^{(s)}}^{\texttt{agg}}\mathcal{L}\big({\boldsymbol{\theta}}\mid\boldsymbol{h}^{(s, r)}+\boldsymbol{\delta}\big) - \nabla_{\boldsymbol{\theta}^{(s)}}^{\texttt{agg}}\mathcal{L}\big({\boldsymbol{\theta}}\mid\boldsymbol{h}^{(s, r)}\big)\| \leq \frac{G_s}{d}\,\|\boldsymbol{\delta}\|,
    \]
    where $G_s:=L_{\boldsymbol{\theta}}^{(s)}\Bigl(\prod_{j\geq s+1} L_f^{(j)}\Bigr)$.
    More generally, if a set $B_s$ of $|B_s| = \gamma_s \cdot d$ malicious replicas each injects a perturbation of norm at most $\|\boldsymbol{\delta}\|$, then
    \begin{equation}\label{eq:parameter_gradient_pert_bound}
        \boxed{
        \|\Delta\nabla_{\boldsymbol{\theta}^{(s)}}^{\texttt{agg}}\| \leq \gamma_{s} \cdot G_s \cdot \varepsilon.
        }
    \end{equation}
\end{lemma}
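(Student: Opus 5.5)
The plan is to reduce the statement to a chain-rule computation on the single corrupted replica and then pass through the all-reduce average of \Cref{eq:all-reduce}. Fix the perturbed replica $r$. Since $\nabla^{\texttt{agg}}_{\boldsymbol{\theta}^{(s)}}\mathcal{L}$ is the average of the $d$ replica-wise gradients, and the perturbation of $\boldsymbol{h}^{(s,r)}$ only changes quantities computed on replica $r$'s mini-batch, all other summands cancel in the difference. This gives
\[
\|\Delta\nabla^{\texttt{agg}}_{\boldsymbol{\theta}^{(s)}}\mathcal{L}\| = \frac{1}{d}\bigl\|\nabla_{\boldsymbol{\theta}^{(s,r)}}\mathcal{L}(\boldsymbol{\theta}\mid \boldsymbol{h}^{(s,r)}+\boldsymbol{\delta}) - \nabla_{\boldsymbol{\theta}^{(s,r)}}\mathcal{L}(\boldsymbol{\theta}\mid \boldsymbol{h}^{(s,r)})\bigr\|,
\]
which accounts for the factor $1/d$.

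Next I would use the factorization from the proof of \Cref{lem:lipschitz}, namely $\nabla_{\boldsymbol{\theta}^{(s,r)}}\mathcal{L} = \partial_{\boldsymbol{\theta}} f_s\,\boldsymbol{g}^{(s+1,r)}$. By the hypothesis of the lemma, $\boldsymbol{\delta}$ affects this product only through the downstream activation gradient, via the local Jacobians: the Jacobian $\partial_{\boldsymbol{\theta}}f_s$ is held fixed, and $\boldsymbol{\delta}$ propagates linearly through the stages $j\ge s+1$. Under this linearization the change in $\boldsymbol{g}^{(s+1,r)}$ is controlled by the product of the downstream Jacobians $\partial_{\boldsymbol{h}} f_j$ acting on $\boldsymbol{\delta}$. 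Sub-multiplicativity of the operator norm then gives $\|\Delta\boldsymbol{g}^{(s+1,r)}\|\le \bigl(\prod_{j\ge s+1}L_f^{(j)}\bigr)\|\boldsymbol{\delta}\|$. Multiplying by $\|\partial_{\boldsymbol{\theta}}f_s\|\le L^{(s)}_{\boldsymbol{\theta}}$ yields the per-replica bound $G_s\|\boldsymbol{\delta}\|$, and hence the first claim $G_s\|\boldsymbol{\delta}\|/d$.

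For the multi-attacker bound, the perturbations of different replicas enter disjoint summands of the all-reduce average. The triangle inequality therefore gives $\frac{1}{d}\sum_{r\in B_s} G_s\|\boldsymbol{\delta}^{(s,r)}\| \le \frac{|B_s|}{d}G_s\varepsilon = \gamma_s G_s\varepsilon$, using $\|\boldsymbol{\delta}^{(s,r)}\|\le\varepsilon$ as in \Cref{lem:momentum_dev}.

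The main obstacle is making the middle step honest. Exactly, a perturbation of the activation also changes the points at which every downstream Jacobian is evaluated, and possibly $\nabla\mathcal{L}_{\mathrm{head}}$ as well, so the bound is not a genuine consequence of the Jacobian bounds alone. I would not try to control these second-order effects. Instead I would lean explicitly on the stated small-perturbation assumption, that $\boldsymbol{\delta}$ acts through the local Jacobian only, and treat the propagation as a linearized map whose norm is bounded by $\prod_{j\ge s+1}L_f^{(j)}$. A rigorous version would instead need Lipschitz bounds on the Jacobians themselves, that is, Hessian bounds, and I would note this as a remark.
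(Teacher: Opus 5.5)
Your proposal follows the same route as the paper. You reduce to a single corrupted replica through the $1/d$ all-reduce average, use the factorization $\nabla_{\boldsymbol{\theta}^{(s,r)}}\mathcal{L}=\partial_{\boldsymbol{\theta}}f_s\,\boldsymbol{g}^{(s+1,r)}$ from \Cref{lem:lipschitz}, linearize the downstream propagation of $\boldsymbol{\delta}$ and apply sub-multiplicativity to get $G_s\|\boldsymbol{\delta}\|$, and finish with the triangle inequality over $B_s$ to reach $\gamma_s G_s\varepsilon$. The paper also takes the linearization as an assumption rather than deriving it, so your remark that a rigorous version would need Lipschitz bounds on the Jacobians applies equally to the paper's argument and is not a gap specific to yours.
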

\end{blockquote}

\begin{proof}
    Let us first consider a single replica $r$ at stage $s$.
    For this replica, using \Cref{lem:lipschitz} we can write:
    \[
        \nabla_{\boldsymbol{\theta}^{(s, r)}}\mathcal{L}(\boldsymbol{\theta})=\partial_{\boldsymbol{\theta}}f_s~\boldsymbol{g}^{(s+1, r)},
        \quad
        \text{s.t. }\;
        \boldsymbol{g}^{(s+1, r)}=\Bigl(\prod_{j=s+1}^{p}\partial_{\boldsymbol{h}}f_{j}\Bigr)\nabla_{\boldsymbol{h}^{(p, r)}}\mathcal{L}_{\mathrm{head}}.
    \]
    Thus, assuming a linearization of the change in gradient signal under small input perturbation, we can write:
    \[
        \delta\boldsymbol{g}^{(s+1, r)}=\Bigl(\prod_{j=s+1}^{p}\partial_{\boldsymbol{h}}f_{j}\Bigr)\boldsymbol{\delta}.
    \]
    Hence, for replica $r$ we can write the sensitivity of the parameter gradient as:
    \[
        \Delta\nabla_{\boldsymbol{\theta}^{(s, r)}}\mathcal{L}(\boldsymbol{\theta}) = (\partial_{\boldsymbol{\theta}}f_s) ~\delta\boldsymbol{g}^{(s+1, r)} = \partial_{\boldsymbol{\theta}}f_s \Bigl(\prod_{j=s+1}^{p}\partial_{\boldsymbol{h}}f_{j}\Bigr)\,\boldsymbol{\delta}.
    \]
    Using sub‑multiplicativity and the Lipschitz bounds, we have
    \begin{equation}
        \begin{aligned}
            \|\Delta\nabla_{\boldsymbol{\theta}^{(s)}}\mathcal{L}(\boldsymbol{\theta})\| &\leq \|\partial_{\boldsymbol{\theta}}f_s\|\Bigl(\prod_{j=s+1}^{p}\|\partial_{\boldsymbol{h}}f_{j}\|\Bigr)\|\boldsymbol{\delta}\| \\
                                                                                         &\leq L_{\boldsymbol{\theta}}^{(s)}\Bigl(\prod_{j\geq s+1} L_f^{(j)}\Bigr)\|\boldsymbol{\delta}\| \\
                                                                                         &= G_{s}\|\boldsymbol{\delta}\|
        \end{aligned}
    \end{equation}
    
    \noindent Since the stage update uses the aggregate gradients, we can write
    \[
        \Delta\nabla_{\boldsymbol{\theta}^{(s)}}^{\texttt{agg}}\mathcal{L}(\boldsymbol{\theta}) = \frac1d\,\Delta\nabla_{\boldsymbol{\theta}^{(s,r)}}\mathcal{L}(\boldsymbol{\theta}).
    \]
    Hence, $\|\Delta\nabla_{\boldsymbol{\theta}^{(s)}}^{\texttt{agg}}\mathcal{L}(\boldsymbol{\theta})\|\le\tfrac{G_s}{d}\|\boldsymbol{\delta}\|.$
    If $|B_{s}|$ replicas are corrupted, we would have
    \begin{equation}
        \begin{aligned}
             \|\Delta\nabla_{\boldsymbol{\theta}^{(s)}}^{\texttt{agg}}\mathcal{L}(\boldsymbol{\theta})\| &\leq\frac1d\sum_{r\in B_s}\|\Delta\nabla_{\boldsymbol{\theta}^{(s,r)}}\mathcal{L}(\boldsymbol{\theta})\| \\
                                                                                                         &\leq \frac{|B_s|}{d} \cdot G_s \cdot \|\boldsymbol{\delta}\| \\
                                                                                                         &= \gamma_s \cdot G_s \cdot \|\boldsymbol{\delta}\|,
        \end{aligned}
    \end{equation}
    and the proof is complete.
\end{proof}

\Cref{lem:sensitivity} provides the crucial link between activation perturbations and their impact on parameter gradients. The amplification factor $G_s$ represents how perturbations at stage $s$ propagate through the network during backpropagation. This factor depends on both the local parameter gradient sensitivity ($L_{\boldsymbol{\theta}}^{(s)}$) and the product of activation gradient sensitivities in subsequent stages ($\prod_{j\geq s+1} L_f^{(j)}$). This result has important implications for robustness against malicious workers in pipeline-parallel training:
\begin{enumerate}[start=1]
    \item Earlier stages (lower $s$) typically have larger amplification factors because perturbations must propagate through more subsequent stages.
    \item Stages with larger parameter counts or complex activation patterns may have higher individual Lipschitz constants.
    \item The fractional impact of malicious workers is reduced by the averaging effect of the all-reduce operation, as captured by the $\gamma_s$ factor.
\end{enumerate}

Combined with our detection bounds, we can now establish the maximum parameter gradient perturbation that can be induced by undetected malicious workers:
\begin{equation}\label{eq:detection_threshold_parameter_gradient_link}
    \|\Delta\nabla_{\boldsymbol{\theta}^{(s)}}^{\texttt{agg}}\mathcal{L}(\boldsymbol{\theta})\| \leq \gamma_s \cdot G_s \cdot \frac{\tau}{L_{\Omega}(1 + \gamma_{s})} := \zeta
\end{equation}

This bound directly links detection thresholds to gradient perturbations, which will be essential for our convergence analysis.

\subsubsection{Convergence Under Perturbed Gradients}

Having established bounds on gradient perturbations, we now analyze how these perturbations affect the convergence properties of momentum-SGD. We consider general non-convex loss functions, but our results can be easily extended to the strongly convex case.

\begin{blockquote}
\begin{theorem}[Convergence of Momentum SGD under Smoothness for Convex and Non-convex Cases with Perturbation and Noise]\label{thm:non_convex_sgd}
Consider the balanced momentum update:
\begin{equation}\label{eq:momentum_sgd_learning}
    \begin{aligned}
        \boldsymbol{v}_{t+1} &= \beta \boldsymbol{v}_t + (1-\beta)\boldsymbol{g}_t, \\ 
        \boldsymbol{\theta}_{t+1} &= \boldsymbol{\theta}_t - \eta \boldsymbol{v}_{t+1}
    \end{aligned}
\end{equation}
where $\boldsymbol{g}_t = \nabla \mathcal{L}(\boldsymbol{\theta}_t) + \boldsymbol{\zeta}_t + \boldsymbol{\xi}_t$, with a Lyapunov function $\Psi_t = \mathcal{L}(\boldsymbol{\theta}_t) + c\|\boldsymbol{v}_t\|^2$ for some constant $c > 0$.

Assume:
\begin{enumerate}[start=1]
    \item $\mathcal{L}$ is $L$-smooth but potentially non-convex
    \item $\mathcal{L}$ is bounded below by $\mathcal{L}^*$
    \item $\beta \in [0,1)$ is the momentum parameter
    \item $\eta > 0$ is the learning rate
    \item $\boldsymbol{\zeta}_t$ is a deterministic perturbation with maximum perturbation norm $\|\boldsymbol{\zeta}_t\| \leq \zeta$, and $\boldsymbol{\xi}_t$ is zero-mean noise with $\mathbb{E}[\|\boldsymbol{\xi}_t\|^2] \leq \sigma^2$
\end{enumerate}

For any positive constants $\varepsilon_1, \varepsilon_2, \varepsilon_3$ conditioned on the past $\mathcal{F}_t$, we have:
\begin{equation}
    \mathbb{E}[\Psi_{t+1} | \mathcal{F}_t] \leq \Psi_t - \alpha\|\nabla \mathcal{L}(\boldsymbol{\theta}_t)\|^2 + C_1\|\boldsymbol{v}_t\|^2 + C_2\|\boldsymbol{\zeta}_t\|^2 + D\sigma^2
\end{equation}

Where the constants are given by:
\begin{equation}\label{eq:convergence_bound_param_gradient_non_convex}
\begin{aligned}
    \alpha &= \eta(1-\beta)\left(1 - \varepsilon_2 - \frac{\beta}{4\varepsilon_1(1-\beta)} - \frac{2}{\eta}\left(\frac{\eta^2 L}{2}+c\right)\left(1-\beta + \frac{\beta}{4\varepsilon_1}\right)\right) \\
    C_1 &= \left(\eta\beta\varepsilon_1 +\left(\frac{\eta^2 L}{2}+c\right)\beta \left(\beta + 2(1-\beta)(\varepsilon_1 + \varepsilon_3)\right)-c\right) \\
    C_2 &= \left(\frac{\eta(1-\beta)}{4\varepsilon_2} + 2\left(\frac{\eta^2 L}{2}+c\right)(1-\beta)\left(1-\beta-\frac{\beta}{4\varepsilon_3}\right)\right) \\
    D &= \left(\frac{\eta^2 L}{2}+c\right)(1-\beta)^2.
\end{aligned}
\end{equation}

If we choose appropriate values for $\varepsilon_1, \varepsilon_2, \varepsilon_3$ such that $\alpha > 0$ and $C_1 < 0$, and assume $\boldsymbol{v}_0=\boldsymbol{0}$, then the algorithm converges in expectation to a neighborhood of a stationary point:
\begin{equation}
    \boxed{\frac{1}{T}\sum_{t=0}^{T-1} \mathbb{E}[\|\nabla \mathcal{L}(\boldsymbol{\theta}_t)\|^2] \leq \frac{\mathcal{L}_0 - \mathcal{L}^*}{\alpha T} + \frac{C_2\zeta^2 + D\sigma^2}{\alpha},}
\end{equation}
where $\mathcal{L}_0 := \mathcal{L}(\boldsymbol{\theta}_0)$ is our loss value at initialization.
\end{theorem}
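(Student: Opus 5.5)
The plan is a one-step descent inequality for the Lyapunov function $\Psi_t = \mathcal{L}(\boldsymbol{\theta}_t) + c\|\boldsymbol{v}_t\|^2$, followed by telescoping.

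\textbf{Descent step.} By $L$-smoothness,
\[
\mathcal{L}(\boldsymbol{\theta}_{t+1}) \leq \mathcal{L}(\boldsymbol{\theta}_t) - \eta\langle \nabla\mathcal{L}(\boldsymbol{\theta}_t), \boldsymbol{v}_{t+1}\rangle + \tfrac{\eta^2 L}{2}\|\boldsymbol{v}_{t+1}\|^2 .
\]
Adding $c\|\boldsymbol{v}_{t+1}\|^2 - c\|\boldsymbol{v}_t\|^2$ gives
\[
\Psi_{t+1} - \Psi_t \leq -\eta\langle \nabla\mathcal{L}_t, \boldsymbol{v}_{t+1}\rangle + \bigl(\tfrac{\eta^2L}{2}+c\bigr)\|\boldsymbol{v}_{t+1}\|^2 - c\|\boldsymbol{v}_t\|^2 .
\]
This explains why the factor $(\tfrac{\eta^2L}{2}+c)$ appears in every constant of \Cref{eq:convergence_bound_param_gradient_non_convex}.

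\textbf{Inner-product term.} Substitute $\boldsymbol{v}_{t+1} = \beta\boldsymbol{v}_t + (1-\beta)(\nabla\mathcal{L}_t + \boldsymbol{\zeta}_t + \boldsymbol{\xi}_t)$ and take $\mathbb{E}[\cdot\mid\mathcal{F}_t]$. The $\boldsymbol{\xi}_t$ cross term vanishes because the noise is zero-mean. What remains is
\[
-\eta(1-\beta)\|\nabla\mathcal{L}_t\|^2 - \eta\beta\langle\nabla\mathcal{L}_t,\boldsymbol{v}_t\rangle - \eta(1-\beta)\langle\nabla\mathcal{L}_t,\boldsymbol{\zeta}_t\rangle .
\]
Bound the last two pieces with Young's inequality:
\[
|\langle \nabla\mathcal{L}_t,\boldsymbol{v}_t\rangle| \leq \varepsilon_1\|\boldsymbol{v}_t\|^2 + \tfrac{1}{4\varepsilon_1}\|\nabla\mathcal{L}_t\|^2,
\qquad
|\langle\nabla\mathcal{L}_t,\boldsymbol{\zeta}_t\rangle| \leq \varepsilon_2\|\nabla\mathcal{L}_t\|^2 + \tfrac{1}{4\varepsilon_2}\|\boldsymbol{\zeta}_t\|^2 .
\]
These produce the $\eta\beta\varepsilon_1$ term in $C_1$, the $\varepsilon_2$ and $\tfrac{\beta}{4\varepsilon_1(1-\beta)}$ terms in $\alpha$, and the $\tfrac{\eta(1-\beta)}{4\varepsilon_2}$ term in $C_2$.

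\textbf{Quadratic term.} Expand $\mathbb{E}[\|\boldsymbol{v}_{t+1}\|^2\mid\mathcal{F}_t]$:
\[
\beta^2\|\boldsymbol{v}_t\|^2 + (1-\beta)^2\|\nabla\mathcal{L}_t+\boldsymbol{\zeta}_t\|^2 + (1-\beta)^2\mathbb{E}\|\boldsymbol{\xi}_t\|^2 + 2\beta(1-\beta)\langle \boldsymbol{v}_t,\nabla\mathcal{L}_t+\boldsymbol{\zeta}_t\rangle .
\]
The noise piece gives $D\sigma^2$. Use $\|\boldsymbol{a}+\boldsymbol{b}\|^2 \leq 2\|\boldsymbol{a}\|^2 + 2\|\boldsymbol{b}\|^2$ on the second piece. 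Split the cross term with Young's inequality, using $\varepsilon_1$ for the gradient part and $\varepsilon_3$ for the perturbation part. Collecting coefficients of $\|\nabla\mathcal{L}_t\|^2$, $\|\boldsymbol{v}_t\|^2$ and $\|\boldsymbol{\zeta}_t\|^2$ should reproduce $\alpha$, $C_1$ and $C_2$.

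\textbf{Main obstacle.} This bookkeeping is where I expect the difficulty. The stated $C_2$ contains $-\tfrac{\beta}{4\varepsilon_3}$, which is an unusual sign for a Young's inequality remainder. I would first carry out the algebra honestly to see whether the claimed constants match exactly or need a harmless adjustment. If they do not match, I would present the constants my derivation actually produces: any constants with the same structure suffice for the final rate.

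\textbf{Telescoping.} Assume $\alpha>0$ and $C_1<0$, so the $\|\boldsymbol{v}_t\|^2$ term can be dropped. Use $\|\boldsymbol{\zeta}_t\|\leq\zeta$, take total expectations, and sum from $t=0$ to $T-1$. This gives
\[
\alpha\sum_{t=0}^{T-1}\mathbb{E}\|\nabla\mathcal{L}(\boldsymbol{\theta}_t)\|^2 \leq \Psi_0 - \mathbb{E}\Psi_T + T\,(C_2\zeta^2 + D\sigma^2).
\]
Since $\boldsymbol{v}_0=\boldsymbol{0}$, we have $\Psi_0=\mathcal{L}_0$. Since $c>0$, we have $\Psi_T\geq\mathcal{L}^*$. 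Dividing by $\alpha T$ yields the boxed bound.

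Finally, $\zeta$ can be identified with the bound in \Cref{eq:detection_threshold_parameter_gradient_link}. This makes the neighborhood size scale with $\tau$, which recovers the informal \Cref{thm:convergence}.
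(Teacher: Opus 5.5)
Your plan follows the paper's proof essentially line for line: the same Lyapunov function, the smoothness step that produces the factor $\bigl(\frac{\eta^2L}{2}+c\bigr)$, the same Young splits with $\varepsilon_1,\varepsilon_2,\varepsilon_3$, the $2\|\boldsymbol a\|^2+2\|\boldsymbol b\|^2$ bound, and telescoping with $\boldsymbol v_0=\boldsymbol 0$. Your suspicion about $C_2$ is correct, and the algebra settles it. The bookkeeping gives
\[
C_2=\frac{\eta(1-\beta)}{4\varepsilon_2}+2\Bigl(\frac{\eta^2L}{2}+c\Bigr)(1-\beta)\Bigl(1-\beta+\frac{\beta}{4\varepsilon_3}\Bigr).
\]
The paper's own ``Combining Terms'' display has this $+\frac{1}{4\varepsilon_3}\|\boldsymbol\zeta_t\|^2$ term with a positive prefactor; the minus sign only appears when the terms are collected.

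The sign error is not harmless, because the statement as written is false. Take $\mathcal{L}(\boldsymbol\theta)=\frac{L}{2}\|\boldsymbol\theta\|^2$, $\boldsymbol\theta_t=\boldsymbol v_t=\boldsymbol\xi_t=\boldsymbol 0$, $\sigma=0$ and $\boldsymbol\zeta_t\neq\boldsymbol 0$. Then $\Psi_{t+1}-\Psi_t=\bigl(\frac{\eta^2L}{2}+c\bigr)(1-\beta)^2\|\boldsymbol\zeta_t\|^2>0$. But for $\beta>0$ the stated $C_2\to-\infty$ as $\varepsilon_3\to 0$, so the claimed bound $\Psi_{t+1}-\Psi_t\le C_2\|\boldsymbol\zeta_t\|^2$ fails.

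In your telescoping step, replacing $C_2\|\boldsymbol\zeta_t\|^2$ by $C_2\zeta^2$ requires $C_2\ge 0$. The corrected constant satisfies this automatically, so with it your argument proves the boxed bound.
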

\end{blockquote}

\begin{proof}
We begin by analyzing one-step progress with the Lyapunov potential function ${\Psi_t = \mathcal{L}(\boldsymbol{\theta}_t) + c\|\boldsymbol{v}_t\|^2}$ inspired by~\cite{liu2020improved, mai2020convergence}.

\paragraph{Evolution of the Loss Term.}

By the $L$-smoothness of $\mathcal{L}$, we have:
\begin{equation}
\begin{aligned}
    \mathcal{L}(\boldsymbol{\theta}_{t+1}) &\leq \mathcal{L}(\boldsymbol{\theta}_t) + \langle \nabla \mathcal{L}(\boldsymbol{\theta}_t), \boldsymbol{\theta}_{t+1} - \boldsymbol{\theta}_t \rangle + \frac{L}{2}\|\boldsymbol{\theta}_{t+1} - \boldsymbol{\theta}_t\|^2 \\
                         &= \mathcal{L}(\boldsymbol{\theta}_t) - \eta \langle \nabla \mathcal{L}(\boldsymbol{\theta}_t), \boldsymbol{v}_{t+1} \rangle + \frac{\eta^2 L}{2}\|\boldsymbol{v}_{t+1}\|^2
\end{aligned}
\end{equation}

Now, expanding $\Psi_{t+1}-\Psi_{t} = \bigl[\mathcal{L}(\boldsymbol{\theta}_{t+1}) - \mathcal{L}(\boldsymbol{\theta}_{t})\bigr] + c\,\bigl[\|\boldsymbol{v}_{t+1}\|^{2} - \|\boldsymbol{v}_{t}\|^{2}\bigr]$, we have:
\begin{equation}
\begin{aligned}
    \Psi_{t+1}-\Psi_{t} &= \bigl[\mathcal{L}(\boldsymbol{\theta}_{t+1}) - \mathcal{L}(\boldsymbol{\theta}_{t})\bigr] + c\,\bigl[\|\boldsymbol{v}_{t+1}\|^{2} - \|\boldsymbol{v}_{t}\|^{2}\bigr] \\
                        &= - \eta \langle \nabla \mathcal{L}(\boldsymbol{\theta}_t), \boldsymbol{v}_{t+1} \rangle + \frac{\eta^2 L}{2}\|\boldsymbol{v}_{t+1}\|^2 + c \|\boldsymbol{v}_{t+1}\|^{2} - c\|\boldsymbol{v}_{t}\|^{2} \\
                        &= - \eta \langle \nabla \mathcal{L}(\boldsymbol{\theta}_t), \boldsymbol{v}_{t+1} \rangle + \left(\frac{\eta^2 L}{2}+c\right)\|\boldsymbol{v}_{t+1}\|^2 - c\|\boldsymbol{v}_{t}\|^{2}.
\end{aligned}
\end{equation}

Substitute $\boldsymbol{v}_{t+1} = \beta \boldsymbol{v}_t + (1-\beta)\boldsymbol{g}_t$~\cite{polyak1964momentum}, then we have:
\begin{equation}\label{eq:psi_unrolled}
\begin{aligned}
    \Psi_{t+1}-\Psi_{t} &\leq - \eta \langle \nabla \mathcal{L}(\boldsymbol{\theta}_t), \boldsymbol{v}_{t+1} \rangle + \left(\frac{\eta^2 L}{2}+c\right)\|\boldsymbol{v}_{t+1}\|^2 - c\|\boldsymbol{v}_{t}\|^{2} \\
                        &= - \eta \langle \nabla \mathcal{L}(\boldsymbol{\theta}_t), \beta \boldsymbol{v}_t + (1-\beta)\boldsymbol{g}_t \rangle + \left(\frac{\eta^2 L}{2}+c\right)\|\beta \boldsymbol{v}_t + (1-\beta)\boldsymbol{g}_t\|^2 - c\|\boldsymbol{v}_{t}\|^{2} \\
                        &= - \eta\beta \langle \nabla \mathcal{L}(\boldsymbol{\theta}_t), \boldsymbol{v}_t\rangle -\eta(1-\beta) \langle \nabla \mathcal{L}(\boldsymbol{\theta}_t), \boldsymbol{g}_t \rangle \\
                        & \quad+ \left(\frac{\eta^2 L}{2}+c\right)\left(\beta^2\|\boldsymbol{v}_t\|^2 + (1-\beta)^2\|\boldsymbol{g}_t\|^2 + 2\beta(1-\beta)\langle \boldsymbol{v}_t, \boldsymbol{g}_t\rangle\right) - c\|\boldsymbol{v}_{t}\|^{2} \\
\end{aligned}
\end{equation}
Next, we bound individual terms.

\paragraph{Bounding $-\eta\beta \langle \nabla \mathcal{L}(\boldsymbol{\theta}_t), \boldsymbol{v}_t\rangle$.} Using Young's inequality with parameter ${\varepsilon_1 > 0}$, we write:
\begin{equation}
    -\eta\beta\langle \boldsymbol{v}_t, \nabla \mathcal{L}(\boldsymbol{\theta}_t) \rangle \leq \eta\beta\varepsilon_1\|\boldsymbol{v}_t\|^2 + \frac{\eta\beta}{4\varepsilon_1}\|\nabla \mathcal{L}(\boldsymbol{\theta}_t)\|^2.
\end{equation}

\paragraph{Bounding $-\eta(1-\beta) \langle \nabla \mathcal{L}(\boldsymbol{\theta}_t), \boldsymbol{g}_t \rangle$.} Since $\boldsymbol{g}_t = \nabla \mathcal{L}(\boldsymbol{\theta}_t) + \boldsymbol{\zeta}_t + \boldsymbol{\xi}_t$, we can write:
\begin{equation}
\begin{aligned}
    -\eta(1-\beta) \langle \nabla \mathcal{L}(\boldsymbol{\theta}_t), \boldsymbol{g}_t \rangle &= -\eta(1-\beta) \langle \nabla \mathcal{L}(\boldsymbol{\theta}_t), \nabla \mathcal{L}(\boldsymbol{\theta}_t) + \boldsymbol{\zeta}_t + \boldsymbol{\xi}_t \rangle \\
                                                                &\leq -\eta(1-\beta)\|\nabla \mathcal{L}(\boldsymbol{\theta}_t)\|^2 -\eta(1-\beta) \langle \nabla \mathcal{L}(\boldsymbol{\theta}_t), \boldsymbol{\zeta}_t\rangle \\
                                                                &\quad-\eta(1-\beta) \langle \nabla \mathcal{L}(\boldsymbol{\theta}_t), \boldsymbol{\xi}_t\rangle \\
                                                                &\leq-\eta(1-\beta)\|\nabla \mathcal{L}(\boldsymbol{\theta}_t)\|^2 +\eta(1-\beta)\varepsilon_2\|\nabla \mathcal{L}(\boldsymbol{\theta}_t)\|^2 + \frac{\eta(1-\beta)}{4\varepsilon_2}\|\boldsymbol{\zeta}_t\|^2 \\ &\quad- \eta(1-\beta) \langle \nabla \mathcal{L}(\boldsymbol{\theta}_t), \boldsymbol{\xi}_t\rangle
\end{aligned}
\end{equation}

\paragraph{Bounding $\|\boldsymbol{g}_t\|^2$.} For this term, we write:
\begin{equation}
\begin{aligned}
    \|\boldsymbol{g}_t\|^2 &= \|\nabla\mathcal{L}(\boldsymbol{\theta}_t) + \boldsymbol{\zeta}_t + \boldsymbol{\xi}_t\|^2 \\
              &= \|\nabla\mathcal{L}(\boldsymbol{\theta}_t) + \boldsymbol{\zeta}_t\|^2 + \|\boldsymbol{\xi}_t\|^2 + 2 \langle\nabla\mathcal{L}(\boldsymbol{\theta}_t) + \boldsymbol{\zeta}_t, \boldsymbol{\xi}_t \rangle\\
              &\leq 2\|\nabla\mathcal{L}(\boldsymbol{\theta}_t)\|^2 + 2\|\boldsymbol{\zeta}_t\|^2 + \|\boldsymbol{\xi}_t\|^2 + 2 \langle\nabla\mathcal{L}(\boldsymbol{\theta}_t) + \boldsymbol{\zeta}_t, \boldsymbol{\xi}_t \rangle,
\end{aligned}
\end{equation}

\paragraph{Bounding $\langle \boldsymbol{v}_t, \boldsymbol{g}_t\rangle$.} Expanding $\boldsymbol{g}_t = \nabla \mathcal{L}(\boldsymbol{\theta}_t) + \boldsymbol{\zeta}_t + \boldsymbol{\xi}_t$, we have:
\begin{equation}
\begin{aligned}
    \langle \boldsymbol{v}_t, \boldsymbol{g}_t\rangle &= \langle \boldsymbol{v}_t, \nabla \mathcal{L}(\boldsymbol{\theta}_t) + \boldsymbol{\zeta}_t + \boldsymbol{\xi}_t\rangle \\
                            &= \langle \boldsymbol{v}_t, \nabla \mathcal{L}(\boldsymbol{\theta}_t)\rangle + \langle \boldsymbol{v}_t, \boldsymbol{\zeta}_t\rangle + \langle \boldsymbol{v}_t, \boldsymbol{\xi}_t\rangle \\
                            &\leq\varepsilon_1\|\boldsymbol{v}_t\|^2 + \frac{1}{4\varepsilon_1}\|\nabla \mathcal{L}(\boldsymbol{\theta}_t)\|^2 + \varepsilon_3\|\boldsymbol{v}_t\|^2 + \frac{1}{4\varepsilon_3}\|\boldsymbol{\zeta}_t\|^2 + \langle \boldsymbol{v}_t, \boldsymbol{\xi}_t\rangle
\end{aligned}
\end{equation}

\paragraph{Combining Terms.}
Taking conditional expectation from~\Cref{eq:psi_unrolled} and substituting the previous bounds, we have
\begin{equation}
\begin{aligned}
    \mathbb{E}[\Psi_{t+1}|\mathcal{F}_t] &- \Psi_{t} \leq \eta\beta\varepsilon_1\|\boldsymbol{v}_t\|^2 + \frac{\eta\beta}{4\varepsilon_1}\|\nabla \mathcal{L}(\boldsymbol{\theta}_t)\|^2 \\
                                                    &-\eta(1-\beta)\|\nabla \mathcal{L}(\boldsymbol{\theta}_t)\|^2 +\eta(1-\beta)\varepsilon_2\|\nabla \mathcal{L}(\boldsymbol{\theta}_t)\|^2 + \frac{\eta(1-\beta)}{4\varepsilon_2}\|\boldsymbol{\zeta}_t\|^2 \\
                                                    &+ \left(\frac{\eta^2 L}{2}+c\right)\left(\beta^2\|\boldsymbol{v}_t\|^2 + 2(1-\beta)^2\|\nabla \mathcal{L}(\boldsymbol{\theta}_t)\|^2 + 2(1-\beta)^2\|\boldsymbol{\zeta}_t\|^2 + (1-\beta)^2\sigma^2\right) \\
                                                    & 2\left(\frac{\eta^2 L}{2}+c\right)\beta(1-\beta)\left((\varepsilon_1+\varepsilon_3)\|\boldsymbol{v}_{t}\|^{2} +  \frac{1}{4\varepsilon_1}\|\nabla \mathcal{L}(\boldsymbol{\theta}_t)\|^2 + \frac{1}{4\varepsilon_3}\|\boldsymbol{\zeta}_t\|^2 \right)- c\|\boldsymbol{v}_{t}\|^{2}
\end{aligned}
\end{equation}

\paragraph{Collecting Terms and Setting Bounds.}

After substituting all bounds and collecting terms, we have:
\begin{equation}
\begin{aligned}
\mathbb{E}[\Psi_{t+1}|\mathcal{F}_t] &- \Psi_t\leq \\
&- \eta(1-\beta)\left(1 - \varepsilon_2 - \frac{\eta\beta}{4\varepsilon_1\eta(1-\beta)} - \frac{2}{\eta}\left(\frac{\eta^2 L}{2}+c\right)\left(1-\beta + \frac{\beta}{4\varepsilon_1}\right)\right)\|\nabla \mathcal{L}(\boldsymbol{\theta}_t)\|^2 \\
&+ \left(\eta\beta\varepsilon_1 +\left(\frac{\eta^2 L}{2}+c\right)\beta \left(\beta + 2(1-\beta)(\varepsilon_1 + \varepsilon_3)\right)-c\right)\|\boldsymbol{v}_t\|^2 \\
&+ \left(\frac{\eta(1-\beta)}{4\varepsilon_2} + 2\left(\frac{\eta^2 L}{2}+c\right)(1-\beta)\left(1-\beta-\frac{\beta}{4\varepsilon_3}\right)\right)\|\boldsymbol{\zeta}_t\|^2 \\
&+ \left(\frac{\eta^2 L}{2}+c\right)(1-\beta)^2\sigma^2
\end{aligned}
\end{equation}

Define the following constants:
\begin{equation}
\begin{aligned}
    \alpha &= \eta(1-\beta)\left(1 - \varepsilon_2 - \frac{\beta}{4\varepsilon_1(1-\beta)} - \frac{2}{\eta}\left(\frac{\eta^2 L}{2}+c\right)\left(1-\beta + \frac{\beta}{4\varepsilon_1}\right)\right) \\
    C_1 &= \left(\eta\beta\varepsilon_1 +\left(\frac{\eta^2 L}{2}+c\right)\beta \left(\beta + 2(1-\beta)(\varepsilon_1 + \varepsilon_3)\right)-c\right) \\
    C_2 &= \left(\frac{\eta(1-\beta)}{4\varepsilon_2} + 2\left(\frac{\eta^2 L}{2}+c\right)(1-\beta)\left(1-\beta-\frac{\beta}{4\varepsilon_3}\right)\right) \\
    D &= \left(\frac{\eta^2 L}{2}+c\right)(1-\beta)^2
\end{aligned}
\end{equation}

\paragraph{Establishing Convergence.}

For convergence, we can set the variables such that $\alpha > 0$ and $C_1 < 0$. The one-step progress in expectation becomes:
\begin{equation}
\mathbb{E}[\Psi_{t+1}|\mathcal{F}_t] \leq \Psi_t - \alpha\|\nabla \mathcal{L}(\boldsymbol{\theta}_t)\|^2 + C_1\|\boldsymbol{v}_t\|^2 + C_2\|\boldsymbol{\zeta}_t\|^2 + D\sigma^2
\end{equation}

When $C_1 < 0$, the term with $\|\boldsymbol{v}_t\|^2$ helps convergence. Taking the full expectation and summing from $t=0$ to $T-1$:
\begin{equation}
\begin{aligned}
    \sum_{t=0}^{T-1}\mathbb{E}[\alpha\|\nabla \mathcal{L}(\boldsymbol{\theta}_t)\|^2] &\leq \mathbb{E}[\Psi_0] - \mathbb{E}[\Psi_T] + \sum_{t=0}^{T-1}(C_2\|\boldsymbol{\zeta}_t\|^2 + D\sigma^2]) \\
                                                                    &\leq \mathbb{E}[\Psi_0] - \mathcal{L}^* + T(C_2\zeta^2 + D\sigma^2)
\end{aligned}
\end{equation}

Where we used $\mathcal{L}^* \leq \mathcal{L}(\boldsymbol{\theta}_t)$ and dropped the negative term with $C_1 < 0$. Substituting ${\Psi_0=\mathcal{L}_0+c\|v_0\|^2=\mathcal{L}_0}$ and dividing by $\alpha T$ we have:
\begin{equation}\label{eq:final_potential}
    \frac{1}{T}\sum_{t=0}^{T-1}\mathbb{E}[\|\nabla \mathcal{L}(\boldsymbol{\theta}_t)\|^2] \leq \frac{\mathcal{L}_0 - \mathcal{L}^*}{\alpha T} + \frac{C_2\zeta^2 + D\sigma^2}{\alpha}.
\end{equation}

Thus, the average squared gradient norm converges to a neighborhood determined by the perturbation magnitude $\zeta^2$ and noise variance $\sigma^2$. This result shows that momentum SGD with perturbation and noise converges to a neighborhood of a stationary point in the non-convex smooth case.
\end{proof}

Theorem 2 provides convergence guarantees for momentum-SGD in the non-convex setting, which is particularly relevant for deep learning applications like LLMs. Instead of convergence to a neighborhood of the optimum, we provide guarantees on the average gradient norm, a standard measure for non-convex optimization. The bound depends directly on the perturbation magnitude, establishing that even in non-convex settings, controlling malicious perturbations through effective detection mechanisms is crucial for ensuring convergence to stationary points.

\subsubsection{Unified Analysis: Detection-Convergence Relationship}

We now unify our results to establish a comprehensive relationship between detection thresholds and convergence guarantees. This unified perspective provides us with clear guidance on the security-performance tradeoff.

\begin{blockquote}
\begin{theorem}[Convergence Guarantees for Distributed Training with Malicious Workers]\label{thm:unified_convergence}
Let a distributed training system with $p$ stages, each having $d$ worker replicas, where a fraction $\gamma_s < \frac{1}{2}$ of workers at stage $s$ are malicious. Using the momentum-based verification with a test statistic detector with threshold $\tau$, let us assume we will use balanced momentum SGD with parameter $\beta \in [0,1)$ and learning rate $\eta > 0$ for optimization.

Assume:
\begin{enumerate}[start=1]
    \item The test statistic function $\Omega$ is Lipschitz continuous with constant $L_{\Omega}$,
    \item Each stage $s$ implements a map $\boldsymbol{h}^{(s, r)} = f_s(\boldsymbol{h}^{(s-1, r)};\boldsymbol{\theta}^{(s)})$ with Jacobian bounds ${\|\partial_{\boldsymbol{\theta}}f_s\| \leq L_{\boldsymbol{\theta}}^{(s)}}$ and $\|\partial_{\boldsymbol{h}}f_s\| \leq L_{f}^{(s)}$,
    \item The loss function $\mathcal{L}$ is $L$-smooth and bounded below by $\mathcal{L}^*$,
    \item The stochastic gradient includes zero-mean noise with variance bounded by $\sigma^2$.
\end{enumerate}

Then the following results hold:
\begin{itemize}[leftmargin=0.25in]
    \setlength\itemsep{0.0em}
    \item {\bf Detection Evasion Bound.} For a malicious worker to remain undetected by the test statistic detector:
            \begin{equation}
            \varepsilon \leq \frac{\tau}{L_{\Omega}(1 + \gamma_{s})}.
            \end{equation}
    \item {\bf Parameter Gradient Perturbation.} The maximum parameter gradient perturbation that can be induced by undetected malicous workers at stage $s$ is:
            \begin{equation}\label{eq:pert_bound}
            \|\Delta\nabla_{\boldsymbol{\theta}^{(s)}}^{\texttt{agg}}\| \leq \gamma_s \cdot G_s \cdot \frac{\tau}{L_{\Omega}(1 + \gamma_{s})}
            \end{equation}
            where $G_s = L_{\boldsymbol{\theta}}^{(s)}\left(\prod_{j\geq s+1} L_f^{(j)}\right)$ represents the amplification factor for perturbations.
    \item {\bf Convergence Bounds.} Under the maximum undetected perturbation and assuming ${\zeta := \gamma_s \cdot G_s \cdot \frac{\tau}{L_{\Omega}(1 + \gamma_{s})}}$, for non-convex loss we have:
    \begin{equation}
        \boxed{\;
            \frac{1}{T}\sum_{t=0}^{T-1} \mathbb{E}[\|\nabla \mathcal{L}(\boldsymbol{\theta}_t)\|^2] \leq \frac{\mathcal{L}_0 - \mathcal{L}^*}{\alpha T} + \frac{C_2\zeta^2 + D\sigma^2}{\alpha}
        \;}
    \end{equation}
    where constants are as defined in~\Cref{thm:non_convex_sgd}.
\end{itemize}
\end{theorem}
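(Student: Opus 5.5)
The proof chains the three preceding results, one per bullet of \Cref{thm:unified_convergence}, in the order they are stated.

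\textbf{Detection evasion.} We invoke \Cref{lem:gen_stat_dev}, whose hypotheses are exactly assumption~1 together with the momentum deviation bound $\|\Delta\boldsymbol{m}^{(s)}_{t-1}\|\le\gamma_s\varepsilon$ from \Cref{lem:momentum_dev}. That lemma bounds the component of the test statistic attributable to malicious perturbations by $L_\Omega(1+\gamma_s)\varepsilon$. A worker that is never flagged with a fixed threshold $\tau$ keeps this component at most $\tau$, which is \Cref{eq:undetected_upbound_general}. We therefore read the first bullet as the worst-case perturbation magnitude $\varepsilon$ an undetected worker can sustain, and use this $\varepsilon$ in the next step.

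\textbf{Parameter gradient perturbation.} Assumption~2 gives the Jacobian bounds required by \Cref{lem:lipschitz} and \Cref{lem:sensitivity}. Applying \Cref{eq:parameter_gradient_pert_bound} with $|B_s|=\gamma_s d$ replicas, each perturbing by at most the $\varepsilon$ from the first step, gives $\|\Delta\nabla^{\texttt{agg}}_{\boldsymbol{\theta}^{(s)}}\|\le\gamma_s G_s\varepsilon$. Substituting the evasion bound yields \Cref{eq:pert_bound}, which is the quantity $\zeta$ of \Cref{eq:detection_threshold_parameter_gradient_link}.

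\textbf{Convergence.} We apply \Cref{thm:non_convex_sgd} with $\boldsymbol{g}_t=\nabla\mathcal{L}(\boldsymbol{\theta}_t)+\boldsymbol{\zeta}_t+\boldsymbol{\xi}_t$. Here $\boldsymbol{\zeta}_t$ is the stacked per-stage perturbation caused by undetected workers, and $\boldsymbol{\xi}_t$ is the sampling noise from assumption~4. Assumption~3 supplies $L$-smoothness and the lower bound $\mathcal{L}^*$. We take $\boldsymbol{v}_0=\boldsymbol{0}$ and choose $c,\varepsilon_1,\varepsilon_2,\varepsilon_3$ so that $\alpha>0$ and $C_1<0$. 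With these choices the per-step Lyapunov inequality telescopes, and the boxed bound follows once $\|\boldsymbol{\zeta}_t\|\le\zeta$ is established.

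\textbf{Main obstacle.} The hard step is justifying $\|\boldsymbol{\zeta}_t\|\le\zeta$ for the full perturbation vector. The second step controls each stage separately, but $\boldsymbol{\zeta}_t$ stacks the perturbations of all $p$ stages. Stacking strictly gives $\|\boldsymbol{\zeta}_t\|^2\le\sum_s(\gamma_sG_s\tau/(L_\Omega(1+\gamma_s)))^2$. My plan is to interpret $\zeta$ as the maximum over stages, possibly times $\sqrt{p}$, and to absorb that factor into $C_2$. This gives a bound of the stated form, with the neighborhood still proportional to $\tau^2$.

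A second gap is that \Cref{lem:sensitivity} relies on a linearization valid only for small perturbations, and $\boldsymbol{\zeta}_t$ is adversarial rather than deterministic. I would handle both by working in the small-$\tau$ regime and assuming $\boldsymbol{\xi}_t$ is conditionally zero-mean given $\mathcal{F}_t$ and $\boldsymbol{\zeta}_t$. This is what the conditional-expectation step in \Cref{thm:non_convex_sgd} needs to cancel the cross terms.
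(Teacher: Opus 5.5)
Your route is the paper's route. Its proof of \Cref{thm:unified_convergence} is the single sentence ``connect \Cref{lem:momentum_dev,lem:sensitivity} and \Cref{thm:non_convex_sgd}.'' The obstacles you raise are genuine issues the paper never addresses: stacking the stages into one $\|\boldsymbol{\zeta}_t\|$, the linearization, and conditioning $\boldsymbol{\xi}_t$ on the adversarial $\boldsymbol{\zeta}_t$.

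\textbf{The gap.} The step that actually breaks is the one you treat as routine: reading \Cref{eq:undetected_upbound_general} as the largest perturbation an undetected worker can sustain. \Cref{lem:gen_stat_dev} takes $\|\boldsymbol{\delta}\|\le\varepsilon$ as a \emph{hypothesis} and returns an \emph{upper} bound $\Gamma_{\mathrm{pert}}\le L_\Omega(1+\gamma_s)\varepsilon$. ``Not flagged'' gives a second upper bound, $\Gamma_{\mathrm{pert}}\le\tau$, on the same quantity. Two upper bounds on one quantity imply nothing about $\varepsilon$ versus $\tau$. What the lemma does show is that $\varepsilon\le\tau/(L_\Omega(1+\gamma_s))$ is \emph{sufficient} to evade detection. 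The second bullet needs the converse, ``undetected $\Rightarrow\|\boldsymbol{\delta}\|\le\tau/(L_\Omega(1+\gamma_s))$,'' because it bounds the damage of \emph{every} undetected worker. That converse cannot follow from assumption~1. An $L_\Omega$-Lipschitz $\Omega$ is also $L'$-Lipschitz for every $L'\ge L_\Omega$, so your inference would give $\varepsilon\le\tau/(L'(1+\gamma_s))$ for all such $L'$. That would mean every undetected worker is exactly honest. A constant $\Omega$, which is Lipschitz with every constant and never flags anyone, makes the failure concrete.

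\textbf{A repair.} You need a lower-sensitivity hypothesis in the first argument, e.g.\ $|\Omega(\boldsymbol{x}+\boldsymbol{\delta},\boldsymbol{y})-\Omega(\boldsymbol{x},\boldsymbol{y})|\ge\ell_\Omega\|\boldsymbol{\delta}\|$, while keeping the upper bound $L_\Omega$ in the second argument. The momentum drift from \Cref{lem:momentum_dev} then enters with a minus sign, since it can mask the attack. Take $\varepsilon$ to be the largest perturbation actually injected, and let $b$ bound the honest baseline gap. You then get roughly $\varepsilon\le(\tau+b)/(\ell_\Omega-L_\Omega\gamma_s)$, which requires $\ell_\Omega>L_\Omega\gamma_s$. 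This is where an honest-majority-type condition would really be used; as written, $\gamma_s<1/2$ enters nowhere in the chain. So the first two bullets survive only with a modified constant.

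\textbf{Two smaller points on your fixes.}
First, a perturbation injected at stage $s$ changes the parameter-gradient blocks of \emph{every} stage: downstream ones through their perturbed forward inputs, upstream ones through the backward signal. \Cref{lem:sensitivity} only controls block $s$, so your per-stage stacking bound for $\|\boldsymbol{\zeta}_t\|$ needs an extra cross-stage propagation estimate.
Second, the first-order response of $\boldsymbol{g}^{(s+1,r)}$ to $\boldsymbol{\delta}$ is a Hessian-vector product of the downstream loss, not $\bigl(\prod_{j>s}\partial_{\boldsymbol{h}}f_j\bigr)\boldsymbol{\delta}$. Restricting to small $\tau$ therefore does not by itself justify the constant $G_s$. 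You need second-order bounds, or you must carry the lemma's ``local Jacobian only'' condition as an explicit hypothesis of the theorem.
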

\end{blockquote}

\begin{proof}
We prove the theorem by connecting the results from \Cref{lem:momentum_dev,lem:sensitivity} and \Cref{thm:non_convex_sgd}.
\end{proof}

\subsubsection{Recovering Well-known Lower-bounds for SGD Convergence from \Cref{thm:unified_convergence}}
To evaluate our convergence theorem's validity, we examine whether it generalizes to common non-convex optimization bounds. Consider vanilla SGD without malicious perturbations: setting $\beta=0$ (relaxing momentum SGD to SGD), $\zeta = 0$ (no malicious perturbation), and substituting into our coefficients from \Cref{eq:convergence_bound_param_gradient_non_convex} with $c \to 0^{+}$:
\begin{equation}
    \begin{aligned}
        \alpha &\approx \mathcal{O}(\eta) \\
        D &\approx \mathcal{O}(\eta^2 L),
    \end{aligned}
\end{equation}
where $\eta$ is our learning rate (see~\Cref{eq:momentum_sgd_learning}).

Substituting these coefficients into our convergence theorem, we have:
\begin{align}
    \frac{1}{T}\sum_{t=0}^{T-1} \mathbb{E}[\|\nabla \mathcal{L}(\mathbf{\theta}_t)\|^2] 
    &\leq \frac{\mathcal{L}_0 - \mathcal{L}^*}{\alpha T} + \frac{C_2\zeta^2 + D\sigma^2}{\alpha} \\
    & \stackrel{(a)}{\leq} \frac{\mathcal{L}_0 - \mathcal{L}^*}{\alpha T} + \frac{D\sigma^2}{\alpha} \\
    & \stackrel{(b)}{\leq} \mathcal{O}\left(\frac{\mathcal{L}_0 - \mathcal{L}^*}{\eta T} + \eta L \sigma^2\right)
\end{align}
where (a) assumes no perturbation, and (b) uses the derived constants.

This matches the classical SGD bound from~\citet{koloskova2024convergence}:
\begin{blockdirectquote}
    ``\textbf{SGD, Ex.~3.1} Since $\sigma^2 \leq \tau \sigma^2_{\text{SGD}}$ (see Table 2), and using that $\tau = \Theta(1/L\gamma)$ the convergence rate in Theorem 5.1 converts to
    \begin{equation}\nonumber
        \frac{1}{T} \sum_{t=0}^{T} \mathbb{E}\|\nabla f(x_t)\|^2 \leq \mathcal{O}\left(\frac{F_0}{\gamma T} + L\gamma \sigma^2_{\text{SGD}}\right),
    \end{equation}
    with $\gamma \leq \frac{1}{8\sqrt{3}L}$ [and where $F_0 = f(\mathbf{x}_0) - f^{*}$]. This recovers classical convergence rate of SGD for non-convex functions (up to constants).''
\end{blockdirectquote}
This bound exactly matches what we derive from our convergence theorem.

\subsubsection{Theory Implications}

Our unified analysis reveals several important implications for designing malicious-tolerant SWARM verification:

\begin{itemize}[leftmargin=0.25in]
    \setlength\itemsep{0.0em}
    \item \textbf{Security-Convergence Tradeoff}: The detection threshold $\tau$ directly impacts the convergence guarantees through its effect on the maximum undetected perturbation. Lower thresholds provide stronger security guarantees but may increase false positives and potentially slow convergence due to unnecessary worker exclusion.
    \item \textbf{Byzantine Fraction Impact}: As shown in~\Cref{eq:pert_bound}, the maximum parameter deviation is proportional to~$\tfrac{\gamma_s}{1+\gamma_s}$, where $\gamma_s$ represents the fraction of malicious workers at stage $s$. This monotonically increasing function implies that larger malicious fractions allow more severe parameter deviations, highlighting the importance of maintaining an honest majority.
    \item \textbf{Detector Sensitivity}: A detector~$\Omega$ with larger Lipschitz constant~$L_{\Omega}$ reduces the parameter gradient deviation. In practical terms, employing a more sensitive detection function constrains the potential impact of malicious workers by allowing them less room for undetected perturbation.
    \item \textbf{Stage Vulnerability}: Stages with higher values of the amplification factor $G_s$ are more vulnerable to malicious perturbations. In typical neural network architectures, this often means that earlier layers (which affect all subsequent computation) have greater vulnerability. This suggests that security resources should be prioritized to monitor these critical stages more closely.
    \item \textbf{Momentum as Robustness against Malicious Behavior}: Higher momentum values $\beta$ naturally reduce the impact of per-iteration perturbations by placing more weight on the historical gradient estimates. This provides an inherent form of robustness that complements explicit detection mechanisms. The optimal momentum value therefore depends not only on optimization dynamics but also on security considerations.
    \item \textbf{Adaptive Detection Thresholds}: Our analysis suggests that detection thresholds could be optimally set differently for each stage based on their amplification factors $G_s$. Stages with higher amplification factors should use stricter thresholds to maintain consistent convergence guarantees across the model. We leave this for future work.
\end{itemize}

\subsection{Proof of Honest Majority Guarantee}
\label{sec:appendix:honest_majority_proof}
This section provides the complete proof of~\Cref{lemma:honest_majority} from~\Cref{sec:approach:theory}.

\begin{blockquote}
\begin{customlemma}{1}[Honest Majority Guarantee]
\label{lemma:appendix:honest_majority}
Consider our distributed training system with $p$ pipeline stages, each replicated across $d$ worker nodes. Let $b$ be the total number of malicious workers, and $\epsilon \in (0, 1)$ be a small positive constant. If workers are assigned to each stage randomly and
\begin{equation}
\label{eq:appendix:malicious_bound}
    \boxed{b \leq \frac{dp}{2} - p\sqrt{\frac{d}{2}\ln\left(\frac{p}{\epsilon}\right)},}
\end{equation}
then with probability at least $1-\epsilon$ every pipeline stage has strictly fewer than $d/2$ malicious workers.
\end{customlemma}
\end{blockquote}

In other words, \Cref{lemma:honest_majority} makes a connection between the initial pool of malicious workers and the conditions under which a ``random worker assignment'' to each stage would preserve the per-stage honest majority assumption needed by \textsc{Sentinel}. This is important as usually we do not know apriori if a worker is Byzantine or not, and since usually a random worker assignment is used to allocate workers, we need to ensure that the honest majority assumption is preserved.

\vspace{-1em}

\begin{proof}
Let $B_s$ denote the set of malicious worker nodes at stage $s$. We model the assignment of malicious nodes as follows: each worker node has probability $q = b/n = b/(dp)$ of being malicious, independently of other nodes.

For any stage $s$, when stages are randomly assigned to workers, the number of malicious nodes $|B_s|$ follows a binomial distribution with parameters $d$ and $q$:
\begin{equation}
    |B_s| \sim \text{Binomial}(d, q), \quad \mathbb{E}[|B_s|] = qd.
\end{equation}

Our goal is to ensure that, with high probability, every stage $s$ has $|B_s| < d/2$. Using Hoeffding's inequality~\cite{hoeffding1994probability} for sums of independent Bernoulli random variables, we have
\begin{equation}
    \Pr\left[|B_s| - \mathbb{E}[|B_s|] \geq t\right] \leq \exp\left(-\frac{2t^2}{d}\right).
\end{equation}

Setting $t = d/2 - qd = d(1/2 - q)$, we obtain
\begin{equation}
    \Pr\left[|B_s| \geq d/2\right] \leq \exp\left(-2d\left(\frac{1}{2}-q\right)^2\right).
\end{equation}

Applying the union bound across all $p$ stages, the probability that at least one stage has a majority of malicious workers is bounded by
\begin{equation}
    \Pr\left[\exists s: |B_s| \geq d/2\right] \leq p \cdot \exp\left(-2d\left(\frac{1}{2}-q\right)^2\right).
\end{equation}

For this probability to be at most $\epsilon$, we require
\begin{equation}
    p \cdot \exp\left(-2d\left(\frac{1}{2}-q\right)^2\right) \leq \epsilon.
\end{equation}

Taking the natural logarithm of both sides and solving for $q$, we get
\begin{equation}
    \frac{1}{2} - q \geq \sqrt{\frac{\ln(p/\epsilon)}{2d}}.
\end{equation}

Substituting $q = b/n = b/(dp)$ and solving for $b$, we obtain the maximum allowable number of malicious nodes:
\begin{equation}
    b_{\max} = \left(\frac{1}{2} - \sqrt{\frac{\ln(p/\epsilon)}{2d}}\right)dp = \frac{dp}{2} - p\sqrt{\frac{d}{2}\ln\left(\frac{p}{\epsilon}\right)}.
\end{equation}

Therefore, if the total number of malicious nodes $b$ is at most~$b_{\max}$, then with probability at least~$1-\epsilon$, all pipeline stages will have strictly fewer than~$d/2$ malicious worker nodes.
\end{proof}

This theoretical result demonstrates that as our system scales with more replicas per stage, it becomes increasingly robust against adversarial workers, approaching the theoretical limit of tolerating up to half of all workers being malicious.


\section{Extended Experimental Results}
\label{sec:appendix:experimental_results}
This section provides comprehensive experimental details supporting our main findings. We first describe the experimental setup and hyper-parameters, followed by an extended version of our results.

\subsection{Detailed Experimental Settings}
\label{sec:appendix:experimental_setting}
\paragraph{Experimental Infrastructure.}
To simulate a heterogeneous distributed environment, we developed our experiments using the \texttt{TorchTitan}~\cite{liang2025torchtitan} framework built on \texttt{PyTorch}~\cite{paszke2019torch}.\footnote{We integrate the NanoGPT following the implementation of \citet{karpathy2022nanogpt} into \texttt{TorchTitan}.} Our setup employed pipeline parallelism where each transformer layer corresponds to a pipeline stage, with data parallel replicas serving as workers within each stage. We conducted experiments across three scales using 1-3 compute nodes, each equipped with 8 \texttt{NVIDIA A100-SXM4-40GB} GPUs, resulting in configurations with 64, 128, and 256 total workers. Ablation studies were performed using 8 \texttt{NVIDIA A100-SXM4-80GB} GPUs. For our large scale experiment with Llama-3-4B, we used 8 \texttt{NVIDIA H100-HBM3-80GB} GPUs with activation checkpointing.

\paragraph{Model and Training Configuration.}
We evaluated decoder-only Llama-3~\cite{dubey2024llama3} models with varying architectural configurations. Complete training hyper-parameters are provided in~\Cref{tab:model_config}. Additionally, we also evaluate the performance of our approach on NanoGPT~\cite{karpathy2022nanogpt} as a representative GPT2~\cite{radford2019language} architecture, and Llama-4-0.4B~\citep{meta2025llama4} and DeepSeek-V3-1B~\citep{deepseek2024v3} as representative Mixture-of-Expert models. Unless stated otherwise, we train all model configurations for 5000 steps.

\paragraph{Verification Settings.}
\textsc{Sentinel} deploys dedicated verifier nodes that intercept inter-stage communications and monitor both forward activations and backward activation gradients for anomalous behavior. We permanently ban workers after $c=5$ violations. Moreover, we implement an adaptive IQR thresholding mechanism to detect outliers, with parameters detailed in~\Cref{tab:verification_config}.\footnote{For our experiments on MoE-based models, we use the same settings as the Llama-3-0.6B model.} These detection thresholds were calibrated empirically through preliminary experiments on each configuration, though additional hyper-parameter tuning may yield further improvements.

As pointed out in the paper, we assume that the first layer and final two layers of each model are operated by honest nodes. This assumption is cruicial for two reasons: 
\begin{enumerate}[leftmargin=0.25in,start=1] 
\setlength\itemsep{0.05em} 
    \item these layers control the primary data and gradient flow during forward and backward propagation, making them essential for training stability, and
    \item they represent key attack surfaces for adversaries seeking to compromise model integrity through data poisoning/backdoor attacks~\cite{li2024backdoorllm} (via the input layer) or label manipulation~\cite{biggio2012labelflipping, fung2020limitations} (via the output layers).
\end{enumerate}
Without this honest-node assumption, malicious actors could easily circumvent intermediate verification by corrupting inputs or outputs directly.

\paragraph{Datasets.}
We conduct experiments on three large-scale text corpora: CommonCrawl~(C4)~\cite{raffel2020c4}, FineWeb~(FW)~\cite{penedo2024fineweb}, and OpenWebText~(OW)~\cite{gokaslan2019openweb}, all obtained through Hugging Face Datasets Streaming API~\cite{huggingface2021datasets}. To enable model evaluation, we construct a held-out validation set comprising 100k samples that remains separate from training data throughout the process. During validation phases, we sample batches of 160 samples per data-parallel worker from this validation set for evaluation.

\tabModelConfig

\newpage

\tabVerificationConfig

\clearpage
\newpage

\subsection{Detailed Experimental Results}
\label{sec:appendix:detailed_experimental_results}

\paragraph{Activation vs.~Gradient Attack Analysis in Pipeline Parallelism.} Building on the subset of attacks presented in~\Cref{sec:experimental_results}, we now provide comprehensive verification results across all activation and gradient attacks from~\Cref{sec:problem_statement} to fully characterize their behavior. \Cref{tab:byzantine_detection_c4_full} presents complete results on the C4 dataset, while \Cref{tab:byzantine_detection_fineweb_full,tab:byzantine_detection_openweb_full} demonstrate performance on FineWeb and OpenWebText datasets, respectively.

Our comprehensive evaluation reveals several important insights. First, activation manipulation poses an equally significant threat as gradient manipulation in distributed, pipeline parallel-based training, confirming that both attack vectors require careful consideration in Byzantine-tolerant systems. Second, when attacks evade detection, their deviation from baseline vanilla training remains negligible, directly supporting our theoretical analysis presented in~\Cref{thm:convergence}. This consistency holds across all three datasets, demonstrating the robustness and versatility of our approach across diverse training scenarios.

Particularly noteworthy are attacks detected with a detection speed of 1.0, indicating that despite our forgiveness strategy introduced in~\Cref{sec:approach:method}, these attacks produce sufficiently substantial deviations to warrant immediate exclusion of the malicious worker. The training and validation loss curves in~\Cref{fig:detailed_loss_evolution_activations,fig:detailed_validation_loss_evolution_activations,fig:detailed_loss_evolution_gradients,fig:detailed_validation_loss_evolution_gradients} further illustrate how our EMA verification approach effectively controls malicious behavior, maintaining performance close to vanilla baselines throughout training. These results collectively validate the effectiveness and generalizability of our proposed verification framework.

\tabbaseFWfull
\tabbaseOWfull

\clearpage
\newpage
\figActivationTrainingRunDetailed
\clearpage
\newpage
\figActivationValidationRunDetailed
\clearpage
\newpage
\figGradientTrainingRunDetailed
\clearpage
\newpage
\figGradientValidationRunDetailed
\clearpage
\newpage

{\paragraph{Training and Validation Loss Evolution for MoE-based Models.}
In \Cref{sec:experimental_results}, we discussed how \textsc{Sentinel} extends to MoE-based models such as Llama-4 and DeepSeek-V3. Here, we show the training and validation loss evolution throughout training. As seen from \Cref{fig:loss_evolution_moe}, training loss starts to display higher values after a round of attacks start. However, the validation loss keeps going down similar to the vanilla baseline and match its performance.

\figMoERuns

\paragraph{Evolution of Adaptive Deviation Bounds.} Our approach employs an adaptive IQR-based thresholding mechanism, outline in~\Cref{alg:adaptive_iqr}, that dynamically adjusts acceptable deviation bounds for each monitored metric. To demonstrate the effectiveness of this approach, we present the evolution of these adaptive thresholds alongside the corresponding deviations recorded for each worker across two representative layers of our Llama-3-0.6B model. \Cref{fig:deviation_bound_evolution:gradient} shows results under gradient delay attacks, while~\Cref{fig:deviation_bound_evolution:activation} illustrates behavior during activation random sign attacks. The results demonstrate that our adaptive bounds effectively encapsulate the normal operational behavior of honest workers during benign training phases. Critically, when a malicious worker initiates an attack, the adaptive bounds enable verifier nodes to immediately detect and flag the anomalous behavior, providing robust protection against adversarial interference in the distributed training process.
\figDeviationBoundEvolution

\paragraph{Large-scale Experimental Results.} In~\Cref{sec:experimental_results}, we discussed scaling our experiments to two additional settings: (1) a $16\times16$ mesh topology with 256 workers, and (2) a 1.2B parameter model on an $8\times8$ mesh with 64 workers. Here we present the complete results for both configurations across all attack types and malicious ratios tested. \Cref{tab:large_scall_full} shows detailed performance metrics including validation loss, precision, recall, and F1-score for each experimental condition. Our comprehensive results corroborate the main findings regarding the effectiveness of our proposed verification approach in detecting malicious actors that attempt to disrupt distributed training.
\tabLargefull

Additionally, we demonstrated in~\Cref{tab:moe_mixed_attacks} that we can scale the model size easily and \textsc{Sentinel} still successfully work for these settings. \Cref{fig:llama3-4b} shows the training and validation loss behavior throughout training. It demonstrates that our approach mitigates the malicious workers that aim to lead the model into divergence and allows the training to continue with a similar behavior to the vanilla training.

\figLlamaFourB

\paragraph{Longer Training \& Alternative Architectures.} Two natural questions arise from our approach: whether our verification method remains stable under longer training regimes, and whether it generalizes to alternative transformer architectures beyond our initial experiments. To address these concerns, we conduct extended evaluations on two different models: NanoGPT-0.25B and Llama-3-0.6B, training each for 30,000 iterations. This training duration corresponds to approximately 2B tokens for NanoGPT-0.25B (7$\times$ the parameter count) and 3B tokens for Llama-3-0.6B (5$\times$ the parameter count), following established scaling laws in LLM training~\cite{hoffman2022chinchilla}.

We evaluate our method's robustness by simulating a challenging adversarial environment where 50\% of nodes are malicious at each transformer stage. At each attack round, one randomly selected malicious node performs a randomly chosen attack from~\Cref{tab:byzantine_detection_c4_full} under a ``no collusion'' assumption. \Cref{tab:mixed_attacks_long_run} summarizes our detection performance across this extended training period for both architectures.

Our results demonstrate consistent stability across both model architectures and all three datasets, achieving high F1-scores ($>81\%$) for attack detection. Importantly, we observe that the median detection speed across all successfully detected attack types is 5.0 iterations, which aligns precisely with our acceptable number of violations threshold. This indicates that we can detect the majority of attacks with significant training impact within our predefined tolerance window. Even when some attacks remain undetected, they exhibit negligible impact on training convergence, as illustrated in~\Cref{fig:detailed_long_run_nanogpt,fig:detailed_long_run_llama3}. The validation loss under our verification method closely tracks the vanilla baseline throughout the entire training duration for both models, corroborating our theoretical analysis from~\Cref{thm:convergence} and demonstrating the method's architectural flexibility and long-term stability.

\vspace{-1em}

\paragraph{Complementary Nature of DP Defense Methods to \textsc{Sentinel}.} Throughout the paper, we explained in detail how the defense methods in the data parallel domain proposed by prior Byzantine-tolerant literature (e.g., please see~\citep{mhamdi2018byzantinum, gorbunov2022btard, malinovsky2024clip}) are complementary to \textsc{Sentinel} that secures the pipeline parallel dimension by verifying the activations and activation gradients transmitted during forward and backward pass. To demonstrate this complementary nature, we combine our method with two well-know robust aggregation techniques used to defend against parameter gradient attacks. In particular, we use Krum~\citep{blanchard2017machine} and Bulyan~\citep{mhamdi2018byzantinum} instead of vanilla gradient averaging. Our goal is to show that these methods would not impact the operation of \textsc{Sentinel} in a negative way as they are operating in an orthogonal dimension. \Cref{fig:sentinel_plus_ddp} depicts the training and validation loss for a Llama-3-0.6B model trained against mixed attacks. As seen, \textbf{presence of robust aggregation methods does not interfere with how \textsc{Sentinel} works}, and our method can ensures a convergence rate that follows the vanilla, non-attacked baseline.

\tabMixedAttackLongRun
\figddpdefense
\figLongRunNanoGPTDetailed
\figLongRunLlamaDetailed

\clearpage
\newpage

\section{\textsc{Sentinel} in the Wild: Verification for Decentralized LLM Training using SWARM Parallelism}
\label{sec:appendix:swarm}

In this section, we detail the adaptation of \textsc{Sentinel} to SWARM parallelism~\citep{ryabinin2023swarm} and showcase its capabilities for decentralized training of LLMs. We first provide a high-level overview of SWARM's operational dynamics. Then, we demonstrate the compatibility of our verification mechanism with communication-efficient compression techniques employed in distributed SWARM training. We describe how \textsc{Sentinel} integrates with SWARM's existing infrastructure by leveraging its trainer node architecture. Additionally, we analyze the critical role of verification in the presence of SWARM's stochastic wiring mechanism, which introduces additional failure modes beyond traditional PP. Finally, we present comprehensive experimental settings and detailed results from our SWARM experiments.

\subsection{SWARM Parallelism Overview}
\label{sec:appendix:swarm:overview}
As briefly discussed in \Cref{sec:experimental_results:swarm}, SWARM parallelism can be viewed as a stochastic DP/PP training approach. At a high level, each worker gets assigned a random layer/stage of the model to process (PP-axis) while other workers process different micro-batches for the same stage (DP-axis). Unlike traditional fixed meshes used in distributed training frameworks such as \texttt{torch.distributed}~\citep{paszke2019torch, li2020ddp}, SWARM parallelism operates as pools of workers available at each stage that process data for subsequent stages.

Coordination of training in such a stochastic environment is handled by so-called trainer nodes. Each trainer node is responsible for processing a single micro-batch of data end-to-end. In particular, each trainer node receives a disjoint micro-batch of data and sends it to a worker at stage 0. Once the worker processes the data, the trainer receives the result and forwards it to the next stage. The trainer continues this process until the micro-batch has passed through all layers in the forward pass. Then, the trainer begins the backward pass by traversing the stages in reverse order. With this architecture, workers do not need to save the intermediate activations for backward since the trainer maintains all necessary information and can send it to different workers during the backward pass. The workers then accumulate parameter gradients locally and update their parameters after an all-reduce operation with all existing workers of the same stage.

Trainer nodes use a Distributed Hash Table~(DHT) to route their micro-batches. Since devices at each stage process data at different speeds, trainers maintain worker load and throughput information in the DHT so that all other trainers know when to send signals and which worker at each stage processes data most efficiently. This mechanism is at the heart of the \textit{stochastic wiring} that occurs within SWARM and distinguishes it from fixed mesh DP/PP approaches~\cite{ryabinin2023swarm}. As demonstrated, a SWARM system can accommodate many trainer nodes since each processes a single micro-batch, and typically multiple devices at each stage are capable of processing data batches. For a visual representation of SWARM, please refer to~\Cref{fig:swarm_parallelism}a. A pseudo-code of how the trainer nodes work in SWARM is also provided in~\Cref{alg:swarm_trainer}.

\figSWARMOverview

\begin{algorithm}[t]
\caption{SWARM Trainer Node~\cite{ryabinin2023swarm}}
\label{alg:swarm_trainer}
    \vspace{0.35em}
    \includegraphics[width=\linewidth]{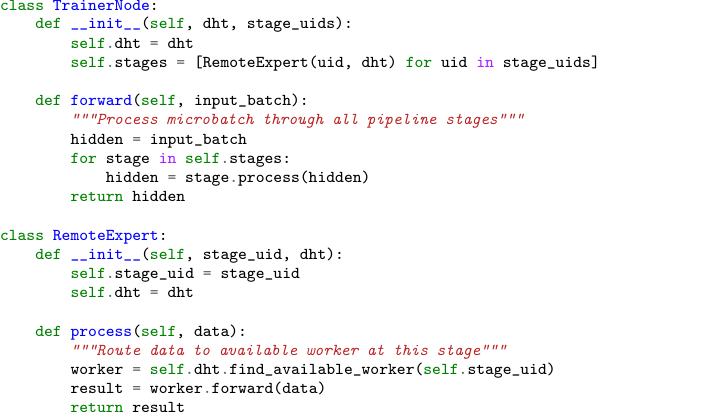}
    \vspace{0.05em}
\end{algorithm}

\subsection{Subspace Compression}
\label{sec:appendix:swarm:compression}
Regular SWARM parallelism enables decentralized training, but suffers from slow speeds due to bandwidth requirements~\citep{ryabinin2023swarm}. To relax this requirement and enable faster training, recent work by \citet{ramasinghe2025ssn} has shown that training with speeds as low as 60 Mbps is plausible using lossless compression. In particular, \citet{ramasinghe2025ssn} observes that most pre-trained transformers already exhibit low-rank properties in their feedforward and attention projection layers. They utilize this observation to construct a lossless compression mechanism using a shared subspace that enables faster transfer of activations and gradients passed between layers in PP.

In simple terms, given a signal $\boldsymbol{h}^{(s)} = f_{s}(\boldsymbol{h}^{(s-1)})$ that has been processed by a worker at stage $s$, instead of communicating $\boldsymbol{h}^{(s)} \in \mathbb{R}^{b \times n \times d}$, they compress it using a subspace compression matrix $\boldsymbol{U}_k$ that projects it to a $k$-dimensional subspace where $k \ll d$~\citep{ramasinghe2025ssn}:
\begin{equation}\label{eq:compression}
    \boldsymbol{h}^{(s)}_{\text{compressed}} = (\boldsymbol{h}^{(s)} - \text{PE} - \mathbf{T}_{\text{fixed}}[t_{1:n}, :]) \boldsymbol{U}_k.
\end{equation}
Here, PE denotes the positional embeddings and $\mathbf{T}_{\text{fixed}}[t_{1:n}, :])$ are the fixed token embeddings. At layer $s+1$, the receiver decompresses this using~\citep{ramasinghe2025ssn}:
\begin{equation}\label{eq:decompression}
    \boldsymbol{h}^{(s)}_{\text{recovered}} = \boldsymbol{h}^{(s)}_{\text{compressed}} \boldsymbol{U}_k^T + \text{PE} + \mathbf{T}_{\text{fixed}}[t_{1:n}, :] = \boldsymbol{h}^{s}
\end{equation}

This compression scheme, tailored for training transformer-based LLMs, can enable training across four geographical regions with bandwidth as low as 60 Mbps. Therefore, we will incorporate it as part of our realistic SWARM integration. For an overview of SWARM with subspace compression, please refer to~\Cref{fig:swarm_parallelism}b.

\subsubsection{\textsc{Sentinel} Compatibility with Subspace Compression}
\label{sec:appendix:swarm:compression:results}
To demonstrate the compatibility of our proposed method with subspace compression, we integrate this compression algorithm into our fixed mesh \texttt{TorchTitan} implementation. We assume that malicious workers would manipulate the compressed signals since these are what is being sent to subsequent workers.

In particular, we use the first $k$ dimensions of the transmitted signal $\boldsymbol{h}^{(s)}_{\text{compressed}} \in \mathbb{R}^{b \times n \times k+1}$, as the last dimension corresponds to the fixed token embeddings which may vary significantly between different batches. Specifically, \textsc{Sentinel} uses $\boldsymbol{h}^{(s)}_{\text{compressed}}[~:~,~:~,~:-1]$ for activation verification using EMA (and similarly $\boldsymbol{g}^{(s)}_{\text{compressed}}[~:~,~:~,~:-1]$ for gradients). The rest of the algorithm remains unchanged.

Using these assumptions, we train our 16-layer Llama-3-0.6B on the FineWeb-EDU dataset using the settings given in~\Cref{tab:model_config}. The only difference is that for these experiments we use a local batch-size 8 with 8 gradient accumulation steps to reach a target batch-size of 512 per optimizer step. For subspace compression, we use a compression factor of $25$ for 96\% compression. As demonstrated by our results in~\Cref{tab:ssn_results}, \textsc{Sentinel} can easily be adapted to this setting and demonstrates efficient detection capability against various kinds of attacks. We also plot the validation loss for all these attacks in~\Cref{fig:ssn_results}, demonstrating uninterrupted training in all cases.

\figSSNValidation

\tabSSNResults

\subsection{\textsc{Sentinel} Integration with SWARM}
\label{sec:appendix:swarm:integration}
Now that we have laid out the background on SWARM and how subspace compression acts as a complementary factor that enables decentralized training under restricted bandwidths, let us detail how \textsc{Sentinel} can be integrated into this realistic, production-ready decentralized training ecosystem.

As discussed, trainer nodes are in a natural position to take on the verifier role. As mentioned in \Cref{sec:appendix:faq} Q3, trainer nodes only perform CPU-based operations and are readily available at a fraction of worker node costs. Thus, it is not far-fetched to operate such nodes for training coordination. Additionally, data integrity heavily depends on the trustworthiness of trainer nodes, and if this role were delegated to untrusted parties, other major issues regarding backdoor and privacy attacks would arise. Therefore, it is both economically feasible and logically crucial to have trusted trainer nodes.

To add \textsc{Sentinel} verification to trainer nodes, we employ a mechanism for trainers to maintain an EMA of the signals they distribute across different stages. In particular, each trainer stores the EMA of all layer outputs since it operates on them end-to-end. Specifically, trainer $i$ maintains:
\begin{equation}\label{eq:momentum_dict_trainer}
    \{\boldsymbol{m}_{t, i}^{(s)}\mid1 \leq s \leq p\}
\end{equation}
where
\begin{align}\label{eq:momentum_def_trainer}
    \boldsymbol{m}_{t, i}^{(s)}(\boldsymbol{h})&=\beta_{h}\boldsymbol{m}_{t-1, i}^{(s)}(\boldsymbol{h})+(1-\beta_{h})\boldsymbol{h}_{t, i}^{(s,r)},
\end{align}
is the momentum at step $t$ and stage $s$. A similar set is also kept for gradient EMAs, but we omit it for brevity. Using these EMA states, each trainer can run \textsc{Sentinel} verification as signals are being processed by the workers.

The challenging aspect of implementing \textsc{Sentinel} in SWARM is communicating malicious behavior among trainers. Since each trainer is responsible for maintaining a separate EMA and verifying their randomly chosen workers independently, this might increase the malicious impact of bad actors. To address this issue, we utilize the DHT for lightweight communication between trainers about malicious workers. Instead of each trainer independently tracking violation counters or worker bans, they cooperate through the DHT to increment violation counters or ban workers collectively. In simple terms, we track the number of violations for each worker through their unique identifiers (UIDs) in the DHT, and if they surpass the allowed number of violations, the first trainer that observes this bans them. Similarly, we periodically check whether workers are demonstrating good behavior after transient violations and decrease their violation counts through our forgiveness strategy discussed in \textsc{Sentinel}. The only slight modification that makes \textsc{Sentinel} work better in SWARM is replacing tainted gradients with zero tensors, which we observed makes training more stable. A pseudo-code of \textsc{Sentinel} integration with SWARM is given in~\Cref{alg:swarm_sentinel} (backward pass is omitted for brevity).

\begin{algorithm}[t]
\caption{SWARM Trainer Node with \textsc{Sentinel} Verification}
\label{alg:swarm_sentinel}
    \vspace{0.35em}
    \includegraphics[width=\linewidth]{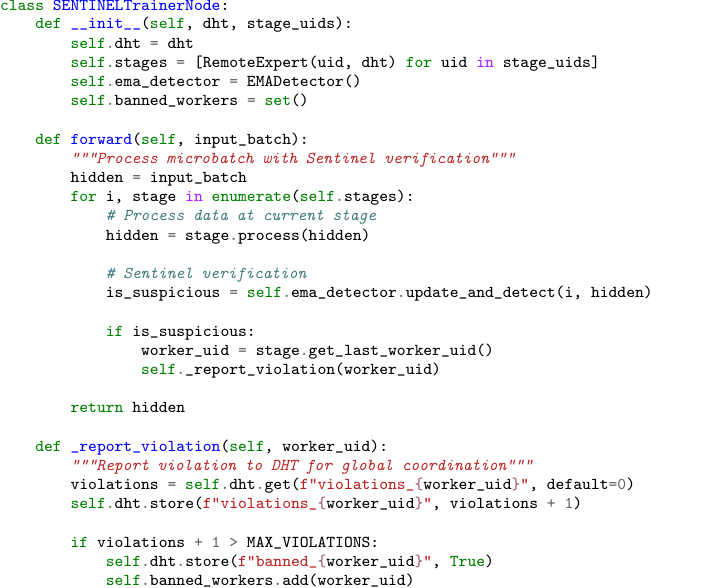}
    \vspace{0.05em}
\end{algorithm}

\subsubsection{Importance of Verification under Stochastic Wiring}
\label{sec:appendix:swarm:stochastic}
An interesting observation that we made through implementing \textsc{Sentinel} in SWARM is the importance of highly calibrated detection thresholds in removing abundant false positives due to the interplay between ``cascading effect'' in PP and ``stochastic wiring'' in SWARM. As discussed in~\citet{ryabinin2023swarm}, stochastic wiring helps the SWARM to move towards maximum device utilization and less idle time. However, this behavior can have an adverse effect considering adversarial actors. Specifically, if a trainer fails to detect/flag a bad actor, it not only updates its own EMA with corrupted signals, but other trainers would also be in danger since they would eventually use that malicious worker while routing their micro-batches due to SWARM's stochastic wiring. Thus, if a malicious worker goes undetected, it can corrupt the EMA of all trainers and they could flag all honest workers in that stage as malicious. In the fixed mesh (shown in~\Cref{fig:stochastic_wiring}a), in the worst case we would wrongly flag all workers from a single pipe due to the ``cascading effect'' that we discussed in \Cref{sec:appendix:detailed_cascading_effect}. In SWARM, however, we are in danger of falsely accusing all workers of the same stage due to ``stochastic wiring'' which when considered together with ``cascading'', could mean a high false positive rate for the detection algorithm. We defer further investigation into this interesting inter-play to future work. This phenomenon has been elaborated in~\Cref{fig:stochastic_wiring}b.

\figStochsticWiring

\subsection{Detailed Experimental Results}
\label{sec:appendix:swarm:results}

To evaluate our \textsc{Sentinel} integration with SWARM under realistic distributed training conditions including subspace compression, we conduct comprehensive experiments that test both the detection capabilities and the system's robustness under various attack scenarios. Our experimental setup is designed to validate the effectiveness of the proposed method under realistic conditions.

\subsubsection{Experimental Settings}

\paragraph{Distributed Architecture.} We use a Llama-3-0.6B model with 16 transformer layers, partitioned into 16 PP stages for SWARM. Each stage employs 8 parallel workers to process micro-batches concurrently, resulting in a total of 128 worker nodes. To simulate realistic distributed environments, each worker is deployed on a separate AWS instance with no direct interconnection to other nodes. We use heterogeneous instance types based on computational requirements: \texttt{g5.2xlarge} instances with NVIDIA A10G GPUs for the embedding and first transformer layer, \texttt{g5.4xlarge} instances with NVIDIA A10G GPUs for the final transformer layer and projection head, and \texttt{g4dn.2xlarge} instances with NVIDIA T4 GPUs for all intermediate transformer layers. 

The trainer infrastructure consists of 32 parallel trainers distributed across 4 \texttt{c6a.8xlarge} CPU instances, with 8 trainers running simultaneously per instance. Each trainer processes a disjoint data shard streamed from FineWeb-EDU~\citep{penedo2024fineweb} via HuggingFace~\citep{huggingface2021datasets}, ensuring no data overlap during training. This configuration allows us to test the scalability of our DHT-based violation reporting system under realistic deployment conditions.

\paragraph{Attack Scenarios.} We evaluate \textsc{Sentinel}'s detection performance under two distinct scenarios: 

\begin{enumerate}[leftmargin=0.25in] 
    \setlength\itemsep{0.05em}
    \item \textbf{Gradient-Only Attacks}: A heterogeneous mixture comprising constant attacks (zeros and ones), bias addition, random value injection, and scaling attacks.
    \item \textbf{Combined Gradient and Activation Attacks}: An expanded attack suite including a mixture of gradient and activation attacks such as constant (zeros and ones), bias addition, random value, scale, delay, and random sign attacks. This more sophisticated threat model evaluates \textsc{Sentinel}'s comprehensive detection capabilities across the entire forward and backward pass.
\end{enumerate}

For both scenarios, we simulate a challenging environment with a 3:5 malicious-to-honest worker ratio (37.5\% malicious workers). In the gradient-only setting, we assume attackers operate independently without coordination, initiating attacks at random intervals. For the combined attack scenario, we model partial coordination where 15\% of malicious workers launch synchronized attacks at random intervals, simulating coordinated adversarial behavior while maintaining realistic assumptions about attacker capabilities.

\paragraph{Training Configuration.} We use micro-batches of size 4 per worker with a total target batch-size of 512 per optimizer step. For subspace compression, we use a compression factor of $25$. All other hyper-parameters and training settings are the same as provided in~\Cref{tab:model_config}. Given the costs associated with running 128 SWARM workers, we chose to train for 2500 steps only. Nevertheless, we made sure to squeeze all the attack start times within those steps.

\subsubsection{Results}
In this section, we present our results. From~\Cref{fig:swarm_results_full}, we can see that in the absence of a viable verification mechanism while training across an untrusted distributed environment, training can easily be disrupted. In~\Cref{tab:swarm_results} we also present~\textsc{Sentinel}'s detection performance where we achieve greater than $80\%$ F1-score. Note that lower recall in mixed activation/gradient attacks is due to some nodes employing weak attacks that are not disruptive to training, hence they do not get flagged as malicious but training continues without disruption. This is in line with our observation in~\Cref{fig:f1_vs_loss} and our intuition from~\Cref{thm:convergence} that weak attackers may survive detection but do not harm the training. This can be seen in \Cref{fig:swarm_results_full} that training continues without divergence in both cases. These results prove the versatility of \textsc{Sentinel} in real-world applications involving distributed training.

\tabSWARMDetectionRate
\figSWARMTrainingFull

\paragraph{EMA Variance across Trainers.} When employing \textsc{Sentinel} in SWARM, we discussed how each trainer accumulate their own version of EMAs. Thus, these signals can vary from one trainer to another. If this variance is large, it may cause verification disruption: a worker that appears honest to one trainer could be flagged as malicious solely due to EMA variance between trainers. To demonstrate that EMAs among trainers have minimum divergence, we plot the standard deviation of the activation and activation gradient EMA among all our 32 trainers used for training in SWARM. As seen in~\Cref{fig:swarm_ema_variance}, both activation and activation gradient display a controlled amount of variance across trainers which is a testament to the fact that EMAs in different trainers evolve similarly.

\figSWARMEMAVarianceTrainer

\paragraph{Test Statistic Evolution.} Finally, we examine how \textsc{Sentinel} is utilized by disjoint trainer nodes in SWARM. To this end, we track the test statistics for a particular worker (worker number 7 at layer 12) when it processes micro-batches from different trainers. We also track the lower and upper thresholds determined using our adaptive IQR mechanism from~\Cref{alg:adaptive_iqr}. \Cref{fig:swarm_thresh_evolution} shows the evolution of these test statistics over time. From this figure, we conclude that despite no direct EMA or threshold communication between trainer nodes, all 32 trainers exhibit similar evolutionary patterns. This consistency is crucial because each trainer routes its micro-batches through different workers, requiring their detection criteria to remain aligned. Without this alignment, the system could suffer from false positives or false negatives that would disrupt training stability. The key benefit of using EMA within \textsc{Sentinel} is its ability to efficiently track historical patterns, enabling this coordinated behavior across distributed trainers.

\figSWARMThresholdEvolution


\end{document}